\newtheorem{proposition}{Proposition}
\newtheorem{theorem}{Theorem}
\newtheorem{corollary}{Corollary}
\newtheorem{lemma}{Lemma}
\theoremstyle{definition}
\newtheorem{remark}{Remark}
\title[Defocusing-defocusing coupled Hirota equations]{Inverse scattering transform for the defocusing-defocusing coupled Hirota equations with non-parallel boundary conditions at infinity}
\author{Peng-Fei Han$^{1}$}
\author{Wen-Xiu Ma$^{1,2,3,4}$}
\author{Yi Zhang$^{1}$}
\thanks{Corresponding authors. E-mail address: mawx@cas.usf.edu (Wen-Xiu Ma), hanpf1995@163.com (Peng-Fei Han), zy2836@163.com (Yi Zhang)}
\dedicatory{$^{1}$Department of Mathematics, Zhejiang Normal University, Jinhua 321004, People's Republic of China. \\
$^{2}$Department of Mathematics, King Abdulaziz University, Jeddah 21589, Saudi Arabia. \\
$^{3}$Department of Mathematics and Statistics, University of South Florida, Tampa, FL 33620-5700, USA. \\
$^{4}$Material Science Innovation and Modelling, Department of Mathematical Sciences, North-West University, Mafikeng Campus, Mmabatho 2735, South Africa.}
\keywords{Inverse scattering transform, Riemann-Hilbert problem, The general coupled Hirota equations, Non-parallel boundary conditions}
\date{\today}
\begin{document}

\begin{abstract}
The inverse scattering transform for the defocusing-defocusing coupled Hirota equations is strictly discussed with non-zero boundary conditions at infinity including non-parallel boundary conditions, specifically referring to the asymptotic polarization vectors. To address the non-analyticity encountered in some of the Jost eigenfunctions, the "adjoint" Lax pair is employed. The inverse problem is formulated as an appropriate matrix Riemann-Hilbert problem. A key difference between non-parallel and parallel boundary conditions lies in the asymptotic behavior of the scattering coefficients, which significantly impacts the normalization of the eigenfunctions and the properties of sectionally meromorphic matrices within the Riemann-Hilbert problem framework. When the asymptotic polarization vectors are non-orthogonal, two distinct methodologies are introduced to convert the Riemann-Hilbert problem into a series of linear algebraic-integral equations. In contrast, when the asymptotic polarization vectors are orthogonal, only one method is feasible. Ultimately, it is demonstrated that pure soliton solutions do not exist in both orthogonal and non orthogonal polarization vector cases. This study provides a comprehensive framework for analyzing the defocusing-defocusing coupled Hirota equations using the inverse scattering transform, offering new insights into the characteristics and solutions of the equations.
\end{abstract}
\maketitle

\tableofcontents

\section{Introduction}
\label{s:intro}

In the field of nonlinear physics, complex interactions between multiple fields often lead to complex dynamic behavior~\cite{1,2,3,4}. The dynamics of multi-component nonlinear physical systems are governed by coupled nonlinear Schr\"odinger (NLS) type evolution equations, a topic that has been discussed in~\cite{7}. These equations are pivotal in understanding the complex interactions within such systems. The Manakov system~\cite{8} is completely integrable and has been extensively researched within the scientific community~\cite{9,10,11,12,13}. The general coupled Hirota equations~\cite{14} are used as extensions of the Manakov system, introducing additional terms to consider higher-order nonlinearity and dispersion effects, making them particularly important for analyzing systems with rich nonlinear characteristics~\cite{15,16}. Being able to capture the nonlinear interactions between different components of the system makes it a valuable tool for studying phenomena such as optical solitons~\cite{17}, plasma waves~\cite{18}, and fluid dynamics~\cite{19}. For example, the general coupled Hirota equations can generally describe the propagation of optical pulses with different degrees of nonlinearity and dispersion, which is crucial for understanding the formation and stability of soliton solutions~\cite{20,21}. Consequently, the initial value problem (IVP) can theoretically be addressed using the inverse scattering transform (IST), a method that has garnered significant interest from the scientific community for its application to the general coupled Hirota equations~\cite{22}.

Studying the general coupled Hirota equations with non-zero boundary conditions (NZBCs) presents significant challenges as it requires a deeper understanding of the mathematical structure behind them and the development of advanced analytical and numerical techniques~\cite{23}. The emergence of NZBCs in the application of the IST introduces additional complexity and requires more refined methods~\cite{24,A1,A2}. The recent analysis of the general coupled Hirota equations with NZBCs has been driven by the need to understand the behavior of the system in more realistic settings~\cite{23}. For example, in the study of Bose-Einstein condensates~\cite{25,26}, the presence of finite background density may lead to the emergence of novel excitations and instabilities that have not been captured in traditional Manakov system. The general coupled Hirota equations provide a more comprehensive framework for simulating such scenarios due to their ability to expand~\cite{27}. The exploration of IVPs for the general coupled Hirota equations, particularly under NZBCs, has sparked the development of innovative mathematical approaches and enhanced our comprehension of the underlying physics~\cite{28}. The exploration of such systems continues to reveal the rich and diverse behaviors of multi-component nonlinear systems, further consolidating their importance in the field of nonlinear physics.

By utilizing the zero-curvature conditions, the construction of multi-component Hirota equations~\cite{32,33} becomes crucial, enabling researchers to delve into the complex interactions between different wave components. From the perspective of matrix eigenvalue problems~\cite{34,35}, the IST has become a powerful tool for solving IVPs related to these equations. This method not only helps to derive soliton solutions, but also provides an effective way to reveal many Lie symmetries and their corresponding conservation laws of the equations. In addition, applying matrix constraints to eigenvalue problems can lead to a large number of simplified but still integrable equations. These simplifications provide a new perspective for studying the local and nonlocal aspects of equations while maintaining the invariance of the zero-curvature conditions. Especially for nonlocal integrable equations, their correlation in non-Hermitian quantum mechanics and their ability to simulate nonlinear mathematical and physical phenomena have aroused great interest~\cite{38}. By exploring the multi-component Hirota equations through the Ablowitz-Kaup-Newell-Segur hierarchical structure, its powerful ability in elucidating the behavior of nonlinear systems has been demonstrated. Whether through local or nonlocal group reduction, these equations reveal rich information about the structure and properties of integrable systems, making them important tools for studying nonlinear phenomena~\cite{40}.

The general coupled Hirota equations~\cite{14,15,16,20,21,22,27,28,41,43} offer a comprehensive model for studying the wave propagation of two ultrashort optical fields in optical fibers, accounting for the intricate interplay of nonlinear and dispersive effects that govern the evolution of the pulses as they travel through the fiber.
\begin{subequations}\label{1.1}
\begin{align}
\mathrm{i}\widetilde{q}_{1,t}+& \widetilde{q}_{1,xx}+2(\sigma_{1}\left|\widetilde{q}_{1}\right|^{2}+\sigma_{2}\left|\widetilde{q}_{2}\right|^{2})
\widetilde{q}_{1}+\mathrm{i}\sigma[\widetilde{q}_{1,xxx}+(6\sigma_{1}\left|\widetilde{q}_{1}\right|^{2}
+3\sigma_{2}\left|\widetilde{q}_{2}\right|^{2})\widetilde{q}_{1,x} +3\sigma_{2}\widetilde{q}_{1}\widetilde{q}_{2}^{*}\widetilde{q}_{2,x}]=0, \\
\mathrm{i}\widetilde{q}_{2,t}+& \widetilde{q}_{2,xx}+2(\sigma_{1}\left|\widetilde{q}_{1}\right|^{2}+\sigma_{2}\left|\widetilde{q}_{2}\right|^{2})
\widetilde{q}_{2}+\mathrm{i}\sigma[\widetilde{q}_{2,xxx}+(6\sigma_{2}\left|\widetilde{q}_{2}\right|^{2}
+3\sigma_{1}\left|\widetilde{q}_{1}\right|^{2})\widetilde{q}_{2,x}
+3\sigma_{1}\widetilde{q}_{2}\widetilde{q}_{1}^{*}\widetilde{q}_{1,x}]=0,
\end{align}
\end{subequations}
where $\widetilde{q}_{1}=\widetilde{q}_{1}(x,t)$ and $\widetilde{q}_{2}=\widetilde{q}_{2}(x,t)$ are the two-component electric fields, the parameters $\sigma_{1}$, $\sigma_{2}$ and $\sigma$ are real constants. The above systems~\eqref{1.1} provide a linear combination of special reductions of the NLS and mKdV equations, which are introduced and discussed in (5.13) and (5.14) in~\cite{34}. A matrix form of these mKdV equations can be found in~\cite{44}. In present work, we consider a type of $\sigma_{1}=\sigma_{2}=-1$ in the general coupled Hirota equations~\cite{41}, namely the defocusing-defocusing coupled Hirota equations~\cite{23}
\begin{equation}\label{1.2}
\begin{split}
\mathrm{i}\widetilde{\mathbf{q}}_{t}+\widetilde{\mathbf{q}}_{xx}-2\Vert \widetilde{\mathbf{q}} \Vert^{2}\widetilde{\mathbf{q}}+\mathrm{i}\sigma[\widetilde{\mathbf{q}}_{xxx}-3\Vert \widetilde{\mathbf{q}} \Vert^{2}\widetilde{\mathbf{q}}_{x}
-3(\widetilde{\mathbf{q}}^{\dagger}\widetilde{\mathbf{q}}_{x})\widetilde{\mathbf{q}}]=\mathbf{0},
\end{split}
\end{equation}
where $\widetilde{\mathbf{q}}=\widetilde{\mathbf{q}}(x,t)=(\widetilde{q}_{1}(x,t),
\widetilde{q}_{2}(x,t))^{T}$ and $\dagger$ denotes conjugate transpose. Applying the simple space-independent phase rotation $\widetilde{\mathbf{q}}(x,t)=\mathbf{q}(x,t)\mathrm{e}^{-2\mathrm{i}q_{0}^{2}t}$ transforms Eq.~\eqref{1.2} into the defocusing-defocusing coupled Hirota equations with NZBCs, namely
\begin{equation}\label{1.3}
\begin{split}
\mathrm{i}\mathbf{q}_{t}+\mathbf{q}_{xx}-2(\Vert \mathbf{q} \Vert^{2}-q_{0}^{2})\mathbf{q}+\mathrm{i}\sigma[\mathbf{q}_{xxx}-3\Vert \mathbf{q} \Vert^{2}\mathbf{q}_{x}-3(\mathbf{q}^{\dagger}\mathbf{q}_{x})\mathbf{q}]=\mathbf{0},
\end{split}
\end{equation}
and the corresponding NZBCs at infinity are
\begin{equation}\label{1.4}
\begin{split}
\lim_{x\rightarrow\pm\infty}{\mathbf{q}}(x,t)=\mathbf{q}_{\pm}=\mathbf{q}_{0}
\mathrm{e}^{\mathrm{i}\delta_{\pm}},
\end{split}
\end{equation}
where $\mathbf{q}=\mathbf{q}(x,t)=(q_{1}(x,t),q_{2}(x,t))^{T}$ and $\mathbf{q}_{0}$ are two-component vectors, $T$ represents transposition, with $\delta_{\pm}$ and $q_{0}=\Vert \mathbf{q}_{0} \Vert$ are real numbers.

The utilization of sophisticated techniques such as the IST and Riemann-Hilbert (RH) method has significantly advanced our understanding of the scalar Hirota equation under various boundary conditions~\cite{46,47,48,49,50}. Zhang, Chen and Yan~\cite{46} investigated the IST and soliton solutions for both focusing and defocusing Hirota equations under NZBCs. Chen and Yan~\cite{47} tackled the matrix RH problem associated with the Hirota equation under NZBCs. Zhang and Ling~\cite{48} provided a thorough investigation into the RH representations for two distinct Darboux matrices related to the Hirota equation, thereby clarifying the relationship between the IST and the Darboux transformation. Moreover, by integrating iterative Darboux matrices with results from a single spectral parameter, a long-term asymptotic solution characterized by spectral parameters is presented. The focused Hirota equation's Cauchy problem, which involves initial conditions that exhibit rapid decay, has been investigated employing the RH approach as referenced in~\cite{49}. However, Ye, Han, and Zhang~\cite{50} have explored the formal solutions of the defocusing Hirota equation under the influence of fully asymmetric NZBCs through the IST and the RH problem. Collectively, these scholarly endeavors have not only enriched our comprehension of the Hirota equation within the context of diverse boundary conditions but have also laid a solid groundwork for future investigations into additional boundary conditions and their potential applications.

After conducting in-depth research on the defocusing-defocusing coupled Hirota equations with parallel~\cite{23} boundary conditions $\mathbf{q}_{+} \parallel \mathbf{q}_{-}$, this paper further explores the situation under non-parallel boundary conditions $\mathbf{q}_{+} \nparallel \mathbf{q}_{-}$. Although previous research~\cite{23} has mainly focused on $\mathbf{q}_{+} \parallel \mathbf{q}_{-}$, this study extends this field by analyzing the dynamic behavior of $\mathbf{q}_{+} \nparallel \mathbf{q}_{-}$ for the first time. Our goal is to fill the gap in existing literature and provide new insights into the behavior of the defocusing-defocusing coupled Hirota equations under non-parallel boundary conditions. This research significantly contributes to our comprehension of the defocusing-defocusing coupled Hirota equations and paves the way for further studies. It particularly highlights the potential for investigating how diverse boundary conditions can influence the system's dynamics.

The special case of parallel NZBCs~\eqref{1.4} has been resolved~\cite{23} and we consider extending the IST to the non-parallel NZBCs where any $\mathbf{q}_{\pm}$ are only constrained by $\Vert \mathbf{q}_{\pm} \Vert=q_{0}$. However, the boundary conditions in the parallel case~\cite{23} have two constraints: $\Vert \mathbf{q}_{\pm} \Vert=q_{0}$ and $|\mathbf{q}_{+}^{\dagger}\mathbf{q}_{-}|=q_{0}^{2}$. It's important to mention that the findings from~\cite{23} on the direct problem are not limited to parallel situations, they are also applicable to non-parallel cases. Due to the difference in eigenvector matrices between parallel and non-parallel cases, it is necessary to reconstruct the expression for the inverse problem in non-parallel cases. The above are the two main differences between non-parallel NZBCs and parallel NZBCs.

There are two important reasons for studying non-parallel NZBCs, which are challenging in theory and application. Firstly, the non-parallel boundary conditions break the traditional boundary constraints and make the construction of IST more complicated~\cite{12}. Especially in the orthogonal NZBCs, that is $\mathbf{q}_{+}^{\dagger}\mathbf{q}_{-}=0$. To differentiate the scenario with orthogonal NZBCs, the term "non-orthogonal NZBCs" is used for the more general situation involving non-parallel NZBCs. In fact, it can be found that the construction inverse problem of orthogonal and non-orthogonal cases is different from the traditional method~\cite{12,51}. We will observe that although there are two different construction methods for non-orthogonal cases, only one of them has the ability to extend to orthogonal cases. Secondly, the introduction of non-parallel boundary conditions makes it possible to explore new physical phenomena and mathematical structures. Therefore, although non-parallel boundary conditions bring additional mathematical problems, they also provide new perspectives and tools for understanding complex systems.

Furthermore, the above issues genuinely highlight the distinct solution behaviors observed in parallel versus non-parallel scenarios. Specifically, if $\mathbf{q}_{-}$ and $\mathbf{q}_{+}$ are not parallel, then there cannot be the pure reflectionless solutions. This parallels the scenario within the defocusing Hirota equation~\cite{50} when subjected to fully asymmetric NZBCs. Another reason for considering the defocusing-defocusing coupled Hirota equations~\eqref{1.3} with non-parallel NZBCs is also motivated by the desire to explore physically significant situations. Similarly, we can use the relevant real three-component Stokes vectors to equivalently describe the polarization state~\cite{52}. Moreover, the non-parallel solutions of $\mathbf{q}_{-}$ and $\mathbf{q}_{+}$ describe a system configuration that includes transitions between non-parallel asymptotic polarization states. In the case of considering a Bose-Einstein condensate, the solutions represent a system state that are spatially divided into interactions with different properties. Consequently, we anticipate that the findings from this research will furnish a crucial analytical framework, empowering scientists to precisely measure the observed physical phenomena~\cite{A9}. Through this method, researchers will be able to gain a deeper understanding of the impact of solutions under non-parallel conditions on system dynamics, providing new perspectives for analytical and empirical research in related fields~\cite{A10}.

The organization of the subsequent sections in this paper is as follows. In Section~\ref{s:Direct scattering problem} we delineate the Jost eigenfunctions, the scattering matrix and the modified eigenfunctions and rigorously establish their analytical properties. In Section~\ref{s:Discrete spectrum and its related properties}, we delve into the characteristics of the discrete spectrum and analyze the asymptotic properties of the modified Jost eigenfunctions and the scattering matrix elements as $z\rightarrow\infty$ and $z\rightarrow 0$. We consider the asymptotic behavior of reflection coefficients in both orthogonal and non-orthogonal cases. It proves that there are non-existence of reflectionless potentials solutions under non-parallel conditions. Section~\ref{s:Inverse problem} presents the inverse problem. We examine the RH problem for both non-orthogonal and orthogonal scenarios. By using the meromorphic matrices, the corresponding residue conditions and norming constants are obtained. We construct the formal solution of the RH problem and reconstruction formula with the help of the Plemelj's formula. In Section~\ref{s:Discussion and final remarks}, we present our discussions and concluding observations.

\section{Direct scattering problem}
\label{s:Direct scattering problem}

Now, let's delve deeper into the construction of the IST method for the defocusing-defocusing coupled Hirota equations~\eqref{1.3} in specific scenarios with non-parallel NZBCs~\cite{12}, as follows:
\begin{equation}\label{2.1}
\begin{split}
0\leq \frac{|\mathbf{q}_{+}^{\dagger} \mathbf{q}_{-}|}{q_{0}^{2}}<1,
\end{split}
\end{equation}
unlike the parallel case discussed in reference~\cite{23}, in which $|\mathbf{q}_{+}^{\dagger}\mathbf{q}_{-}|=q_{0}^{2}$. In the orthogonal NZBCs, we found $\mathbf{q}_{+}^{\dagger}\mathbf{q}_{-}=0$ which is in stark contrast to the parallel NZBCs situation in previous studies~\cite{23}. Here, we concentrate our efforts on examining the direct problems. The construction of direct problems largely follows the framework in reference~\cite{23}. But there is a significant difference: the normalization of the Jost eigenfunctions.

\subsection{Lax pair and uniformization variable}
\label{s:Lax pair and uniformization variable}

The defocusing-defocusing coupled Hirota equations~\eqref{1.3} have a Lax pair formulation, which is given by:
\begin{equation}\label{2.2}
\begin{split}
\psi_{x}&=\mathbf{X}\psi, \quad \psi_{t}=\mathbf{T}\psi,
\end{split}
\end{equation}
where $\psi=\psi(x,t)$, the matrices represented by $\mathbf{X}$ and $\mathbf{T}$ are presented below:
\begin{equation}\label{2.3}
\begin{split}
\mathbf{X}=\mathbf{X}(k;x,t)&=\mathrm{i}k\mathbf{J}+\mathrm{i}\mathbf{Q}, \\
\mathbf{T}=\mathbf{T}(k;x,t)&=4\mathrm{i}\sigma k^{3}\mathbf{J}-\mathrm{i}q_{0}^{2}\mathbf{J}
+k^{2}(4\mathrm{i}\sigma\mathbf{Q}+2\mathrm{i}\mathbf{J})
+k(2\mathrm{i}\mathbf{Q}-2\sigma\mathbf{Q}_{x}\mathbf{J}-2\mathrm{i}\sigma\mathbf{J}\mathbf{Q}^{2})
-2\mathrm{i}\sigma\mathbf{Q}^{3} \\
&-\mathrm{i}\mathbf{J}\mathbf{Q}^{2}-\mathrm{i}\sigma\mathbf{Q}_{xx}
-\mathbf{Q}_{x}\mathbf{J}+\sigma[\mathbf{Q},\mathbf{Q}_{x}],
\end{split}
\end{equation}
where $\left[ Q_{1},Q_{2} \right]=Q_{1}Q_{2}-Q_{2}Q_{1}$, $k$ is the spectral parameter and the formulations of matrices $\mathbf{J}$ and $\mathbf{Q}=\mathbf{Q}(x,t)$ are delineated below:
\begin{equation}\label{2.4}
\begin{split}
\mathbf{J}=\begin{pmatrix}
    1 & \mathbf{0_{1\times2}}  \\
    \mathbf{0_{2\times1}} & -\mathbf{I_{2\times2}}
  \end{pmatrix},  \quad
\mathbf{Q}=\begin{pmatrix}
    0 & -\mathbf{q}^{\dagger}   \\
    \mathbf{q} & \mathbf{0_{2\times2}}
  \end{pmatrix}.
\end{split}
\end{equation}

The asymptotic scattering problems as $x\to\pm\infty$ are as follows:
\begin{equation}\label{2.5}
\begin{split}
\psi_{x}=\mathbf{X}_{\pm}\psi, \quad \psi_{t}=\mathbf{T}_{\pm}\psi,
\end{split}
\end{equation}
where
\begin{subequations}\label{2.6}
\begin{align}
\lim_{x\rightarrow\pm\infty}{\mathbf{X}}=\mathbf{X}_{\pm}&=\mathrm{i}k\mathbf{J}+\mathrm{i}\mathbf{Q}_{\pm}, \\
\lim_{x\rightarrow\pm\infty}{\mathbf{T}}=\mathbf{T}_{\pm}&=4\mathrm{i}\sigma k^{3}\mathbf{J}-\mathrm{i}q_{0}^{2}\mathbf{J}
+k^{2}(4\mathrm{i}\sigma\mathbf{Q}_{\pm}+2\mathrm{i}\mathbf{J})
+k(2\mathrm{i}\mathbf{Q}_{\pm}-2\mathrm{i}\sigma\mathbf{J}\mathbf{Q}_{\pm}^{2})
-2\mathrm{i}\sigma\mathbf{Q}_{\pm}^{3}-\mathrm{i}\mathbf{J}\mathbf{Q}_{\pm}^{2}.
\end{align}
\end{subequations}
Then, the eigenvalues of the matrices $\mathbf{X}_{\pm}$ and $\mathbf{T}_{\pm}$ are deduced as delineated below:
\begin{equation}\label{2.7}
\begin{split}
\mathbf{X}_{\pm,1}=-\mathrm{i}k, \quad
\mathbf{X}_{\pm,2,3}=\pm\mathrm{i}\lambda, \quad
\mathbf{T}_{\pm,1}=-\mathrm{i}(\lambda^{2}+k^{2}+4\sigma k^{3}), \quad
\mathbf{T}_{\pm,2,3}=\pm 2\mathrm{i}\lambda[k+\sigma(3k^{2}-\lambda^{2})],
\end{split}
\end{equation}
where $\lambda(k)=\sqrt{k^{2}-q_{0}^{2}}$. In the case that these eigenvalues have branching, they are processed according to the method proposed in~\cite{10}. We bring in a two-sheeted Riemann surface characterized by $\lambda(k)=\sqrt{k^{2}-q_{0}^{2}}$, the branch points are identified at the $k$ values that lead to $\lambda(k)=0$ being nullified, specifically $k=\pm q_{0}$. Subsequently, we introduce the transformation variable $z=k+\lambda$ and proceed with the analysis.
\begin{equation}\label{2.8}
\begin{split}
k(z)=\frac{1}{2}\left(z+\frac{q_{0}^{2}}{z}\right), \quad \lambda(z)=\frac{1}{2}\left(z-\frac{q_{0}^{2}}{z}\right).
\end{split}
\end{equation}
The branch cuts that separate the two sheets of the Riemann surface correspond to the real $z$-axis: the first sheet $\mathbb{D}^{+}=\left\{z \in \mathbb{C}: \operatorname{Im} z>0\right\}$ and the second sheet $\mathbb{D}^{-}=\left\{z \in \mathbb{C}: \operatorname{Im} z<0\right\}$.

\subsection{Jost solutions, scattering matrix and analyticity}
\label{s:Jost solutions, scattering matrix and analyticity}

Usually, the scattering problem's continuous spectrum includes all $k$ values from both sheets where $\lambda(k)\in \mathbb{R}$. Then, the continuous spectrum is the set $k\in \mathbb{R} \backslash (-q_{0},q_{0})$. The continuous spectrum is $z\in \mathbb{R}$ in the complex $z$-plane. We define a two-component vector $\mathbf{v}=(v_{1},v_{2})^{T}$ and its corresponding orthogonal vector as $\mathbf{v}^{\perp}=(v_{2},-v_{1})^{\dagger}$.

Consider new eigenvector matrices different from the reference~\cite{23}, which are called non-parallel eigenvector matrices $\widehat{\mathbf{Y}}_{\pm}(z)$
\begin{equation}\label{2.9}
\everymath{\displaystyle}
\begin{split}
\widehat{\mathbf{Y}}_{\pm}(z)=\widehat{\rho}(z)\begin{pmatrix}
    \mathrm{i} & 0 &  \frac{q_{0}}{z}  \\
    \frac{\mathrm{i}\mathbf{q}_{\pm}}{z} &
    \frac{\mathbf{q}_{\pm}^{\bot}}{\widehat{\rho}(z)q_{0}} &
     \frac{\mathbf{q}_{\pm}}{q_{0}}
  \end{pmatrix}, \quad
\mathbf{Y}_{\pm}^{[\mathrm{bk}]}(z)=\begin{pmatrix}
    \mathrm{i} & 0 &  \frac{q_{0}}{z}  \\
    \frac{\mathrm{i}\mathbf{q}_{\pm}}{z} &
    \frac{\mathbf{q}_{\pm}^{\bot}}{q_{0}} &
     \frac{\mathbf{q}_{\pm}}{q_{0}}
  \end{pmatrix}, \quad \widehat{\rho}(z)=\frac{z^{2}}{z^{2}-q_{0}^{2}},
\end{split}
\end{equation}
where $\mathbf{Y}_{\pm}^{[\mathrm{bk}]}(z)$ are the eigenvector matrices under parallel NZBCs. By using NZBCs~\eqref{1.4} and equation~\eqref{2.6}, it can be concluded that the condition $[\mathbf{X}_{\pm},\mathbf{T}_{\pm}]=\mathbf{0_{3\times3}}$ holds, $\mathbf{X}_{\pm}$ and $\mathbf{T}_{\pm}$ allow for the existence of common eigenvectors: $\mathbf{X}_{\pm}=\mathrm{i}\widehat{\mathbf{Y}}_{\pm}\mathbf{\Lambda}_{1}
\widehat{\mathbf{Y}}_{\pm}^{-1}$ and $\mathbf{T}_{\pm}=\mathrm{i}\widehat{\mathbf{Y}}_{\pm}\mathbf{\Lambda}_{2}
\widehat{\mathbf{Y}}_{\pm}^{-1}$, where $\widehat{\mathbf{Y}}_{\pm}=\widehat{\mathbf{Y}}_{\pm}(z)$ and
\begin{subequations}\label{2.10}
\begin{align}
\mathbf{\Lambda}_{1}&=\mathbf{\Lambda}_{1}(z)=\operatorname{diag}\left( \lambda,-k,-\lambda \right), \\
\mathbf{\Lambda}_{2}&=\mathbf{\Lambda}_{2}(z)=\operatorname{diag}\left(
2\lambda[k+\sigma(3k^{2}-\lambda^{2})],-(\lambda^{2}+k^{2}+4\sigma k^{3}), -2\lambda[k+\sigma(3k^{2}-\lambda^{2})] \right).
\end{align}
\end{subequations}
In the subsequent citations of the article, we particularly focused on the following aspects:
\begin{equation}\label{2.11}
\begin{split}
 \quad
\widehat{\mathbf{Y}}_{\pm}^{-1}(z)=\begin{pmatrix}
    -\mathrm{i} & \mathrm{i}\mathbf{q}_{\pm}^{\dagger}/z  \\
    0 & (\mathbf{q}_{\pm}^{\bot})^{\dagger}/q_{0}  \\
    -q_{0}/z &  \mathbf{q}_{\pm}^{\dagger}/q_{0}  \\
  \end{pmatrix}, \quad  \det\widehat{\mathbf{Y}}_{\pm}^{-1}(z)=\frac{1}{\mathrm{i}\widehat{\rho}(z)},
  \quad
  \det\mathbf{Y}_{\pm}^{[\mathrm{bk}]}(z)=\frac{1}{\widehat{\rho}(z)}.
\end{split}
\end{equation}

\begin{remark}\label{rem:1}
Since $\det\widehat{\mathbf{Y}}_{\pm}(z)=\mathrm{i}\widehat{\rho}(z)$, it can be inferred that the eigenfunctions exhibit singularities at $z=\pm q_{0}$. Be aware that this eigenvector matrix differs from the one employed in parallel scenario as described in~\cite{23}. Therefore, the matrixs $\widehat{\mathbf{Y}}_{\pm}(z)$ defined in~\eqref{2.9} have the relationship $\widehat{\mathbf{Y}}_{\pm}(z)=\mathbf{Y}_{\pm}^{[\mathrm{bk}]}(z)\widehat{\mathbf{J}}_{1}(z)$ with the eigenvector matrix $\mathbf{Y}_{\pm}^{[\mathrm{bk}]}(z)$ in~\cite{23}, where $\widehat{\mathbf{J}}_{1}(z)=\operatorname{diag}\left( z/2\lambda, 1, z/2\lambda \right)$.
\end{remark}

\begin{remark}\label{rem:2}
The decision to employ various eigenvector matrices is due to the fact that $\mathbf{Y}_{\pm}^{[\mathrm{bk}]}(z)$ have a singularity at $z=0$. This problem is particularly evident when $\mathbf{q}_{+}$ and $\mathbf{q}_{-}$ are not parallel, causing certain reflection coefficients to become diverge as $z$ approaches zero or infinity. Although the new eigenvector matrices exhibit singularities at $z=\pm q_{0}$, it provides us with a superior method for solving problems.
\end{remark}

\begin{remark}\label{rem:3}
Given that $\det\widehat{\mathbf{Y}}_{\pm}(z)$ have a double root at $z=0$, the related asymptotic eigenvectors are linearly dependent at that point. This is another important difference compared to $\mathbf{Y}_{\pm}^{[\mathrm{bk}]}(z)$, as $\det\mathbf{Y}_{\pm}^{[\mathrm{bk}]}(\pm q_{0})=0$ implies that the asymptotic eigenvectors at $z=\pm q_{0}$ are linearly correlated.
\end{remark}

We specify the Jost eigenfunctions $\psi(z;x,t)$ that meet the boundary conditions:
\begin{equation}\label{2.12}
\begin{split}
\psi_{\pm}(z)=\psi_{\pm}(z;x,t)=\widehat{\mathbf{Y}}_{\pm}(z) \mathrm{e}^{\mathrm{i}\mathbf{\Delta}(z)}+o(1), \quad x\rightarrow\pm\infty, \quad z\in \mathbb{R},
\end{split}
\end{equation}
with $\mathbf{\Delta}(z)=\mathbf{\Delta}(z;x,t)=\operatorname{diag} \left( \delta_{1},\delta_{2},-\delta_{1} \right)$ and
\begin{equation}\label{2.13}
\begin{split}
\delta_{1}=\delta_{1}(z;x,t)=\lambda x+2\lambda[k+\sigma(3k^{2}-\lambda^{2})]t, \quad
\delta_{2}=\delta_{2}(z;x,t)=-kx-(\lambda^{2}+k^{2}+4\sigma k^{3})t.
\end{split}
\end{equation}
We define the modified eigenfunctions $\nu_{\pm}(z)=\nu_{\pm}(z;x,t)=\psi_{\pm}(z)\mathrm{e}^{-\mathrm{i}\mathbf{\Delta}(z)}$ such that $\lim_{x\rightarrow\pm\infty}{\nu_{\pm}}(z)=\widehat{\mathbf{Y}}_{\pm}(z)$. Due to the difference between $\widehat{\mathbf{Y}}_{\pm}(z)$ and $\mathbf{Y}_{\pm}^{[\mathrm{bk}]}(z)$ in~\cite{23}, the matrixs $\psi_{\pm}(z)$ defined in~\eqref{2.12} have the clear relationship $\psi_{\pm}(z)=\psi_{\pm}^{[\mathrm{bk}]}(z)\widehat{\mathbf{J}}_{1}(z)$ with $\psi_{\pm}^{[\mathrm{bk}]}(z)$ are the matrix Jost solutions under parallel NZBCs.

We perform factor decomposition on the asymptotic behavior of the potential and reframe the Lax pair~\eqref{2.5} as
\begin{equation}\label{2.14}
\begin{split}
(\psi_{\pm})_{x}=\mathbf{X}_{\pm}\psi_{\pm}+(\mathbf{X}-\mathbf{X}_{\pm})\psi_{\pm}, \quad
(\psi_{\pm})_{t}=\mathbf{T}_{\pm}\psi_{\pm}+(\mathbf{T}-\mathbf{T}_{\pm})\psi_{\pm},
\end{split}
\end{equation}
with $\psi_{\pm}=\psi_{\pm}(z)$ and it can be equivalent to the following equations:
\begin{equation}\label{2.15}
\begin{split}
(\widehat{\mathbf{Y}}_{\pm}^{-1}\nu_{\pm})_{x}=[\mathrm{i}\mathbf{\Lambda}_{1},
\widehat{\mathbf{Y}}_{\pm}^{-1}\nu_{\pm}]
+\widehat{\mathbf{Y}}_{\pm}^{-1}(\mathbf{X}-\mathbf{X}_{\pm})\nu_{\pm},  \quad
(\widehat{\mathbf{Y}}_{\pm}^{-1}\nu_{\pm})_{t}=[\mathrm{i}\mathbf{\Lambda}_{2},
\widehat{\mathbf{Y}}_{\pm}^{-1}\nu_{\pm}]
+\widehat{\mathbf{Y}}_{\pm}^{-1}(\mathbf{T}-\mathbf{T}_{\pm})\nu_{\pm},
\end{split}
\end{equation}
where $\nu_{\pm}=\nu_{\pm}(z)$. It has been confirmed that the spectral issue related to $\nu_{\pm}(z)$ is equivalent to the Volterra integral equations
\begin{equation}\label{2.16}
\begin{split}
\nu_{\pm}(z)=\widehat{\mathbf{Y}}_{\pm}(z)+\int_{\pm\infty}^{x}\widehat{\mathbf{Y}}_{\pm}(z)
\mathrm{e}^{\mathrm{i}(x-y)\mathbf{\Lambda}_{1}}\widehat{\mathbf{Y}}_{\pm}^{-1}(z)
\left[\mathbf{X}(z;y,t)-\mathbf{X}_{\pm}(z)\right]\nu_{\pm}(z;y,t)
\mathrm{e}^{-\mathrm{i}(x-y)\mathbf{\Lambda}_{1}}\,\mathrm{d}y.
\end{split}
\end{equation}
It can be seen from the expression in~\eqref{2.11} that $\widehat{\mathbf{Y}}_{\pm}^{-1}(z)$ have a simple pole at $z=0$. However, the expression $\widehat{\mathbf{Y}}_{\pm}(z)\mathrm{e}^{\pm\mathrm{i}(x-y)\mathbf{\Lambda}_{1}(z)}
\widehat{\mathbf{Y}}_{\pm}^{-1}(z)$ remains bounded for any $x,y\in \mathbb{R}$ as $z\rightarrow0$. We establish the Jost eigenfunctions by solving the integral equations given in~\eqref{2.16}. By applying the Neumann iteration method, we successfully verify the following theorems~\cite{23}.

\begin{theorem}\label{thm:1}
Suppose that $\mathbf{q}(x,t)-\mathbf{q}_{-} \in L^{1}(\mathbb{R}_{x}^{-})$ or $\mathbf{q}(x,t)-\mathbf{q}_{+} \in L^{1}(\mathbb{R}_{x}^{+})$, the modified eigenfunctions $\nu_{\pm}(z)$ are continuous functions of $z$ for $z\in\mathbb{R} \backslash \{\pm q_{0}\}$. Then, the columns of $\nu_{\pm}(z)$ fulfill the requisite properties: $\nu_{-,1}(z)$ and $\nu_{+,3}(z)$ for $z\in \mathbb{D}^{-}$, $\nu_{-,3}(z)$ and $\nu_{+,1}(z)$ for $z\in \mathbb{D}^{+}$.
\end{theorem}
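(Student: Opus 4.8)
\textit{Proof proposal.} The plan is to obtain the theorem directly from the Volterra integral equations~\eqref{2.16} by the standard Neumann iteration, handled column by column, since the weights $\mathrm{e}^{\pm\mathrm{i}(x-y)\mathbf{\Lambda}_1(z)}$ act differently on the three columns of $\nu_\pm(z)$. First I would rewrite~\eqref{2.16} in column form: setting $\mathbf{G}_\pm(z;x,y)=\widehat{\mathbf{Y}}_\pm(z)\,\mathrm{e}^{\mathrm{i}(x-y)\mathbf{\Lambda}_1(z)}\,\widehat{\mathbf{Y}}_\pm^{-1}(z)$ and $\Delta\mathbf{X}(z;x,t)=\mathbf{X}(z;x,t)-\mathbf{X}_\pm(z)$, the $j$-th column satisfies $\nu_{\pm,j}(z;x,t)=\widehat{\mathbf{Y}}_{\pm,j}(z)+\int_{\pm\infty}^{x}\mathbf{G}_\pm(z;x,y)\,\Delta\mathbf{X}(z;y,t)\,\nu_{\pm,j}(z;y,t)\,\mathrm{e}^{-\mathrm{i}(x-y)(\mathbf{\Lambda}_1)_{jj}}\,\mathrm{d}y$, so every entry of the integrand carries the scalar factor $\mathrm{e}^{\mathrm{i}(x-y)\bigl((\mathbf{\Lambda}_1)_{mm}-(\mathbf{\Lambda}_1)_{jj}\bigr)}$ with $m=1,2,3$. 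As already observed after~\eqref{2.16}, the matrix $\mathbf{G}_\pm(z;x,y)$ stays bounded for $x,y\in\mathbb{R}$ (the simple pole of $\widehat{\mathbf{Y}}_\pm^{-1}$ at $z=0$ being cancelled by $\widehat{\mathbf{Y}}_\pm$), so on any compact subset of $\mathbb{R}\setminus\{\pm q_0\}$ the iterates produced by Neumann iteration obey $\|\nu_{\pm,j}^{(n)}(z;x,t)\|\le \tfrac{C^n}{n!}\bigl(\int\|\Delta\mathbf{X}(z;y,t)\|\,\mathrm{d}y\bigr)^{n}$.

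For the continuity statement I would then invoke the $L^1$ hypothesis of the theorem: the bound above shows the Neumann series converges absolutely and uniformly on compacta of $\mathbb{R}\setminus\{\pm q_0\}$, and since $\widehat{\mathbf{Y}}_\pm(z)$, $\widehat{\mathbf{Y}}_\pm^{-1}(z)$ and hence $\mathbf{G}_\pm$ are continuous in $z$ there, the limit $\nu_\pm(z)$ is continuous on $\mathbb{R}\setminus\{\pm q_0\}$.

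For the analyticity statements I would track, for each relevant column $j$ and each integration direction, the sign of $\operatorname{Re}\bigl[\mathrm{i}(x-y)\bigl((\mathbf{\Lambda}_1)_{mm}-(\mathbf{\Lambda}_1)_{jj}\bigr)\bigr]$. With $\mathbf{\Lambda}_1=\operatorname{diag}(\lambda,-k,-\lambda)$ as in~\eqref{2.10}, the nonzero differences reduce, through~\eqref{2.8}, to $\pm z$, $\pm 2\lambda$, $\pm q_0^2/z$; a short computation gives $\operatorname{Im}\lambda(z)<0$ and $\operatorname{Re}\bigl(\mathrm{i}q_0^2/z\bigr)<0$ (for $x-y>0$) throughout $\mathbb{D}^{-}$, so for $\nu_{-,1}$ (integration $\int_{-\infty}^{x}$, hence $x-y\ge 0$) and for $\nu_{+,3}$ every scalar exponential in the integrand is bounded precisely when $z\in\overline{\mathbb{D}^{-}}$; the mirror computation with $x-y\le 0$ handles $\nu_{+,1}$ and $\nu_{-,3}$ on $\mathbb{D}^{+}$. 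Since $k(z)$, $\lambda(z)$, and therefore $\mathbf{G}_\pm(z;x,y)$, are analytic on $\mathbb{C}\setminus\{0,\pm q_0\}$ after uniformization, each Neumann iterate is analytic in the corresponding open half-plane; the bound $\|\nu_{\pm,j}^{(n)}\|\le C^n\bigl(\int\|\Delta\mathbf{X}\|\bigr)^{n}/n!$ is locally uniform there, so the series converges uniformly on compacta and the limit is analytic by the Weierstrass theorem, with continuous boundary values on $\mathbb{R}\setminus\{\pm q_0\}$. I would note in passing that the middle columns $\nu_{\pm,2}$ admit \emph{no} half-plane of analyticity, since there the two surviving exponentials $\mathrm{e}^{\pm\mathrm{i}(x-y)z}$ and $\mathrm{e}^{\pm\mathrm{i}(x-y)q_0^2/z}$ require opposite half-planes; this is precisely why the adjoint Lax pair must be brought in later.

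The main obstacle I anticipate is the careful bookkeeping around the singular points $z=0$ and $z=\pm q_0$: one must check that $\mathbf{G}_\pm(z;x,y)$ is not merely bounded on $\mathbb{R}$ but uniformly bounded and analytic on each half-plane away from $\{0,\pm q_0\}$ — in particular that the double zero of $\det\widehat{\mathbf{Y}}_\pm$ at $z=0$ (Remark~\ref{rem:3}) does not spoil boundedness of the product $\widehat{\mathbf{Y}}_\pm\,\mathrm{e}^{\mathrm{i}(x-y)\mathbf{\Lambda}_1}\,\widehat{\mathbf{Y}}_\pm^{-1}$, and that near $z=\pm q_0$ the estimates deteriorate only in a controlled way so the Neumann bounds stay locally uniform. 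This is where the explicit structure~\eqref{2.9}, \eqref{2.11} of the non-parallel eigenvector matrices must be used directly, rather than any estimate imported verbatim from the parallel case~\cite{23}.
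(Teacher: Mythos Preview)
Your proposal is correct and follows exactly the approach the paper indicates: the paper itself simply states that the Jost eigenfunctions are constructed from the Volterra integral equations~\eqref{2.16} and that ``by applying the Neumann iteration method'' one verifies Theorem~\ref{thm:1}, referring to~\cite{23} for details. Your column-by-column tracking of the exponential weights $\mathrm{e}^{\mathrm{i}(x-y)\bigl((\mathbf{\Lambda}_1)_{mm}-(\mathbf{\Lambda}_1)_{jj}\bigr)}$ and the resulting $C^n/n!$ Neumann bounds are precisely the standard argument being invoked, and your caveat about the behavior of $\widehat{\mathbf{Y}}_\pm\,\mathrm{e}^{\mathrm{i}(x-y)\mathbf{\Lambda}_1}\,\widehat{\mathbf{Y}}_\pm^{-1}$ near $z=0$ is the same point the paper makes just before stating the theorem.
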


Moreover, it can be inferred from $\nu_{\pm}(z)=\psi_{\pm}(z)\mathrm{e}^{-\mathrm{i}\mathbf{\Delta}(z)}$ that the columns of $\nu_{\pm}(z)$ and $\psi_{\pm}(z)$ have the same analytical property. Next, we will assume that both conditions in Theorem~\ref{thm:1} hold, namely $\mathbf{q}(x,t)-\mathbf{q}_{\pm} \in L^{1}(\mathbb{R}_{x}^{\pm})$. Then $\det\psi_{\pm}(z)=\mathrm{i}\widehat{\rho}(z)\mathrm{e}^{\mathrm{i}\delta_{2}(z)}$ for $(x,t)\in\mathbb{R}^{2}$ and $z\in\mathbb{R} \backslash \{\pm q_{0}\}$. Next, construct the corresponding $3\times3$ scattering matrix defined by the following expression
\begin{equation}\label{2.17}
\begin{split}
\psi_{+}(z)=\psi_{-}(z)\mathbf{H}(z), \quad z\in\mathbb{R} \backslash \{0,\pm q_{0}\},
\end{split}
\end{equation}
where $\mathbf{H}(z)=(h_{ij}(z))$. By using $\det\psi_{\pm}(z)=\mathrm{i}\widehat{\rho}(z)\mathrm{e}^{\mathrm{i}\delta_{2}(z)}$ and~\eqref{2.17}, we obtain $\det\mathbf{H}(z)=1$ for $z\in\mathbb{R} \backslash \{0,\pm q_{0}\}$. Similarly, the inverse matrix $\mathbf{S}(z)=\mathbf{H}^{-1}(z)=(s_{ij}(z))$. Therefore, the scattering matrix defined under non-parallel conditions is different from the definition in~\cite{23}. More precisely, the scattering matrix defined in the parallel case~\cite{23} is $\psi_{+}^{[\mathrm{bk}]}(z)=\psi_{-}^{[\mathrm{bk}]}(z)\mathbf{H}^{[\mathrm{bk}]}(z)$. So it can be clearly stated that $\mathbf{H}^{[\mathrm{bk}]}(z)$ in the parallel case~\cite{23} and $\mathbf{H}(z)$ in the non-parallel case have the following relationship for $z\in\mathbb{R} \backslash \{0,\pm q_{0}\}$
\begin{equation}\label{2.18}
\begin{split}
\mathbf{H}(z)=\widehat{\mathbf{J}}_{1}^{-1}(z)\mathbf{H}^{[\mathrm{bk}]}(z)\widehat{\mathbf{J}}_{1}(z), \quad
\mathbf{S}(z)=\widehat{\mathbf{J}}_{1}^{-1}(z)\mathbf{S}^{[\mathrm{bk}]}(z)\widehat{\mathbf{J}}_{1}(z),
\end{split}
\end{equation}
where the inverse matrix $\mathbf{S}^{[\mathrm{bk}]}(z)=[\mathbf{H}^{[\mathrm{bk}]}(z)]^{-1}$.

\begin{theorem}\label{thm:2}
According to the same assumption in Theorem~\ref{thm:1}, the scattering coefficients have the properties: $s_{33}(z)$ and $h_{11}(z)$ for $z\in \mathbb{D}^{+}$, $s_{11}(z)$ and $h_{33}(z)$ for $z\in \mathbb{D}^{-}$.
\end{theorem}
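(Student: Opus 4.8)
\emph{Proof proposal.} The plan is to represent each of the four scattering coefficients $s_{11},s_{33},h_{11},h_{33}$ as a single $3\times3$ Wronskian (column determinant) of Jost solutions, and then to exhibit in each such Wronskian a factor analytic in the appropriate half-plane. Writing $\mathrm{Wr}$ for the $3\times3$ column determinant, from $\psi_{+}(z)=\psi_{-}(z)\mathbf{H}(z)$ and its inverse $\psi_{-}(z)=\psi_{+}(z)\mathbf{S}(z)$ Cramer's rule gives, for $z\in\mathbb{R}\setminus\{0,\pm q_{0}\}$,
\begin{align*}
h_{11}(z) &= \frac{\mathrm{Wr}\left(\psi_{+,1},\psi_{-,2},\psi_{-,3}\right)}{\mathrm{i}\,\widehat{\rho}(z)\,\mathrm{e}^{\mathrm{i}\delta_{2}(z)}}, &
h_{33}(z) &= \frac{\mathrm{Wr}\left(\psi_{-,1},\psi_{-,2},\psi_{+,3}\right)}{\mathrm{i}\,\widehat{\rho}(z)\,\mathrm{e}^{\mathrm{i}\delta_{2}(z)}}, \\
s_{11}(z) &= \frac{\mathrm{Wr}\left(\psi_{-,1},\psi_{+,2},\psi_{+,3}\right)}{\mathrm{i}\,\widehat{\rho}(z)\,\mathrm{e}^{\mathrm{i}\delta_{2}(z)}}, &
s_{33}(z) &= \frac{\mathrm{Wr}\left(\psi_{+,1},\psi_{+,2},\psi_{-,3}\right)}{\mathrm{i}\,\widehat{\rho}(z)\,\mathrm{e}^{\mathrm{i}\delta_{2}(z)}},
\end{align*}
where I used $\det\psi_{\pm}(z)=\mathrm{i}\,\widehat{\rho}(z)\mathrm{e}^{\mathrm{i}\delta_{2}(z)}$. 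Since $\widehat{\rho}(z)=z^{2}/(z^{2}-q_{0}^{2})$ is analytic and non-vanishing on $\mathbb{D}^{\pm}$, and $\mathrm{e}^{\mathrm{i}\delta_{2}(z)}$ is likewise analytic and non-vanishing there, the denominators are harmless in the open half-planes, so the whole problem reduces to the analyticity of the numerators.

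The obstruction is that, by Theorem~\ref{thm:1}, the middle columns $\psi_{\pm,2}(z)$ are only continuous on $z\in\mathbb{R}$ and are not analytic in either half-plane, so the above Wronskians cannot be handled column by column. To circumvent this I would invoke the ``adjoint'' Lax pair $\widetilde{\psi}_{x}=-\mathbf{X}^{T}\widetilde{\psi}$, $\widetilde{\psi}_{t}=-\mathbf{T}^{T}\widetilde{\psi}$ (equivalently, the equation satisfied by the cofactor matrix $\mathrm{cof}(\psi_{\pm})$, up to the scalar gauge factor coming from $\operatorname{tr}\mathbf{X}$), together with the elementary identity $\mathrm{Wr}(a,b,c)=a\cdot(b\times c)$ in $\mathbb{C}^{3}$ and the fact that if $\chi,\widetilde{\chi}$ solve the Lax pair then $\chi\times\widetilde{\chi}$ solves the adjoint Lax pair. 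One then introduces adjoint Jost solutions normalized by the cofactor matrices of $\widehat{\mathbf{Y}}_{\pm}(z)$, writes their Volterra integral equations exactly as in~\eqref{2.16}, and repeats the Neumann-series argument behind Theorem~\ref{thm:1} to conclude that the cross products $\psi_{-,2}\times\psi_{-,3}$ and $\psi_{+,1}\times\psi_{+,2}$ (the first and third columns of $\mathrm{cof}(\psi_{-})$, resp.\ $\mathrm{cof}(\psi_{+})$) extend analytically to $\mathbb{D}^{+}$, while $\psi_{-,1}\times\psi_{-,2}$ and $\psi_{+,2}\times\psi_{+,3}$ extend analytically to $\mathbb{D}^{-}$ — in each case a ``good'' outer column is paired with the ``bad'' middle one but the product is nevertheless analytic.

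With this in hand I would rewrite each numerator as a dot product of one genuine Jost column with one adjoint (cross-product) column, e.g.\ $\mathrm{Wr}(\psi_{+,1},\psi_{-,2},\psi_{-,3})=\psi_{+,1}\cdot(\psi_{-,2}\times\psi_{-,3})$. Here $\psi_{+,1}(z)$ is analytic on $\mathbb{D}^{+}$ by Theorem~\ref{thm:1} and $\psi_{-,2}\times\psi_{-,3}$ is analytic on $\mathbb{D}^{+}$ by the adjoint argument, so $h_{11}(z)$ is analytic on $\mathbb{D}^{+}$; continuity up to $\mathbb{R}\setminus\{\pm q_{0}\}$ follows from continuity of the Jost and adjoint Jost functions there. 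The identical bookkeeping gives $h_{33}(z)=\psi_{+,3}\cdot(\psi_{-,1}\times\psi_{-,2})/(\mathrm{i}\widehat{\rho}\,\mathrm{e}^{\mathrm{i}\delta_{2}})$ analytic on $\mathbb{D}^{-}$, $s_{11}(z)=\psi_{-,1}\cdot(\psi_{+,2}\times\psi_{+,3})/(\mathrm{i}\widehat{\rho}\,\mathrm{e}^{\mathrm{i}\delta_{2}})$ analytic on $\mathbb{D}^{-}$, and $s_{33}(z)=\psi_{-,3}\cdot(\psi_{+,1}\times\psi_{+,2})/(\mathrm{i}\widehat{\rho}\,\mathrm{e}^{\mathrm{i}\delta_{2}})$ analytic on $\mathbb{D}^{+}$, exactly the domains claimed in the statement.

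The main obstacle is precisely the step that forces the adjoint machinery: establishing analyticity of the relevant cross-product columns. This amounts to verifying that the adjoint Volterra kernels carry the same triangular/exponential structure as in~\eqref{2.16}, so that the corresponding Neumann series converges to a function analytic in the stated half-plane. The one subtlety needing attention is that the adjoint normalizations $\mathrm{cof}(\widehat{\mathbf{Y}}_{\pm})(z)$ inherit the singularities of $\widehat{\mathbf{Y}}_{\pm}(z)$ at $z=\pm q_{0}$ and the double zero of $\det\widehat{\mathbf{Y}}_{\pm}$ at $z=0$ (cf.\ Remarks~\ref{rem:1}--\ref{rem:3}), so these points must be excised throughout; this is harmless since they lie on $\partial\mathbb{D}^{\pm}$ and not in the open half-planes. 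Everything else — Cramer's rule, the Wronskian identities, and the non-vanishing of $\widehat{\rho}\,\mathrm{e}^{\mathrm{i}\delta_{2}}$ on $\mathbb{D}^{\pm}$ — is routine.
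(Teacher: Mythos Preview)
Your proposal is correct and is essentially the approach the paper relies on: the paper does not write out a proof of Theorem~\ref{thm:2} but cites~\cite{23} and then, immediately afterward in Section~\ref{s:Adjoint problem}, introduces exactly the adjoint Lax pair and the cross-product identities (Corollary~\ref{cor:1}, eq.~\eqref{2.23}) that your argument invokes to make the non-analytic middle columns harmless in the Wronskians. The only cosmetic difference is that the paper formulates its adjoint via complex conjugation, $\widetilde{\mathbf{X}}=\mathbf{X}^{*}$ (eq.~\eqref{2.20}), rather than via the cofactor/negative-transpose equation you write; since $\mathbf{Q}^{\dagger}=-\mathbf{Q}$ here the two formulations are equivalent, and your Wronskian-plus-adjoint bookkeeping matches the paper's (and~\cite{10,23}'s) exactly.
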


\begin{remark}\label{rem:4}
In addition to the difference of the eigenvector matrix, the above results are the same as those in the parallel case~\cite{23}. Obviously, there is a significant difference between non-parallel and parallel cases. In the framework of the parallel case~\cite{23}, the matrix Jost solutions $\psi_{\pm}^{[\mathrm{bk}]}(z)$ can be analytically continued to $z=\pm q_{0}$ provided that the scattering potential meets an extra decay criterion as $x\rightarrow \pm\infty$. However, in the non-parallel case, this kind of continuity extension is impossible. The reason is that the eigenvector matrix produces poles at $z=\pm q_{0}$, which hinders its continuity.
\end{remark}

\subsection{Adjoint problem}
\label{s:Adjoint problem}

To address the issue that $\nu_{\pm,2}(z)$ is not the analytically modified eigenfunctions, consider its corresponding "adjoint" Lax pair in~\cite{23} (following the terminology of~\cite{10}):
\begin{equation}\label{2.19}
\begin{split}
\widetilde{\psi}_{x}&=\widetilde{\mathbf{X}}\widetilde{\psi}, \quad
\widetilde{\psi}_{t}=\widetilde{\mathbf{T}}\widetilde{\psi},
\end{split}
\end{equation}
where
\begin{equation}\label{2.20}
\begin{split}
\widetilde{\mathbf{X}}=\widetilde{\mathbf{X}}(k;x,t)&=-\mathrm{i}k\mathbf{J}-\mathrm{i}\mathbf{Q}^{*}, \\
\widetilde{\mathbf{T}}=\widetilde{\mathbf{T}}(k;x,t)&=-4\mathrm{i}\sigma k^{3}\mathbf{J}+\mathrm{i}q_{0}^{2}\mathbf{J}
-k^{2}(4\mathrm{i}\sigma\mathbf{Q}^{*}+2\mathrm{i}\mathbf{J})
+k(2\mathrm{i}\sigma\mathbf{J}(\mathbf{Q}^{*})^{2}-2\mathrm{i}\mathbf{Q}^{*}
-2\sigma\mathbf{Q}_{x}^{*}\mathbf{J}) \\
&+2\mathrm{i}\sigma(\mathbf{Q}^{*})^{3}+\mathrm{i}\mathbf{J}(\mathbf{Q}^{*})^{2}
+\mathrm{i}\sigma\mathbf{Q}_{xx}^{*}-\mathbf{Q}_{x}^{*}\mathbf{J}
+\sigma[\mathbf{Q}^{*},\mathbf{Q}_{x}^{*}],
\end{split}
\end{equation}
where $\widetilde{\psi}=\widetilde{\psi}(z)=\widetilde{\psi}(z;x,t)$, $\widetilde{\mathbf{X}}=\mathbf{X}^{*}$ and $\widetilde{\mathbf{T}}=\mathbf{T}^{*}$ for all $z\in \mathbb{R}$, while $*$ denotes complex conjugation and $\mathbf{J}\mathbf{Q}=-\mathbf{Q}\mathbf{J}$, $\mathbf{J}\mathbf{Q}^{*}=-\mathbf{Q}^{*}\mathbf{J}$, $\mathbf{J}\mathbf{Q}^{*}=\mathbf{Q}^{T}\mathbf{J}$. It can be proven that the defocusing-defocusing coupled Hirota equations~\eqref{1.3} are also equivalent to the equation $\widetilde{\mathbf{X}}_{t}-\widetilde{\mathbf{T}}_{x}+[\widetilde{\mathbf{X}}, \widetilde{\mathbf{T}}]=\mathbf{0}$. Then one has:

\begin{proposition}\label{pro:1}
If $\widetilde{\mathbf{v}}_{2}(z)$ and $\widetilde{\mathbf{v}}_{3}(z)$ are two arbitrary solutions of the "adjoint" Lax pair~\eqref{2.19}, while "$\times$" denotes the usual cross product, then $\mathbf{v}_{1}(z)=\mathrm{e}^{\mathrm{i}\delta_{2}(z)}\mathbf{J}
[\widetilde{\mathbf{v}}_{2}(z) \times \widetilde{\mathbf{v}}_{3}(z)]$ is a solution of the Lax pair~\eqref{2.2}.
\end{proposition}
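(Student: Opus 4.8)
The plan is to verify directly that $\mathbf{v}_{1}(z)=\mathrm{e}^{\mathrm{i}\delta_{2}(z)}\mathbf{J}[\widetilde{\mathbf{v}}_{2}(z)\times\widetilde{\mathbf{v}}_{3}(z)]$ satisfies both halves of the Lax pair~\eqref{2.2}, relying on the algebraic identity relating the cross product of two vectors under a $3\times3$ matrix action to the adjugate (cofactor) matrix. The key fact is that for any invertible $\mathbf{A}\in\mathrm{GL}_{3}$ and vectors $\mathbf{a},\mathbf{b}\in\mathbb{C}^{3}$ one has $(\mathbf{A}\mathbf{a})\times(\mathbf{A}\mathbf{b})=(\det\mathbf{A})\,(\mathbf{A}^{-1})^{T}(\mathbf{a}\times\mathbf{b})$; differentiating the defining relations of $\widetilde{\mathbf{v}}_{2},\widetilde{\mathbf{v}}_{3}$ will produce exactly such matrix actions (with $\mathbf{A}$ replaced by the infinitesimal generators $\widetilde{\mathbf{X}},\widetilde{\mathbf{T}}$), so the strategy is to compute $\partial_{x}$ and $\partial_{t}$ of $\mathbf{v}_{1}$ using the product rule and collapse everything with this identity.

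First I would treat the $x$-equation. Write $\mathbf{w}=\widetilde{\mathbf{v}}_{2}\times\widetilde{\mathbf{v}}_{3}$, so $\mathbf{v}_{1}=\mathrm{e}^{\mathrm{i}\delta_{2}}\mathbf{J}\mathbf{w}$. Using $\widetilde{\psi}_{x}=\widetilde{\mathbf{X}}\widetilde{\psi}$ gives $\mathbf{w}_{x}=(\widetilde{\mathbf{X}}\widetilde{\mathbf{v}}_{2})\times\widetilde{\mathbf{v}}_{3}+\widetilde{\mathbf{v}}_{2}\times(\widetilde{\mathbf{X}}\widetilde{\mathbf{v}}_{3})$. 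There is a standard "infinitesimal" version of the determinant identity: for any matrix $\mathbf{M}$, $(\mathbf{M}\mathbf{a})\times\mathbf{b}+\mathbf{a}\times(\mathbf{M}\mathbf{b})=(\operatorname{tr}\mathbf{M})(\mathbf{a}\times\mathbf{b})-\mathbf{M}^{T}(\mathbf{a}\times\mathbf{b})$. Applying this with $\mathbf{M}=\widetilde{\mathbf{X}}$ and noting $\operatorname{tr}\widetilde{\mathbf{X}}=\operatorname{tr}(-\mathrm{i}k\mathbf{J}-\mathrm{i}\mathbf{Q}^{*})=-\mathrm{i}k(1-2)=\mathrm{i}k$ (since $\mathbf{Q}^{*}$ is traceless), I get $\mathbf{w}_{x}=(\mathrm{i}k)\mathbf{w}-\widetilde{\mathbf{X}}^{T}\mathbf{w}$. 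Then $\partial_{x}\mathbf{v}_{1}=\mathrm{i}(\delta_{2})_{x}\mathbf{v}_{1}+\mathrm{e}^{\mathrm{i}\delta_{2}}\mathbf{J}\mathbf{w}_{x}$; I substitute $\mathbf{w}_{x}$, use $(\delta_{2})_{x}=-k$ from~\eqref{2.13} to cancel the $\mathrm{i}k$ term, and am left with $\partial_{x}\mathbf{v}_{1}=-\mathrm{e}^{\mathrm{i}\delta_{2}}\mathbf{J}\widetilde{\mathbf{X}}^{T}\mathbf{w}$. It then remains to check the purely algebraic identity $-\mathbf{J}\widetilde{\mathbf{X}}^{T}=\mathbf{X}\mathbf{J}$, equivalently $\mathbf{X}\mathbf{J}=-\mathbf{J}(-\mathrm{i}k\mathbf{J}-\mathrm{i}\mathbf{Q}^{*})^{T}=\mathrm{i}k\mathbf{J}^{T}\mathbf{J}+\mathrm{i}\mathbf{J}(\mathbf{Q}^{*})^{T}$; using $\mathbf{J}^{2}=\mathbf{I}$, $\mathbf{J}^{T}=\mathbf{J}$, and the stated relation $\mathbf{J}\mathbf{Q}^{*}=\mathbf{Q}^{T}\mathbf{J}$ together with $(\mathbf{Q}^{*})^{T}=\mathbf{Q}^{\dagger}$ and the explicit block form~\eqref{2.4}, this reduces to $\mathbf{X}\mathbf{J}=\mathrm{i}k\mathbf{J}+\mathrm{i}\mathbf{Q}\mathbf{J}\cdot\mathbf{J}=(\mathrm{i}k\mathbf{J}+\mathrm{i}\mathbf{Q})\mathbf{J}$, which is precisely $\mathbf{X}\mathbf{J}$ by~\eqref{2.3}. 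Hence $\partial_{x}\mathbf{v}_{1}=\mathbf{X}\mathbf{v}_{1}$.

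The $t$-equation is handled identically in structure: $\mathbf{w}_{t}=(\operatorname{tr}\widetilde{\mathbf{T}})\mathbf{w}-\widetilde{\mathbf{T}}^{T}\mathbf{w}$, then $\partial_{t}\mathbf{v}_{1}=\mathrm{i}(\delta_{2})_{t}\mathbf{v}_{1}+\mathrm{e}^{\mathrm{i}\delta_{2}}\mathbf{J}\mathbf{w}_{t}$, and one needs $\mathrm{i}(\delta_{2})_{t}+\operatorname{tr}\widetilde{\mathbf{T}}=0$ (so that the scalar terms cancel) together with the algebraic identity $-\mathbf{J}\widetilde{\mathbf{T}}^{T}=\mathbf{T}\mathbf{J}$. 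The trace cancellation is a short check: from~\eqref{2.13}, $(\delta_{2})_{t}=-(\lambda^{2}+k^{2}+4\sigma k^{3})$, and $\operatorname{tr}\widetilde{\mathbf{T}}$ is computed from~\eqref{2.20} using tracelessness of $\mathbf{Q}^{*}$, $(\mathbf{Q}^{*})^{3}$, $[\mathbf{Q}^{*},\mathbf{Q}_{x}^{*}]$ and $\operatorname{tr}(\mathbf{J}(\mathbf{Q}^{*})^{2})=\operatorname{tr}((\mathbf{Q}^{*})^{2})-2\|\mathbf{q}\|^{2}=-2\|\mathbf{q}\|^{2}$; the surviving contributions must reproduce $\lambda^{2}+k^{2}+4\sigma k^{3}$ after using $\lambda^{2}=k^{2}-q_{0}^{2}$. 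The identity $-\mathbf{J}\widetilde{\mathbf{T}}^{T}=\mathbf{T}\mathbf{J}$ is again a term-by-term verification using $\widetilde{\mathbf{T}}=\mathbf{T}^{*}$, $\mathbf{J}\mathbf{Q}^{*}=\mathbf{Q}^{T}\mathbf{J}$, $\mathbf{J}^{T}=\mathbf{J}=\mathbf{J}^{-1}$, and the symmetry $(\mathbf{Q}_{x}\mathbf{J})^{T}=\mathbf{J}\mathbf{Q}_{x}^{\ast}\cdot$(sign bookkeeping via $\mathbf{J}\mathbf{Q}=-\mathbf{Q}\mathbf{J}$).

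The main obstacle is not conceptual but bookkeeping: verifying the two algebraic identities $\mathbf{T}\mathbf{J}=-\mathbf{J}\widetilde{\mathbf{T}}^{T}$ and the trace cancellation $\operatorname{tr}\widetilde{\mathbf{T}}=-\mathrm{i}(\delta_{2})_{t}$ requires carefully matching each of the seven-or-so terms of $\mathbf{T}$ in~\eqref{2.3} against the corresponding transposed-conjugated term of $\widetilde{\mathbf{T}}$ in~\eqref{2.20}, keeping track of the signs introduced by $\mathbf{J}\mathbf{Q}=-\mathbf{Q}\mathbf{J}$ and by transposition of products and commutators. In practice I would organize this by noting that the identity $\mathbf{T}=-\mathbf{J}\widetilde{\mathbf{T}}^{T}\mathbf{J}$ (the compatibility already implicit in the construction of the adjoint pair in~\cite{23}) does the heavy lifting once established at the level of $\mathbf{X}$, since $\mathbf{T}$ is built from $\mathbf{X}$ through the AKNS recursion; this reduces the whole proposition to the single generator-level identity plus the two scalar normalizations already noted, which is clean enough to present without exhibiting every intermediate matrix product.
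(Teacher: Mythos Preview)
The paper does not actually prove Proposition~\ref{pro:1}; it simply states the result and invokes the terminology of~\cite{10} (and the companion paper~\cite{23}), so there is no ``paper's own proof'' to compare against. Your direct-verification strategy using the infinitesimal cross-product identity
\[
(\mathbf{M}\mathbf{a})\times\mathbf{b}+\mathbf{a}\times(\mathbf{M}\mathbf{b})=(\operatorname{tr}\mathbf{M})(\mathbf{a}\times\mathbf{b})-\mathbf{M}^{T}(\mathbf{a}\times\mathbf{b})
\]
is exactly the standard route taken in that literature, and your reduction of the $x$-equation to the algebraic check $-\mathbf{J}\widetilde{\mathbf{X}}^{T}=\mathbf{X}\mathbf{J}$ is clean and correct (indeed $\mathbf{Q}^{\dagger}=-\mathbf{Q}$ and $\mathbf{J}\mathbf{Q}=-\mathbf{Q}\mathbf{J}$ give $\mathbf{J}\mathbf{Q}^{\dagger}=\mathbf{Q}\mathbf{J}$ immediately).

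One small slip in your sketch of the $t$-part: you write $\operatorname{tr}(\mathbf{J}(\mathbf{Q}^{*})^{2})=-2\|\mathbf{q}\|^{2}$, but in fact $(\mathbf{Q}^{*})^{2}=\operatorname{diag}(-\|\mathbf{q}\|^{2},\,-\mathbf{q}^{*}\mathbf{q}^{T})$ so $\mathbf{J}(\mathbf{Q}^{*})^{2}=\operatorname{diag}(-\|\mathbf{q}\|^{2},\,\mathbf{q}^{*}\mathbf{q}^{T})$ has trace $0$. This does not break anything---with the correct value the surviving trace contributions are $(-4\mathrm{i}\sigma k^{3}+\mathrm{i}q_{0}^{2}-2\mathrm{i}k^{2})\operatorname{tr}\mathbf{J}=\mathrm{i}(4\sigma k^{3}+2k^{2}-q_{0}^{2})=\mathrm{i}(\lambda^{2}+k^{2}+4\sigma k^{3})$ after $\lambda^{2}=k^{2}-q_{0}^{2}$, which is precisely $-\mathrm{i}(\delta_{2})_{t}$ as required. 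With that corrected, your argument goes through; the term-by-term verification of $-\mathbf{J}\widetilde{\mathbf{T}}^{T}=\mathbf{T}\mathbf{J}$ is indeed pure bookkeeping using $\mathbf{J}\mathbf{Q}^{*}=\mathbf{Q}^{T}\mathbf{J}$ and its derivatives.
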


As $x\to\pm\infty$, the behavior of the solutions derived from the "adjoint" Lax pair will approach an asymptotic state in terms of both spatial $\widetilde{\psi}_{x}=\widetilde{\mathbf{X}}_{\pm}\widetilde{\psi}$ and temporal $\widetilde{\psi}_{t}=\widetilde{\mathbf{T}}_{\pm}\widetilde{\psi}$. The eigenvalues of $\widetilde{\mathbf{X}}_{\pm}$ are $\mathrm{i}k$ and $\pm\mathrm{i}\lambda$,
the eigenvalues of $\widetilde{\mathbf{T}}_{\pm}$ are $\mathrm{i}(\lambda^{2}+k^{2}+4\sigma k^{3})$ and $\pm 2\mathrm{i}\lambda[k+\sigma(3k^{2}-\lambda^{2})]$. Then, the Jost solutions $\widetilde{\psi}_{\pm}(z)$ of the "adjoint" Lax pair~\eqref{2.19}
\begin{equation}\label{2.21}
\begin{split}
\widetilde{\psi}_{\pm}(z)=\widetilde{\mathbf{Y}}_{\pm}(z) \mathrm{e}^{-\mathrm{i}\mathbf{\Delta}(z)}+o(1), \quad x\rightarrow\pm\infty, \quad z\in\mathbb{R} \backslash \{\pm q_{0}\},
\end{split}
\end{equation}
where $\widetilde{\mathbf{Y}}_{\pm}(z)=\widehat{\mathbf{Y}}_{\pm}^{*}(z)$ for $z\in\mathbb{R} \backslash \{\pm q_{0}\}$ and $\det \widetilde{\mathbf{Y}}_{\pm}(z)=-\mathrm{i}\widehat{\rho}(z)$. Introducing the modified Jost solutions $\widetilde{\nu}_{\pm}(z)=\widetilde{\psi}_{\pm}(z)\mathrm{e}^{\mathrm{i}\mathbf{\Delta}(z)}$. Through analytical analysis~\cite{23}, it can be concluded that $\widetilde{\nu}_{-,3}(z)$ and $\widetilde{\nu}_{+,1}(z)$ for $z\in \mathbb{D}^{-}$, $\widetilde{\nu}_{-,1}(z)$ and $\widetilde{\nu}_{+,3}(z)$ for $z\in \mathbb{D}^{+}$. Construct the "adjoint" scattering matrix $\widetilde{\mathbf{H}}(z)=(\widetilde{h}_{ij}(z))$ for $z\in\mathbb{R} \backslash \{0,\pm q_{0}\}$ corresponding to the relationship $\widetilde{\psi}_{+}(z)=\widetilde{\psi}_{-}(z)\widetilde{\mathbf{H}}(z)$. Similarly, define $\widetilde{\mathbf{S}}(z)=\widetilde{\mathbf{H}}^{-1}(z)=(\widetilde{s}_{ij}(z))$. The scattering coefficient satisfies analytical properties: $\widetilde{s}_{33}(z)$ and $\widetilde{h}_{11}(z)$ for $z\in \mathbb{D}^{-}$, $\widetilde{s}_{11}(z)$ and $\widetilde{h}_{33}(z)$ for $z\in \mathbb{D}^{+}$. Next, we can establish two new auxiliary eigenfunctions for the original Lax pair \eqref{2.2}:
\begin{equation}\label{2.22}
\begin{split}
\gamma(z)=-\frac{\mathrm{i}\mathrm{e}^{\mathrm{i}\delta_{2}}\mathbf{J}[\widetilde{\psi}_{-,3}(z)
\times \widetilde{\psi}_{+,1}(z)]}{\widehat{\rho}(z)}, \quad z\in \mathbb{D}^{-},  \quad
\widetilde{\gamma}(z)=-\frac{\mathrm{i}\mathrm{e}^{\mathrm{i}\delta_{2}} \mathbf{J}[\widetilde{\psi}_{-,1}(z)
\times \widetilde{\psi}_{+,3}(z)]}{\widehat{\rho}(z)}, \quad z\in \mathbb{D}^{+},
\end{split}
\end{equation}
where $\gamma(z)=\gamma(z;x,t)$ and $\widetilde{\gamma}(z)=\widetilde{\gamma}(z;x,t)$. The relationship between $\mathbf{S}(z)$ and $\widetilde{\mathbf{S}}(z)$ for $z\in\mathbb{R} \backslash \{0,\pm q_{0}\}$ is $\widetilde{\mathbf{S}}^{-1}(z)=\widehat{\mathbf{J}}_{2}(z)\mathbf{S}^{T}(z)\widehat{\mathbf{J}}_{2}^{-1}(z)$, where $\widehat{\mathbf{J}}_{2}(z)=\operatorname{diag}\,(-1,\widehat{\rho}(z),1)$. Subsequently, we can establish the following relationship~\cite{23}:

\begin{corollary}\label{cor:1}
For all cyclic indices $j$, $l$ and $m$ with $\widehat{\rho}_{1}(z)=-1$, $\widehat{\rho}_{2}(z)=\widehat{\rho}(z)$ and $\widehat{\rho}_{3}(z)=1$,
\begin{equation}\label{2.23}
\begin{split}
\psi_{\pm,j}(z)=-\frac{\mathrm{i}\mathrm{e}^{\mathrm{i}\delta_{2}}\mathbf{J}
[\widetilde{\psi}_{\pm,l}(z) \times \widetilde{\psi}_{\pm,m}(z)]}{\widehat{\rho}_{j}(z)}, \quad
\widetilde{\psi}_{\pm,j}(z)=\frac{\mathrm{i}\mathrm{e}^{-\mathrm{i}\delta_{2}}\mathbf{J}
[\psi_{\pm,l}(z) \times \psi_{\pm,m}(z)]}{\widehat{\rho}_{j}(z)}, \quad
z\in\mathbb{R} \backslash \{0,\pm q_{0}\}.
\end{split}
\end{equation}
\end{corollary}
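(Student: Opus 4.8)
The plan is to derive both column identities in~\eqref{2.23} from a single bilinear matrix identity for the Jost solutions,
\begin{equation*}
\psi_{\pm}(z)\,\widehat{\mathbf{J}}_{2}(z)\,\widetilde{\psi}_{\pm}^{T}(z)=-\widehat{\rho}(z)\,\mathbf{J},\qquad z\in\mathbb{R}\setminus\{0,\pm q_{0}\},
\end{equation*}
and then to read off~\eqref{2.23} by elementary $3\times 3$ algebra. For the latter I will use the cofactor identity: for a $3\times 3$ matrix $\mathbf{M}$ with columns $\mathbf{m}_{1},\mathbf{m}_{2},\mathbf{m}_{3}$, the matrix whose $j$-th column is $\mathbf{m}_{l}\times\mathbf{m}_{m}$ (with $(j,l,m)$ cyclic) equals $(\det\mathbf{M})\,(\mathbf{M}^{T})^{-1}$, since multiplying it on the left by $\mathbf{M}^{T}$ produces $(\det\mathbf{M})\mathbf{I}$ by expansion of scalar triple products. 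Solving the displayed identity for $\widetilde{\psi}_{\pm}$, substituting $(\psi_{\pm}^{T})^{-1}=(\det\psi_{\pm})^{-1}$ times the matrix with columns $\psi_{\pm,l}\times\psi_{\pm,m}$, and using $\det\psi_{\pm}(z)=\mathrm{i}\widehat{\rho}(z)\mathrm{e}^{\mathrm{i}\delta_{2}}$, yields the second formula in~\eqref{2.23}; solving instead for $\psi_{\pm}$ and using $\det\widetilde{\psi}_{\pm}(z)=-\mathrm{i}\widehat{\rho}(z)\mathrm{e}^{-\mathrm{i}\delta_{2}}$ yields the first. The exponentials $\mathrm{e}^{\pm\mathrm{i}\delta_{2}}$ cancel exactly against the $\mathrm{e}^{\mp\mathrm{i}\delta_{2}}$ coming from $\widehat{\mathbf{J}}_{2}$ being $(x,t)$-independent, and the coefficients $\widehat{\rho}_{j}$ are just the diagonal entries of $\widehat{\mathbf{J}}_{2}^{-1}$ up to the scalar $-\widehat{\rho}$.

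To prove the matrix identity I will first show that $\widetilde{\psi}_{\pm}^{T}(z)\,\mathbf{J}\,\psi_{\pm}(z)$ is independent of $x$ and $t$. Differentiating and invoking the Lax pair~\eqref{2.2} and the ``adjoint'' Lax pair~\eqref{2.19}, this reduces to the symmetries $\mathbf{J}\mathbf{X}+\widetilde{\mathbf{X}}^{T}\mathbf{J}=\mathbf{0}$ and $\mathbf{J}\mathbf{T}+\widetilde{\mathbf{T}}^{T}\mathbf{J}=\mathbf{0}$, which follow from $\widetilde{\mathbf{X}}=\mathbf{X}^{*}$, $\widetilde{\mathbf{T}}=\mathbf{T}^{*}$ together with the relations $\mathbf{J}\mathbf{Q}=-\mathbf{Q}\mathbf{J}$, $\mathbf{J}\mathbf{Q}^{*}=\mathbf{Q}^{T}\mathbf{J}$ stated after~\eqref{2.20}; this is the same structural fact that underlies Proposition~\ref{pro:1}. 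Since the form is constant, I will evaluate it by letting $x\to\pm\infty$: by~\eqref{2.12}, \eqref{2.21} and $\widetilde{\mathbf{Y}}_{\pm}=\widehat{\mathbf{Y}}_{\pm}^{*}$, it tends to $\mathrm{e}^{-\mathrm{i}\mathbf{\Delta}}\,\widehat{\mathbf{Y}}_{\pm}^{\dagger}(z)\,\mathbf{J}\,\widehat{\mathbf{Y}}_{\pm}(z)\,\mathrm{e}^{\mathrm{i}\mathbf{\Delta}}$.

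The remaining ingredient is the purely algebraic identity $\widehat{\mathbf{Y}}_{\pm}^{\dagger}(z)\,\mathbf{J}\,\widehat{\mathbf{Y}}_{\pm}(z)=-\widehat{\rho}(z)\,\widehat{\mathbf{J}}_{2}^{-1}(z)$ for $z\in\mathbb{R}\setminus\{0,\pm q_{0}\}$, which I will verify by direct computation from the explicit forms~\eqref{2.9} and~\eqref{2.11}; the only inputs are that $\{\mathbf{q}_{\pm}/q_{0},\mathbf{q}_{\pm}^{\bot}/q_{0}\}$ is an orthonormal basis of $\mathbb{C}^{2}$ (so that $\mathbf{q}_{\pm}^{\dagger}\mathbf{q}_{\pm}=(\mathbf{q}_{\pm}^{\bot})^{\dagger}\mathbf{q}_{\pm}^{\bot}=q_{0}^{2}$ and $\mathbf{q}_{\pm}^{\dagger}\mathbf{q}_{\pm}^{\bot}=0$) and the relation $1/\widehat{\rho}(z)=(z^{2}-q_{0}^{2})/z^{2}$. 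Since $-\widehat{\rho}\,\widehat{\mathbf{J}}_{2}^{-1}$ is diagonal it commutes with $\mathrm{e}^{\pm\mathrm{i}\mathbf{\Delta}}$, so the oscillatory factors drop out and $\widetilde{\psi}_{\pm}^{T}\mathbf{J}\psi_{\pm}\equiv-\widehat{\rho}\,\widehat{\mathbf{J}}_{2}^{-1}$; rearranging, using $\mathbf{J}^{2}=\mathbf{I}$ and the invertibility of $\psi_{\pm}$ (which holds precisely because $\det\psi_{\pm}\neq 0$ on $\mathbb{R}\setminus\{0,\pm q_{0}\}$, the reason the statement is restricted to that set), gives the displayed matrix identity and hence~\eqref{2.23}. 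The only place that needs care is the bookkeeping in verifying the algebraic identity; conceptually the argument is soft, being conservation of a bilinear form followed by finite-dimensional linear algebra. An alternative, since the direct problem of~\cite{23} applies here verbatim, is to transport the corresponding relation of~\cite{23} through the conjugations $\psi_{\pm}=\psi_{\pm}^{[\mathrm{bk}]}\widehat{\mathbf{J}}_{1}$ and $\widetilde{\psi}_{\pm}=\widetilde{\psi}_{\pm}^{[\mathrm{bk}]}\widehat{\mathbf{J}}_{1}$.
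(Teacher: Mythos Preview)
Your argument is correct. The paper itself does not supply a proof of Corollary~\ref{cor:1}; it simply cites~\cite{23}, noting just before the statement the relation $\widetilde{\mathbf{S}}^{-1}(z)=\widehat{\mathbf{J}}_{2}(z)\mathbf{S}^{T}(z)\widehat{\mathbf{J}}_{2}^{-1}(z)$ and then writing ``Subsequently, we can establish the following relationship~\cite{23}.'' Your bilinear-form-plus-cofactor derivation is exactly the standard mechanism behind this result in~\cite{23} and~\cite{10}: the conserved quantity $\widetilde{\psi}_{\pm}^{T}\mathbf{J}\psi_{\pm}=-\widehat{\rho}\,\widehat{\mathbf{J}}_{2}^{-1}$ (equivalently, $\widehat{\mathbf{Y}}_{\pm}^{\dagger}\mathbf{J}\widehat{\mathbf{Y}}_{\pm}=\widehat{\mathbf{J}}_{3}$, which is the identity underlying Proposition~\ref{pro:2}) combined with the $3\times3$ adjugate formula. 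The alternative you mention at the end---pulling back the parallel-case identity from~\cite{23} through $\psi_{\pm}=\psi_{\pm}^{[\mathrm{bk}]}\widehat{\mathbf{J}}_{1}$---is precisely what the paper has in mind when it defers to that reference, so both of your routes coincide with the paper's intended justification.
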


\begin{corollary}\label{cor:2}
The Jost eigenfunctions exhibit the following decompositions for $z\in\mathbb{R} \backslash \{0,\pm q_{0}\}$,
\begin{subequations}\label{2.24}
\begin{align}
\psi_{-,2}(z)&=\frac{s_{32}(z)\psi_{-,3}(z)-\widetilde{\gamma}(z)}{s_{33}(z)}
=\frac{s_{12}(z)\psi_{-,1}(z)+\gamma(z)}{s_{11}(z)}, \\
\psi_{+,2}(z)&=\frac{h_{12}(z)\psi_{+,1}(z)-\widetilde{\gamma}(z)}{h_{11}(z)}
=\frac{h_{32}(z)\psi_{+,3}(z)+\gamma(z)}{h_{33}(z)}.
\end{align}
\end{subequations}
\end{corollary}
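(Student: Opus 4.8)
The plan is to establish each of the four identities in~\eqref{2.24} directly from the definitions~\eqref{2.22} of the auxiliary eigenfunctions $\gamma(z)$ and $\widetilde{\gamma}(z)$, by repeatedly exchanging the two Jost bases and collapsing cross products back onto ordinary columns via Corollary~\ref{cor:1}. The whole argument will be carried out on $z\in\mathbb{R}\backslash\{0,\pm q_{0}\}$, where by Theorem~\ref{thm:1} (and the analogous statement for the ``adjoint'' problem) all columns $\psi_{\pm,j}(z)$ and $\widetilde{\psi}_{\pm,j}(z)$ are continuous, the determinant identities $\det\psi_{\pm}(z)=\mathrm{i}\widehat{\rho}(z)\mathrm{e}^{\mathrm{i}\delta_{2}(z)}$ hold, and the scattering relations~\eqref{2.17} and $\widetilde{\psi}_{+}(z)=\widetilde{\psi}_{-}(z)\widetilde{\mathbf{H}}(z)$, together with $\mathbf{S}=\mathbf{H}^{-1}$ and $\widetilde{\mathbf{S}}=\widetilde{\mathbf{H}}^{-1}$, are available.

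I would take the second equality of the first line of~\eqref{2.24} as the model case. In $\gamma(z)=-\mathrm{i}\,\widehat{\rho}(z)^{-1}\mathrm{e}^{\mathrm{i}\delta_{2}}\mathbf{J}[\widetilde{\psi}_{-,3}(z)\times\widetilde{\psi}_{+,1}(z)]$ I first expand $\widetilde{\psi}_{+,1}=\widetilde{h}_{11}\widetilde{\psi}_{-,1}+\widetilde{h}_{21}\widetilde{\psi}_{-,2}+\widetilde{h}_{31}\widetilde{\psi}_{-,3}$; bilinearity of the cross product and $\widetilde{\psi}_{-,3}\times\widetilde{\psi}_{-,3}=\mathbf{0}$ leave only the $\widetilde{\psi}_{-,3}\times\widetilde{\psi}_{-,1}$ and $\widetilde{\psi}_{-,3}\times\widetilde{\psi}_{-,2}$ contributions. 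Corollary~\ref{cor:1} (applied with the $-$ sign, for the cyclic triples $(2,3,1)$ and $(1,2,3)$) converts these into $\mathbf{J}[\widetilde{\psi}_{-,3}\times\widetilde{\psi}_{-,1}]=\mathrm{i}\,\widehat{\rho}(z)\mathrm{e}^{-\mathrm{i}\delta_{2}}\psi_{-,2}(z)$ and $\mathbf{J}[\widetilde{\psi}_{-,3}\times\widetilde{\psi}_{-,2}]=\mathrm{i}\,\mathrm{e}^{-\mathrm{i}\delta_{2}}\psi_{-,1}(z)$, so that $\gamma(z)=\widetilde{h}_{11}(z)\psi_{-,2}(z)+\widehat{\rho}(z)^{-1}\widetilde{h}_{21}(z)\psi_{-,1}(z)$. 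Finally I invoke the symmetry relation $\widetilde{\mathbf{S}}^{-1}(z)=\widehat{\mathbf{J}}_{2}(z)\mathbf{S}^{T}(z)\widehat{\mathbf{J}}_{2}^{-1}(z)$ with $\widehat{\mathbf{J}}_{2}=\operatorname{diag}(-1,\widehat{\rho},1)$, which gives entrywise $\widetilde{h}_{11}=s_{11}$ and $\widetilde{h}_{21}=-\widehat{\rho}\,s_{12}$; hence $\gamma(z)=s_{11}(z)\psi_{-,2}(z)-s_{12}(z)\psi_{-,1}(z)$, which is the asserted $\psi_{-,2}=(s_{12}\psi_{-,1}+\gamma)/s_{11}$.

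The other three identities follow by the same three-move scheme. For the first equality of the first line of~\eqref{2.24} one expands $\widetilde{\psi}_{+,3}$ over $\{\widetilde{\psi}_{-,i}\}$ in the definition of $\widetilde{\gamma}$; for the two equalities in the second line one instead expands $\widetilde{\psi}_{-,3}$, respectively $\widetilde{\psi}_{-,1}$, over $\{\widetilde{\psi}_{+,i}\}$ using $\widetilde{\psi}_{-}=\widetilde{\psi}_{+}\widetilde{\mathbf{S}}$, collapses the resulting same-family cross products via Corollary~\ref{cor:1} with the $+$ sign, and translates the adjoint coefficients through the inverted symmetry relation $\widetilde{\mathbf{S}}(z)=\widehat{\mathbf{J}}_{2}(z)\mathbf{H}^{T}(z)\widehat{\mathbf{J}}_{2}^{-1}(z)$, giving $\widetilde{s}_{33}=h_{33}$, $\widetilde{s}_{23}=\widehat{\rho}\,h_{32}$, $\widetilde{s}_{11}=h_{11}$, $\widetilde{s}_{21}=-\widehat{\rho}\,h_{12}$. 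Solving for $\psi_{\pm,2}$ then reproduces exactly~\eqref{2.24}; the equality of the two representations of $\psi_{\pm,2}$ in each line is automatic, since both were shown to equal the same column.

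The routine part is purely the bookkeeping of signs and normalizations: the $\mathbf{J}$-twist in the cross-product formulas, the factors $\mathrm{e}^{\pm\mathrm{i}\delta_{2}}$, the cyclic weights $\widehat{\rho}_{1}=-1$, $\widehat{\rho}_{2}=\widehat{\rho}$, $\widehat{\rho}_{3}=1$ from Corollary~\ref{cor:1}, and the diagonal entries of $\widehat{\mathbf{J}}_{2}$. I expect the only genuine care-point to be threading these consistently through Corollary~\ref{cor:1} so that the $\widehat{\rho}$-factors cancel exactly and the scattering coefficients emerge with the right sign. It is also worth recording that, although $\gamma$ and $\widetilde{\gamma}$ were introduced in~\eqref{2.22} as analytic functions on $\mathbb{D}^{-}$ and $\mathbb{D}^{+}$, the cross-product formulas themselves are meaningful verbatim on $\mathbb{R}\backslash\{0,\pm q_{0}\}$ by the continuity in Theorem~\ref{thm:1}, so the decompositions~\eqref{2.24} do hold on the entire continuous spectrum.
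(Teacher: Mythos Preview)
Your proposal is correct and follows precisely the route the paper implies: the paper does not spell out a proof of Corollary~\ref{cor:2} but presents it as an immediate consequence of the definitions~\eqref{2.22}, Corollary~\ref{cor:1}, and the relation $\widetilde{\mathbf{S}}^{-1}(z)=\widehat{\mathbf{J}}_{2}(z)\mathbf{S}^{T}(z)\widehat{\mathbf{J}}_{2}^{-1}(z)$ between the adjoint and original scattering matrices (with the detailed computation deferred to~\cite{23}). Your expansion of the mixed cross products over a single Jost basis, collapse via Corollary~\ref{cor:1}, and translation of adjoint coefficients through $\widehat{\mathbf{J}}_{2}$ is exactly that argument, and your sign and $\widehat{\rho}$ bookkeeping checks out.
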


\subsection{Symmetries}
\label{s:Symmetries}

We consider two corresponding symmetries~\cite{23} $z\mapsto z^{*}$ and $z\mapsto -q_{0}^{2}/z$.
\begin{proposition}\label{pro:2}
If $\psi(z)$ is a non-singular solution of the scattering problem, then $\mathbf{J}[\psi^{\dagger}(z^{*})]^{-1}$ is a solution of the Lax pair~\eqref{2.2} and the symmetries $\psi_{\pm}(z)=\mathbf{J}[\psi_{\pm}^{\dagger}(z^{*})]^{-1}\widehat{\mathbf{J}}_{3}(z)$ for $z\in\mathbb{R} \backslash \{0,\pm q_{0}\}$, where $\widehat{\mathbf{J}}_{3}(z)=\operatorname{diag}\,(\widehat{\rho}(z),-1,-\widehat{\rho}(z))
=-\widehat{\rho}(z)\widehat{\mathbf{J}}_{2}^{-1}(z)$.
\end{proposition}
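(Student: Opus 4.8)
The proposition makes two claims: that $\phi(z):=\mathbf{J}[\psi^{\dagger}(z^{*})]^{-1}$ again solves the Lax pair~\eqref{2.2}, and that the connection matrix between the Jost solutions $\psi_{\pm}(z)$ and $\mathbf{J}[\psi_{\pm}^{\dagger}(z^{*})]^{-1}$ is exactly $\widehat{\mathbf{J}}_{3}(z)$. The plan is to prove the first claim by a direct differentiation that reduces it to two operator identities for $\mathbf{X}$ and $\mathbf{T}$, and the second by the standard fact that two nonsingular matrix solutions of one Lax pair differ by right multiplication by an $(x,t)$-independent matrix, which is then identified from the $x\to\pm\infty$ asymptotics.

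For the first claim, I would set $A(z):=\psi^{\dagger}(z^{*})$; applying $\dagger$ to $(\psi(z^{*}))_{x}=\mathbf{X}(z^{*})\psi(z^{*})$ gives $A_{x}=A\,\mathbf{X}^{\dagger}(z^{*})$, hence $(A^{-1})_{x}=-\mathbf{X}^{\dagger}(z^{*})A^{-1}$, and using $\mathbf{J}^{2}=\mathbf{I}$,
\begin{equation*}
\phi_{x}=\mathbf{J}(A^{-1})_{x}=-\bigl(\mathbf{J}\mathbf{X}^{\dagger}(z^{*})\mathbf{J}\bigr)\bigl(\mathbf{J}A^{-1}\bigr)=-\bigl(\mathbf{J}\mathbf{X}^{\dagger}(z^{*})\mathbf{J}\bigr)\phi,
\end{equation*}
and likewise for the $t$-equation. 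It then remains to verify the operator symmetries $\mathbf{X}^{\dagger}(z^{*})=-\mathbf{J}\mathbf{X}(z)\mathbf{J}$ and $\mathbf{T}^{\dagger}(z^{*})=-\mathbf{J}\mathbf{T}(z)\mathbf{J}$. These follow from the structural data already in the excerpt: $\mathbf{J}^{\dagger}=\mathbf{J}$, $\mathbf{J}^{2}=\mathbf{I}$, $\mathbf{Q}^{\dagger}=-\mathbf{Q}$, $\mathbf{J}\mathbf{Q}=-\mathbf{Q}\mathbf{J}$ (so $\mathbf{J}\mathbf{Q}^{n}\mathbf{J}=(-1)^{n}\mathbf{Q}^{n}$ and $\mathbf{J}\mathbf{Q}_{x}\mathbf{J}=-\mathbf{Q}_{x}$), the reality of $q_{0}$ and $\sigma$, and $k(z^{*})=\overline{k(z)}$, $\lambda(z^{*})=\overline{\lambda(z)}$, which are immediate from~\eqref{2.8}. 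The economical way to organise the $\mathbf{T}$ calculation is to observe that $\mathbf{M}\mapsto-\mathbf{J}\mathbf{M}^{\dagger}\mathbf{J}$ together with $z\mapsto z^{*}$ sends every individual summand of~\eqref{2.3} back to itself, so their sum returns $\mathbf{T}(z)$; the $\mathbf{X}$ case is the same, two-term check.

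For the second claim, by the first claim both $\psi_{\pm}(z)$ and $\mathbf{J}[\psi_{\pm}^{\dagger}(z^{*})]^{-1}$ solve~\eqref{2.2}, and for $z\in\mathbb{R}\setminus\{0,\pm q_{0}\}$ they are nonsingular since $\det\psi_{\pm}(z)=\mathrm{i}\widehat{\rho}(z)\mathrm{e}^{\mathrm{i}\delta_{2}(z)}\neq0$. Therefore $\psi_{\pm}(z)=\mathbf{J}[\psi_{\pm}^{\dagger}(z^{*})]^{-1}\mathbf{C}_{\pm}(z)$ with $\mathbf{C}_{\pm}(z)$ constant in $x$ (from the $x$-part of the Lax pair) and then in $t$ (from the $t$-part). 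On the real axis $z^{*}=z$ and $\mathbf{\Delta}(z)$ is real, so letting $x\to\pm\infty$ with $\psi_{\pm}(z)\to\widehat{\mathbf{Y}}_{\pm}(z)\mathrm{e}^{\mathrm{i}\mathbf{\Delta}(z)}$ produces
\begin{equation*}
\mathrm{e}^{\mathrm{i}\mathbf{\Delta}(z)}\,\mathbf{C}_{\pm}(z)\,\mathrm{e}^{-\mathrm{i}\mathbf{\Delta}(z)}=\widehat{\mathbf{Y}}_{\pm}^{\dagger}(z)\,\mathbf{J}\,\widehat{\mathbf{Y}}_{\pm}(z).
\end{equation*}
The right-hand side is $(x,t)$-independent, so differentiating in $x$ gives $[\mathbf{\Lambda}_{1},\mathbf{C}_{\pm}(z)]=\mathbf{0}$; since $\mathbf{\Lambda}_{1}=\operatorname{diag}(\lambda,-k,-\lambda)$ has pairwise distinct entries for $z\in\mathbb{R}\setminus\{0,\pm q_{0}\}$, the matrix $\mathbf{C}_{\pm}(z)$ is diagonal and equals $\widehat{\mathbf{Y}}_{\pm}^{\dagger}(z)\mathbf{J}\widehat{\mathbf{Y}}_{\pm}(z)$. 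A short computation with the $1+2$ block form of $\mathbf{J}$ and $\widehat{\mathbf{Y}}_{\pm}$ in~\eqref{2.9}, using $\|\mathbf{q}_{\pm}\|=\|\mathbf{q}_{\pm}^{\perp}\|=q_{0}$, $\mathbf{q}_{\pm}^{\dagger}\mathbf{q}_{\pm}^{\perp}=(\mathbf{q}_{\pm}^{\perp})^{\dagger}\mathbf{q}_{\pm}=0$, and $\widehat{\rho}(z)\,(1-q_{0}^{2}/z^{2})=1$, then evaluates it to $\operatorname{diag}(\widehat{\rho}(z),-1,-\widehat{\rho}(z))=\widehat{\mathbf{J}}_{3}(z)$ — notably independent of the sign $\pm$ — which is precisely the claimed symmetry.

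The one part that demands care is the term-by-term verification of $\mathbf{T}^{\dagger}(z^{*})=-\mathbf{J}\mathbf{T}(z)\mathbf{J}$ for the long matrix~\eqref{2.3}; the derivative-of-inverse manipulation, the $(x,t)$-independence of $\mathbf{C}_{\pm}(z)$, and the block evaluation of $\widehat{\mathbf{Y}}_{\pm}^{\dagger}\mathbf{J}\widehat{\mathbf{Y}}_{\pm}$ are all routine.
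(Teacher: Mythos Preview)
Your proof is correct and follows the standard route for establishing this type of conjugation symmetry in the IST literature. The paper itself states Proposition~\ref{pro:2} without proof (deferring to reference~[23]), so there is no detailed argument in the text to compare against; your derivation --- verifying the operator identities $\mathbf{X}^{\dagger}(z^{*})=-\mathbf{J}\mathbf{X}(z)\mathbf{J}$ and $\mathbf{T}^{\dagger}(z^{*})=-\mathbf{J}\mathbf{T}(z)\mathbf{J}$, then identifying the $(x,t)$-independent connection matrix from the asymptotics and the block evaluation of $\widehat{\mathbf{Y}}_{\pm}^{\dagger}\mathbf{J}\widehat{\mathbf{Y}}_{\pm}$ --- is exactly the expected argument and all the structural facts you invoke ($\mathbf{Q}^{\dagger}=-\mathbf{Q}$, $\mathbf{J}\mathbf{Q}=-\mathbf{Q}\mathbf{J}$, $k(z^{*})=\overline{k(z)}$, distinctness of the entries of $\mathbf{\Lambda}_{1}$ on $\mathbb{R}\setminus\{0,\pm q_{0}\}$) are available in the paper.
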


The analytic Jost eigenfunctions satisfy the symmetric relationships:
\begin{subequations}\label{2.25}
\begin{align}
\psi_{-,1}^{*}(z^{*})&=\frac{\mathrm{i}\mathbf{J}\mathrm{e}^{-\mathrm{i}\delta_{2}(z)}
[\widetilde{\gamma}(z) \times \psi_{-,3}(z)]}{s_{33}(z)}, \quad \psi_{+,3}^{*}(z^{*})=\frac{\mathrm{i}\mathbf{J}\mathrm{e}^{-\mathrm{i}\delta_{2}(z)}
[\widetilde{\gamma}(z) \times \psi_{+,1}(z)]}{h_{11}(z)},
\quad \operatorname{Im} z\geq0,  \\
\psi_{+,1}^{*}(z^{*})&=\frac{\mathrm{i}\mathbf{J}\mathrm{e}^{-\mathrm{i}\delta_{2}(z)}
[\gamma(z) \times \psi_{+,3}(z)]}{-h_{33}(z)}, \quad
\psi_{-,3}^{*}(z^{*})=\frac{\mathrm{i}\mathbf{J}\mathrm{e}^{-\mathrm{i}\delta_{2}(z)}
[\gamma(z) \times \psi_{-,1}(z)]}{-s_{11}(z)}, \quad \operatorname{Im} z\leq0.
\end{align}
\end{subequations}
Using the property $\widehat{\mathbf{J}}_{3}(z)=-\widehat{\rho}(z)\widehat{\mathbf{J}}_{2}^{-1}(z)$, we have the relationship $\mathbf{S}^{\dagger}(z)=\widehat{\mathbf{J}}_{2}^{-1}(z)\mathbf{H}(z)\widehat{\mathbf{J}}_{2}(z)$ for $z\in\mathbb{R} \backslash \{0,\pm q_{0}\}$. Therefore, we conclude that
\begin{subequations}\label{2.26}
\begin{align}
h_{11}(z)&=s_{11}^{*}(z), \quad \quad \quad \,\,\,\, h_{12}(z)=-\frac{s_{21}^{*}(z)}{\widehat{\rho}(z)}, \quad  \quad  h_{13}(z)=-s_{31}^{*}(z),  \\
h_{21}(z)&=-\widehat{\rho}(z)s_{12}^{*}(z), \quad \, h_{22}(z)=s_{22}^{*}(z), \quad  \quad \,\,\,\,\,
h_{23}(z)=\widehat{\rho}(z)s_{32}^{*}(z),  \\
h_{31}(z)&=-s_{13}^{*}(z), \quad \quad \quad  h_{32}(z)=\frac{s_{23}^{*}(z)}{\widehat{\rho}(z)}, \quad  \quad \,\,\,\, h_{33}(z)=s_{33}^{*}(z).
\end{align}
\end{subequations}
Then, the conclusions $h_{11}(z)=s_{11}^{*}(z^{*})$ for $\operatorname{Im} z\geq0$ and
$h_{33}(z)=s_{33}^{*}(z^{*})$ for $\operatorname{Im} z\leq0$ are obtained based on the Schwarz reflection principle.

\begin{corollary}\label{cor:3}
The new auxiliary eigenfunctions adhere to the symmetry relations:
\begin{equation}\label{2.27}
\begin{split}
\gamma(z)=-\frac{\mathrm{i}\mathrm{e}^{\mathrm{i}\delta_{2}(z)}\mathbf{J}
[\psi_{-,3}^{*}(z^{*}) \times \psi_{+,1}^{*}(z^{*})]}{\widehat{\rho}(z)}, \quad
\widetilde{\gamma}(z)=-\frac{\mathrm{i}\mathrm{e}^{\mathrm{i}\delta_{2}(z)}\mathbf{J}
[\psi_{-,1}^{*}(z^{*}) \times \psi_{+,3}^{*}(z^{*})]}{\widehat{\rho}(z)},
\end{split}
\end{equation}
where $\psi_{\pm,j}^{*}(z)=\mathrm{i}\mathrm{e}^{-\mathrm{i}\delta_{2}(z)}\mathbf{J}
[\psi_{\pm,l}(z) \times \psi_{\pm,m}(z)]/\widehat{\rho}_{j}(z)$ and $j$, $l$ and $m$ are cyclic indices for $z\in\mathbb{R} \backslash \{0,\pm q_{0}\}$.
\end{corollary}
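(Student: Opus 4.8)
The plan is to reduce the statement to a single symmetry: that on the continuous spectrum the ``adjoint'' Jost eigenfunctions coincide with the complex conjugates of the Jost eigenfunctions of the original Lax pair under the involution $z\mapsto z^*$, after which~\eqref{2.27} together with the accompanying expression for $\psi_{\pm,j}^*$ will follow by substitution.

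First I would establish the elementary symmetry on $z\in\mathbb{R}\setminus\{\pm q_0\}$. There one has $\widetilde{\mathbf{X}}=\mathbf{X}^*$, $\widetilde{\mathbf{T}}=\mathbf{T}^*$, and by~\eqref{2.13} the diagonal matrix $\mathbf{\Delta}(z)$ is real, since $\delta_1,\delta_2$ are real whenever $k,\lambda\in\mathbb{R}$. Conjugating the Lax pair~\eqref{2.2} then shows that $\psi_\pm^*(z)$ solves the ``adjoint'' Lax pair~\eqref{2.19}, and conjugating the normalization~\eqref{2.12} shows that as $x\to\pm\infty$ we have $\psi_\pm^*(z)=\widehat{\mathbf{Y}}_\pm^*(z)\mathrm{e}^{-\mathrm{i}\mathbf{\Delta}(z)}+o(1)=\widetilde{\mathbf{Y}}_\pm(z)\mathrm{e}^{-\mathrm{i}\mathbf{\Delta}(z)}+o(1)$, which is exactly the normalization~\eqref{2.21} defining $\widetilde{\psi}_\pm(z)$. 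Since both $\psi_\pm^*(z)$ and $\widetilde{\psi}_\pm(z)$ then solve the same Volterra integral equation (the adjoint analogue of~\eqref{2.16}), uniqueness of its Neumann series yields $\widetilde{\psi}_{\pm,j}(z)=\psi_{\pm,j}^*(z)$ for $z\in\mathbb{R}\setminus\{\pm q_0\}$. Restricting to $z\in\mathbb{R}\setminus\{0,\pm q_0\}$ and inserting the second identity of Corollary~\ref{cor:1} then gives at once the stated $\psi_{\pm,j}^*(z)=\mathrm{i}\mathrm{e}^{-\mathrm{i}\delta_2(z)}\mathbf{J}[\psi_{\pm,l}(z)\times\psi_{\pm,m}(z)]/\widehat{\rho}_j(z)$ for cyclic indices $j,l,m$.

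Next I would continue the relevant columns off the real axis. By Theorem~\ref{thm:1} each $\psi_{\pm,j}$ is analytic on one of $\mathbb{D}^\pm$, so by Schwarz reflection $\psi_{\pm,j}^*(z^*)=\overline{\psi_{\pm,j}(\bar z)}$ is analytic on the opposite half-plane; a check of the four relevant cases shows that $\psi_{-,3}^*(z^*),\psi_{+,1}^*(z^*)$ are analytic on $\mathbb{D}^-$ and $\psi_{-,1}^*(z^*),\psi_{+,3}^*(z^*)$ on $\mathbb{D}^+$, which are precisely the analyticity domains of $\widetilde{\psi}_{-,3},\widetilde{\psi}_{+,1}$ and of $\widetilde{\psi}_{-,1},\widetilde{\psi}_{+,3}$ respectively. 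Within each such pair the two analytic functions agree on the common boundary $\mathbb{R}\setminus\{\pm q_0\}$ by the previous step and are continuous up to it, hence they coincide throughout the half-plane. Substituting $\widetilde{\psi}_{-,3}(z)=\psi_{-,3}^*(z^*)$ and $\widetilde{\psi}_{+,1}(z)=\psi_{+,1}^*(z^*)$ into the definition of $\gamma$ in~\eqref{2.22}, and $\widetilde{\psi}_{-,1}(z)=\psi_{-,1}^*(z^*)$ and $\widetilde{\psi}_{+,3}(z)=\psi_{+,3}^*(z^*)$ into the definition of $\widetilde{\gamma}$, delivers the two formulas in~\eqref{2.27}.

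I expect the bookkeeping in the continuation step to be the only genuine obstacle: one must verify, column by column, that the half-plane produced by Schwarz reflection of $\psi_\pm$ is exactly the half-plane on which the matching column of $\widetilde{\psi}_\pm$ was shown to be analytic, so that two analytic functions agreeing on $\mathbb{R}$ may be identified, and one must keep the singular points $z=0,\pm q_0$ (where $\widehat{\mathbf{Y}}_\pm$ and $\widehat{\mathbf{Y}}_\pm^{-1}$ fail to be regular, cf.\ Remarks~\ref{rem:1}--\ref{rem:3}) excluded from every statement. An alternative is to feed the symmetry relations~\eqref{2.25} directly into~\eqref{2.22} and simplify using $(\mathbf{J}\mathbf{a})\times(\mathbf{J}\mathbf{b})=\mathbf{J}(\mathbf{a}\times\mathbf{b})$ and $(\mathbf{a}\times\mathbf{b})\times(\mathbf{a}\times\mathbf{c})=\det(\mathbf{a},\mathbf{b},\mathbf{c})\,\mathbf{a}$, but that route needs an extra determinant identity and is less transparent than the conjugation argument above.
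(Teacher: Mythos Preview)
Your proposal is correct and follows precisely the route the paper has in mind: establish $\widetilde{\psi}_{\pm,j}(z)=\psi_{\pm,j}^*(z)$ on the real axis from $\widetilde{\mathbf{X}}=\mathbf{X}^*$, $\widetilde{\mathbf{T}}=\mathbf{T}^*$ and the matching normalizations~\eqref{2.12},~\eqref{2.21}, then analytically continue via Schwarz reflection to the appropriate half-planes and substitute into the definitions~\eqref{2.22}; the accompanying formula for $\psi_{\pm,j}^*$ is then just the second identity of Corollary~\ref{cor:1}. The paper states the result as a corollary without writing out these steps, so your write-up is in fact more detailed than the original.
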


\begin{proposition}\label{pro:3}
If $\psi(z)$ is a non-singular solution of the scattering problem, then $\psi(q_{0}^{2}/z)$ is
a solution of the Lax pair~\eqref{2.2}. The subsequent properties can be derived from the principle of progressiveness.
\begin{equation}\label{2.28}
\begin{split}
\psi_{\pm}(z)=\psi_{\pm}(\frac{q_{0}^{2}}{z})\widehat{\mathbf{J}}_{4}(z), \quad \widehat{\mathbf{J}}_{4}(z)=\begin{pmatrix}
    0 & 0 &  \mathrm{i}z/q_{0}  \\
    0 & 1 &  0  \\
    -\mathrm{i}z/q_{0} & 0 & 0
  \end{pmatrix}.
\end{split}
\end{equation}
\end{proposition}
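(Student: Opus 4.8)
The plan is to recognize the map $z\mapsto q_{0}^{2}/z$ as the sheet-exchange involution of the Riemann surface. From~\eqref{2.8} one checks directly that $k(q_{0}^{2}/z)=k(z)$ while $\lambda(q_{0}^{2}/z)=-\lambda(z)$, so this map fixes $k$ and flips the branch $\lambda\mapsto-\lambda$. First I would observe that the Lax matrices $\mathbf{X}$ and $\mathbf{T}$ in~\eqref{2.3} depend on the spectral variable only through $k$; hence $\mathbf{X}(k(q_{0}^{2}/z);x,t)=\mathbf{X}(k(z);x,t)$ and similarly for $\mathbf{T}$, and substituting $z\mapsto q_{0}^{2}/z$ in a solution $\psi(z)$ of~\eqref{2.2} yields another solution, since the substitution commutes with $\partial_{x}$ and $\partial_{t}$ as it affects only $z$. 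This gives the first assertion.

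For the identity in~\eqref{2.28} I would argue by uniqueness of the Jost eigenfunctions. Since $\widehat{\mathbf{J}}_{4}(z)$ is independent of $x$ and $t$, the matrix $\psi_{\pm}(q_{0}^{2}/z)\widehat{\mathbf{J}}_{4}(z)$ is again a solution of~\eqref{2.2}; it therefore coincides with $\psi_{\pm}(z)$ once the two share their $x\rightarrow\pm\infty$ behavior, by the unique solvability of the Volterra equation~\eqref{2.16} (equivalently, two fundamental solutions differ by a constant matrix, which the asymptotics pin down to $\mathbf{I}$, the conjugation $\mathrm{e}^{\mathrm{i}\mathbf{\Delta}}\,\cdot\,\mathrm{e}^{-\mathrm{i}\mathbf{\Delta}}$ acting trivially on diagonal entries and by a unimodular factor, for $z\in\mathbb{R}$, on off-diagonal ones). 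Using $k(q_{0}^{2}/z)=k(z)$, $\lambda(q_{0}^{2}/z)=-\lambda(z)$ in~\eqref{2.13} gives $\delta_{1}(q_{0}^{2}/z)=-\delta_{1}(z)$ and $\delta_{2}(q_{0}^{2}/z)=\delta_{2}(z)$, hence $\mathrm{e}^{\mathrm{i}\mathbf{\Delta}(q_{0}^{2}/z)}=\operatorname{diag}(\mathrm{e}^{-\mathrm{i}\delta_{1}},\mathrm{e}^{\mathrm{i}\delta_{2}},\mathrm{e}^{\mathrm{i}\delta_{1}})$, and a one-line matrix product gives $\mathrm{e}^{\mathrm{i}\mathbf{\Delta}(q_{0}^{2}/z)}\widehat{\mathbf{J}}_{4}(z)=\widehat{\mathbf{J}}_{4}(z)\,\mathrm{e}^{\mathrm{i}\mathbf{\Delta}(z)}$. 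So the asymptotic comparison collapses to the purely algebraic identity $\widehat{\mathbf{Y}}_{\pm}(z)=\widehat{\mathbf{Y}}_{\pm}(q_{0}^{2}/z)\widehat{\mathbf{J}}_{4}(z)$.

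Finally I would verify that identity column by column from the definition~\eqref{2.9}, using $q_{0}/(q_{0}^{2}/z)=z/q_{0}$ and $1/(q_{0}^{2}/z)=z/q_{0}^{2}$; everything reduces to the single scalar relation $\widehat{\rho}(q_{0}^{2}/z)=-(q_{0}^{2}/z^{2})\widehat{\rho}(z)$, which is immediate from $\widehat{\rho}(z)=z^{2}/(z^{2}-q_{0}^{2})$ (note also that the middle column of $\widehat{\mathbf{Y}}_{\pm}$ equals $(0,\mathbf{q}_{\pm}^{\bot}/q_{0})^{T}$, independent of $z$, matching the identity block of $\widehat{\mathbf{J}}_{4}$). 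As a consistency check, $\det\widehat{\mathbf{J}}_{4}(z)=-z^{2}/q_{0}^{2}$ together with $\widehat{\rho}(q_{0}^{2}/z)=-(q_{0}^{2}/z^{2})\widehat{\rho}(z)$ and $\delta_{2}(q_{0}^{2}/z)=\delta_{2}(z)$ turns $\det\psi_{\pm}(q_{0}^{2}/z)$ back into $\det\psi_{\pm}(z)$. The hard part is nothing conceptual but the bookkeeping: $\widehat{\mathbf{J}}_{4}(z)$ interchanges, up to scalar factors, the two $\pm\lambda$ eigendirections, that is the first and third columns, and the involution swaps $\mathbb{D}^{+}$ with $\mathbb{D}^{-}$, so~\eqref{2.28} is to be read for $z\in\mathbb{R}\backslash\{0,\pm q_{0}\}$ where both sides are defined, the sectionally analytic columns then inheriting it by continuity where they extend.
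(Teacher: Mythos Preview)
Your proof is correct and follows essentially the same approach the paper indicates: the Lax matrices~\eqref{2.3} depend on the spectral variable only through $k$, so the sheet involution $z\mapsto q_{0}^{2}/z$ leaves them invariant, and the relation~\eqref{2.28} then follows by matching the $x\to\pm\infty$ asymptotics (what the paper calls ``the principle of progressiveness''). You have supplied the explicit verifications the paper omits, namely the intertwining $\mathrm{e}^{\mathrm{i}\mathbf{\Delta}(q_{0}^{2}/z)}\widehat{\mathbf{J}}_{4}(z)=\widehat{\mathbf{J}}_{4}(z)\mathrm{e}^{\mathrm{i}\mathbf{\Delta}(z)}$ and the algebraic identity $\widehat{\mathbf{Y}}_{\pm}(z)=\widehat{\mathbf{Y}}_{\pm}(q_{0}^{2}/z)\widehat{\mathbf{J}}_{4}(z)$ via $\widehat{\rho}(q_{0}^{2}/z)=-(q_{0}^{2}/z^{2})\widehat{\rho}(z)$.
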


Consistent with the previous discussion, the eigenfunctions exhibit $\psi_{\pm,2}(z)=\psi_{\pm,2}(q_{0}^{2}/z)$ for $z\in\mathbb{R} \backslash \{0,\pm q_{0}\}$ and
\begin{equation}\label{2.29}
\begin{split}
\psi_{\pm,1}(z)&=-\frac{\mathrm{i}z}{q_{0}}\psi_{\pm,3}(\frac{q_{0}^{2}}{z}), \quad
\operatorname{Im} z\gtrless0, \quad \psi_{\pm,3}(z)=\frac{\mathrm{i}z}{q_{0}}\psi_{\pm,1}(\frac{q_{0}^{2}}{z}), \quad
\operatorname{Im} z\lessgtr0.
\end{split}
\end{equation}
Moreover, the scattering matrices adhere to the following relationship for $z\in\mathbb{R} \backslash \{0,\pm q_{0}\}$,
\begin{equation}\label{2.30}
\begin{split}
\mathbf{S}(z)&=\widehat{\mathbf{J}}_{4}^{-1}(z)\mathbf{S}(\frac{q_{0}^{2}}{z})\widehat{\mathbf{J}}_{4}(z), \quad \,
\mathbf{H}(z)=\widehat{\mathbf{J}}_{4}^{-1}(z)\mathbf{H}(\frac{q_{0}^{2}}{z})\widehat{\mathbf{J}}_{4}(z),
\end{split}
\end{equation}
then we obtain
\begin{subequations}\label{2.31}
\begin{align}
s_{11}(z)&=s_{33}(\frac{q_{0}^{2}}{z}), \quad \quad  \,\,\,\, s_{12}(z)=\frac{\mathrm{i}q_{0}}{z}s_{32}(\frac{q_{0}^{2}}{z}), \quad \,\,\,
s_{13}(z)=-s_{31}(\frac{q_{0}^{2}}{z}),  \\
s_{21}(z)&=-\frac{\mathrm{i}z}{q_{0}}s_{23}(\frac{q_{0}^{2}}{z}), \quad  s_{22}(z)=s_{22}(\frac{q_{0}^{2}}{z}), \qquad  \quad
s_{23}(z)=\frac{\mathrm{i}z}{q_{0}}s_{21}(\frac{q_{0}^{2}}{z}),  \\
s_{31}(z)&=-s_{13}(\frac{q_{0}^{2}}{z}), \quad \quad
s_{32}(z)=-\frac{\mathrm{i}q_{0}}{z}s_{12}(\frac{q_{0}^{2}}{z}), \quad
s_{33}(z)=s_{11}(\frac{q_{0}^{2}}{z}).
\end{align}
\end{subequations}
Therefore, the analytical properties of the corresponding scattering coefficients include $s_{11}(z)=s_{33}(q_{0}^{2}/z)$ and $h_{33}(z)=h_{11}(q_{0}^{2}/z)$ for $z\in\mathbb{D}^{-}$, $s_{33}(z)=s_{11}(q_{0}^{2}/z)$ and $h_{11}(z)=h_{33}(q_{0}^{2}/z)$ for $z\in\mathbb{D}^{+}$. The auxiliary eigenfunctions exhibit $\gamma(z)=-\widetilde{\gamma}(q_{0}^{2}/z)$ for $\operatorname{Im} z\leq0$. We introduce the reflections as follows for $z\in\mathbb{R} \backslash \{0,\pm q_{0}\}$,
\begin{subequations}\label{2.32}
\begin{align}
\beta_{1}(z)&=\frac{s_{13}(z)}{s_{11}(z)}=-\frac{h_{31}^{*}(z)}{h_{11}^{*}(z)}, \qquad \,\,\,\,\,\,
\beta_{1}(\frac{q_{0}^{2}}{z})=-\frac{s_{31}(z)}{s_{33}(z)}=\frac{h_{13}^{*}(z)}{h_{33}^{*}(z)}, \\
\beta_{2}(z)&=\frac{h_{21}(z)}{h_{11}(z)}=-\widehat{\rho}(z)\frac{s_{12}^{*}(z)}{s_{11}^{*}(z)}, \quad \beta_{2}(\frac{q_{0}^{2}}{z})=\frac{q_{0}}{\mathrm{i}z}\frac{h_{23}(z)}{h_{33}(z)}
=\frac{q_{0}\widehat{\rho}(z)}{\mathrm{i}z}\frac{s_{32}^{*}(z)}{s_{33}^{*}(z)}.
\end{align}
\end{subequations}

\section{Discrete spectrum and its related properties}
\label{s:Discrete spectrum and its related properties}

\subsection{Discrete spectrum}
\label{s:Discrete spectrum}

The discrete spectrum and distribution of the defocusing-defocusing coupled Hirota equations with the NZBCs~\eqref{1.3} were studied below, and two $3\times3$ matrices $\mathbf{\Psi}^{+}(z)=(\psi_{+,1}(z),-\widetilde{\gamma}(z),\psi_{-,3}(z))$ for $z\in \mathbb{D}^{+}$ and $\mathbf{\Psi}^{-}(z)=(\psi_{-,1}(z),\gamma(z),\psi_{+,3}(z))$ for $z\in \mathbb{D}^{-}$ were given based on reference~\cite{23}. By calculation, there are $\det \mathbf{\Psi}^{+}(z)=\mathrm{i}\mathrm{e}^{\mathrm{i}\delta_{2}(z)}h_{11}(z)s_{33}(z)
\widehat{\rho}(z)$ for $\operatorname{Im} z\geq0$ and $\det \mathbf{\Psi}^{-}(z)=\mathrm{i}\mathrm{e}^{\mathrm{i}\delta_{2}(z)}s_{11}(z)h_{33}(z)
\widehat{\rho}(z)$ for $\operatorname{Im} z\leq0$.

\begin{proposition}\label{pro:4}
Let $\mathbf{v}(z)$ denote a nontrivial solution to the scattering problem in~\eqref{2.2}. If $\mathbf{v}(z)\in L^{2}(\mathbb{R})$, then $z\in C_{0}$. Here $C_{0}$ is a circle with a radius of $q_{0}$ centered on the origin of the complex $z$-plane.
\end{proposition}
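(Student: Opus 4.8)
The plan is to show that any nontrivial $L^2(\mathbb{R})$ solution of the scattering problem~\eqref{2.2} must have its spectral parameter $z$ on the circle $C_0 = \{z : |z| = q_0\}$, equivalently at the branch points $k = \pm q_0$ of the Riemann surface, i.e. where $\lambda(z) = \tfrac12(z - q_0^2/z)$ becomes purely imaginary with a specific sign relation. The first step is to recall the spatial asymptotics: as $x \to \pm\infty$, the potential $\mathbf{Q} \to \mathbf{Q}_\pm$, so any solution $\mathbf{v}(z;x,t)$ of $\psi_x = \mathbf{X}\psi$ behaves like a linear combination of the columns of $\widehat{\mathbf{Y}}_\pm(z) e^{\mathrm{i}\mathbf{\Delta}(z)}$, whose exponential rates are governed by $e^{\pm\mathrm{i}\lambda x}$ and $e^{-\mathrm{i} k x}$ (from $\mathbf{\Lambda}_1 = \operatorname{diag}(\lambda,-k,-\lambda)$ in~\eqref{2.10}). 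For such a solution to lie in $L^2$, the components growing or merely oscillating at $x \to +\infty$ and at $x \to -\infty$ must be suppressed, which forces genuine exponential decay in both directions — and this is only possible when $\lambda$ has nonzero imaginary part.

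Next I would make this precise by combining the two decay requirements. On one hand, decay as $x \to +\infty$ requires the solution to be (asymptotically) proportional to the column of $\psi_+$ with a decaying exponential, which selects $\operatorname{Im}\lambda$ of one sign (say the column behaving like $e^{-\mathrm{i}\lambda x}$ when $\operatorname{Im}\lambda < 0$, etc.); on the other hand, decay as $x \to -\infty$ selects $\psi_-$ with the opposite-direction decaying exponential. The point is that the $k$-eigenvalue branch contributes $e^{-\mathrm{i}kx}$ with $k$ real on the continuous spectrum — that channel cannot decay — so the $L^2$ solution must be purely in the $\lambda$-channels, and moreover $\operatorname{Re}\lambda$ must vanish (otherwise one of the two asymptotic directions has an oscillatory non-decaying piece, or the Wronskian/linear-independence structure breaks). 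Thus $\lambda \in \mathrm{i}\mathbb{R}$, i.e. $\lambda^2 = k^2 - q_0^2 \le 0$, which combined with $k \in \mathbb{R}$ (real continuous-spectrum values) gives $k \in [-q_0,q_0]$; translating via $k = \tfrac12(z + q_0^2/z)$ and the constraint that $z$ and $q_0^2/z^*$ are related, one finds $|z| = q_0$. Concretely: $\lambda$ purely imaginary means $z - q_0^2/z \in \mathrm{i}\mathbb{R}$, and together with $z + q_0^2/z = 2k \in \mathbb{R}$ one solves to get $z\bar z = q_0^2$.

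A cleaner route to the same conclusion, which I would actually prefer to write up, uses the symmetry from Proposition~\ref{pro:2}: $\psi_\pm(z) = \mathbf{J}[\psi_\pm^\dagger(z^*)]^{-1}\widehat{\mathbf{J}}_3(z)$. If $\mathbf{v}(z)$ is an $L^2$ solution, pair it against itself using the $\mathbf{J}$-bilinear form; the $x$-derivative of $\mathbf{v}^\dagger(z^*)\mathbf{J}\mathbf{v}(z)$ (or an appropriately weighted version) is computed from the Lax equation $\psi_x = (\mathrm{i}k\mathbf{J} + \mathrm{i}\mathbf{Q})\psi$ and the anti-commutation $\mathbf{J}\mathbf{Q} = -\mathbf{Q}\mathbf{J}$, yielding something proportional to $\mathrm{i}(k - k^*)$ times the quadratic form. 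Integrating over $x \in \mathbb{R}$, the boundary terms vanish because $\mathbf{v} \in L^2$ (with the decay actually being exponential once one knows $z \notin \mathbb{R}$), so $(k - k^*)\int |\cdot|^2\,dx = 0$; hence $k \in \mathbb{R}$. Feeding $k \in \mathbb{R}$ back into the asymptotic analysis forces $\lambda \in \mathrm{i}\mathbb{R}\setminus\{0\}$ for nontriviality and $L^2$-ness, and $|k| \le q_0$ with $k$ real together with the uniformization~\eqref{2.8} pins $z$ to $C_0$.

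The main obstacle I anticipate is the careful bookkeeping at the two spatial infinities simultaneously: one must rule out the borderline cases where $\lambda$ is real (purely oscillatory, not $L^2$) and where the solution could be a nontrivial combination that decays at one end but not the other, and one must handle the $k$-channel (eigenvalue $-k$) which never decays — so any $L^2$ solution is forced into a one-dimensional sub-bundle at each infinity, and consistency of these two choices is exactly what cuts the spectral parameter down to the circle. Keeping track of which sign of $\operatorname{Im}\lambda$ goes with which column of $\widehat{\mathbf{Y}}_\pm$, and making sure the singularities of $\widehat{\mathbf{Y}}_\pm$ at $z = \pm q_0$ (noted in Remark~\ref{rem:1}) do not interfere with the argument on $C_0 \setminus \{\pm q_0\}$, is the delicate part; everything else is a short computation with the Lax pair and the symmetry relations already established.
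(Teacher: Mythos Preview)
The paper does not actually supply a proof of Proposition~\ref{pro:4}; it is stated and implicitly deferred to reference~[23] (and ultimately to the Biondini--Kraus argument for the defocusing Manakov system~[10]). Your second route---the quadratic-form / self-adjointness computation---is precisely that standard argument and is correct: from $\mathbf{v}_x = (\mathrm{i}k\mathbf{J} + \mathrm{i}\mathbf{Q})\mathbf{v}$ with $\mathbf{Q}^\dagger = -\mathbf{Q}$ and $\mathbf{J}\mathbf{Q} + \mathbf{Q}\mathbf{J} = \mathbf{0}$ one gets $\partial_x(\mathbf{v}^\dagger\mathbf{J}\mathbf{v}) = \mathrm{i}(k-k^*)\,\mathbf{v}^\dagger\mathbf{v}$, and integrating over $\mathbb{R}$ (boundary terms vanish since an $L^2$ solution of this constant-coefficient-at-infinity ODE must decay pointwise) forces $k \in \mathbb{R}$. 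With $k$ real, $\lambda^2 = k^2 - q_0^2$ is real; if $|k| \ge q_0$ all three asymptotic exponentials $e^{\pm\mathrm{i}\lambda x}$, $e^{-\mathrm{i}kx}$ are purely oscillatory and no nontrivial $L^2$ solution exists, so $|k| < q_0$, $\lambda \in \mathrm{i}\mathbb{R}\setminus\{0\}$, and then $z = k + \lambda$ satisfies $|z|^2 = k^2 + (q_0^2 - k^2) = q_0^2$. One notational slip: the pairing should be $\mathbf{v}^\dagger(z)\mathbf{J}\mathbf{v}(z)$ at the \emph{same} spectral value, not $\mathbf{v}^\dagger(z^*)$; Proposition~\ref{pro:2} is not needed here.

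Your first route, by contrast, has a genuine gap. You assert that ``the $k$-eigenvalue branch contributes $e^{-\mathrm{i}kx}$ with $k$ real on the continuous spectrum---that channel cannot decay'' \emph{before} $k \in \mathbb{R}$ has been established, and the subsequent claim that $\operatorname{Re}\lambda$ must vanish is not justified by decay considerations alone: exponential decay at both infinities only requires $\operatorname{Im}\lambda \ne 0$ and says nothing about $\operatorname{Re}\lambda$. Pure asymptotic bookkeeping cannot pin $k$ to the real axis---that is exactly the content supplied by the self-adjointness structure in your second route. I would drop the first approach entirely and write up the second one cleanly; it is short, complete, and matches the literature.
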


The discrete eigenvalues $z_{m}$ on the circle $C_{0}$ is $\{ z_{m},z_{m}^{*} \}$ and the discrete eigenvalues $\theta_{m}$ off circle $C_{0}$ is $\{\theta_{m},\theta_{m}^{*},q_{0}^{2}/\theta_{m},q_{0}^{2}/\theta_{m}^{*} \}$. Following up on the work cited in~\cite{23}, we proceed to present the subsequent propositions.

\begin{proposition}\label{pro:5} Suppose that $h_{11}(z)$ possesses a zero $\theta_{m}$ within the upper half plane of $z$. Then
$h_{11}(\theta_{m})=0 \Leftrightarrow s_{11}(\theta_{m}^{*})=0 \Leftrightarrow s_{33}(q_{0}^{2}/\theta_{m}^{*})=0 \Leftrightarrow h_{33}(q_{0}^{2}/\theta_{m})=0$.
\end{proposition}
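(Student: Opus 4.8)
The plan is to establish the four-way equivalence in Proposition~\ref{pro:5} by stringing together symmetry relations already derived in Section~\ref{s:Direct scattering problem}, treating each implication as a consequence of one symmetry transformation. First I would observe that the equivalence $h_{11}(\theta_m)=0 \Leftrightarrow s_{11}(\theta_m^{*})=0$ follows directly from the $z\mapsto z^{*}$ symmetry: immediately after~\eqref{2.26} it is stated that $h_{11}(z)=s_{11}^{*}(z^{*})$ for $\operatorname{Im} z\geq 0$, so since $\theta_m$ lies in the upper half plane, $h_{11}(\theta_m)=s_{11}^{*}(\theta_m^{*})$, and a complex number vanishes iff its conjugate does. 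Next, the equivalence $s_{11}(\theta_m^{*})=0 \Leftrightarrow s_{33}(q_0^2/\theta_m^{*})=0$ comes from the $z\mapsto q_0^2/z$ symmetry: from~\eqref{2.31} we have $s_{11}(z)=s_{33}(q_0^2/z)$, valid on the appropriate half plane ($\operatorname{Im} z<0$ contains $\theta_m^{*}$), which gives the identity on the nose. Finally, $s_{33}(q_0^2/\theta_m^{*})=0 \Leftrightarrow h_{33}(q_0^2/\theta_m)=0$ again uses $z\mapsto z^{*}$: the relation $h_{33}(z)=s_{33}^{*}(z^{*})$ for $\operatorname{Im} z\leq 0$ from just after~\eqref{2.26}, applied at $z=q_0^2/\theta_m$ (which lies in the lower half plane since $\theta_m$ is in the upper half plane and $q_0^2/\theta_m$ has argument equal to $-\arg\theta_m$), yields $h_{33}(q_0^2/\theta_m)=s_{33}^{*}((q_0^2/\theta_m)^{*})=s_{33}^{*}(q_0^2/\theta_m^{*})$, and again vanishing is preserved under conjugation.

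A small amount of care is needed to check that each point at which we invoke a symmetry relation actually lies in the half plane (or punctured real line) where that relation was proved: $\theta_m\in\mathbb{D}^{+}\setminus C_0$ implies $\theta_m^{*}\in\mathbb{D}^{-}$, $q_0^2/\theta_m^{*}\in\mathbb{D}^{+}$ (conjugation and inversion through the circle each flip the half plane, so their composition preserves it—wait, more precisely $q_0^2/\theta_m^{*}$ has the same sign of imaginary part as $\theta_m$), and $q_0^2/\theta_m\in\mathbb{D}^{-}$. I would also note that $\theta_m$ being a zero of $h_{11}$ off the circle $C_0$ guarantees $\theta_m\neq \pm q_0$ and $\theta_m\neq 0$, so none of the intermediate points hit the excluded set $\{0,\pm q_0\}$ where the scattering coefficients or the symmetry identities degenerate; this is exactly why the hypothesis places $\theta_m$ off $C_0$.

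The main obstacle, such as it is, is bookkeeping rather than mathematics: one must correctly track how the two involutions $z\mapsto z^{*}$ and $z\mapsto q_0^2/z$ act on the upper/lower half planes and compose them in the right order so that each of $h_{11}, s_{11}, s_{33}, h_{33}$ is evaluated on the sheet where it is analytic and where the relevant identity from~\eqref{2.26} or~\eqref{2.31} holds. Since all four implications are genuine equivalences (each symmetry relation is an equality of analytic functions, and $w=0\Leftrightarrow w^{*}=0$), the chain closes up into the stated four-fold equivalence with no loose ends; there is no need for an argument about multiplicities or orders of zeros at this stage, as the proposition only asserts the locations coincide. Finally I would remark that this is the exact analogue of the corresponding statement in the parallel case~\cite{23}, the only difference being that the symmetry relations now carry the factor $\widehat{\rho}(z)$ in the off-diagonal entries, which does not affect the diagonal entries $h_{11},h_{33},s_{11},s_{33}$ used here.
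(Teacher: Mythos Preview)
Your proposal is correct and follows precisely the approach implicit in the paper: the proposition is stated without proof (with a pointer to~\cite{23}), and your argument chains together exactly the two symmetries the paper has already established, namely $h_{11}(z)=s_{11}^{*}(z^{*})$ and $h_{33}(z)=s_{33}^{*}(z^{*})$ from the Schwarz reflection after~\eqref{2.26}, and $s_{11}(z)=s_{33}(q_{0}^{2}/z)$ from~\eqref{2.31}. One minor remark: the hypothesis only places $\theta_{m}$ in the open upper half plane, not necessarily off $C_{0}$; that already guarantees $\theta_{m}\notin\{0,\pm q_{0}\}$ since those points are real, so your aside about $C_{0}$ is unnecessary (though harmless).
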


\begin{proposition}\label{pro:6}
If $\operatorname{Im} \theta_{m}>0$ and $|\theta_{m}|\neq q_{0}$. Then $\widetilde{\gamma}(\theta_{m})\neq\mathbf{0}$ and the following conclusions are equivalent:
\begin{itemize} \itemsep0.75pt
\everymath{\displaystyle}
  \item[(1)] $\widetilde{\gamma}(\theta_{m})=\mathbf{0} \Leftrightarrow \gamma(\theta_{m}^{*})=\mathbf{0} \Leftrightarrow \gamma(\frac{q_{0}^{2}}{\theta_{m}})=\mathbf{0} \Leftrightarrow
      \widetilde{\gamma}(\frac{q_{0}^{2}}{\theta_{m}^{*}})=\mathbf{0}$.
  \item[(2)] $\psi_{-,3}(\theta_{m})$ and $\psi_{+,1}(\theta_{m})$ are linearly correlated. $\psi_{-,1}(\theta_{m}^{*})$ and $\psi_{+,3}(\theta_{m}^{*})$ are linearly correlated.
  \item[(3)] $\psi_{-,3}(\frac{q_{0}^{2}}{\theta_{m}^{*}})$ and $\psi_{+,1}(\frac{q_{0}^{2}}{\theta_{m}^{*}})$ are linearly correlated. $\psi_{-,1}(\frac{q_{0}^{2}}{\theta_{m}})$ and $\psi_{+,3}(\frac{q_{0}^{2}}{\theta_{m}})$ are linearly correlated.
\end{itemize}
\end{proposition}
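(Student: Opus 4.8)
The plan is to split the statement into two parts: (a) the reformulation of the four vanishing conditions in (1) as the linear‑dependence statements (2)--(3), together with the cycle of equivalences among them; and (b) the non‑vanishing $\widetilde{\gamma}(\theta_{m})\neq\mathbf{0}$, which, in conjunction with (a), shows that in fact none of the degeneracies listed in (1)--(3) occurs.

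For (a), the starting point is that, by Corollary~\ref{cor:3}, $\gamma$ and $\widetilde{\gamma}$ are, up to the scalar prefactor $-\mathrm{i}\mathrm{e}^{\mathrm{i}\delta_{2}(z)}/\widehat{\rho}(z)$ and the invertible matrix $\mathbf{J}$, the cross products $\mathbf{J}[\psi_{-,3}^{*}(z^{*})\times\psi_{+,1}^{*}(z^{*})]$ and $\mathbf{J}[\psi_{-,1}^{*}(z^{*})\times\psi_{+,3}^{*}(z^{*})]$ of \emph{analytic} Jost columns. Since $\operatorname{Im}\theta_{m}>0$ and $|\theta_{m}|\neq q_{0}$, none of the four points $\theta_{m},\theta_{m}^{*},q_{0}^{2}/\theta_{m},q_{0}^{2}/\theta_{m}^{*}$ equals $0$ or $\pm q_{0}$, so at each of them the prefactor is finite and nonzero and $\mathbf{J}$ is invertible. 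Using that $\mathbf{a}\times\mathbf{b}=\mathbf{0}$ precisely when $\mathbf{a},\mathbf{b}$ are parallel and that complex conjugation preserves parallelism, one reads off at once: $\widetilde{\gamma}(\theta_{m})=\mathbf{0}$ iff $\psi_{-,1}(\theta_{m}^{*}),\psi_{+,3}(\theta_{m}^{*})$ are linearly dependent; $\gamma(\theta_{m}^{*})=\mathbf{0}$ iff $\psi_{-,3}(\theta_{m}),\psi_{+,1}(\theta_{m})$ are linearly dependent; and the two analogous statements at $q_{0}^{2}/\theta_{m}$ and $q_{0}^{2}/\theta_{m}^{*}$. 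This pairs each member of (1) with the corresponding member of (2)--(3). The cycle of equivalences within (1) is then closed by the two discrete symmetries, exactly as in~\cite{23}: the inversion symmetry $\gamma(z)=-\widetilde{\gamma}(q_{0}^{2}/z)$ for $\operatorname{Im}z\leq0$ (Proposition~\ref{pro:3}, cf.~\eqref{2.29}--\eqref{2.31}) links the first and third and, respectively, the second and fourth members of (1); and the conjugation symmetry (Proposition~\ref{pro:2}) in the column form~\eqref{2.25}, together with Proposition~\ref{pro:5} relating the zeros of $h_{11}$ and $s_{33}$, supplies the remaining link between those two pairs.

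For (b), I would argue by contradiction. Suppose $\widetilde{\gamma}(\theta_{m})=\mathbf{0}$ and evaluate the first identity in~\eqref{2.25} at $z=\theta_{m}$: its numerator $\mathrm{i}\mathbf{J}\mathrm{e}^{-\mathrm{i}\delta_{2}(\theta_{m})}[\widetilde{\gamma}(\theta_{m})\times\psi_{-,3}(\theta_{m})]$ vanishes, while $s_{33}(\theta_{m})\neq0$ under the standing non‑degeneracy hypothesis (equivalently, $q_{0}^{2}/\theta_{m}$ is not a zero of $s_{11}$; for $|\theta_{m}|\neq q_{0}$ this point is genuinely distinct from the eigenvalue $\theta_{m}^{*}$ of $s_{11}$), so the identity forces $\psi_{-,1}^{*}(\theta_{m}^{*})=\mathbf{0}$. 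But $\psi_{-,1}^{*}(\theta_{m}^{*})$ is (a complex conjugate of) the value at $\theta_{m}^{*}\in\mathbb{D}^{-}$ of the Jost column $\psi_{-,1}$, which solves the first‑order linear system $\psi_{x}=\mathbf{X}\psi$ with nonzero limit as $x\to-\infty$ (the first column of $\widehat{\mathbf{Y}}_{-}(\theta_{m}^{*})$ is nonzero because $|\theta_{m}|\neq q_{0}$) and hence never vanishes; so $\psi_{-,1}^{*}(\theta_{m}^{*})\neq\mathbf{0}$, a contradiction. Therefore $\widetilde{\gamma}(\theta_{m})\neq\mathbf{0}$, and by (a) all four conditions in (1)---hence the linear‑dependence statements in (2) and (3)---fail.

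The bookkeeping in (a) is routine: one need only keep straight which Jost column and which scattering coefficient is analytic in which half‑plane, so that each cross‑product identity is evaluated where all its ingredients are analytic. The real obstacle is (b). It leans on the non‑degeneracy $s_{33}(\theta_{m})\neq0$, which should be stated explicitly---otherwise one must separately rule out a simultaneous zero of $h_{11}$ and $s_{33}$ at $\theta_{m}$, which is precisely the configuration in which $\widetilde{\gamma}$ is allowed to vanish (as it does for the parallel boundary conditions of~\cite{23}, whose bright‑soliton data require exactly such a coincidence). One must also check that the identity~\eqref{2.25} genuinely forces $\psi_{-,1}^{*}(\theta_{m}^{*})=\mathbf{0}$, rather than merely an order‑matching relation, when the zeros of $\widetilde{\gamma}$ and $s_{33}$ are not assumed simple; this is where the analysis has to be carried out with care, presumably invoking the non‑parallel constraint~\eqref{2.1}.
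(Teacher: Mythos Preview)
Your part (a) is correct and matches the approach the paper implicitly takes by deferring to~\cite{23}: the cross-product representations in Corollary~\ref{cor:3} convert each vanishing condition on $\gamma,\widetilde{\gamma}$ into a linear-dependence statement for analytic Jost columns, and the involutions $z\mapsto z^{*}$ and $z\mapsto q_{0}^{2}/z$ close the cycle.

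Part (b) has a genuine gap. Your route through~\eqref{2.25} requires $s_{33}(\theta_{m})\neq 0$, but this is not a hypothesis of Proposition~\ref{pro:6}, nor do you prove it; the appeal to a ``standing non-degeneracy hypothesis'' has no referent in the statement, and your observation that $q_{0}^{2}/\theta_{m}\neq\theta_{m}^{*}$ does not preclude $s_{11}$ (hence $s_{33}$) from having a zero there. Worse, $s_{33}(\theta_{m})\neq 0$ for a zero $\theta_{m}$ of $h_{11}$ off $C_{0}$ is precisely the content of Proposition~\ref{pro:8}, whose proof in~\cite{23} \emph{uses} $\widetilde{\gamma}(\theta_{m})\neq\mathbf{0}$, so invoking it here would be circular. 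The intended argument---the one implicit in the paper's reference to~\cite{23}---is already at your fingertips after (a): you have shown that $\widetilde{\gamma}(\theta_{m})=\mathbf{0}$ is equivalent to $\psi_{-,1}(\theta_{m}^{*})$ and $\psi_{+,3}(\theta_{m}^{*})$ being linearly dependent. Both columns are nonzero (nontrivial solutions of a first-order linear system with nonzero asymptotics, since $\theta_{m}^{*}\neq 0,\pm q_{0}$), so dependence means one is a nonzero multiple of the other; since $\operatorname{Im}\lambda(\theta_{m}^{*})<0$, this common solution decays as $x\to\pm\infty$ and is therefore a nontrivial $L^{2}$ eigenfunction. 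Proposition~\ref{pro:4} then forces $\theta_{m}^{*}\in C_{0}$, contradicting $|\theta_{m}|\neq q_{0}$. No assumption on $s_{33}$ and nothing specific to the non-parallel constraint~\eqref{2.1} is needed. Your aside that $\widetilde{\gamma}$ ``is allowed to vanish'' in the parallel case of~\cite{23} is also off: that paper is defocusing as well, and the same $L^{2}$ argument excludes $\widetilde{\gamma}=\mathbf{0}$ off $C_{0}$ there too.
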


\begin{proposition}\label{pro:7}
Let $z_{m}$ be a zero of $h_{11}(z)$ in the upper half plane with $|z_{m}|=q_{0}$. Then $\widetilde{\gamma}(z_{m})=\gamma(z_{m}^{*})=\mathbf{0}$, there exist constants $c_{m}$ and $\bar{c}_{m}$ such that $\psi_{+,1}(z_{m})=c_{m}\psi_{-,3}(z_{m})$ and $\psi_{+,3}(z_{m}^{*})=\bar{c}_{m}\psi_{-,1}(z_{m}^{*})$.
\end{proposition}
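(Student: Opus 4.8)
The plan is to combine the determinant formula for $\mathbf{\Psi}^{+}(z)$ with Proposition~\ref{pro:5} and the special feature of points on $C_0$, namely that the second symmetry $z\mapsto q_0^2/z$ fixes $C_0$ pointwise: if $|z_m|=q_0$ then $q_0^2/z_m = z_m^*$. First I would observe that by Proposition~\ref{pro:5}, the hypothesis $h_{11}(z_m)=0$ forces $s_{33}(q_0^2/z_m^*)=0$; but since $|z_m|=q_0$ we have $q_0^2/z_m^* = z_m$, so in fact $s_{33}(z_m)=0$ as well. Hence $\det\mathbf{\Psi}^{+}(z)=\mathrm{i}\,\mathrm{e}^{\mathrm{i}\delta_2(z)}h_{11}(z)s_{33}(z)\widehat{\rho}(z)$ has (at least) a double zero at $z=z_m$, while $\widehat{\rho}(z_m)\neq 0$ (since $|z_m|=q_0\neq 0$ and $z_m\neq\pm q_0$ because $z_m$ is off the branch points on $C_0$ with $\operatorname{Im} z_m>0$). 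So the three columns $\psi_{+,1}(z_m)$, $-\widetilde{\gamma}(z_m)$, $\psi_{-,3}(z_m)$ are linearly dependent.

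Next I would pin down which columns degenerate. The natural candidate is to show $\widetilde{\gamma}(z_m)=\mathbf{0}$: this is exactly the limiting/boundary version of Proposition~\ref{pro:6}, whose dichotomy was stated under the genericity assumption $|\theta_m|\neq q_0$, asserting $\widetilde\gamma\neq\mathbf 0$ there; on $C_0$ the opposite happens. To see this directly, recall from~\eqref{2.22} that $\widetilde{\gamma}(z)$ is (up to nonvanishing scalar factors) $\mathbf{J}[\widetilde\psi_{-,1}(z)\times\widetilde\psi_{+,3}(z)]$, and by Corollary~\ref{cor:1} the adjoint eigenfunctions $\widetilde\psi_{\pm,j}$ are cross products of the $\psi_{\pm,l}$; unravelling, $\widetilde\gamma(z_m)$ vanishes precisely when $\psi_{+,3}(z_m)$ lies in the span of $\{\psi_{-,1}(z_m),\psi_{-,3}(z_m)\}$ in the appropriate sense, equivalently (using the analyticity/decay of the relevant columns and the determinant vanishing) when $\psi_{-,3}$ and $\psi_{+,1}$ become proportional. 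I would argue that the double zero of $\det\mathbf{\Psi}^{+}$ together with the fact that $\widetilde\gamma$ is itself built as a cross product of two Jost columns forces $\widetilde\gamma(z_m)=\mathbf 0$; the companion statement $\gamma(z_m^*)=\mathbf 0$ then follows from the Schwarz-type symmetry $\gamma(z)=-\widetilde\gamma(q_0^2/z)$ for $\operatorname{Im} z\le 0$ evaluated at $z=z_m^*$ (again using $q_0^2/z_m^*=z_m$), or equivalently from taking complex conjugates via~\eqref{2.27}.

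Finally, once $\widetilde\gamma(z_m)=\mathbf 0$, the decomposition~\eqref{2.24}, together with $\psi_{+,1}(z_m)$ and $\psi_{-,3}(z_m)$ both being nontrivial analytic columns at $z_m$ (their nonvanishing follows since $\det\psi_\pm(z)=\mathrm{i}\widehat\rho(z)\mathrm e^{\mathrm i\delta_2(z)}\neq 0$ away from $z=\pm q_0$), and the vanishing of $\det\mathbf{\Psi}^{+}(z_m)=0$ shows $\psi_{+,1}(z_m)$ and $\psi_{-,3}(z_m)$ are proportional: there is a constant $c_m$ with $\psi_{+,1}(z_m)=c_m\psi_{-,3}(z_m)$, and $c_m\neq 0$ since $\psi_{+,1}(z_m)\neq\mathbf 0$. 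Applying the same reasoning on the lower sheet at $z_m^*$ — or simply invoking the symmetry $\psi_{\pm,3}^*(z^*)=\mathrm{i}\mathbf J\mathrm e^{-\mathrm i\delta_2}[\gamma\times\psi_{\pm,1}]/(-\widehat\rho_j)$ from Corollary~\ref{cor:1}/\eqref{2.25} — yields $\psi_{+,3}(z_m^*)=\bar c_m\psi_{-,1}(z_m^*)$ for some constant $\bar c_m$.

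I expect the main obstacle to be the rigorous justification that the double zero of $\det\mathbf{\Psi}^{+}$ at $z_m$ genuinely forces the \emph{auxiliary} column $\widetilde\gamma$ (rather than, say, a spurious coincidental dependence among $\psi_{+,1}$ and $\psi_{-,3}$ alone) to vanish — i.e.\ carefully separating the "off-$C_0$" case of Proposition~\ref{pro:6} (where $\widetilde\gamma\neq\mathbf 0$ and the double zero comes from a zero of $h_{11}$ and an independent zero of $s_{33}$ at distinct points $q_0^2/\theta_m$ and $\theta_m^*$) from the "on-$C_0$" case here (where those two points collide). The key structural input that makes it work is that on $C_0$ the two symmetry orbits merge, so a single simple zero of $h_{11}$ at $z_m$ automatically produces a simple zero of $s_{33}$ at the \emph{same} point, and $\widetilde\gamma$, being a continuous (indeed analytic in $\mathbb D^+$) vector that equals $\mathbf J[\widetilde\psi_{-,1}\times\widetilde\psi_{+,3}]$ times a factor with only a simple pole structure via $1/\widehat\rho$, cannot absorb a double zero of the determinant unless it itself vanishes.
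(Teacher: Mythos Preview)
Your proposal correctly identifies the crucial observation that $|z_m|=q_0$ collapses the symmetry orbit ($q_0^2/z_m^*=z_m$), so that Proposition~\ref{pro:5} yields $s_{33}(z_m)=0$ in addition to $h_{11}(z_m)=0$. However, there are two genuine gaps. First, the ``double zero of $\det\mathbf{\Psi}^+$'' heuristic does not by itself force $\widetilde{\gamma}(z_m)=\mathbf{0}$: a determinant vanishing to second order constrains only the overall rank, not which particular column vanishes, and your final paragraph concedes this remains unresolved. Second, your deduction of $\psi_{+,1}(z_m)\parallel\psi_{-,3}(z_m)$ from $\widetilde{\gamma}(z_m)=\mathbf{0}$ together with $\det\mathbf{\Psi}^+(z_m)=0$ is empty: once the middle column of $\mathbf{\Psi}^+(z_m)$ vanishes, the determinant is automatically zero and carries no information about the remaining two columns. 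The reference to~\eqref{2.24} does not help either, since those decompositions hold only for $z\in\mathbb{R}\setminus\{0,\pm q_0\}$.

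The argument the paper invokes (from the cited reference) assembles the very identities you list, but via a short contradiction. From~\eqref{2.25}, $h_{11}(z_m)=0$ gives $\widetilde{\gamma}(z_m)\times\psi_{+,1}(z_m)=\mathbf{0}$ and $s_{33}(z_m)=0$ gives $\widetilde{\gamma}(z_m)\times\psi_{-,3}(z_m)=\mathbf{0}$. If $\widetilde{\gamma}(z_m)\neq\mathbf{0}$, both $\psi_{+,1}(z_m)$ and $\psi_{-,3}(z_m)$ are scalar multiples of it, hence mutually parallel; applying~\eqref{2.29} with $q_0^2/z_m=z_m^*$ then yields $\psi_{-,1}(z_m^*)\parallel\psi_{+,3}(z_m^*)$, and now~\eqref{2.27} forces $\widetilde{\gamma}(z_m)=\mathbf{0}$ --- a contradiction. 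This establishes $\widetilde{\gamma}(z_m)=\mathbf{0}$; the proportionality $\psi_{+,3}(z_m^*)=\bar{c}_m\psi_{-,1}(z_m^*)$ then follows directly from~\eqref{2.27}, and $\psi_{+,1}(z_m)=c_m\psi_{-,3}(z_m)$ from~\eqref{2.29}, with $\gamma(z_m^*)=-\widetilde{\gamma}(q_0^2/z_m^*)=-\widetilde{\gamma}(z_m)=\mathbf{0}$ as you noted.
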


\begin{proposition}\label{pro:8}
Let $\theta_{m}$ be a zero of $h_{11}(z)$ in the upper half plane with $|\theta_{m}|\neq q_{0}$. Then $|\theta_{m}|<q_{0}$ and $s_{33}(\theta_{m})\neq 0$, there exist constants $f_{m}$, $\hat{f}_{m}$, $\check{f}_{m}$ and $\bar{f}_{m}$ such that
\begin{subequations}\label{3.1}
\begin{align}
\psi_{+,1}(\theta_{m})&=\frac{f_{m}}{s_{33}(\theta_{m})}\widetilde{\gamma}(\theta_{m}), \quad
\gamma(\theta_{m}^{*})=\bar{f}_{m}\psi_{-,1}(\theta_{m}^{*}), \\
\psi_{+,3}(\frac{q_{0}^{2}}{\theta_{m}})&=\check{f}_{m}\gamma(\frac{q_{0}^{2}}{\theta_{m}}), \qquad \,\,\,\,\,  \widetilde{\gamma}(\frac{q_{0}^{2}}{\theta_{m}^{*}})
=\hat{f}_{m}\psi_{-,3}(\frac{q_{0}^{2}}{\theta_{m}^{*}}).
\end{align}
\end{subequations}
\end{proposition}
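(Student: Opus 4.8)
The plan is to (a) reduce the four identities in~\eqref{3.1} to the single identity $\psi_{+,1}(\theta_m)=f_m\widetilde{\gamma}(\theta_m)/s_{33}(\theta_m)$ together with its companion $\gamma(\theta_m^{*})=\bar f_m\psi_{-,1}(\theta_m^{*})$, using the symmetries of Section~\ref{s:Symmetries}; (b) prove those two identities and $s_{33}(\theta_m)\neq0$ from the cross-product symmetry relations~\eqref{2.25} and Proposition~\ref{pro:6}; and (c) establish $|\theta_m|<q_0$ from Proposition~\ref{pro:4} and the asymptotics of the Jost solutions. For step (a): by Proposition~\ref{pro:5} the points $\theta_m^{*}$, $q_{0}^{2}/\theta_m$, $q_{0}^{2}/\theta_m^{*}$ are zeros of $s_{11}$, $h_{33}$, $s_{33}$ respectively; applying $z\mapsto q_{0}^{2}/z$ via~\eqref{2.29} and $\widetilde{\gamma}(q_{0}^{2}/z)=-\gamma(z)$ (valid for $\operatorname{Im}z\leq0$) to the first identity produces $\psi_{+,3}(q_{0}^{2}/\theta_m)=\check f_m\gamma(q_{0}^{2}/\theta_m)$ with $\check f_m=f_m q_0/(\mathrm{i}\theta_m s_{33}(\theta_m))$, and applying the same pair of symmetries to the companion identity produces $\widetilde{\gamma}(q_{0}^{2}/\theta_m^{*})=\hat f_m\psi_{-,3}(q_{0}^{2}/\theta_m^{*})$ with $\hat f_m=\mathrm{i}\theta_m^{*}\bar f_m/q_0$.

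For step (b), rewrite the relevant line of~\eqref{2.25} as $h_{11}(z)\psi_{+,3}^{*}(z^{*})=\mathrm{i}\mathbf{J}\mathrm{e}^{-\mathrm{i}\delta_{2}(z)}\bigl[\widetilde{\gamma}(z)\times\psi_{+,1}(z)\bigr]$ for $\operatorname{Im}z\geq0$. Since $\psi_{+,3}$ is analytic on $\mathbb{D}^{-}$, the left side is finite at $z=\theta_m$, so $h_{11}(\theta_m)=0$ forces $\widetilde{\gamma}(\theta_m)\times\psi_{+,1}(\theta_m)=\mathbf{0}$; as $\widetilde{\gamma}(\theta_m)\neq\mathbf{0}$ by Proposition~\ref{pro:6}, $\psi_{+,1}(\theta_m)$ is a scalar multiple of $\widetilde{\gamma}(\theta_m)$. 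The non-vanishing $s_{33}(\theta_m)\neq0$ drops out of the same manipulation: if $s_{33}(\theta_m)=0$, then the companion line $s_{33}(z)\psi_{-,1}^{*}(z^{*})=\mathrm{i}\mathbf{J}\mathrm{e}^{-\mathrm{i}\delta_{2}(z)}\bigl[\widetilde{\gamma}(z)\times\psi_{-,3}(z)\bigr]$ of~\eqref{2.25} (with $\psi_{-,1}^{*}(\theta_m^{*})$ finite because $\psi_{-,1}$ is analytic on $\mathbb{D}^{-}$) gives $\widetilde{\gamma}(\theta_m)\times\psi_{-,3}(\theta_m)=\mathbf{0}$ as well, so $\psi_{+,1}(\theta_m)$ and $\psi_{-,3}(\theta_m)$ are both scalar multiples of the nonzero vector $\widetilde{\gamma}(\theta_m)$, hence linearly correlated; by the equivalence in Proposition~\ref{pro:6} this forces $\widetilde{\gamma}(\theta_m)=\mathbf{0}$, a contradiction. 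Thus $s_{33}(\theta_m)\neq0$, the scalar multiple may be written $f_m/s_{33}(\theta_m)$, and the first identity follows; the companion identity $\gamma(\theta_m^{*})=\bar f_m\psi_{-,1}(\theta_m^{*})$ comes out identically from the $\operatorname{Im}z\leq0$ line $-s_{11}(z)\psi_{-,3}^{*}(z^{*})=\mathrm{i}\mathbf{J}\mathrm{e}^{-\mathrm{i}\delta_{2}(z)}\bigl[\gamma(z)\times\psi_{-,1}(z)\bigr]$ of~\eqref{2.25}, using $s_{11}(\theta_m^{*})=0$ (Proposition~\ref{pro:5}) and $\gamma(\theta_m^{*})\neq\mathbf{0}$ (Proposition~\ref{pro:6}). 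With step (a) this yields all of~\eqref{3.1}.

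For step (c), I would argue by contradiction with Proposition~\ref{pro:4}. For every $z\in\mathbb{D}^{+}$ one has $\operatorname{Im}\lambda(z)=\tfrac12\sin(\arg z)(|z|+q_{0}^{2}/|z|)>0$, so $\psi_{+,1}(z;x,t)\sim\widehat{\mathbf{Y}}_{+,1}(z)\mathrm{e}^{\mathrm{i}\delta_{1}}$ decays exponentially as $x\to+\infty$. As $x\to-\infty$ the leading behaviour of the analytically continued $\psi_{+,1}(z)$ is the asymptotic version of the scattering relation, $h_{11}(z)\widehat{\mathbf{Y}}_{-,1}(z)\mathrm{e}^{\mathrm{i}\delta_{1}}+c_{2}(z)\widehat{\mathbf{Y}}_{-,2}(z)\mathrm{e}^{\mathrm{i}\delta_{2}}+c_{3}(z)\widehat{\mathbf{Y}}_{-,3}(z)\mathrm{e}^{-\mathrm{i}\delta_{1}}$, where the coefficient of the $\mathrm{e}^{\mathrm{i}\delta_{1}}$ mode is $h_{11}(z)$ since that scalar extends analytically from $\mathbb{R}$ into $\mathbb{D}^{+}$. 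At $z=\theta_m$ the $\mathrm{e}^{\mathrm{i}\delta_{1}}$ term drops since $h_{11}(\theta_m)=0$, the $\mathrm{e}^{-\mathrm{i}\delta_{1}}$ term decays because $\operatorname{Im}\lambda(\theta_m)>0$, and the $\mathrm{e}^{\mathrm{i}\delta_{2}}$ term decays as $x\to-\infty$ precisely when $\operatorname{Im}k(\theta_m)>0$; a direct computation from~\eqref{2.8} gives $\operatorname{Im}k(\theta_m)=\tfrac12\sin(\arg\theta_m)\bigl(|\theta_m|-q_{0}^{2}/|\theta_m|\bigr)$, which is positive exactly when $|\theta_m|>q_0$. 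Hence, if $|\theta_m|>q_0$, then $\psi_{+,1}(\theta_m)\in L^{2}(\mathbb{R})$; being a nontrivial solution of~\eqref{2.2}, Proposition~\ref{pro:4} forces $\theta_m\in C_{0}$, i.e.\ $|\theta_m|=q_0$, contradicting $|\theta_m|\neq q_0$. Therefore $|\theta_m|<q_0$.

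I expect the hard part to be the justification, in step (c), that the analytically continued eigenfunction $\psi_{+,1}(\theta_m)$ genuinely has the stated $x\to-\infty$ asymptotic form with $\mathrm{e}^{\mathrm{i}\delta_{1}}$-coefficient $h_{11}$ --- equivalently, that $\psi_{+,1}(\theta_m)\in L^{2}(\mathbb{R})$ or not is decided solely by the sign of $\operatorname{Im}k(\theta_m)$. This needs the Volterra-integral-equation (Neumann-series) estimates of Section~\ref{s:Direct scattering problem} carried off the real $z$-axis, for which I would follow the parallel-case argument of~\cite{23} and the asymmetric defocusing-Hirota treatment of~\cite{50}. By contrast, steps (a) and (b) are routine: the symmetry bookkeeping is mechanical, and the proportionalities together with $s_{33}(\theta_m)\neq0$ are immediate from~\eqref{2.25} and Proposition~\ref{pro:6}.
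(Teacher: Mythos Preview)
The paper gives no argument here, deferring entirely to~\cite{23}. Your outline is the standard one from that reference (and from~\cite{10}), and steps~(a) and~(b) are correct as written; the symmetry bookkeeping in~(a) even recovers the exact relations of Proposition~\ref{pro:10}.

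Step~(c) has a genuine gap that the ``hard part'' remark does not quite cover. The expansion $\psi_{+,1}(z;x)\sim h_{11}(z)\widehat{\mathbf{Y}}_{-,1}\mathrm{e}^{\mathrm{i}\delta_1}+c_2(z)\widehat{\mathbf{Y}}_{-,2}\mathrm{e}^{\mathrm{i}\delta_2}+c_3(z)\widehat{\mathbf{Y}}_{-,3}\mathrm{e}^{-\mathrm{i}\delta_1}$ as $x\to-\infty$ is only meaningful on the real axis, where the full basis $\psi_{-,j}$ exists; for $z\in\mathbb{D}^{+}$ the column $\psi_{-,2}$ is unavailable, so there is no a~priori reason such an expansion with well-defined scalars $c_2,c_3$ exists, let alone with the $\mathrm{e}^{\mathrm{i}\delta_1}$-coefficient equal to the analytic continuation of $h_{11}$. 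The clean route---and this is what~\cite{10,23} actually do---uses what you already proved in step~(b): since $\psi_{+,1}(\theta_m)$ is a nonzero multiple of $\widetilde{\gamma}(\theta_m)$, study the $x\to-\infty$ behaviour of $\widetilde{\gamma}$ instead. From~\eqref{2.22} the modified auxiliary eigenfunction $\widetilde{d}(z;x)=-\mathrm{i}\widehat{\rho}(z)^{-1}\mathbf{J}\bigl[\widetilde{\nu}_{-,1}(z;x)\times\widetilde{\nu}_{+,3}(z;x)\bigr]$ is built from objects analytic in $\mathbb{D}^{+}$, and the adjoint Volterra estimates give $\widetilde{d}(z;x)\to -s_{33}(z)\widehat{\mathbf{Y}}_{-,2}(z)$ as $x\to-\infty$ for $z\in\mathbb{D}^{+}$. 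Since $s_{33}(\theta_m)\neq0$ by step~(b), one obtains $\widetilde{\gamma}(\theta_m;x)\sim C\,\mathrm{e}^{\mathrm{i}\delta_2(\theta_m)}$ with $C\neq0$ as $x\to-\infty$; your computation of $\operatorname{Im}k(\theta_m)$ then finishes the $L^2$ contradiction with Proposition~\ref{pro:4} exactly as you wrote. This bypasses the problematic expansion entirely and also explains why $s_{33}(\theta_m)\neq0$ must be established before $|\theta_m|<q_0$.
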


\begin{proposition}\label{pro:9}
Suppose that $h_{11}(z)$ has simple zeros $\{z_{m}\}_{m=1}^{M_{1}}$ on $C_{0}$, it can be inferred that the norming constants adhere to the symmetry relationship:
\begin{equation}\label{3.2}
\begin{split}
\bar{c}_{m}=-c_{m}, \quad c_{m}^{*}=\frac{s_{11}^{\prime}(z_{m}^{*})}{h_{33}^{\prime}(z_{m}^{*})}\bar{c}_{m},
\quad m=1,2, \ldots, M_{1}.
\end{split}
\end{equation}
\end{proposition}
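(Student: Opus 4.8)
The plan is to obtain both identities from Proposition~\ref{pro:7} by pushing it through the two discrete symmetries $z\mapsto q_{0}^{2}/z$ (Proposition~\ref{pro:3}) and $z\mapsto z^{*}$ (Proposition~\ref{pro:2}). Fix a simple zero $z_{m}$ of $h_{11}(z)$ on $C_{0}$, so that $|z_{m}|=q_{0}$ and hence $z_{m}z_{m}^{*}=q_{0}^{2}$, i.e.\ $q_{0}^{2}/z_{m}=z_{m}^{*}$ and $q_{0}^{2}/z_{m}^{*}=z_{m}$. The preliminary observation I would record is that, \emph{on} $C_{0}$, the zero of $h_{11}$ at $z_{m}$ is accompanied by a zero of $s_{33}$ at the same point: Proposition~\ref{pro:5} (with $\theta_{m}=z_{m}$) gives $s_{11}(z_{m}^{*})=0$, $s_{33}(q_{0}^{2}/z_{m}^{*})=s_{33}(z_{m})=0$ and $h_{33}(q_{0}^{2}/z_{m})=h_{33}(z_{m}^{*})=0$, and since $z\mapsto q_{0}^{2}/z$ and $z\mapsto z^{*}$ preserve orders of zeros, all four are simple. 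By Proposition~\ref{pro:7} there are nonzero constants $c_{m},\bar{c}_{m}$ with $\psi_{+,1}(z_{m})=c_{m}\psi_{-,3}(z_{m})$ and $\psi_{+,3}(z_{m}^{*})=\bar{c}_{m}\psi_{-,1}(z_{m}^{*})$, with $\widetilde{\gamma}(z_{m})=\gamma(z_{m}^{*})=\mathbf{0}$, and the columns $\psi_{-,3}(z_{m})$, $\psi_{-,1}(z_{m}^{*})$ do not vanish.

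For $\bar{c}_{m}=-c_{m}$ I would apply the symmetry~\eqref{2.29} twice. Using $q_{0}^{2}/z_{m}=z_{m}^{*}$ and $q_{0}^{2}/z_{m}^{*}=z_{m}$,
\[
\psi_{+,1}(z_{m})=-\frac{\mathrm{i}z_{m}}{q_{0}}\psi_{+,3}(z_{m}^{*})=-\frac{\mathrm{i}z_{m}}{q_{0}}\bar{c}_{m}\psi_{-,1}(z_{m}^{*})=-\frac{\mathrm{i}z_{m}}{q_{0}}\Bigl(-\frac{\mathrm{i}z_{m}^{*}}{q_{0}}\Bigr)\bar{c}_{m}\psi_{-,3}(z_{m})=-\frac{z_{m}z_{m}^{*}}{q_{0}^{2}}\bar{c}_{m}\psi_{-,3}(z_{m})=-\bar{c}_{m}\psi_{-,3}(z_{m}),
\]
and comparing with $\psi_{+,1}(z_{m})=c_{m}\psi_{-,3}(z_{m})$, using $\psi_{-,3}(z_{m})\neq\mathbf{0}$, gives the claim.

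For the second identity I would evaluate the two relations of~\eqref{2.25} valid for $\operatorname{Im} z\le 0$ at the interior point $z=z_{m}^{*}\in\mathbb{D}^{-}$; their right-hand sides are proportional to $\gamma(z)$ divided by $h_{33}(z)$ (respectively $s_{11}(z)$), and $\gamma(z_{m}^{*})=\mathbf{0}$ while $h_{33}(z_{m}^{*})=s_{11}(z_{m}^{*})=0$, so both sides are of the form $0/0$ at $z_{m}^{*}$. Since the left-hand sides $\psi_{+,1}^{*}(z_{m})$, $\psi_{-,3}^{*}(z_{m})$ are finite and nonzero, $\gamma$ must vanish to exactly first order at $z_{m}^{*}$, and l'H\^{o}pital's rule gives
\[
\psi_{+,1}^{*}(z_{m})=\frac{\mathrm{i}\mathbf{J}\mathrm{e}^{-\mathrm{i}\delta_{2}(z_{m}^{*})}[\gamma'(z_{m}^{*})\times\psi_{+,3}(z_{m}^{*})]}{-h_{33}'(z_{m}^{*})},\qquad \psi_{-,3}^{*}(z_{m})=\frac{\mathrm{i}\mathbf{J}\mathrm{e}^{-\mathrm{i}\delta_{2}(z_{m}^{*})}[\gamma'(z_{m}^{*})\times\psi_{-,1}(z_{m}^{*})]}{-s_{11}'(z_{m}^{*})}.
\]
Substituting $\psi_{+,3}(z_{m}^{*})=\bar{c}_{m}\psi_{-,1}(z_{m}^{*})$ in the first relation and dividing by the second cancels the cross product and the exponential, leaving $\psi_{+,1}^{*}(z_{m})=\bar{c}_{m}\bigl(s_{11}'(z_{m}^{*})/h_{33}'(z_{m}^{*})\bigr)\psi_{-,3}^{*}(z_{m})$; taking complex conjugates and comparing once more with $\psi_{+,1}(z_{m})=c_{m}\psi_{-,3}(z_{m})$ yields $c_{m}=\bar{c}_{m}^{*}\bigl(s_{11}'(z_{m}^{*})/h_{33}'(z_{m}^{*})\bigr)^{*}$, equivalently the asserted $c_{m}^{*}=\bigl(s_{11}'(z_{m}^{*})/h_{33}'(z_{m}^{*})\bigr)\bar{c}_{m}$.

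The step I expect to be the main obstacle is the preliminary one: proving the double-zero configuration $h_{11}(z_{m})=s_{33}(z_{m})=0$ and $h_{33}(z_{m}^{*})=s_{11}(z_{m}^{*})=0$ that holds \emph{on} $C_{0}$, in sharp contrast to the off-circle case of Proposition~\ref{pro:8} where $s_{33}(\theta_{m})\neq 0$, together with the fact that all of these zeros---and the zeros of $\gamma,\widetilde{\gamma}$ at $z_{m}^{*},z_{m}$---are simple. This simplicity is exactly what renders the l'H\^{o}pital limits finite and nonzero, consistent with the analyticity and nonvanishing of $\psi_{\pm,j}$ at $z_{m}$ and $z_{m}^{*}$, and it is the reason a limiting argument rather than direct substitution is unavoidable in the second step.
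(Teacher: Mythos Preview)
Your argument is correct. The paper itself does not include a proof of Proposition~\ref{pro:9} (it is stated without proof, with the derivation delegated to reference~[23]), so there is no in-text proof to compare against; your approach via the two symmetries~\eqref{2.29} and~\eqref{2.25} together with Proposition~\ref{pro:7} is the standard route and goes through cleanly.

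One small comment on presentation: your ``l'H\^{o}pital'' step is really just differentiation of the identities $-h_{33}(z)\psi_{+,1}^{*}(z^{*})=\mathrm{i}\mathbf{J}\mathrm{e}^{-\mathrm{i}\delta_{2}(z)}[\gamma(z)\times\psi_{+,3}(z)]$ and $-s_{11}(z)\psi_{-,3}^{*}(z^{*})=\mathrm{i}\mathbf{J}\mathrm{e}^{-\mathrm{i}\delta_{2}(z)}[\gamma(z)\times\psi_{-,1}(z)]$ at $z=z_{m}^{*}$, using $\gamma(z_{m}^{*})=\mathbf{0}$ to kill the extra product-rule terms. Framing it that way avoids any worry about applying l'H\^{o}pital to vector-valued quotients. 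Also, the ``preliminary obstacle'' you flag is already handled: Proposition~\ref{pro:5} with $\theta_{m}=z_{m}$ and $q_{0}^{2}/z_{m}=z_{m}^{*}$ immediately gives $s_{33}(z_{m})=0$, $s_{11}(z_{m}^{*})=0$, $h_{33}(z_{m}^{*})=0$, and simplicity is inherited because the maps $z\mapsto z^{*}$ and $z\mapsto q_{0}^{2}/z$ are locally biholomorphic away from the origin and branch points.
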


\begin{proposition}\label{pro:10}
Suppose that $h_{11}(z)$ has zeros $\{\theta_{m}\}_{m=1}^{M_{2}}$ off $C_{0}$, it is known that the norming constants adhere to the symmetry relationship:
\begin{equation}\label{3.3}
\begin{split}
\bar{f}_{m}=-\frac{f_{m}^{*}}{\widehat{\rho}(\theta_{m}^{*})}, \quad \hat{f}_{m}=-\frac{\mathrm{i}\theta_{m}^{*}f_{m}^{*}}{q_{0}\widehat{\rho}(\theta_{m}^{*})}, \quad
\check{f}_{m}=\frac{q_{0}f_{m}}{\mathrm{i}\theta_{m}s_{33}(\theta_{m})},
\quad m=1,2, \ldots, M_{2}.
\end{split}
\end{equation}
\end{proposition}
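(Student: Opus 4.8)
The plan is to treat the first relation in~\eqref{3.1}, $\psi_{+,1}(\theta_m)=f_m\widetilde{\gamma}(\theta_m)/s_{33}(\theta_m)$ at $\theta_m\in\mathbb{D}^{+}$ (where $|\theta_m|<q_0$ and $s_{33}(\theta_m)\neq0$ by Proposition~\ref{pro:8}, and $\widetilde{\gamma}(\theta_m)\neq\mathbf{0}$ by Proposition~\ref{pro:6}), as a ``seed'' and to transport it to the three companion points $\theta_m^{*}$, $q_0^{2}/\theta_m$, $q_0^{2}/\theta_m^{*}$ by the two discrete symmetries $z\mapsto q_0^{2}/z$ (Proposition~\ref{pro:3}) and $z\mapsto z^{*}$ (Proposition~\ref{pro:2}); comparing the images with the remaining relations in~\eqref{3.1} will then yield $\check{f}_m$, $\bar{f}_m$, $\hat{f}_m$. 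The transformation rules I will use are already in place: \eqref{2.29} and its consequence $\gamma(z)=-\widetilde{\gamma}(q_0^{2}/z)$ for the inversion, and Corollaries~\ref{cor:1}--\ref{cor:3} (equations~\eqref{2.22}--\eqref{2.27}) together with the scattering symmetry~\eqref{2.26} for the conjugation.

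For $\check{f}_m$ I would apply the inversion symmetry to the seed: by~\eqref{2.29}, $\psi_{+,1}(\theta_m)=-(\mathrm{i}\theta_m/q_0)\,\psi_{+,3}(q_0^{2}/\theta_m)$, and $\gamma(z)=-\widetilde{\gamma}(q_0^{2}/z)$ at $z=q_0^{2}/\theta_m\in\mathbb{D}^{-}$ gives $\widetilde{\gamma}(\theta_m)=-\gamma(q_0^{2}/\theta_m)$; substituting both and solving for $\psi_{+,3}(q_0^{2}/\theta_m)$ produces $\psi_{+,3}(q_0^{2}/\theta_m)=\bigl(q_0 f_m/(\mathrm{i}\theta_m s_{33}(\theta_m))\bigr)\gamma(q_0^{2}/\theta_m)$, which matched against the $\check{f}_m$-relation in~\eqref{3.1} gives $\check{f}_m=q_0 f_m/(\mathrm{i}\theta_m s_{33}(\theta_m))$. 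The same manipulation applied to the $\bar{f}_m$-relation --- now with $\psi_{-,3}(q_0^{2}/\theta_m^{*})=(\mathrm{i}q_0/\theta_m^{*})\,\psi_{-,1}(\theta_m^{*})$ from~\eqref{2.29} and $\gamma(\theta_m^{*})=-\widetilde{\gamma}(q_0^{2}/\theta_m^{*})$ --- yields $\hat{f}_m=\mathrm{i}\theta_m^{*}\bar{f}_m/q_0$, so $\hat{f}_m$ will follow once $\bar{f}_m$ is known (equivalently one applies $z\mapsto z^{*}$ to the $\check{f}_m$-relation; the two routes agree since the symmetries commute on the scattering data).

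For $\bar{f}_m$ I would apply the conjugation symmetry. Evaluating~\eqref{2.22} at $z=\theta_m^{*}$ writes $\gamma(\theta_m^{*})=-\mathrm{i}\mathrm{e}^{\mathrm{i}\delta_2(\theta_m^{*})}\mathbf{J}[\widetilde{\psi}_{-,3}(\theta_m^{*})\times\widetilde{\psi}_{+,1}(\theta_m^{*})]/\widehat{\rho}(\theta_m^{*})$, and since the adjoint Jost solutions are the Schwarz reflections of the Jost solutions (implicit in Corollaries~\ref{cor:1}--\ref{cor:3}) this equals $-\mathrm{i}\mathrm{e}^{\mathrm{i}\delta_2(\theta_m^{*})}\mathbf{J}[\overline{\psi_{-,3}(\theta_m)}\times\overline{\psi_{+,1}(\theta_m)}]/\widehat{\rho}(\theta_m^{*})$. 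Inserting the seed into $\psi_{+,1}(\theta_m)$ leaves the factor $\mathbf{J}[\psi_{-,3}(\theta_m)\times\widetilde{\gamma}(\theta_m)]$ under a conjugate bar, and this is precisely~\eqref{2.25} rearranged: $\mathbf{J}[\widetilde{\gamma}(z)\times\psi_{-,3}(z)]=-\mathrm{i}s_{33}(z)\mathrm{e}^{\mathrm{i}\delta_2(z)}\psi_{-,1}^{*}(z^{*})$. Because $k(z)$, $\lambda(z)$, $\widehat{\rho}(z)$ have real coefficients (so $\overline{\delta_2(\theta_m)}=\delta_2(\theta_m^{*})$ and $\overline{\widehat{\rho}(\theta_m)}=\widehat{\rho}(\theta_m^{*})$), all the $s_{33}$ factors cancel and the exponentials and powers of $\mathrm{i}$ collapse, giving $\gamma(\theta_m^{*})=-(f_m^{*}/\widehat{\rho}(\theta_m^{*}))\,\psi_{-,1}(\theta_m^{*})$; comparison with the $\bar{f}_m$-relation in~\eqref{3.1} gives $\bar{f}_m=-f_m^{*}/\widehat{\rho}(\theta_m^{*})$, and then $\hat{f}_m=\mathrm{i}\theta_m^{*}\bar{f}_m/q_0=-\mathrm{i}\theta_m^{*}f_m^{*}/(q_0\widehat{\rho}(\theta_m^{*}))$, which is~\eqref{3.3}.

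The inversion part is essentially bookkeeping --- each column just picks up an explicit $\pm\mathrm{i}z/q_0$ or a sign --- so the hard part will be the conjugation step, where $\psi_{+,1}$ is coupled to $\widetilde{\gamma}$ (and hence to $\psi_{-,1}$) only through the vector cross products of Corollaries~\ref{cor:1} and~\ref{cor:3}. There one has to track the several factors $\mathrm{e}^{\pm\mathrm{i}\delta_2}$, $\widehat{\rho}$ and the powers of $\mathrm{i}$ and $-1$, keep the $\mathbf{J}$-conjugations straight (using $\overline{\mathbf{J}}=\mathbf{J}$ and $\overline{\mathbf{a}\times\mathbf{b}}=\overline{\mathbf{a}}\times\overline{\mathbf{b}}$), and invoke~\eqref{2.26} in the self-consistent form $\overline{s_{33}(\theta_m)}=h_{33}(\theta_m^{*})$. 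The hypothesis $|\theta_m|\neq q_0$ is exactly what keeps every quantity finite and the cross products non-degenerate (via Propositions~\ref{pro:6} and~\ref{pro:8}), so that the proportionalities in~\eqref{3.1} determine $f_m,\bar{f}_m,\hat{f}_m,\check{f}_m$ without ambiguity.
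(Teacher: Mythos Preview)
Your approach is correct and is the standard one: transport the seed relation at $\theta_m$ to the three companion points via the inversion symmetry~\eqref{2.29} (yielding $\check f_m$ and the link $\hat f_m=\mathrm{i}\theta_m^{*}\bar f_m/q_0$) and via the conjugation symmetry encoded in~\eqref{2.27} together with~\eqref{2.25} (yielding $\bar f_m$). The paper itself does not spell out a proof of Proposition~\ref{pro:10}; it states the result as part of the discrete-spectrum framework taken over from~\cite{23}, so there is nothing to compare against beyond noting that your derivation is exactly what such a proof would look like. One small wording issue: saying ``the adjoint Jost solutions are the Schwarz reflections of the Jost solutions'' is slightly imprecise --- what you actually use (and what the paper provides) is the identity~\eqref{2.27}, which is the analytic continuation of the real-axis relation $\widetilde\psi_{\pm}(z)=\psi_{\pm}^{*}(z)$; quoting~\eqref{2.27} directly would make the argument cleaner. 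The bookkeeping you outline (cancellation of $s_{33}$, use of $\overline{\delta_2(\theta_m)}=\delta_2(\theta_m^{*})$ and $\overline{\widehat\rho(\theta_m)}=\widehat\rho(\theta_m^{*})$) goes through exactly as you describe.
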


\subsection{Asymptotic behavior}
\label{s:Asymptotic behavior}

The asymptotic properties of the modified Jost eigenfunctions as $z\rightarrow\infty$ and $z\rightarrow 0$ can be analyzed using the following expansions for $\nu_{\pm}(z)$:
\begin{equation}\label{3.4}
\begin{split}
\nu_{\pm}(z)=\nu_{0}^{\pm}+\frac{\nu_{1}^{\pm}}{z}+\frac{\nu_{2}^{\pm}}{z^{2}}+\cdots, \quad z\rightarrow \infty,  \quad \nu_{\pm}(z)=\nu_{0}^{\pm}+\nu_{1}^{\pm}z+\nu_{2}^{\pm}z^{2}+\cdots, \quad \, z\rightarrow 0,
\end{split}
\end{equation}
where $\nu_{\hbar}^{\pm}=\nu_{\hbar}^{\pm}(x,t)$. When substituting the Eq.~\eqref{3.4} into the differential Eqs.~\eqref{2.15} and combining them, the resulting expressions exhibit.
\begin{proposition}\label{pro:11}
The asymptotic expansion as $z\rightarrow\infty$ is delineated as follows:
\begin{equation}\label{3.5}
\everymath{\displaystyle}
\begin{split}
\nu_{\pm,1}(z)=\begin{pmatrix}
    \mathrm{i}\mp\frac{g_{1\pm}}{z}  \\
    \frac{\mathrm{i}\mathbf{q}}{z}
  \end{pmatrix}+O\left(\frac{1}{z^{2}}\right), \quad
\nu_{\pm,3}(z)=\begin{pmatrix}
    \frac{\mathbf{q}^{\dagger}\mathbf{q}_{\pm}}{q_{0}z}   \\
    \frac{\mathbf{q}_{\pm}}{q_{0}}\mp \frac{\mathrm{i}\mathbf{q}_{\pm}^{\perp}g_{2\pm}}{q_{0}z}
    \mp \frac{\mathrm{i}\mathbf{q}_{\pm}g_{3\pm}}{q_{0}z}
  \end{pmatrix}+O\left(\frac{1}{z^{2}}\right),
\end{split}
\end{equation}
where
\begin{subequations}\label{3.6}
\begin{align}
g_{1+}&=\int_{x}^{\infty} \frac{\left|\mathbf{q}^{\dagger}\mathbf{q}_{+}^{\perp}\right|^{2}
+\left|\mathbf{q}^{\dagger}\mathbf{q}_{+}\right|^{2}-q_{0}^{4}}{q_{0}^{2}}\,\mathrm{d}y, \quad
g_{1-}=\int_{-\infty}^{x} \frac{\left|\mathbf{q}^{\dagger}\mathbf{q}_{-}^{\perp}\right|^{2}
+\left|\mathbf{q}^{\dagger}\mathbf{q}_{-}\right|^{2}-q_{0}^{4}}{q_{0}^{2}}\,\mathrm{d}y, \\
g_{2+}&=\int_{x}^{\infty} \frac{\left[(\mathbf{q}_{+}^{\perp})^{\dagger}\mathbf{q}\right]
\left[\mathbf{q}^{\dagger}\mathbf{q}_{+}\right]}{q_{0}^{2}}\,\mathrm{d}y, \quad
g_{3+}=\int_{x}^{\infty}\frac{\left|\mathbf{q}^{\dagger}\mathbf{q}_{+}\right|^{2}
-q_{0}^{4}}{q_{0}^{2}}\,\mathrm{d}y, \\
g_{2-}&=\int_{-\infty}^{x}  \frac{\left[(\mathbf{q}_{-}^{\perp})^{\dagger}\mathbf{q}\right]
\left[\mathbf{q}^{\dagger}\mathbf{q}_{-}\right]}{q_{0}^{2}}\,\mathrm{d}y, \quad
g_{3-}=\int_{-\infty}^{x}\frac{\left|\mathbf{q}^{\dagger}\mathbf{q}_{-}\right|^{2}
-q_{0}^{4}}{q_{0}^{2}}\,\mathrm{d}y.
\end{align}
\end{subequations}
Similarly,
\begin{equation}\label{3.7}
\everymath{\displaystyle}
\begin{split}
\nu_{\pm,1}(z)=\begin{pmatrix}
    0   \\  -\frac{\mathrm{i}z}{q_{0}^{2}}\mathbf{q}_{\pm}
  \end{pmatrix}+O(z^{2}), \quad
\nu_{\pm,3}(z)=\begin{pmatrix}
    -\frac{z}{q_{0}}   \\  \mathbf{0}
  \end{pmatrix}+O(z^{2}), \quad z\rightarrow 0.
\end{split}
\end{equation}
\end{proposition}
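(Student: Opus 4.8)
\textbf{Proof proposal for Proposition~\ref{pro:11}.}
The plan is to substitute the Laurent/Taylor expansions~\eqref{3.4} into the first-order system~\eqref{2.15}, match powers of $z$, and use the boundary behavior of $\nu_{\pm}(z)$ to pin down the integration constants. I would work column by column, since the relevant columns of $\nu_{\pm}$ are $\nu_{-,1},\nu_{+,3}$ (analytic in $\mathbb{D}^{-}$) and $\nu_{-,3},\nu_{+,1}$ (analytic in $\mathbb{D}^{+}$), and exploit the fact that $\mathbf{X}-\mathbf{X}_{\pm}=\mathrm{i}(\mathbf{Q}-\mathbf{Q}_{\pm})$ decays as $x\to\pm\infty$. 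Concretely, in the $z\to\infty$ regime I would write $\mathbf{X}(z)=\mathrm{i}k(z)\mathbf{J}+\mathrm{i}\mathbf{Q}$ with $k(z)=\tfrac12 z+O(1/z)$ and $\mathbf{\Lambda}_{1}(z)=\operatorname{diag}(\lambda,-k,-\lambda)$ also linear in $z$ to leading order, so that the commutator term $[\mathrm{i}\mathbf{\Lambda}_{1},\widehat{\mathbf{Y}}_{\pm}^{-1}\nu_{\pm}]$ in~\eqref{2.15} is $O(z)$; balancing the $O(z)$ terms forces the leading coefficient $\nu_0^{\pm}$ to be the diagonal/block structure dictated by $\widehat{\mathbf{Y}}_{\pm}(z)$ as $z\to\infty$, and successive orders give first-order ODEs in $x$ for the $1/z$ coefficients whose solutions are exactly the integrals $g_{1\pm},g_{2\pm},g_{3\pm}$ once the boundary condition $\nu_{\pm}\to\widehat{\mathbf{Y}}_{\pm}$ at $x\to\pm\infty$ is imposed.

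The key steps, in order, would be: (i) expand $\widehat{\mathbf{Y}}_{\pm}(z)$, $\widehat{\mathbf{Y}}_{\pm}^{-1}(z)$, $k(z)$, $\lambda(z)$ in powers of $1/z$ (resp.\ $z$) using~\eqref{2.8},~\eqref{2.9},~\eqref{2.11}; (ii) insert~\eqref{3.4} into the $x$-part of~\eqref{2.15}, collect like powers, and read off the algebraic constraints on $\nu_0^{\pm}$ and the recursive ODEs $\partial_x(\text{coefficient})=(\text{known lower-order data})$; (iii) integrate these ODEs from $\pm\infty$, where $\nu_{\pm}(z)\to\widehat{\mathbf{Y}}_{\pm}(z)$ fixes the constant of integration, so that e.g.\ the top entry of $\nu_{\pm,1}$ picks up $\mp g_{1\pm}/z$ with $g_{1\pm}$ the stated integrand built from $\|\mathbf{q}\|^2$, $\mathbf{q}^{\dagger}\mathbf{q}_{\pm}$ and $\mathbf{q}^{\dagger}\mathbf{q}_{\pm}^{\perp}$; (iv) check consistency with the $t$-part of~\eqref{2.15} (it should be automatically satisfied, giving the $t$-dependence through the evolution of $\mathbf{q}$), and check the two symmetry relations $\psi_{\pm,2}(z)=\psi_{\pm,2}(q_0^2/z)$, $\psi_{\pm,1}(z)=-\tfrac{\mathrm{i}z}{q_0}\psi_{\pm,3}(q_0^2/z)$ from~\eqref{2.29} to cross-validate~\eqref{3.5} against~\eqref{3.7}; (v) repeat (ii)–(iii) in the $z\to0$ limit, where the expansions of $k,\lambda\sim -q_0^2/(2z)$ make the structure a little different and yield the much simpler form~\eqref{3.7}.

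The main obstacle I anticipate is bookkeeping the non-parallel structure correctly: unlike the parallel case in~\cite{23}, here $\widehat{\mathbf{Y}}_{\pm}(z)$ carries the extra diagonal factor $\widehat{\mathbf{J}}_1(z)=\operatorname{diag}(z/2\lambda,1,z/2\lambda)$ (Remark~\ref{rem:1}), which mixes into every order of the expansion and is responsible for the appearance of both the $\mathbf{q}_{\pm}^{\perp}$ term (coefficient $g_{2\pm}$) and the $\mathbf{q}_{\pm}$ term (coefficient $g_{3\pm}$) in $\nu_{\pm,3}$ — in the parallel case $\mathbf{q}^{\dagger}\mathbf{q}_{\pm}^{\perp}$ would vanish identically on the background and these two contributions would collapse. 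So the delicate point is to keep the $\perp$-component and the parallel component separate when projecting the recursion onto the $2$-dimensional lower block, i.e.\ to resolve $\mathbf{q}$ along the (generally non-orthogonal, but here treated via the orthonormal-ish pair $\mathbf{q}_{\pm}/q_0$, $\mathbf{q}_{\pm}^{\perp}/q_0$) frame at each order. Once that projection is set up cleanly, the ODEs decouple and integrating them is routine; the verification via~\eqref{2.29} is the cheap sanity check that the $\perp$ and parallel pieces have been allocated correctly. I would also double-check that the apparent poles at $z=\pm q_0$ coming from $\widehat{\rho}(z)$ and $\widehat{\mathbf{J}}_1(z)$ do not actually obstruct the expansions at $z=\infty$ and $z=0$ (they do not, since $\pm q_0$ are interior to neither limiting regime), which is implicitly what Remark~\ref{rem:2} is buying us.
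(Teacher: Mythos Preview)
Your proposal is correct and follows exactly the approach the paper itself indicates: substitute the expansions~\eqref{3.4} into the $x$-part of~\eqref{2.15}, match powers of $z$, and integrate the resulting first-order ODEs from $\pm\infty$ using $\nu_{\pm}\to\widehat{\mathbf{Y}}_{\pm}$ to fix constants. The paper offers no additional argument beyond that one sentence, so your outline (including the symmetry check via~\eqref{2.29} and the separate treatment of the $z\to0$ regime) is, if anything, more detailed than what the paper presents.
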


In addition, the modified auxiliary eigenfunctions are defined as $d(z)=\gamma(z)\mathrm{e}^{-\mathrm{i}\delta_{2}(z)}$ for $z\in \mathbb{D}^{-}$ and $\widetilde{d}(z)=\widetilde{\gamma}\mathrm{e}^{-\mathrm{i}\delta_{2}}$ for $z\in \mathbb{D}^{+}$. The asymptotic properties of $d(z)$ and $\widetilde{d}(z)$ for $z\rightarrow\infty$ and $z\rightarrow 0$ are as follows:
\begin{proposition}\label{pro:12}
As $z\rightarrow\infty$ in the $z$-plane,
\begin{subequations}\label{3.8}
\everymath{\displaystyle}
\begin{align}
d(z)&=\begin{pmatrix}
    \frac{\mathbf{q}^{\dagger}\mathbf{q}_{-}^{\perp}}{q_{0}z}  \\
    \frac{\mathbf{q}_{-}^{\perp}}{q_{0}}+\frac{\mathrm{i}}{q_{0}z} \left[ \mathbf{q}_{-}g_{2-}^{*}-\mathbf{q}_{-}^{\perp}g_{1+}^{*}
    -\mathbf{q}_{-}^{\perp}g_{3-}^{*} \right]
  \end{pmatrix}+O\left(\frac{1}{z^{2}}\right), \\
\widetilde{d}(z)&=\begin{pmatrix}
    -\frac{\mathbf{q}^{\dagger}\mathbf{q}_{+}^{\perp}}{q_{0}z}  \\
    -\frac{\mathbf{q}_{+}^{\perp}}{q_{0}}+\frac{\mathrm{i}}{q_{0}z} \left[ \mathbf{q}_{+}g_{2+}^{*}-\mathbf{q}_{+}^{\perp}g_{1-}^{*}
    -\mathbf{q}_{+}^{\perp}g_{3+}^{*} \right]
  \end{pmatrix}+O\left(\frac{1}{z^{2}}\right),
\end{align}
\end{subequations}
and
\begin{equation}\label{3.9}
\everymath{\displaystyle}
\begin{split}
d(z)&=\begin{pmatrix}
    0  \\  \frac{\mathbf{q}_{+}^{\perp}}{q_{0}}
  \end{pmatrix}+O(z), \quad
\widetilde{d}(z)=\begin{pmatrix}
    0  \\  -\frac{\mathbf{q}_{-}^{\perp}}{q_{0}}
  \end{pmatrix}+O(z), \quad z\rightarrow 0.
\end{split}
\end{equation}
\end{proposition}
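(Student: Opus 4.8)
The plan is to strip the oscillatory exponentials from $d(z)$ and $\widetilde d(z)$, rewrite them as cross products of the \emph{modified} adjoint Jost eigenfunctions, and then feed in the expansions of Proposition~\ref{pro:11}. Writing $\widetilde\psi_{\pm}(z)=\widetilde\nu_{\pm}(z)\mathrm{e}^{-\mathrm{i}\mathbf{\Delta}(z)}$ with $\mathbf{\Delta}=\operatorname{diag}(\delta_{1},\delta_{2},-\delta_{1})$, the first and third columns of $\widetilde\psi_{\pm}$ carry the conjugate factors $\mathrm{e}^{-\mathrm{i}\delta_{1}}$ and $\mathrm{e}^{\mathrm{i}\delta_{1}}$; since $(\alpha\mathbf{a})\times(\beta\mathbf{b})=\alpha\beta\,(\mathbf{a}\times\mathbf{b})$, these factors cancel in the products appearing in~\eqref{2.22}, and together with $d=\gamma\,\mathrm{e}^{-\mathrm{i}\delta_{2}}$, $\widetilde d=\widetilde\gamma\,\mathrm{e}^{-\mathrm{i}\delta_{2}}$ one obtains
\begin{equation*}
d(z)=-\frac{\mathrm{i}\,\mathbf{J}\bigl[\widetilde\nu_{-,3}(z)\times\widetilde\nu_{+,1}(z)\bigr]}{\widehat\rho(z)},\qquad
\widetilde d(z)=-\frac{\mathrm{i}\,\mathbf{J}\bigl[\widetilde\nu_{-,1}(z)\times\widetilde\nu_{+,3}(z)\bigr]}{\widehat\rho(z)},
\end{equation*}
the first valid on $\mathbb{D}^{-}$, the second on $\mathbb{D}^{+}$.

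Next I would invoke the conjugation symmetry between the Lax pair and its adjoint. Since $\widetilde{\mathbf{X}}=\mathbf{X}^{*}$, $\widetilde{\mathbf{T}}=\mathbf{T}^{*}$ and $\widetilde{\mathbf{Y}}_{\pm}(z)=\widehat{\mathbf{Y}}_{\pm}^{*}(z)$ for $z\in\mathbb{R}$, conjugating the Volterra integral equation~\eqref{2.16} shows that $\nu_{\pm}^{*}(z)$ solves the Jost integral equation of the adjoint pair; by uniqueness $\widetilde\nu_{\pm}(z)=[\nu_{\pm}(z)]^{*}$ on $z\in\mathbb{R}$, and continuing each analytic column by the Schwarz reflection principle gives $\widetilde\nu_{\pm,j}(z)=[\nu_{\pm,j}(z^{*})]^{*}$ on its domain of analyticity. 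Feeding this into Proposition~\ref{pro:11}, the required $z\to\infty$ and $z\to0$ expansions of $\widetilde\nu_{-,3},\widetilde\nu_{+,1}$ and of $\widetilde\nu_{-,1},\widetilde\nu_{+,3}$ follow from~\eqref{3.5} and~\eqref{3.7} by conjugating every coefficient (so $\mathbf{q}_{\pm}\mapsto\mathbf{q}_{\pm}^{*}$, $\mathbf{q}_{\pm}^{\perp}\mapsto(\mathbf{q}_{\pm}^{\perp})^{*}$, $\mathbf{q}^{\dagger}\mathbf{q}_{\pm}\mapsto\mathbf{q}_{\pm}^{\dagger}\mathbf{q}$ and $g_{j\pm}\mapsto g_{j\pm}^{*}$, where $g_{1\pm},g_{3\pm}$ are already real) while leaving the powers of $z$ unchanged. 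I would also record that $1/\widehat\rho(z)=1-q_{0}^{2}/z^{2}$ exactly, so $\widehat\rho(z)=1+O(z^{-2})$ as $z\to\infty$ and $1/\widehat\rho(z)\sim-q_{0}^{2}/z^{2}$ as $z\to0$.

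It then remains to substitute. For $z\to\infty$, I would expand the two cross products through order $z^{-1}$, multiply by $\mathbf{J}=\operatorname{diag}(1,-\mathbf{I}_{2\times2})$, and simplify: the leading terms reduce to $(0,\mathbf{q}_{-}^{\perp}/q_{0})^{T}$ and $(0,-\mathbf{q}_{+}^{\perp}/q_{0})^{T}$ using only $\|\mathbf{q}_{\pm}\|=\|\mathbf{q}_{\pm}^{\perp}\|=q_{0}$ and $(\mathbf{q}_{\pm}^{\perp})^{\dagger}\mathbf{q}_{\pm}=0$, while the $O(z^{-1})$ coefficients collapse to the combinations displayed in the statement, i.e. $\mathbf{q}^{\dagger}\mathbf{q}_{\mp}^{\perp}/q_{0}$ in the first slot and $\mathrm{i}q_{0}^{-1}[\mathbf{q}_{-}g_{2-}^{*}-\mathbf{q}_{-}^{\perp}(g_{1+}^{*}+g_{3-}^{*})]$, $\mathrm{i}q_{0}^{-1}[\mathbf{q}_{+}g_{2+}^{*}-\mathbf{q}_{+}^{\perp}(g_{1-}^{*}+g_{3+}^{*})]$ in the remaining slots, using the completeness relation $\mathbf{q}_{\pm}\mathbf{q}_{\pm}^{\dagger}+\mathbf{q}_{\pm}^{\perp}(\mathbf{q}_{\pm}^{\perp})^{\dagger}=q_{0}^{2}\mathbf{I}_{2\times2}$, elementary $2$-vector identities such as $\mathbf{a}^{\dagger}\mathbf{b}^{\perp}=\overline{a_{1}b_{2}-a_{2}b_{1}}$, and the definitions~\eqref{3.6}; the $O(z^{-2})$ remainder is preserved after dividing by $\widehat\rho$. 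For $z\to0$ only the leading constants are needed: both modified adjoint eigenfunctions in each product vanish linearly in $z$ (their leading terms are read off from~\eqref{3.7} after conjugation), so the numerator is $O(z^{2})$; dividing by $\widehat\rho(z)\sim-q_{0}^{2}/z^{2}$ and applying $\mathbf{J}$ then gives the stated limits $(0,\mathbf{q}_{+}^{\perp}/q_{0})^{T}$ for $d$ and $(0,-\mathbf{q}_{-}^{\perp}/q_{0})^{T}$ for $\widetilde d$.

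The main obstacle I expect is purely computational: keeping the $O(z^{-1})$ bookkeeping in the $z\to\infty$ cross products organized so that the numerous products of $\mathbf{q}_{\pm}$, $\mathbf{q}_{\pm}^{\perp}$ and the six integrals $g_{j\pm}$ telescope into the compact combinations above. Should that expansion become unwieldy, an alternative is to bypass the symmetry step by using Corollary~\ref{cor:2}, namely $\gamma=s_{11}\psi_{-,2}-s_{12}\psi_{-,1}$ and $\widetilde\gamma=s_{32}\psi_{-,3}-s_{33}\psi_{-,2}$, which trades the cross-product algebra for the scalar large- and small-$z$ asymptotics of $s_{11},s_{12},s_{32},s_{33}$ together with the expansions of $\psi_{-,1},\psi_{-,2},\psi_{-,3}$.
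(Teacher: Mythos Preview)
Your proposal is correct and follows essentially the same route as the paper. The paper does not spell out a proof of Proposition~\ref{pro:12}, but the intended mechanism is precisely the cross-product identity you derive: the paper records it later as~\eqref{3.29} (equivalently Corollary~\ref{cor:3} after stripping the $\mathrm{e}^{\mathrm{i}\delta_{2}}$), and the expansions of Proposition~\ref{pro:11} are then substituted. Your use of the Schwarz reflection identity $\widetilde\nu_{\pm,j}(z)=[\nu_{\pm,j}(z^{*})]^{*}$ is exactly what converts~\eqref{2.22} into~\eqref{3.29}, so the two arguments coincide.
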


\begin{proposition}\label{pro:13}
The asymptotic behavior of scattering matrix entries as $z\rightarrow\infty$ is
\begin{subequations}\label{3.10}
\begin{align}
s_{32}(z)&=\frac{\mathbf{q}_{+}^{\dagger}\mathbf{q}_{-}^{\perp}}{q_{0}^{2}}+O\left(\frac{1}{z}\right), \quad
s_{22}(z)=h_{33}(z)=\frac{\mathbf{q}_{-}^{\dagger}\mathbf{q}_{+}}{q_{0}^{2}}+O\left(\frac{1}{z}\right), \quad
s_{11}(z)=1+O\left(\frac{1}{z}\right), \\
h_{32}(z)&=\frac{\mathbf{q}_{-}^{\dagger}\mathbf{q}_{+}^{\perp}}{q_{0}^{2}}+O\left(\frac{1}{z}\right), \quad
h_{22}(z)=s_{33}(z)=\frac{\mathbf{q}_{+}^{\dagger}\mathbf{q}_{-}}{q_{0}^{2}}+O\left(\frac{1}{z}\right), \quad
h_{11}(z)=1+O\left(\frac{1}{z}\right), \\
s_{23}(z)&=\frac{(\mathbf{q}_{+}^{\perp})^{\dagger}\mathbf{q}_{-}}{q_{0}^{2}}+O\left(\frac{1}{z}\right), \quad s_{21}(z)=s_{31}(z)=O\left(\frac{1}{z^{2}}\right), \quad
s_{12}(z)=s_{13}(z)=O\left(\frac{1}{z}\right), \\
h_{23}(z)&=\frac{(\mathbf{q}_{-}^{\perp})^{\dagger}\mathbf{q}_{+}}{q_{0}^{2}}+O\left(\frac{1}{z}\right), \quad
h_{21}(z)=h_{31}(z)=O\left(\frac{1}{z^{2}}\right), \quad
h_{12}(z)=h_{13}(z)=O\left(\frac{1}{z}\right).
\end{align}
\end{subequations}
Similarly, one can show that as $z\rightarrow 0$
\begin{subequations}\label{3.11}
\begin{align}
s_{11}(z)&=h_{22}(z)=\frac{\mathbf{q}_{+}^{\dagger}\mathbf{q}_{-}}{q_{0}^{2}}+O(z), \quad
s_{12}(z)=\frac{\mathrm{i}\mathbf{q}_{+}^{\dagger}\mathbf{q}_{-}^{\perp}}{q_{0}z}+O(1), \quad
s_{32}(z)=O(1),  \\
s_{22}(z)&=h_{11}(z)=\frac{\mathbf{q}_{-}^{\dagger}\mathbf{q}_{+}}{q_{0}^{2}}+O(z), \quad
h_{12}(z)=\frac{\mathrm{i}\mathbf{q}_{-}^{\dagger}\mathbf{q}_{+}^{\perp}}{q_{0}z}+O(1), \quad
h_{32}(z)=O(1), \\
s_{33}(z)&=1+O(z), \quad s_{21}(z)=s_{31}(z)=s_{13}(z)=s_{23}(z)=O(z), \\
h_{33}(z)&=1+O(z), \quad h_{21}(z)=h_{31}(z)=h_{13}(z)=h_{23}(z)=O(z).
\end{align}
\end{subequations}
\end{proposition}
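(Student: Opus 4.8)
The plan is to read off every entry by passing to the limit in the scattering relations $\psi_{+,j}(z)=\sum_{i}h_{ij}(z)\psi_{-,i}(z)$, $\psi_{-,j}(z)=\sum_{i}s_{ij}(z)\psi_{+,i}(z)$ and in the decompositions of Corollary~\ref{cor:2}, feeding in the asymptotics of the modified eigenfunctions supplied by Propositions~\ref{pro:11} and~\ref{pro:12}. First I would rewrite each scattering relation in terms of $\nu_{\pm,1}=\psi_{\pm,1}\mathrm{e}^{-\mathrm{i}\delta_{1}}$, $\nu_{\pm,3}=\psi_{\pm,3}\mathrm{e}^{\mathrm{i}\delta_{1}}$, $\nu_{\pm,2}=\psi_{\pm,2}\mathrm{e}^{-\mathrm{i}\delta_{2}}$, $d=\gamma\mathrm{e}^{-\mathrm{i}\delta_{2}}$, $\widetilde d=\widetilde\gamma\mathrm{e}^{-\mathrm{i}\delta_{2}}$, so that the $j$-th column relation reads $\nu_{+,j}=\sum_{i}h_{ij}\,\mathrm{e}^{\mathrm{i}(\Delta_{i}-\Delta_{j})}\nu_{-,i}$ with $\Delta_{1}=\delta_{1}$, $\Delta_{2}=\delta_{2}$, $\Delta_{3}=-\delta_{1}$, and similarly with $s_{ij}$ and $\psi_{\pm}$ interchanged.

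The key preliminary point is a resonance identity for the phases. Since $k(z)-\lambda(z)=q_{0}^{2}/z$, the coefficient of $x$ in $\delta_{1}+\delta_{2}$ is $-q_{0}^{2}/z$, and a direct computation using $\lambda^{2}=k^{2}-q_{0}^{2}$ shows that the coefficient of $t$ in $\delta_{1}+\delta_{2}$ is also $O(1/z)$; hence $\delta_{1}(z)+\delta_{2}(z)\to0$, i.e. $\mathrm{e}^{\mathrm{i}(\delta_{1}+\delta_{2})}\to1$, as $z\to\infty$, whereas $\mathrm{e}^{\mathrm{i}(\delta_{1}-\delta_{2})}$ and $\mathrm{e}^{\pm2\mathrm{i}\delta_{1}}$ merely oscillate; the mirror statement as $z\to0$ is $\mathrm{e}^{\mathrm{i}(\delta_{1}-\delta_{2})}\to1$, because there $k(z)+\lambda(z)=z\to0$. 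Writing Corollary~\ref{cor:2} as $\nu_{\pm,2}=(s_{32}\mathrm{e}^{-\mathrm{i}(\delta_{1}+\delta_{2})}\nu_{-,3}-\widetilde d)/s_{33}=(s_{12}\mathrm{e}^{\mathrm{i}(\delta_{1}-\delta_{2})}\nu_{-,1}+d)/s_{11}$ and the analogous formulas involving $\psi_{+}$, together with an a priori bound — obtained from the Volterra representation~\eqref{2.16} or from the symmetries~\eqref{2.26} — that the $1$-indexed off-diagonal scattering coefficients are $o(1)$ as $z\to\infty$, I would show that $\nu_{\pm,2}(z)$ nevertheless converge, to $(0,\mathbf q_{\pm}^{\perp}/q_{0})^{T}$, the (constant) second columns of $\widehat{\mathbf Y}_{\pm}(z)$.

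With the limits of $\nu_{\pm,1},\nu_{\pm,2},\nu_{\pm,3},d,\widetilde d$ in hand, I would pass to the limit $z\to\infty$ in the six scattering relations. The oscillatory terms must drop out, which forces $h_{i1},h_{1i},s_{i1},s_{1i}\to0$ for $i\ne1$ and shows $h_{11},s_{11}\to1$; the surviving non-oscillatory terms give $2\times2$ linear identities in $\mathbb{C}^{2}$ that I would resolve by expanding in the orthogonal basis $\{\mathbf q_{\pm},\mathbf q_{\pm}^{\perp}\}$ via $\mathbf w=q_{0}^{-2}[(\mathbf q_{\pm}^{\dagger}\mathbf w)\mathbf q_{\pm}+((\mathbf q_{\pm}^{\perp})^{\dagger}\mathbf w)\mathbf q_{\pm}^{\perp}]$ and using $(\mathbf q_{\pm}^{\perp})^{\dagger}\mathbf q_{\mp}^{\perp}=\mathbf q_{\mp}^{\dagger}\mathbf q_{\pm}$, $\mathbf q_{\pm}^{\dagger}\mathbf q_{\mp}^{\perp}=-\mathbf q_{\mp}^{\dagger}\mathbf q_{\pm}^{\perp}$; this reproduces exactly the leading coefficients $\mathbf q_{+}^{\dagger}\mathbf q_{-}/q_{0}^{2}$, $\mathbf q_{+}^{\dagger}\mathbf q_{-}^{\perp}/q_{0}^{2}$, $(\mathbf q_{\pm}^{\perp})^{\dagger}\mathbf q_{\mp}/q_{0}^{2}$, etc., of~\eqref{3.10}, and the symmetries~\eqref{2.26} (e.g. $h_{11}=s_{11}^{*}$, $h_{22}=s_{22}^{*}$, $h_{23}=\widehat\rho\,s_{32}^{*}$, $h_{33}=s_{33}^{*}$) give independent checks. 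To pin down the sharp orders — $s_{21},s_{31},h_{21},h_{31}=O(1/z^{2})$ against $s_{12},s_{13},h_{12},h_{13}=O(1/z)$ — I would carry the expansions of Proposition~\ref{pro:11} one further term and exploit that $\nu_{\pm,1}$ has leading entry $\mathrm{i}+O(1/z)$ while its lower block is $O(1/z)$, which costs an extra power in the coefficient being extracted. Finally I would obtain the $z\to0$ assertions~\eqref{3.11} either by repeating the matching with~\eqref{3.7},~\eqref{3.9} and the $z\to0$ resonance identity, or more economically by transporting each $z\to\infty$ result through the inversion symmetry~\eqref{2.30}–\eqref{2.31} (and~\eqref{2.26}): e.g. $s_{33}(z)=s_{11}(q_{0}^{2}/z)\to1$, $s_{12}(z)=(\mathrm{i}q_{0}/z)s_{32}(q_{0}^{2}/z)=\mathrm{i}\mathbf q_{+}^{\dagger}\mathbf q_{-}^{\perp}/(q_{0}z)+O(1)$, and so on.

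The main obstacle will be the intertwining of the non-analytic column $\nu_{\pm,2}$ with the phase bookkeeping: one has to recognise that exactly one phase combination is asymptotically resonant ($\delta_{1}+\delta_{2}$ as $z\to\infty$, $\delta_{1}-\delta_{2}$ as $z\to0$) — this is precisely why the $2$-indexed scattering coefficients tend to nonzero limits built from $\mathbf q_{+}^{\dagger}\mathbf q_{-}$ and $\mathbf q_{+}^{\dagger}\mathbf q_{-}^{\perp}$ rather than to $0$ — and must first secure, by the a priori bound above, that $h_{1i},h_{i1},s_{1i},s_{i1}\to0$ for $i\ne1$ so that the limits of $\nu_{\pm,2}$ are well defined; once these two points are settled the remaining matching is routine linear algebra.
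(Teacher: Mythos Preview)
Your plan is correct and is essentially the route the paper takes: feed the eigenfunction asymptotics of Propositions~\ref{pro:11}--\ref{pro:12} into the scattering relations, match limits, and then transport everything to $z\to0$ via the inversion symmetry~\eqref{2.30}--\eqref{2.31}. Your observation that $\delta_1+\delta_2\to0$ as $z\to\infty$ (and $\delta_1-\delta_2\to0$ as $z\to0$) is exactly the mechanism singling out the non-oscillatory couplings that produce the $\mathbf q_+^{\dagger}\mathbf q_-$-type leading terms.

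One simplification removes the only delicate point in your outline. You propose to recover the limit of $\nu_{\pm,2}$ from Corollary~\ref{cor:2} together with an a~priori $o(1)$ bound on $s_{12},s_{11}$, which is mildly circular (those are among the coefficients you are computing). The paper sidesteps this: the expansion~\eqref{3.4} that is substituted into the modified Lax pair~\eqref{2.15} is for the \emph{full} matrix $\nu_\pm$, so the second column is obtained by the same procedure as the first and third, giving $\nu_{\pm,2}(z)\to\widehat{\mathbf Y}_{\pm,2}=(0,\mathbf q_\pm^{\perp}/q_0)^T$ directly (equivalently, apply the Volterra representation~\eqref{2.16} to column~2 on the real axis). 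Once all three columns of $\nu_\pm$ are in hand, the Wronskian formulas $h_{ij}=\det(\psi_{-,1},\ldots,\psi_{+,j},\ldots,\psi_{-,3})/\det\psi_-$ --- in which the exponential phases cancel identically --- yield every entry and its sharp order without any ``oscillatory terms must drop out'' reasoning or separate a~priori input.
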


\begin{proposition}\label{pro:14}
Assuming $\mathbf{q}_{+}^{\dagger}\mathbf{q}_{-}=0$ and setting $\mathbf{q}_{+}=\mathbf{q}_{-}^{\perp}\mathrm{e}^{-\mathrm{i}\vartheta}$ for $\vartheta\in \mathbb{R}$. Then, as $z\rightarrow\infty$ in the complex plane (different from the asymptotic behavior in Proposition~\ref{pro:13}),
\begin{equation}\label{3.12}
\begin{split}
s_{33}(z)&=\frac{\mathrm{i}\vartheta_{1}}{q_{0}^{2}z}+O\left(\frac{1}{z^{2}}\right), \quad
h_{33}(z)=-\frac{\mathrm{i}\vartheta_{1}^{*}}{q_{0}^{2}z}+O\left(\frac{1}{z^{2}}\right), \quad
s_{22}(z)=h_{22}(z)=O\left(\frac{1}{z}\right),
\end{split}
\end{equation}
and as $z\rightarrow0$ in the complex plane (different from the asymptotic behavior in Proposition~\ref{pro:13}),
\begin{equation}\label{3.13}
\begin{split}
s_{11}(z)&=\frac{\mathrm{i}\vartheta_{1}}{q_{0}^{4}}z+O(z^{2}), \quad h_{11}(z)=-\frac{\mathrm{i}\vartheta_{1}^{*}}{q_{0}^{4}}z+O(z^{2}), \quad s_{22}(z)=h_{22}(z)=O(z),
\end{split}
\end{equation}
where
\begin{equation}\label{3.14}
\begin{split}
\vartheta_{1}&=\frac{\mathbf{q}_{+}^{\dagger}\mathbf{q}_{-}^{\perp}}{q_{0}^{2}} \int_{\mathbb{R}} \left[ (\mathbf{q}_{-}^{\perp})^{\dagger} \mathbf{q}(x,t) \right]
\left[ \mathbf{q}^{\dagger}(x,t) \mathbf{q}_{-} \right]\,\mathrm{d}x
=\int_{\mathbb{R}} \left[ \mathbf{q}_{+}^{\dagger} \mathbf{q}(x,t) \right]
\left[ \mathbf{q}^{\dagger}(x,t) \mathbf{q}_{-} \right]\,\mathrm{d}x.
\end{split}
\end{equation}
\end{proposition}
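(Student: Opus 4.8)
The plan is to isolate a single nontrivial computation — the leading $1/z$ coefficient of $s_{33}$ as $z\to\infty$ — and to deduce the remaining seven statements from it by the symmetries of Section~\ref{s:Symmetries}. Most of the estimates in the Proposition are immediate from Proposition~\ref{pro:13}: when $\mathbf{q}_{+}^{\dagger}\mathbf{q}_{-}=0$ (equivalently $\mathbf{q}_{-}^{\dagger}\mathbf{q}_{+}=0$) the constant leading terms $\mathbf{q}_{\pm}^{\dagger}\mathbf{q}_{\mp}/q_{0}^{2}$ appearing in~\eqref{3.10}--\eqref{3.11} vanish, so $s_{22}(z)=h_{33}(z)=O(1/z)$ and $h_{22}(z)=s_{33}(z)=O(1/z)$ as $z\to\infty$, while $s_{22}(z)=h_{11}(z)=O(z)$ and $h_{22}(z)=s_{11}(z)=O(z)$ as $z\to 0$. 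For the precise coefficients I note that $s_{11}(z)=s_{33}(q_{0}^{2}/z)$ from~\eqref{2.31} and $h_{33}(z)=s_{33}^{*}(z^{*})$, $h_{11}(z)=s_{11}^{*}(z^{*})$ from~\eqref{2.26} together with the Schwarz reflection principle; hence it suffices to prove $s_{33}(z)=\mathrm{i}\vartheta_{1}/(q_{0}^{2}z)+O(1/z^{2})$ as $z\to\infty$, after which~\eqref{3.12} for $h_{33}$ and~\eqref{3.13} for $s_{11}$ and $h_{11}$ follow by substitution.

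For that coefficient I would start from the Wronskian representation obtained from $\psi_{-}=\psi_{+}\mathbf{S}$ together with $\det\psi_{+}(z)=\mathrm{i}\widehat{\rho}(z)\mathrm{e}^{\mathrm{i}\delta_{2}(z)}$, namely
\[
s_{33}(z)=\frac{\det\!\left(\nu_{+,1}(z),\,\nu_{+,2}(z),\,\nu_{-,3}(z)\right)}{\mathrm{i}\,\widehat{\rho}(z)},
\]
the phase factors $\mathrm{e}^{\pm\mathrm{i}\delta_{1}}$, $\mathrm{e}^{\mathrm{i}\delta_{2}}$ cancelling in the ratio. The non-analytic column $\nu_{+,2}$ is eliminated by Corollary~\ref{cor:2}: from $\psi_{+,2}=(h_{12}\psi_{+,1}-\widetilde{\gamma})/h_{11}$ one gets $\nu_{+,2}=\bigl(h_{12}\mathrm{e}^{\mathrm{i}(\delta_{1}-\delta_{2})}\nu_{+,1}-\widetilde{d}\bigr)/h_{11}$, and since the first summand is proportional to $\nu_{+,1}$ it drops from the determinant, leaving
\[
s_{33}(z)=\frac{-\det\!\left(\nu_{+,1}(z),\,\widetilde{d}(z),\,\nu_{-,3}(z)\right)}{\mathrm{i}\,\widehat{\rho}(z)\,h_{11}(z)}.
\]
Now insert the $z\to\infty$ expansions: Proposition~\ref{pro:11} for $\nu_{+,1}$ and $\nu_{-,3}$, Proposition~\ref{pro:12} for $\widetilde{d}$, and $\widehat{\rho}(z)=1+O(1/z^{2})$, $h_{11}(z)=1+O(1/z)$ from Proposition~\ref{pro:13}.

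The key point is that the orthogonality assumption, via $\mathbf{q}_{+}=\mathbf{q}_{-}^{\bot}\mathrm{e}^{-\mathrm{i}\vartheta}$ and the conjugate-linear identity $(\mathbf{v}^{\bot})^{\bot}=-\mathbf{v}$, forces $\mathbf{q}_{+}^{\bot}=-\mathrm{e}^{\mathrm{i}\vartheta}\mathbf{q}_{-}$, so $\mathbf{q}_{+}^{\bot}$ and $\mathbf{q}_{-}$ are parallel. This makes the $O(1)$ term of the determinant vanish (consistent with $s_{33}\to 0$), and at order $1/z$ it annihilates every contribution carrying $g_{1+}$, $g_{1-}$, $g_{3+}$ or $g_{3-}$ — each of which multiplies either $\det[\mathbf{q}_{+}^{\bot},\mathbf{q}_{-}]=0$ or $\det[\mathbf{q}_{-},\mathbf{q}_{-}]=0$ — leaving precisely $-g_{2+}^{*}\mathrm{e}^{-\mathrm{i}\vartheta}+g_{2-}\mathrm{e}^{\mathrm{i}\vartheta}$, with $\det[\mathbf{q}_{-},\mathbf{q}_{-}^{\bot}]=-q_{0}^{2}$ doing the bookkeeping. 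Since $s_{33}$, hence the whole Wronskian, is $(x,t)$-independent, this combination may be evaluated at $x\to+\infty$, where $g_{2+}\to 0$ and $g_{2-}\to q_{0}^{-2}\!\int_{\mathbb{R}}[(\mathbf{q}_{-}^{\bot})^{\dagger}\mathbf{q}][\mathbf{q}^{\dagger}\mathbf{q}_{-}]\,\mathrm{d}x=\mathrm{e}^{-\mathrm{i}\vartheta}\vartheta_{1}/q_{0}^{2}$; the hypothesis $\mathbf{q}_{+}^{\dagger}\mathbf{q}_{-}=0$ is exactly what makes this integral converge, since its integrand vanishes at both ends. Thus the $1/z$-coefficient of the determinant equals $\vartheta_{1}/q_{0}^{2}$ and $s_{33}(z)=\mathrm{i}\vartheta_{1}/(q_{0}^{2}z)+O(1/z^{2})$. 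Finally I would record that the two forms of $\vartheta_{1}$ in~\eqref{3.14} agree because $\mathbf{q}_{+}^{\dagger}\mathbf{q}_{-}^{\bot}/q_{0}^{2}=\mathrm{e}^{\mathrm{i}\vartheta}$.

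The main obstacle is purely organizational: carrying the $O(1/z)$ expansion of the $3\times3$ determinant far enough that the many $(x,t)$-dependent pieces coming from $g_{1+}$, $g_{2\pm}$, $g_{3\pm}$ and the cross term in $\widetilde{d}$ collapse — through $\det[\mathbf{q}_{+}^{\bot},\mathbf{q}_{-}]=0$ and $\det[\mathbf{q}_{-},\mathbf{q}_{-}^{\bot}]=-q_{0}^{2}$ — into the single $(x,t)$-independent quantity $\vartheta_{1}$. A useful self-check is that $\partial_{x}\bigl(-g_{2+}^{*}\mathrm{e}^{-\mathrm{i}\vartheta}+g_{2-}\mathrm{e}^{\mathrm{i}\vartheta}\bigr)=0$, which follows from $\partial_{x}g_{2+}^{*}=\mathrm{e}^{2\mathrm{i}\vartheta}\partial_{x}g_{2-}$. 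A secondary point requiring care is the uniformity of the $O(1/z^{2})$ remainders in Propositions~\ref{pro:11}--\ref{pro:13} as $z\to\infty$ within $\mathbb{D}^{+}$, which legitimizes the term-by-term manipulation of the ratio.
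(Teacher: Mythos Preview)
Your proposal is correct and follows essentially the approach the paper sets up (and leaves implicit): reduce via the symmetries \eqref{2.26} and \eqref{2.31} to the single coefficient of $s_{33}$ at infinity, express $s_{33}$ as the Wronskian $\det(\nu_{+,1},\nu_{+,2},\nu_{-,3})/(\mathrm{i}\widehat{\rho})$ as in \eqref{3.34}, replace the non-analytic middle column via Corollary~\ref{cor:2}, and insert the expansions of Propositions~\ref{pro:11}--\ref{pro:13}. Your cancellation mechanism ($\mathbf{q}_{+}^{\perp}\parallel\mathbf{q}_{-}$ killing all $g_{1\pm},g_{3\pm}$ contributions, leaving $g_{2-}\mathrm{e}^{\mathrm{i}\vartheta}-g_{2+}^{*}\mathrm{e}^{-\mathrm{i}\vartheta}$) and the evaluation at $x\to+\infty$ are both sound, and your self-check $\partial_{x}g_{2+}^{*}=\mathrm{e}^{2\mathrm{i}\vartheta}\partial_{x}g_{2-}$ is a clean way to confirm the $(x,t)$-independence before taking the limit.
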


\begin{proposition}\label{pro:15}
In the non-orthogonal case,
\begin{equation}\label{3.15}
\begin{split}
\beta_{1}(z)=O\left(\frac{1}{z}\right), \quad
\beta_{2}(z)=\beta_{1}\left(\frac{q_{0}^{2}}{z}\right)=O\left(\frac{1}{z^{2}}\right), \quad \beta_{2}\left(\frac{q_{0}^{2}}{z}\right)=\frac{q_{0}}{\mathrm{i}z}
\left[ \frac{(\mathbf{q}_{-}^{\perp})^{\dagger}\mathbf{q}_{+}}{\mathbf{q}_{-}^{\dagger}\mathbf{q}_{+}} \right] +O\left(\frac{1}{z^{2}}\right),
\quad z\rightarrow\infty.
\end{split}
\end{equation}
Similarly,
\begin{equation}\label{3.16}
\begin{split}
\beta_{1}(z)=\beta_{1}\left(\frac{q_{0}^{2}}{z}\right)=\beta_{2}\left(\frac{q_{0}^{2}}{z}\right)=O(z), \quad
\beta_{2}(z)=\frac{z}{\mathrm{i}q_{0}}
\left[ \frac{(\mathbf{q}_{-}^{\perp})^{\dagger}\mathbf{q}_{+}}{\mathbf{q}_{-}^{\dagger}\mathbf{q}_{+}} \right] +O(z^{2}), \quad z\rightarrow0.
\end{split}
\end{equation}
In the orthogonal case,
\begin{equation}\label{3.17}
\begin{split}
\beta_{1}(z)&=\beta_{1}\left(\frac{q_{0}^{2}}{z}\right)=O\left(\frac{1}{z}\right), \quad \beta_{2}(z)=O\left(\frac{1}{z^{2}}\right), \quad \beta_{2}\left(\frac{q_{0}^{2}}{z}\right)=\frac{q_{0}}{\vartheta_{1}^{*}} \left[(\mathbf{q}_{-}^{\perp})^{\dagger}\mathbf{q}_{+} \right] +O\left(\frac{1}{z}\right), \quad z\rightarrow\infty.
\end{split}
\end{equation}
Similarly,
\begin{equation}\label{3.18}
\begin{split}
\beta_{1}(z)=O(1), \quad \beta_{1}\left(\frac{q_{0}^{2}}{z}\right)=\beta_{2}\left(\frac{q_{0}^{2}}{z}\right)=O(z), \quad
\beta_{2}(z)=\frac{q_{0}}{\vartheta_{1}^{*}} \left[(\mathbf{q}_{-}^{\perp})^{\dagger}\mathbf{q}_{+} \right]+O(z), \quad z\rightarrow0.
\end{split}
\end{equation}
\end{proposition}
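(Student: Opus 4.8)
The plan is to obtain each estimate by reading off the reflection coefficient in question as a ratio of scattering entries from its definition in~\eqref{2.32}, substituting the asymptotic expansions of those entries supplied by Proposition~\ref{pro:13} (generic behaviour) and Proposition~\ref{pro:14} (orthogonal refinement), and — wherever the bound that comes directly from~\eqref{3.10}--\eqref{3.11} is not sharp enough — re-expressing the offending numerator or denominator through the reduction symmetries~\eqref{2.26} and~\eqref{2.30}/\eqref{2.31} before re-estimating. Each of the four quantities $\beta_1(z)$, $\beta_2(z)$, $\beta_1(q_0^2/z)$, $\beta_2(q_0^2/z)$ is already written in~\eqref{2.32} as a ratio of entries evaluated at $z$, so the large-$z$ and small-$z$ behaviours follow by inserting the corresponding expansions.

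In the non-orthogonal case the denominators are the easy part: since $\mathbf{q}_+^\dagger\mathbf{q}_-\neq\mathbf{0}$, the entries $s_{11},s_{33},s_{22}$ and their $\mathbf{H}$-counterparts have nonzero limits as $z\to\infty$ and as $z\to 0$ (equal to $1$ or to $\mathbf{q}_\pm^\dagger\mathbf{q}_\mp/q_0^2$), so each $\beta$ inherits the order of its numerator. From~\eqref{2.32} this gives $\beta_1(z)=s_{13}(z)/s_{11}(z)=O(1/z)$, $\beta_1(q_0^2/z)=-s_{31}(z)/s_{33}(z)=O(1/z^2)$ and $\beta_2(z)=h_{21}(z)/h_{11}(z)=O(1/z^2)$ as $z\to\infty$, while $h_{23}(z)/h_{33}(z)\to(\mathbf{q}_-^\perp)^\dagger\mathbf{q}_+/(\mathbf{q}_-^\dagger\mathbf{q}_+)$, which produces the explicit leading term of $\beta_2(q_0^2/z)$ in~\eqref{3.15}. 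The small-$z$ statements~\eqref{3.16} go the same way, but here one must be a little careful: the naive bound $h_{23}(z)=O(z)$ from~\eqref{3.11} only yields $\beta_2(q_0^2/z)=O(1)$, whereas the identity $h_{23}(z)=\widehat\rho(z)s_{32}^*(z)$ from~\eqref{2.26} with $\widehat\rho(z)=O(z^2)$ near the origin gives the sharp $h_{23}(z)=O(z^2)$ and hence $\beta_2(q_0^2/z)=O(z)$. Likewise the explicit coefficient of $\beta_2(z)$ in~\eqref{3.16} is extracted by using~\eqref{2.30} to write $h_{21}(z)=-(\mathrm{i}z/q_0)h_{23}(q_0^2/z)$ and inserting the large-argument expansion of $h_{23}$.

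The orthogonal case is where the actual work lies. Now $\mathbf{q}_+^\dagger\mathbf{q}_-=\mathbf{0}$, so the leading coefficients of $s_{11},s_{33},s_{22}$ and the $h$-analogues vanish and one must use the refined expansions of Proposition~\ref{pro:14}, namely $s_{33}(z)=\mathrm{i}\vartheta_1/(q_0^2 z)+O(1/z^2)$ and $h_{33}(z)=-\mathrm{i}\vartheta_1^*/(q_0^2 z)+O(1/z^2)$ as $z\to\infty$, together with the matching $s_{11},h_{11}=O(z)$ as $z\to 0$. Feeding these into the ratios of~\eqref{2.32}, each denominator now contributes an extra power of $z$ (respectively $1/z$): this is why, for instance, $\beta_1(q_0^2/z)=-s_{31}(z)/s_{33}(z)$ degrades from $O(1/z^2)$ to $O(1/z)$, and why $\beta_2(q_0^2/z)=(q_0/\mathrm{i}z)\,h_{23}(z)/h_{33}(z)$ acquires a nonzero leading term $q_0(\mathbf{q}_-^\perp)^\dagger\mathbf{q}_+/\vartheta_1^*$ in~\eqref{3.17} — the prefactor $1/z$ now cancelling the $1/z$ in $h_{33}$ — whereas in the non-orthogonal case it was $O(1/z)$. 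The small-$z$ statements~\eqref{3.18} follow by the same substitutions combined with the symmetry $z\mapsto q_0^2/z$. I expect the bookkeeping in this orthogonal case to be the main obstacle: one must track, for each of the four quantities, whether the vanishing of the leading coefficient of the denominator produces a cancellation (as for $\beta_2(q_0^2/z)$) or an extra singularity (as for $\beta_1(q_0^2/z)$), and one must estimate the numerators $s_{13},s_{31},h_{21},h_{23}$ to high enough order — which again means pairing Proposition~\ref{pro:13}/\ref{pro:14} with the symmetry relations~\eqref{2.26}, \eqref{2.30}, \eqref{2.31} rather than reading orders off~\eqref{3.10}--\eqref{3.11} alone. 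Once those pairings are fixed, each individual estimate is a one-line computation.
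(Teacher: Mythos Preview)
Your proposal is correct and is exactly the approach the paper implicitly relies on: Proposition~\ref{pro:15} is stated there without proof, and its content follows precisely by substituting the asymptotics of Propositions~\ref{pro:13} and~\ref{pro:14} into the definitions~\eqref{2.32}, sharpening via the symmetries~\eqref{2.26} and~\eqref{2.30}--\eqref{2.31} where the naive bounds from~\eqref{3.10}--\eqref{3.11} are not tight enough. Your identification of the two places requiring this sharpening (the $O(z)$ bound for $\beta_2(q_0^2/z)$ near $z=0$ via $h_{23}=\widehat\rho\,s_{32}^*$, and the explicit leading coefficient of $\beta_2(z)$ near $z=0$) is accurate, and the orthogonal-case bookkeeping you outline is exactly what is needed.
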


\begin{remark}\label{rem:5}
We can observe that the asymptotic behavior of the reflection coefficients differs from that in~\cite{23}, due to the different eigenvector matrices. Under the parallel case $\mathbf{q}_{+} \| \mathbf{q}_{-}$, the occurrence of $\beta_{2}(q_{0}^{2}/z)^{[\mathrm{bk}]}(z)=O(z)$ as $z\rightarrow\infty$ and  $\beta_{2}^{[\mathrm{bk}]}(z)=O(z^{-1})$ as $z\rightarrow0$ is allowed to hold. But in the non-parallel case, the corresponding sectionally meromorphic matrices involved in RH problem tend to diverge as $z\rightarrow0$ and $z\rightarrow\infty$, so it is necessary to construct a new eigenvector matrix.
\end{remark}

\subsection{Behavior at the branch points}
\label{s:Behavior at the branch points}

We will analyze the characteristics of $\nu_{\pm}(z;x,t)$ and scattering matrix at $z=\pm q_{0}$. Since $\det\widehat{\mathbf{Y}}_{\pm}^{-1}(\pm q_{0})=0$, the matrices $\widehat{\mathbf{Y}}_{\pm}^{-1}(z)$ are degenerate at $z=\pm q_{0}$. Due to $\lambda(\pm q_{0})=0$, the matrices $\widehat{\mathbf{Y}}_{\pm}(z)$ have a pole at $z=\pm q_{0}$ and the corresponding two exponential terms $\mathrm{e}^{\pm\mathrm{i}\lambda(\pm q_{0})x}=1$.

Introducing the weighted Sobolev spaces $L^{1,\hbar}(\mathbb{R}_{x}^{\pm})=\left\{f: \mathbb{R} \rightarrow \mathbb{C} \mid (1+|x|)^{\hbar} f \in L^{1}(\mathbb{R}_{x}^{\pm}) \right\}$. Based on~\eqref{2.16} and introducing $\mu_{\pm}(z)=\nu_{\pm}(z)\widehat{\mathbf{Y}}_{\pm}^{-1}(z)$, we obtain
\begin{equation}\label{3.19}
\begin{split}
\mu_{\pm}(z)=\mathbf{I}+\int_{\pm\infty}^{x}\mathbf{E}_{\pm}(z)
\left[\mathbf{X}(z;y,t)-\mathbf{X}_{\pm}(z)\right]\mu_{\pm}(z;y,t)\mathbf{E}_{\pm}^{-1}(z)
\,\mathrm{d}y,
\end{split}
\end{equation}
where
$\mathbf{E}_{\pm}(z)=\widehat{\mathbf{Y}}_{\pm}(z)\mathrm{e}^{\mathrm{i}(x-y)\mathbf{\Lambda}_{1}}
\widehat{\mathbf{Y}}_{\pm}^{-1}(z)$ and $\mathbf{E}_{\pm}^{-1}(z)=\widehat{\mathbf{Y}}_{\pm}(z)\mathrm{e}^{-\mathrm{i}(x-y)\mathbf{\Lambda}_{1}}
\widehat{\mathbf{Y}}_{\pm}^{-1}(z)$. Notice that
\begin{subequations}\label{3.20}
\everymath{\displaystyle}
\begin{align}
\lim_{z\rightarrow\pm q_{0}}\mathbf{E}_{\pm}(z)&=\begin{pmatrix}
    1\pm\mathrm{i}q_{0}(x-y) & -\mathrm{i}(x-y)\mathbf{q}_{\pm}^{\dagger}   \\
    \mathrm{i}(x-y)\mathbf{q}_{\pm}  & \frac{1}{q_{0}^{2}} \left[ [1\mp\mathrm{i}q_{0}(x-y)] \mathbf{q}_{\pm}\mathbf{q}_{\pm}^{\dagger}+\mathrm{e}^{\mp\mathrm{i}q_{0}(x-y)}
    \mathbf{q}_{\pm}^{\perp}(\mathbf{q}_{\pm}^{\perp})^{\dagger} \right]  \\
  \end{pmatrix}, \\
\lim_{z\rightarrow\pm q_{0}}\mathbf{E}_{\pm}^{-1}(z)&=\begin{pmatrix}
    1\mp\mathrm{i}q_{0}(x-y) & \mathrm{i}(x-y)\mathbf{q}_{\pm}^{\dagger}   \\
    -\mathrm{i}(x-y)\mathbf{q}_{\pm}  & \frac{1}{q_{0}^{2}} \left[ [1\pm\mathrm{i}q_{0}(x-y)] \mathbf{q}_{\pm}\mathbf{q}_{\pm}^{\dagger}+\mathrm{e}^{\pm\mathrm{i}q_{0}(x-y)}
    \mathbf{q}_{\pm}^{\perp}(\mathbf{q}_{\pm}^{\perp})^{\dagger} \right]  \\
  \end{pmatrix}.
\end{align}
\end{subequations}
Thus, if $\mathbf{q}(x,t)-\mathbf{q}_{+} \in L^{1,1}(\mathbb{R}_{x}^{+})$, the Eqs.~\eqref{3.19} are convergent at $z=\pm q_{0}$. $\mu_{+}(z)$ is well-defined and continuous at $z=\pm q_{0}$, so that $\mu_{+}(z)=B_{\pm}(x,t)+o(1)$ for $z\rightarrow \pm q_{0}$, where $B_{\pm}(x,t)=\mu_{+}(\pm q_{0})$. Since $\mu_{+}(z)=\nu_{+}(z) \widehat{\mathbf{Y}}_{+}^{-1}(z)$ for $z\in \mathbb{R} \backslash \{0,\pm q_{0}\}$, we can obtain that
\begin{subequations}\label{3.21}
\begin{align}
\nu_{+,1}(z)&=\frac{\mathrm{i}}{2\lambda} \left[ z\mu_{+,1}(z)+q_{+,1}\mu_{+,2}(z)
+q_{+,2}\mu_{+,3}(z) \right], \quad  \nu_{+,2}(z)=\frac{q_{+,2}^{*}}{q_{0}}\mu_{+,2}(z)-\frac{q_{+,1}^{*}}{q_{0}}\mu_{+,3}(z),  \\
\nu_{+,3}(z)&=\widehat{\rho}(z) \left[ \frac{q_{0}}{z}\mu_{+,1}(z)
+\frac{q_{+,1}}{q_{0}}\mu_{+,2}(z)+\frac{q_{+,2}}{q_{0}}\mu_{+,3}(z) \right], \quad z\rightarrow\pm q_{0},
\end{align}
\end{subequations}
where $q_{\pm,\hbar}$ for $\hbar=1,2$ denotes the $\hbar$-th component of $\mathbf{q}_{\pm}$. Additionally, we can rewrite it as $\nu_{+,2}(z)=\check{B}_{\pm,2}(x,t)+o(1)$ for $z\rightarrow\pm q_{0}$, where $\check{B}_{\pm,\hbar}(x,t)$ for $\hbar=1,2,3$ are obtained in terms of $B_{\pm}(x,t)=\mu_{+}(\pm q_{0})$. Similarly,
\begin{equation}\label{3.22}
\begin{aligned}
\nu_{+,1}(z)=\frac{\check{B}_{\pm,1}(x,t)}{z\mp q_{0}}+o\left( \frac{1}{z\mp q_{0}} \right),  \quad  \nu_{+,3}(z)=\frac{\check{B}_{\pm,3}(x,t)}{z\mp q_{0}}
+o\left( \frac{1}{z\mp q_{0}} \right), \quad  z\rightarrow\pm q_{0}.
\end{aligned}
\end{equation}
Assuming $\mathbf{q}(x,t)-\mathbf{q}_{-}\in L^{1,1}(\mathbb{R}_{x}^{-})$, we can obtain $\nu_{-,2}(z)=\hat{B}_{\pm,2}(x,t)+o(1)$ for $z\rightarrow\pm q_{0}$ and
\begin{equation}\label{3.23}
\begin{aligned}
\nu_{-,1}(z)=\frac{\hat{B}_{\pm,1}(x,t)}{z\mp q_{0}}+o\left(\frac{1}{z\mp q_{0}}\right),  \quad
\nu_{-,3}(z)=\frac{\hat{B}_{\pm,3}(x,t)}{z\mp q_{0}}+o\left(\frac{1}{z\mp q_{0}}\right), \quad z\rightarrow\pm q_{0}.
\end{aligned}
\end{equation}

By assuming that $\mathbf{q}(x,t)-\mathbf{q}_{\pm}$ decays faster at $x\rightarrow\pm\infty$, the asymptotic behavior of the characteristic function at the branch point is studied. Next, we differentiate the integral Eqs.~\eqref{3.19} to obtain
\begin{equation}\label{3.24}
\begin{aligned}
\frac{\partial\mu_{\pm}(z)}{\partial z}&=\int_{\pm\infty}^{x} \frac{\partial\mathbf{E}_{\pm}(z)}{\partial z}
\left[\mathbf{X}(z;y,t)-\mathbf{X}_{\pm}(z)\right]\mu_{\pm}(z;y,t)\mathbf{E}_{\pm}^{-1}(z)
\,\mathrm{d}y \\
&+\int_{\pm\infty}^{x}\mathbf{E}_{\pm}(z) \left[\mathbf{X}(z;y,t)-\mathbf{X}_{\pm}(z)\right]
\frac{\partial \mu_{\pm}(z;y,t)}{\partial z} \mathbf{E}_{\pm}^{-1}(z)\,\mathrm{d}y \\
&+\int_{\pm\infty}^{x}\mathbf{E}_{\pm}(z) \left[\mathbf{X}(z;y,t)-\mathbf{X}_{\pm}(z)\right]
\mu_{\pm}(z;y,t)\frac{\partial \mathbf{E}_{\pm}^{-1}(z)}{\partial z}\,\mathrm{d}y,
\end{aligned}
\end{equation}
and $\lim_{z\rightarrow \pm q_{0}} \partial \mathbf{E}_{\pm}(z)/\partial z
=\lim_{z\rightarrow \pm q_{0}} \partial \mathbf{E}_{\pm}^{-1}(z)/\partial z=\mathbf{0}_{3\times3}$. If $\mathbf{q}(x,t)-\mathbf{q}_{+} \in L^{1,2}\left(\mathbb{R}_{x}^{+}\right)$, $\partial\mu_{+}(z)/\partial z$ is continuous as $z\rightarrow\pm q_{0}$ from the real $z$-axis. So the Taylor expansion of $\mu_{+}(z)$ at $z=\pm q_{0}$ yields:
\begin{equation}\label{3.25}
\begin{aligned}
\mu_{+}(z)=\mu_{+}(\pm q_{0})+\left.\frac{\partial \mu_{+}(z)}{\partial z}\right|_{z=\pm q_{0}}(z\mp q_{0})
+\left.\frac{\partial^{2} \mu_{+}(z)}{\partial z^{2}}\right|_{z=\pm q_{0}}(z\mp q_{0})^{2}+\cdots.
\end{aligned}
\end{equation}
Therefore, $\mu_{+}(z)$ at the branch points can be written as:
\begin{equation}\label{3.26}
\begin{aligned}
\mu_{+}(z)=B_{\pm}(x,t)+A_{\pm}(x,t)(z\mp q_{0})+o(z\mp q_{0}), \quad z \rightarrow \pm q_{0},
\end{aligned}
\end{equation}
where $B_{\pm}(x,t)=\mu_{+}(\pm q_{0})$ and $A_{\pm}(x,t)=\left. \partial \mu_{+}(z)/\partial z \right|_{z=\pm q_{0}}$. Since $\mu_{+}(z)=\nu_{+}(z)\widehat{\mathbf{Y}}_{+}^{-1}(z)$ for $z\in \mathbb{R} \backslash \{0,\pm q_{0}\}$, as $z\rightarrow\pm q_{0}$ we have
\begin{subequations}\label{3.27}
\begin{align}
\nu_{+,1}(z)&=\frac{\check{B}_{\pm,1}}{z\mp q_{0}}+\check{A}_{\pm,1}+o(1), \quad
\nu_{+,3}(z)=\frac{\check{B}_{\pm,3}}{z\mp q_{0}}+\check{A}_{\pm,3}+o(1),  \\
\nu_{+,2}(z)&=\check{B}_{\pm,2}+\check{A}_{\pm,2}(z\mp q_{0})+o(z\mp q_{0}),
\end{align}
\end{subequations}
where $\check{B}_{\pm,\hbar}=\check{B}_{\pm,\hbar}(x,t)$ and $\check{A}_{\pm,\hbar}=\check{A}_{\pm,\hbar}(x,t)$, $\check{A}_{\pm,\hbar}(x,t)$ are obtained in terms of $A_{\pm}(x,t)=\mu_{+}^{\prime}(\pm q_{0})$. If $\mathbf{q}(x,t)-\mathbf{q}_{-} \in L^{1,2}(\mathbb{R}_{x}^{-})$, then $\nu_{-}(z)$ at the branch points can be written as:
\begin{subequations}\label{3.28}
\begin{align}
\nu_{-,1}(z)&=\frac{\hat{B}_{\pm,1}}{z\mp q_{0}}+\hat{A}_{\pm,1}+o(1), \quad
\nu_{-,3}(z)=\frac{\hat{B}_{\pm,3}}{z\mp q_{0}} +\hat{A}_{\pm,3}+o(1), \quad &z\rightarrow\pm q_{0}, \\
\nu_{-,2}(z)&=\hat{B}_{\pm,2}+\hat{A}_{\pm,2}(z\mp q_{0})+o(z\mp q_{0}), \quad &z\rightarrow\pm q_{0},
\end{align}
\end{subequations}
where $\hat{B}_{\pm,\hbar}=\hat{B}_{\pm,\hbar}(x,t)$ and $\hat{A}_{\pm,\hbar}=\hat{A}_{\pm,\hbar}(x,t)$ for $\hbar=1,2,3$.

Next, we analyzed the asymptotic behavior of $d(z)$ and $\widetilde{d}(z)$ at $z=\pm q_{0}$ as follows:
\begin{equation}\label{3.29}
\begin{aligned}
d(z)=-\frac{\mathrm{i}\mathbf{J}\left[ \nu_{-,3}^{*}(z^{*}) \times \nu_{+,1}^{*}(z^{*}) \right]}{\widehat{\rho}(z)}, \quad
\widetilde{d}(z)=-\frac{\mathrm{i}\mathbf{J}\left[ \nu_{-,1}^{*}(z^{*}) \times \nu_{+,3}^{*}(z^{*}) \right]}{\widehat{\rho}(z)}.
\end{aligned}
\end{equation}
If $\mathbf{q}(x,t)-\mathbf{q}_{\pm} \in L^{1,1}\left(\mathbb{R}_{x}^{\pm}\right)$, as $z\rightarrow\pm q_{0}$ we have
\begin{equation}\label{3.30}
\begin{aligned}
\left[ \nu_{+,3}^{*}(z) \times \nu_{-,1}^{*}(z) \right]=\frac{ \left[ \check{B}_{\pm,3}^{*} \times \hat{B}_{\pm,1}^{*} \right]+o(1)}{(z\mp q_{0})^{2}}, \quad
\left[ \nu_{+,1}^{*}(z) \times \nu_{-,3}^{*}(z) \right]=\frac{ \left[ \check{B}_{\pm,1}^{*} \times \hat{B}_{\pm,3}^{*} \right]+o(1)}{(z\mp q_{0})^{2}}.
\end{aligned}
\end{equation}
Correspondingly, there are the following behaviors at the branch points:
\begin{equation}\label{3.31}
\begin{aligned}
d(z)=\pm \frac{2\mathrm{i}d_{\pm}^{(-1)}}{q_{0}(z\mp q_{0})}+o\left(\frac{1}{z\mp q_{0}}\right), \quad
\widetilde{d}(z)=\pm \frac{2\mathrm{i}\widetilde{d}_{\pm}^{(-1)}}{q_{0}(z\mp q_{0})}+o\left(\frac{1}{z\mp q_{0}}\right), \quad z\rightarrow\pm q_{0},
\end{aligned}
\end{equation}
where $d_{\pm}^{(-1)}=\mathbf{J}\left[ \check{B}_{\pm,1}^{*} \times \hat{B}_{\pm,3}^{*} \right]$ and
$\widetilde{d}_{\pm}^{(-1)}=\mathbf{J}\left[ \check{B}_{\pm,3}^{*} \times \hat{B}_{\pm,1}^{*} \right]$. If $\mathbf{q}(x,t)-\mathbf{q}_{\pm} \in L^{1,2}(\mathbb{R}_{x}^{\pm})$, then we have
\begin{subequations}\label{3.32}
\begin{align}
\mathbf{J}\left[ \nu_{+,3}^{*}(z) \times \nu_{-,1}^{*}(z) \right]
&=\frac{\widetilde{d}_{\pm}^{(-1)}+\widetilde{d}_{\pm}^{(0)}(z\mp q_{0})+o(z\mp q_{0})}{(z\mp q_{0})^{2}}, \quad &z\rightarrow\pm q_{0}, \\
\mathbf{J}\left[ \nu_{+,1}^{*}(z) \times \nu_{-,3}^{*}(z) \right]
&=\frac{d_{\pm}^{(-1)}+d_{\pm}^{(0)}(z\mp q_{0})+o(z\mp q_{0})}{(z\mp q_{0})^{2}}, \quad &z\rightarrow\pm q_{0},
\end{align}
\end{subequations}
where $d_{\pm}^{(0)}=\mathbf{J}\left[ \check{B}_{\pm,1}^{*} \times \hat{A}_{\pm,3}^{*} \right]+\mathbf{J}\left[ \check{A}_{\pm,1}^{*} \times \hat{B}_{\pm,3}^{*} \right]$ and $\widetilde{d}_{\pm}^{(0)}=\mathbf{J}\left[ \check{B}_{\pm,3}^{*} \times \hat{A}_{\pm,1}^{*} \right]+\mathbf{J}\left[ \check{A}_{\pm,3}^{*} \times \hat{B}_{\pm,1}^{*} \right]$. Correspondingly, there are the following behaviors at the branch points:
\begin{equation}\label{3.33}
\begin{aligned}
d(z)=\pm \frac{2\mathrm{i}d_{\pm}^{(-1)}}{q_{0}(z\mp q_{0})} \pm \frac{2\mathrm{i}}{q_{0}}d_{\pm}^{(0)}+o(1), \quad
\widetilde{d}(z)=\pm \frac{2\mathrm{i}\widetilde{d}_{\pm}^{(-1)}}{q_{0}(z\mp q_{0})}
\pm \frac{2\mathrm{i}}{q_{0}}\widetilde{d}_{\pm}^{(0)}+o(1), \quad z\rightarrow\pm q_{0}.
\end{aligned}
\end{equation}
The column vectors of the eigenvector matrices have the following symmetric relationship at branch points: $\lim_{z\rightarrow q_{0}}\widehat{\mathbf{Y}}_{\pm,1}(z)=\mathrm{i}\lim_{z\rightarrow q_{0}}\widehat{\mathbf{Y}}_{\pm,3}(z)$ and $\lim_{z\rightarrow -q_{0}}\widehat{\mathbf{Y}}_{\pm,1}(z)=-\mathrm{i}\lim_{z\rightarrow -q_{0}}\widehat{\mathbf{Y}}_{\pm,3}(z)$.

The asymptotic behavior of $\psi_{\pm}(z)$ at $z=\pm q_{0}$ can be observed as $\psi_{\pm,1}(q_{0})=-\mathrm{i}\psi_{\pm,3}(q_{0})$ and $\psi_{\pm,1}(-q_{0})=\mathrm{i}\psi_{\pm,3}(-q_{0})$. We examine the characteristics of $\mathbf{S}(z)$ in the vicinity of the branch points, which are expressible through Wronskian determinants:
\begin{equation}\label{3.34}
\begin{split}
s_{jl}(z)=\frac{W_{jl}(z)}{\mathrm{i}\widehat{\rho}(z)}\mathrm{e}^{-\mathrm{i}\delta_{2}(z)}
=\frac{z^{2}-q_{0}^{2}}{\mathrm{i}z^{2}}W_{jl}(z)\mathrm{e}^{-\mathrm{i}\delta_{2}(z)}
,
\end{split}
\end{equation}
where $W_{jl}(z)=\det\,(\psi_{-,l}(z),\psi_{+,j+1}(z),\psi_{+,j+2}(z))$. If $\mathbf{q}(x,t)-\mathbf{q}_{\pm} \in L^{1,1}(\mathbb{R}_{x}^{\pm})$, then we have
\begin{subequations}\label{3.35}
\begin{align}
W_{11}(z)&=\frac{\mathrm{e}^{\mathrm{i}\delta_{2}(z)}}{(z\mp q_{0})^{2}} \left[ \det \left( \hat{B}_{\pm,1}, \check{B}_{\pm,2}, \check{B}_{\pm,3} \right) +o(1) \right], \quad &z\rightarrow\pm q_{0}, \\
W_{12}(z)&=\frac{\mathrm{e}^{2\mathrm{i}\delta_{2}(z)-\mathrm{i}\delta_{1}(z)}}{z\mp q_{0}} \left[ \det \left( \hat{B}_{\pm,2}, \check{B}_{\pm,2}, \check{B}_{\pm,3} \right) +o(1) \right], \quad &z\rightarrow\pm q_{0},
\end{align}
\end{subequations}
from which it follows that
\begin{equation}\label{3.36}
\begin{split}
s_{11}(z)=\mp\frac{2\mathrm{i}W_{11,\pm}^{(-1)}}{q_{0}(z\mp q_{0})}
+o\left(\frac{1}{z\mp q_{0}}\right), \quad
s_{12}(z)=\mp\frac{2\mathrm{i}\mathrm{e}^{\mathrm{i}\delta_{2}(\pm q_{0})-\mathrm{i}\delta_{1}(\pm q_{0})}}{q_{0}}W_{12,\pm}^{(0)}+o(1), \quad  z\rightarrow\pm q_{0},
\end{split}
\end{equation}
where $W_{11,\pm}^{(-1)}=\det \left( \hat{B}_{\pm,1}, \check{B}_{\pm,2}, \check{B}_{\pm,3} \right)$ and
$W_{12,\pm}^{(0)}=\det \left( \hat{B}_{\pm,2}, \check{B}_{\pm,2}, \check{B}_{\pm,3} \right)$.

Similarly, assuming $\mathbf{q}(x,t)-\mathbf{q}_{\pm} \in L^{1,2}(\mathbb{R}_{x}^{\pm})$ then we have
\begin{subequations}\label{3.37}
\begin{align}
W_{11}(z)&=\frac{\mathrm{e}^{\mathrm{i}\delta_{2}(z)}}{(z\mp q_{0})^{2}} \left[ W_{11,\pm}^{(-1)}+W_{11,\pm}^{(0)}(z\mp q_{0})+o(z\mp q_{0}) \right], \quad &z\rightarrow\pm q_{0}, \\
W_{12}(z)&=\frac{\mathrm{e}^{2\mathrm{i}\delta_{2}(z)-\mathrm{i}\delta_{1}(z)}}{z\mp q_{0}} \left[ W_{12,\pm}^{(0)}+W_{12,\pm}^{(1)}(z\mp q_{0})+o(z\mp q_{0}) \right], \quad &z\rightarrow\pm q_{0},
\end{align}
\end{subequations}
where
\begin{subequations}\label{3.38}
\begin{align}
W_{11,\pm}^{(0)}&=\det \left( \hat{A}_{\pm,1}, \check{B}_{\pm,2},\check{B}_{\pm,3} \right)+ \det \left( \hat{B}_{\pm,1}, \check{A}_{\pm,2},\check{B}_{\pm,3} \right)
+\det \left( \hat{B}_{\pm,1}, \check{B}_{\pm,2},\check{A}_{\pm,3} \right), \\
W_{12,\pm}^{(1)}&=\det \left( \hat{A}_{\pm,2}, \check{B}_{\pm,2},\check{B}_{\pm,3} \right) + \det \left( \hat{B}_{\pm,2}, \check{A}_{\pm,2},\check{B}_{\pm,3} \right)
+\det \left( \hat{B}_{\pm,2}, \check{B}_{\pm,2},\check{A}_{\pm,3} \right),
\end{align}
\end{subequations}
from which it follows that
\begin{subequations}\label{3.39}
\begin{align}
s_{11}(z)&=\mp\frac{2\mathrm{i}}{q_{0}(z\mp q_{0})}W_{11,\pm}^{(-1)}\mp\frac{2\mathrm{i}}{q_{0}}W_{11,\pm}^{(0)}+o(1), \quad &z\rightarrow\pm q_{0},  \\
s_{12}(z)&=\mp\frac{2\mathrm{i}\mathrm{e}^{\mathrm{i}\delta_{2}(\pm q_{0})-\mathrm{i}\delta_{1}(\pm q_{0})}}{q_{0}} \left[ W_{12,\pm}^{(0)}+W_{12,\pm}^{(1)}(z\mp q_{0}) \right] +o(z\mp q_{0}), \quad &z\rightarrow\pm q_{0}.
\end{align}
\end{subequations}

Furthermore, the asymptotic behavior for $\mathbf{S}(z)$ in the vicinity of $z=\pm q_{0}$ can be articulated in the following manner:
\begin{equation}\label{3.40}
\begin{split}
\mathbf{S}(z)=\mp\frac{2\mathrm{i}}{q_{0}(z\mp q_{0})}\mathbf{S}_{\pm}^{(-1)}
\mp \frac{2\mathrm{i}\mathrm{e}^{\mathrm{i}\delta_{2}(\pm q_{0})-\mathrm{i}\delta_{1}(\pm q_{0})}}{q_{0}} \mathbf{S}_{\pm}^{(0)}+o(1),
\end{split}
\end{equation}
where
\begin{equation}\label{3.41}
\begin{split}
\mathbf{S}_{\pm}^{(-1)}=W_{11,\pm}^{(-1)}\begin{pmatrix}
    1 & 0 & \pm\mathrm{i}  \\
    0 & 0 & 0  \\
    \pm\mathrm{i} & 0 & -1
  \end{pmatrix}, \quad
\mathbf{S}_{\pm}^{(0)}=W_{12,\pm}^{(0)}\begin{pmatrix}
    0 & 1 & 0  \\
    0 & 0 & 0  \\
    0 & \pm\mathrm{i} & 0
  \end{pmatrix}.
\end{split}
\end{equation}
If $\mathbf{q}(x,t)-\mathbf{q}_{\pm} \in L^{1,1}(\mathbb{R}_{x}^{ \pm})$, we know that $\lim_{z \rightarrow \pm q_{0}}\beta_{1}(z)=\pm\mathrm{i}$ and $\lim_{z \rightarrow \pm q_{0}}\beta_{2}(z)=0$.

\subsection{Non-existence of reflectionless potentials}
\label{s:Non-existence of reflectionless potentials}

One of the important differences between the equations~\eqref{1.3} with non-parallel NZBCs and the equations~\eqref{1.3} with parallel NZBCs~\cite{23} is that in the non-parallel case, there is no reflectionless potentials when the vectors $\mathbf{q}_{\pm}$ are not parallel. Therefore, the following theorem is given.
\begin{theorem}\label{thm:3}
There are no pure soliton solutions $\mathbf{q}(x,t)$ of the defocusing-defocusing coupled Hirota equations~\eqref{1.3} satisfying~\eqref{1.4} with $\mathbf{q}_{+}\nparallel\mathbf{q}_{-}$ and $\mathbf{q}(x,t)-\mathbf{q}_{\pm} \in L^{1}(\mathbb{R})$  for which $\beta_{1}(z)=\beta_{2}(z)=0$ for all $z\in \mathbb{R}$, where the definitions of $\beta_{1}(z)$ and $\beta_{2}(z)$ are shown in Eqs.~\eqref{2.32}.
\end{theorem}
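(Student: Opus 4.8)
The plan is to argue by contradiction: assume $\mathbf{q}(x,t)$ is a pure soliton solution satisfying~\eqref{1.4} with $\mathbf{q}_{+}\nparallel\mathbf{q}_{-}$ and $\mathbf{q}(x,t)-\mathbf{q}_{\pm}\in L^{1}(\mathbb{R})$, so that $\beta_{1}(z)=\beta_{2}(z)=0$ for all $z\in\mathbb{R}$, and derive a constraint that forces $\mathbf{q}_{+}\parallel\mathbf{q}_{-}$.

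\emph{Step one: from vanishing reflection coefficients to vanishing scattering entries.} Reading the hypothesis through the identities in~\eqref{2.32}, namely $\beta_{1}(z)=s_{13}(z)/s_{11}(z)$, $\beta_{1}(q_{0}^{2}/z)=-s_{31}(z)/s_{33}(z)$, $\beta_{2}(z)=-\widehat{\rho}(z)s_{12}^{*}(z)/s_{11}^{*}(z)$, and $\beta_{2}(q_{0}^{2}/z)=q_{0}\widehat{\rho}(z)s_{32}^{*}(z)/(\mathrm{i}z\,s_{33}^{*}(z))$, and using that $\widehat{\rho}(z)$ and $s_{11}(z),s_{33}(z)$ are nonzero on the continuous spectrum while $q_{0}^{2}/z$ sweeps over $\mathbb{R}\setminus\{0\}$ as $z$ does, I obtain $s_{12}(z)=s_{13}(z)=s_{31}(z)=s_{32}(z)=0$ for every $z\in\mathbb{R}\setminus\{0,\pm q_{0}\}$. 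Hence $\mathbf{S}(z)$ collapses to the block-triangular form in which only $s_{11},s_{21},s_{22},s_{23},s_{33}$ may be nonzero.

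\emph{Step two: a modulus-one identity.} Expanding $\det\mathbf{S}(z)=1$ along the first and third rows gives $\det\mathbf{S}=s_{11}s_{22}s_{33}=1$, and the cofactor formula for $\mathbf{H}=\mathbf{S}^{-1}$ then yields $h_{11}(z)=1/s_{11}(z)$ and $h_{33}(z)=1/s_{33}(z)$. Combining this with the symmetry relations $h_{11}(z)=s_{11}^{*}(z)$ and $h_{33}(z)=s_{33}^{*}(z)$ from~\eqref{2.26} produces $|s_{11}(z)|=|s_{33}(z)|=1$ for all $z\in\mathbb{R}\setminus\{0,\pm q_{0}\}$. Finally, Proposition~\ref{pro:13} gives $s_{33}(z)=\mathbf{q}_{+}^{\dagger}\mathbf{q}_{-}/q_{0}^{2}+O(1/z)$ as $z\to\infty$ with $s_{33}$ continuous on $\mathbb{R}\setminus\{\pm q_{0}\}$; letting $z\to\pm\infty$ along the real axis and comparing with $|s_{33}(z)|\equiv 1$ forces $|\mathbf{q}_{+}^{\dagger}\mathbf{q}_{-}|=q_{0}^{2}=\|\mathbf{q}_{+}\|\,\|\mathbf{q}_{-}\|$, so by the equality case of the Cauchy--Schwarz inequality $\mathbf{q}_{+}\parallel\mathbf{q}_{-}$, a contradiction. (One may instead use $s_{11}(z)\to\mathbf{q}_{+}^{\dagger}\mathbf{q}_{-}/q_{0}^{2}$ as $z\to0$ from~\eqref{3.11}, or, in the orthogonal case, Proposition~\ref{pro:14}, where $s_{33}(z)\to0$ as $z\to\infty$, contradicting $|s_{33}|\equiv1$ directly; this shows the argument covers both the orthogonal and non-orthogonal non-parallel cases uniformly.)

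The delicate part is step one: the clean identities $h_{11}=1/s_{11}$ and $h_{33}=1/s_{33}$, and hence the modulus-one conclusion, depend on \emph{all four} entries $s_{12},s_{13},s_{31},s_{32}$ being exactly zero, so one must be careful to exploit the reciprocal-argument relations $\beta_{j}(q_{0}^{2}/z)$ in~\eqref{2.32} and not merely $\beta_{j}(z)$. One must also apply~\eqref{2.26} and the asymptotics of Propositions~\ref{pro:13}--\ref{pro:14} on the punctured real line $\mathbb{R}\setminus\{0,\pm q_{0}\}$ where $\mathbf{S}(z)$ is genuinely defined, so that the limiting process $z\to\infty$ (or $z\to0$) remains inside the continuous spectrum. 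Everything beyond this is a short determinant computation.
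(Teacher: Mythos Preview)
Your proof is correct but follows a genuinely different route from the paper's. The paper argues directly with the Jost eigenfunctions: from $\beta_{1}\equiv\beta_{2}\equiv0$ (hence $h_{13}=h_{23}=0$ via~\eqref{2.32}) the scattering relation collapses to $\nu_{-,3}(z)=\nu_{+,3}(z)/h_{33}(z)$, and comparing the leading asymptotics of both sides as $z\to\infty$ (Propositions~\ref{pro:11} and~\ref{pro:13}) yields $\mathbf{q}_{-}=(q_{0}^{2}/\mathbf{q}_{-}^{\dagger}\mathbf{q}_{+})\mathbf{q}_{+}$ in the non-orthogonal case; the orthogonal case is then handled separately using Proposition~\ref{pro:14} to force $\mathbf{q}_{+}=\mathbf{0}$. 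Your argument instead works purely at the level of scattering coefficients: from the block structure of $\mathbf{S}$ and $\det\mathbf{S}=1$ you derive $|s_{33}(z)|\equiv1$ on the punctured real line, and then the single limit $s_{33}(z)\to\mathbf{q}_{+}^{\dagger}\mathbf{q}_{-}/q_{0}^{2}$ gives $|\mathbf{q}_{+}^{\dagger}\mathbf{q}_{-}|=q_{0}^{2}$, which is the Cauchy--Schwarz equality case. Your route has the advantage of treating the orthogonal and non-orthogonal cases uniformly without a case split; the paper's route has the advantage of exhibiting the explicit proportionality $\mathbf{q}_{-}\propto\mathbf{q}_{+}$ and of tying the contradiction directly to the eigenfunction normalization that motivated the choice of $\widehat{\mathbf{Y}}_{\pm}$.
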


\begin{proof}
Assuming $\beta_{1}(z)=\beta_{2}(z)=0$ for $z\in \mathbb{R}$, we consider the relationship:
\begin{equation}\label{3.42}
\begin{split}
\psi_{-,3}(z)=\frac{\psi_{+,3}(z)}{h_{33}(z)}, \quad \Rightarrow \quad
\nu_{-,3}(z)=\frac{\nu_{+,3}(z)}{h_{33}(z)}.
\end{split}
\end{equation}
It is necessary to separately prove the orthogonal and non-orthogonal cases. Consider the asymptotic behavior as $z\rightarrow \infty$ in the non-orthogonal case, we have
\begin{equation}\label{3.43}
\everymath{\displaystyle}
\begin{split}
\begin{pmatrix}
    0   \\
    \frac{\mathbf{q}_{-}}{q_{0}}
  \end{pmatrix}=\begin{pmatrix}
    0   \\
    \frac{q_{0}\mathbf{q}_{+}}{\mathbf{q}_{-}^{\dagger}\mathbf{q}_{+}}
  \end{pmatrix},  \quad \Rightarrow \quad
\mathbf{q}_{-}=\frac{q_{0}^{2}\mathbf{q}_{+}}{\mathbf{q}_{-}^{\dagger}\mathbf{q}_{+}},
\end{split}
\end{equation}
this contradicts the definition in the non-parallel case, therefore it is impossible. The above method cannot prove the orthogonality, since the quantity $\mathbf{q}_{+}^{\dagger}\mathbf{q}_{-}=0$. Therefore, we assume again that $\beta_{1}(z)=\beta_{2}(z)=0$ for $z\in \mathbb{R}$. Proposition~\ref{pro:14} implies that $\lim_{z\rightarrow\infty}zh_{33}(z)\neq0$, then we can obtain
\begin{equation}\label{3.44}
\everymath{\displaystyle}
\begin{split}
\frac{\nu_{-,3}(z)}{z}=\frac{\nu_{+,3}(z)}{zh_{33}(z)}, \quad \Rightarrow \quad
\begin{pmatrix}
    0   \\
    \frac{\mathbf{q}_{-}}{zq_{0}}
  \end{pmatrix}=\begin{pmatrix}
    0   \\
    \frac{\mathrm{i}q_{0}\mathbf{q}_{+}}{\vartheta_{1}^{*}}
  \end{pmatrix}.
\end{split}
\end{equation}
Taking the limit $z\rightarrow\infty$ to obtain $\mathbf{q}_{+}=0$, this contradicts the assumption that $\left\|\mathbf{q}_{+}\right\|=q_{0}>0$.
\end{proof}

\section{Inverse problem}
\label{s:Inverse problem}

\subsection{Riemann-Hilbert problem}
\label{s:Riemann-Hilbert problem}

To formulate the matrix RH problem, it is essential to establish suitable transition conditions that define the behavior of eigenfunctions, which are characterized by their meromorphic nature within the specified domain. Given that certain Jost eigenfunctions lack analytic properties, it is necessary to define new modified meromorphic functions in the corresponding regions.
\begin{lemma}\label{lem:1}
Define the piecewise meromorphic matrices $\mathbf{R}^{\pm}(z)=\mathbf{R}^{\pm}(z;x,t)=(\mathbf{R}_{1}^{\pm},\mathbf{R}_{2}^{\pm},\mathbf{R}_{3}^{\pm})$,
\begin{subequations}\label{4.1}
\begin{align}
\mathbf{R}^{+}(z)&=\mathbf{\Psi}^{+}(z)\mathrm{e}^{-\mathrm{i}\mathbf{\Delta}(z)}
\operatorname{diag} \left( \frac{1}{h_{11}} ,\frac{1}{s_{33}}, \frac{1}{\mathrm{i}\widehat{\rho}} \right)=\left[ \frac{\nu_{+,1}(z)}{h_{11}}, -\frac{\widetilde{d}(z)}{s_{33}}, \frac{\nu_{-,3}(z)}{\mathrm{i}\widehat{\rho}} \right],  &z\in \mathbb{D}^{+}, \\
\mathbf{R}^{-}(z)&=\mathbf{\Psi}^{-}(z)\mathrm{e}^{-\mathrm{i}\mathbf{\Delta}(z)}
\operatorname{diag} \left( \frac{1}{\mathrm{i}\widehat{\rho}}\, , \frac{1}{s_{11}},\frac{1}{h_{33}} \right)= \left[ \frac{\nu_{-,1}(z)}{\mathrm{i}\widehat{\rho}}, \frac{d(z)}{s_{11}}, \frac{\nu_{+,3}(z)}{h_{33}} \right],  &z\in \mathbb{D}^{-},
\end{align}
\end{subequations}
where $\widehat{\rho}=\widehat{\rho}(z)$, $s_{11}=s_{11}(z)$, $h_{11}=h_{11}(z)$, $s_{33}=s_{33}(z)$ and $h_{33}=h_{33}(z)$. The corresponding jump condition is
\begin{equation}\label{4.2}
\begin{split}
\mathbf{R}^{+}(z)=\mathbf{R}^{-}(z)\mathrm{e}^{\mathrm{i}\mathbf{\Delta}(z)}
\mathbf{K}(z)\mathrm{e}^{-\mathrm{i}\mathbf{\Delta}(z)}, \quad z\in \mathbb{R} \backslash \{0,\pm q_{0}\},
\end{split}
\end{equation}
with
\begin{equation}\label{4.3}
\everymath{\displaystyle}
\begin{split}
\mathbf{K}(z)=\begin{pmatrix}
    \left[\mathrm{i}\widehat{\rho}\left( 1+\beta_{1}^{*}\widehat{\beta}_{1}^{*} \right)
    -\mathrm{i}\left|\beta_{2}\right|^{2}
    +\frac{z\beta_{1}^{*}\beta_{2}^{*}\widehat{\beta}_{2}}{q_{0}}\right] &
   \left[ \frac{z\widehat{\beta}_{1}^{*}\widehat{\beta}_{2}^{*}}{q_{0}\widehat{\rho}}
    -\mathrm{i}\beta_{2}^{*}
    -\frac{\mathrm{i}z^{2}\beta_{2}^{*}\left|\widehat{\beta}_{2}\right|^{2}}{q_{0}^{2}\widehat{\rho}} \right] & \left[ \frac{\mathrm{i}z\widehat{\beta}_{2}\beta_{2}^{*}}{q_{0}\widehat{\rho}}
    -\widehat{\beta}_{1}^{*}\right]  \\
    \beta_{2}+\frac{\mathrm{i}z}{q_{0}}\beta_{1}^{*}\widehat{\beta}_{2}  & 1+\frac{z^{2}}{q_{0}^{2}\widehat{\rho}} \left|\widehat{\beta}_{2}\right|^{2} & -\frac{z}{q_{0}\widehat{\rho}}\widehat{\beta}_{2}  \\
    -\beta_{1}^{*} & \frac{\mathrm{i}z}{q_{0}\widehat{\rho}}\widehat{\beta}_{2}^{*} & \frac{1}{\mathrm{i}\widehat{\rho}}
  \end{pmatrix},
\end{split}
\end{equation}
where $\beta_{\hbar}=\beta_{\hbar}(z)$ and $\widehat{\beta}_{\hbar}=\beta_{\hbar}(q_{0}^{2}/z)$ for $\hbar=1,2$.
\end{lemma}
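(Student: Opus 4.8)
The statement has two parts --- the sectional meromorphy of $\mathbf{R}^{\pm}$ and the jump relation \eqref{4.2}--\eqref{4.3} --- and I would treat them in that order. Meromorphy is essentially bookkeeping on what is already established: each column of $\mathbf{R}^{+}$ is $\nu_{+,1}/h_{11}$, $-\widetilde{d}/s_{33}$, or $\nu_{-,3}/(\mathrm{i}\widehat{\rho})$, so Theorems~\ref{thm:1}--\ref{thm:2} give analyticity of $\nu_{+,1},\nu_{-,3},h_{11},s_{33}$ on $\mathbb{D}^{+}$, while $\widetilde{d}=\widetilde{\gamma}\mathrm{e}^{-\mathrm{i}\delta_{2}}$ is analytic there because, by \eqref{2.22} and Proposition~\ref{pro:1}, $\widetilde{\gamma}$ is assembled from the analytic adjoint eigenfunctions $\widetilde{\psi}_{-,1},\widetilde{\psi}_{+,3}$; hence $\mathbf{R}^{+}$ is meromorphic on $\mathbb{D}^{+}$ with poles only at the zeros of $h_{11}s_{33}$, i.e. at the discrete spectrum of Section~\ref{s:Discrete spectrum}. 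The only delicate point is $z=\pm q_{0}$: there $\nu_{+,1},\nu_{-,3},\widetilde{d}$ each carry a simple pole (Eqs.~\eqref{3.22}, \eqref{3.23}, \eqref{3.31}) and so do $h_{11},s_{33},\widehat{\rho}$ (Eqs.~\eqref{3.36}, \eqref{3.40} together with $h_{11}=s_{11}^{*}(z^{*})$, $s_{33}(z)=s_{11}(q_{0}^{2}/z)$), and the normalizing denominators in \eqref{4.1} are chosen precisely so that these poles cancel, leaving $\mathbf{R}^{+}$ bounded at $z=\pm q_{0}$ under the decay hypotheses of Section~\ref{s:Behavior at the branch points}; $\mathbf{R}^{-}$ on $\mathbb{D}^{-}$ is the mirror image.

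For the jump, fix $z\in\mathbb{R} \backslash \{0,\pm q_{0}\}$, where $\mathbf{\Psi}^{\pm}(z)$ are both defined and nonsingular since $\det\mathbf{\Psi}^{\pm}\propto\widehat{\rho}\,h_{11}s_{33}$ (resp. $\widehat{\rho}\,s_{11}h_{33}$) does not vanish on the continuous spectrum. Write $\mathbf{\Psi}^{+}(z)=\mathbf{\Psi}^{-}(z)\mathbf{C}(z)$. Because $\mathbf{R}^{\pm}=\mathbf{\Psi}^{\pm}\mathrm{e}^{-\mathrm{i}\mathbf{\Delta}}\mathbf{D}^{\pm}$ with $\mathbf{D}^{+}=\operatorname{diag}(1/h_{11},1/s_{33},1/(\mathrm{i}\widehat{\rho}))$ and $\mathbf{D}^{-}=\operatorname{diag}(1/(\mathrm{i}\widehat{\rho}),1/s_{11},1/h_{33})$ both diagonal and independent of $(x,t)$, the relation \eqref{4.2} is equivalent to the purely spectral identity $\mathbf{K}(z)=(\mathbf{D}^{-}(z))^{-1}\mathbf{C}(z)\mathbf{D}^{+}(z)$, all the $x,t$-dependence being carried by the outer conjugation $\mathrm{e}^{\pm\mathrm{i}\mathbf{\Delta}}$. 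Thus the task is to compute $\mathbf{C}(z)$, i.e. to re-expand the columns $\psi_{+,1},-\widetilde{\gamma},\psi_{-,3}$ of $\mathbf{\Psi}^{+}$ in the basis $\psi_{-,1},\gamma,\psi_{+,3}$ of $\mathbf{\Psi}^{-}$. The only tools needed are the scattering relations $\psi_{+}=\psi_{-}\mathbf{H}$, $\psi_{-}=\psi_{+}\mathbf{S}$; the two decompositions of $\psi_{\pm,2}$ in Corollary~\ref{cor:2}, which let one trade $\psi_{\pm,2}$ for $\psi_{\pm,1},\gamma$ or for $\psi_{\pm,3},\widetilde{\gamma}$; and the definitions \eqref{2.22}. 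Concretely $\psi_{-,3}$ is eliminated via the third column of $\psi_{+}=\psi_{-}\mathbf{H}$ and $\widetilde{\gamma}$ via equating the two forms of $\psi_{-,2}$. The coefficients coming out are rational in the $s_{ij}(z),h_{ij}(z)$; one then collapses them to the compact form \eqref{4.3} using the scattering-matrix symmetries \eqref{2.26} (the $z\mapsto z^{*}$ reduction $\mathbf{H}\leftrightarrow\mathbf{S}$) and \eqref{2.30}--\eqref{2.31} (the $z\mapsto q_{0}^{2}/z$ reduction), after which every entry becomes a polynomial in the reflection coefficients $\beta_{1}(z),\beta_{2}(z)$ and their reflected partners $\widehat{\beta}_{\hbar}(z)=\beta_{\hbar}(q_{0}^{2}/z)$, with the $z/q_{0}$ and $1/\widehat{\rho}$ prefactors tracing back to the off-diagonal entries of $\widehat{\mathbf{J}}_{4}$ and to $\widehat{\rho}(q_{0}^{2}/z)$. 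A good consistency check is that the resulting $\mathbf{K}(z)$ must respect the symmetries of Propositions~\ref{pro:2} and \ref{pro:3}, which roughly halve the number of entries one has to derive independently.

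The obstacle here is not conceptual but the sheer bookkeeping of that last reduction: carrying the auxiliary column $\widetilde{\gamma}$ through the change of basis is what generates the $\widehat{\beta}_{2}$-terms with their $z/q_{0}$ weights, and then recognizing that the rational-in-$s_{ij},h_{ij}$ answer equals \eqref{4.3} demands a careful, repeated use of \eqref{2.26} and \eqref{2.31}. I would build that simplification around the two symmetries $z\mapsto z^{*}$ and $z\mapsto q_{0}^{2}/z$ from the outset, so that the combinations $\beta_{\hbar}(z)$ and $\beta_{\hbar}(q_{0}^{2}/z)$ appear by design rather than having to be pattern-matched at the end.
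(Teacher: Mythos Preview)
Your approach is correct and is precisely the standard route the paper implicitly relies on: the paper states Lemma~\ref{lem:1} without proof, and your outline --- deriving $\mathbf{K}(z)=(\mathbf{D}^{-})^{-1}\mathbf{C}(z)\mathbf{D}^{+}$ from $\mathbf{\Psi}^{+}=\mathbf{\Psi}^{-}\mathbf{C}$, computing $\mathbf{C}$ via the scattering relations \eqref{2.17} and the decompositions of Corollary~\ref{cor:2}, then rewriting in terms of the reflection coefficients \eqref{2.32} through the symmetries \eqref{2.26} and \eqref{2.30}--\eqref{2.31} --- is exactly the computation that underlies the stated result. Your treatment of the branch-point behavior also matches the paper's remark following Lemma~\ref{lem:2} that $\mathbf{R}^{\pm}(z)$ remain finite at $z=\pm q_{0}$ despite the poles of the individual ingredients.
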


By Proposition~\ref{pro:15}, it is known that $\mathbf{K}(z)=O(1)$ as $z\rightarrow \infty$. Since $\widehat{\rho}(z)=O(z^{2})$ as $z\rightarrow 0$, it can be inferred that $\mathbf{K}(z)$ has a double pole as $z\rightarrow 0$. Moreover, $\mathbf{R}^{-}(z)\mathrm{e}^{\mathrm{i}\mathbf{\Delta}(z)}
\mathbf{K}(z)\mathrm{e}^{-\mathrm{i}\mathbf{\Delta}(z)}$ has a simple pole as $z\rightarrow 0$ and $\det\mathbf{R}(z)=1$ when $\operatorname{Im}z\neq0$.

\begin{lemma}\label{lem:2}
If $\mathbf{q}(x,t)-\mathbf{q}_{\pm} \in L^{1,2}(\mathbb{R}_{x}^{ \pm})$, the matrices $\mathbf{R}^{\pm}(z)$ and $\mathbf{K}(z)$ exhibit the following characteristics:
\begin{subequations}\label{4.4}
\begin{align}
\mathbf{R}^{\pm}(z)&=\mathbf{R}_{\infty}^{\pm}+O\left(\frac{1}{z}\right), \quad \mathbf{K}(z)=\mathbf{K}_{\infty}+O\left(\frac{1}{z}\right), \qquad \, z\rightarrow \infty, \\
\mathbf{R}^{\pm}(z)&=\frac{1}{z}\mathbf{R}_{0}^{\pm}+O(1), \quad
\mathbf{K}(z)=-\frac{q_{0}^{2}}{z^{2}}\mathbf{K}_{0}^{(-2)}+O(1), \quad z\rightarrow 0,
\end{align}
\end{subequations}
where
\begin{subequations}\label{4.5}
\everymath{\displaystyle}
\begin{align}
\mathbf{R}_{\infty}^{+}&=\begin{pmatrix}
    \mathrm{i} & 0 & 0 \\
    \mathbf{0_{2\times1}} & \frac{q_{0}\mathbf{q}_{+}^{\perp}}{\mathbf{q}_{+}^{\dagger}\mathbf{q}_{-}} & \frac{\mathbf{q}_{-}}{\mathrm{i}q_{0}}
  \end{pmatrix}, \quad
\mathbf{R}_{\infty}^{-}=\begin{pmatrix}
    1 & 0 & 0 \\
    \mathbf{0_{2\times1}} & \frac{\mathbf{q}_{-}^{\perp}}{q_{0}} & \frac{q_{0}\mathbf{q}_{+}}{\mathbf{q}_{-}^{\dagger}\mathbf{q}_{+}}
  \end{pmatrix},  \quad
\mathbf{K}_{\infty}=\begin{pmatrix}
    \mathrm{i} & 0 & 0 \\
    0 & 1+ \left| \frac{(\mathbf{q}_{-}^{\perp})^{\dagger}\mathbf{q}_{+}}{\mathbf{q}_{-}^{\dagger}\mathbf{q}_{+}} \right|^{2}  &
    \frac{\mathrm{i}(\mathbf{q}_{-}^{\perp})^{\dagger}\mathbf{q}_{+}}{\mathbf{q}_{-}^{\dagger} \mathbf{q}_{+}} \\
    0 & \frac{\mathbf{q}_{-}^{\dagger}\mathbf{q}_{+}^{\perp}}{\mathbf{q}_{+}^{\dagger} \mathbf{q}_{-}} & -\mathrm{i}
  \end{pmatrix}, \\
\mathbf{R}_{0}^{+}&=\begin{pmatrix}
   \mathbf{0_{1\times2}} & -\mathrm{i}q_{0} \\
   \mathbf{0_{2\times2}} & \mathbf{0_{2\times1}}
  \end{pmatrix},  \quad
\mathbf{R}_{0}^{-}=\begin{pmatrix}
   0 & \mathbf{0_{1\times2}} \\
   \mathbf{q}_{-} & \mathbf{0_{2\times2}}
  \end{pmatrix}, \quad
\mathbf{K}_{0}^{(-2)}=\begin{pmatrix}
    \mathbf{0_{2\times2}} & \mathbf{0_{1\times2}} \\
    \mathbf{0_{2\times1}} & \mathrm{i}
  \end{pmatrix}.
\end{align}
\end{subequations}
\end{lemma}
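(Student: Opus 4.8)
The plan is to derive both expansions in~\eqref{4.4} by substituting, column by column for $\mathbf{R}^{\pm}(z)$ and entry by entry for $\mathbf{K}(z)$, the asymptotics already established into the defining formulas~\eqref{4.1} and~\eqref{4.3}, and then collecting leading orders. The hypothesis $\mathbf{q}-\mathbf{q}_{\pm}\in L^{1,2}(\mathbb{R}_{x}^{\pm})$ enters precisely here: it lets one differentiate the Neumann series for $\nu_{\pm}$ and $\mu_{\pm}$ twice, which is what legitimizes the $O(1/z^{2})$ and $O(z^{2})$ remainders carried in Propositions~\ref{pro:11}--\ref{pro:13}. Since the matrices in~\eqref{4.5} have $\mathbf{q}_{\pm}^{\dagger}\mathbf{q}_{\mp}$ in denominators, this statement is the non-orthogonal one, so the inputs are Propositions~\ref{pro:11}--\ref{pro:13} together with the non-orthogonal asymptotics~\eqref{3.15}--\eqref{3.16} of Proposition~\ref{pro:15}, the explicit $\widehat{\rho}(z)=z^{2}/(z^{2}-q_{0}^{2})$, and the relation $\widehat{\beta}_{\hbar}(z)=\beta_{\hbar}(q_{0}^{2}/z)$.

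For $z\to\infty$ I would first treat $\mathbf{R}^{\pm}$. Each column of~\eqref{4.1} is a quotient of a (modified) Jost or auxiliary eigenfunction by a scattering coefficient, so by Proposition~\ref{pro:11} for $\nu_{\pm,1},\nu_{\pm,3}$, Proposition~\ref{pro:12} for $d,\widetilde{d}$, Proposition~\ref{pro:13} for $s_{11},s_{33},h_{11},h_{33}$, and $\widehat{\rho}\to1$, every quotient converges: e.g.\ $\nu_{+,1}/h_{11}\to(\mathrm{i},\mathbf{0})^{T}$, $-\widetilde{d}/s_{33}\to q_{0}\mathbf{q}_{+}^{\perp}/(\mathbf{q}_{+}^{\dagger}\mathbf{q}_{-})$ (using $s_{33}\to\mathbf{q}_{+}^{\dagger}\mathbf{q}_{-}/q_{0}^{2}$), $\nu_{-,3}/(\mathrm{i}\widehat{\rho})\to(0,\mathbf{q}_{-}/(\mathrm{i}q_{0}))^{T}$, and the three analogous limits for $\mathbf{R}^{-}$, which together give $\mathbf{R}_{\infty}^{\pm}$. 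For $\mathbf{K}_{\infty}$ one inserts the $z\to\infty$ part of~\eqref{3.15}, keeping in mind that the argument of $\widehat{\beta}_{\hbar}(z)=\beta_{\hbar}(q_{0}^{2}/z)$ tends to $0$, so the relevant estimate comes from the $z\to0$ line~\eqref{3.16}: $\widehat{\beta}_{2}(z)=\tfrac{q_{0}}{\mathrm{i}z}\big[(\mathbf{q}_{-}^{\perp})^{\dagger}\mathbf{q}_{+}/(\mathbf{q}_{-}^{\dagger}\mathbf{q}_{+})\big]+O(1/z^{2})$ while $\beta_{1},\widehat{\beta}_{1}=O(1/z)$ and $\beta_{2}=O(1/z^{2})$. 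Plugging into~\eqref{4.3}, every term carrying a factor $\beta_{1}$, $\beta_{2}$ or $\widehat{\beta}_{1}$ tends to $0$, so only the constants and the pure-$\widehat{\beta}_{2}$ terms in the $(2,2)$, $(2,3)$ and $(3,2)$ slots survive; after rewriting $\mathbf{q}_{-}^{\dagger}\mathbf{q}_{+}^{\perp}=-\mathbf{q}_{+}^{\dagger}\mathbf{q}_{-}^{\perp}$ (immediate from $\mathbf{v}^{\perp}=(v_{2},-v_{1})^{\dagger}$) these match $\mathbf{K}_{\infty}$, the $(3,3)$ entry being simply $1/(\mathrm{i}\widehat{\rho})\to-\mathrm{i}$.

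For $z\to0$ the issue is to isolate the singular part. In $\mathbf{R}^{\pm}$ only the columns divided by $\widehat{\rho}(z)=-z^{2}/q_{0}^{2}+O(z^{4})$ can be singular: $\nu_{-,3}(z)=(-z/q_{0},\mathbf{0})^{T}+O(z^{2})$ gives $\nu_{-,3}/(\mathrm{i}\widehat{\rho})=(-\mathrm{i}q_{0}/z,\mathbf{0})^{T}+O(1)$, i.e.\ the third column of $\mathbf{R}_{0}^{+}$, and likewise $\nu_{-,1}/(\mathrm{i}\widehat{\rho})=(0,\mathbf{q}_{-}/z)^{T}+O(1)$ gives the first column of $\mathbf{R}_{0}^{-}$; the other columns, in particular $\nu_{+,1}/h_{11}$ with $h_{11}\to\mathbf{q}_{-}^{\dagger}\mathbf{q}_{+}/q_{0}^{2}\neq0$ in the non-orthogonal case, are $O(1)$ or $O(z)$ and do not contribute to the residue. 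For $\mathbf{K}(z)$ one verifies entrywise, using $\beta_{1},\widehat{\beta}_{1},\widehat{\beta}_{2}=O(z)$, $\beta_{2}=O(z)$ and $\widehat{\rho}=O(z^{2})$ from~\eqref{3.16}, that every entry off the $(3,3)$ slot stays $O(1)$ — the candidate singular combinations $\tfrac{z\widehat{\beta}_{1}^{*}\widehat{\beta}_{2}^{*}}{q_{0}\widehat{\rho}}$, $\tfrac{z}{q_{0}\widehat{\rho}}\widehat{\beta}_{2}$, $\tfrac{\mathrm{i}z}{q_{0}\widehat{\rho}}\widehat{\beta}_{2}^{*}$, $\tfrac{z^{2}}{q_{0}^{2}\widehat{\rho}}|\widehat{\beta}_{2}|^{2}$ are all of the shape $O(z^{3})/O(z^{2})$ or $O(z^{2})/O(z^{2})$ — while the lone double pole is contributed by the $(3,3)$ entry $1/(\mathrm{i}\widehat{\rho}(z))=(z^{2}-q_{0}^{2})/(\mathrm{i}z^{2})=-\mathrm{i}+\mathrm{i}q_{0}^{2}/z^{2}$, which is the stated $q_{0}^{2}/z^{2}$ term proportional to $\mathbf{K}_{0}^{(-2)}$.

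I expect this last $z\to0$ bookkeeping to be the main obstacle: to be sure that $\mathbf{R}^{\pm}(z)$ contributes nothing to the $1/z$ coefficient beyond $\mathbf{R}_{0}^{\pm}$ and that $\mathbf{K}(z)$ has no simple pole at $z=0$, one must carry numerator and denominator one order past leading in each quotient and each entry, tracking the cancellations carefully. Once the orders from Propositions~\ref{pro:11}--\ref{pro:13} and~\ref{pro:15} are tabulated consistently and the elementary $\perp$-identity is used, the explicit constant matrices in~\eqref{4.5} fall out mechanically.
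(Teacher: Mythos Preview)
Your approach is correct and matches the paper's implicit argument: the paper states Lemma~\ref{lem:2} without a separate proof because the asymptotics follow by direct substitution of Propositions~\ref{pro:11}--\ref{pro:13} and~\ref{pro:15} into the definitions~\eqref{4.1} and~\eqref{4.3}, exactly as you outline. Your column-by-column and entry-by-entry bookkeeping, including the use of~\eqref{3.16} to read off the leading behavior of $\widehat{\beta}_{2}$ as $z\to\infty$ and the identification of the lone double pole in the $(3,3)$ entry via $1/(\mathrm{i}\widehat{\rho})$, is the intended computation.
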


The definition of $\mathbf{R}^{\pm}(z)$ is different from that in the parallel case~\cite{23}. Although the Jost eigenfunctions have poles at the branch point $z=\pm q_{0}$, $\mathbf{R}^{\pm}(z)$ are finite within these limits. It can be found that
\begin{equation}\label{4.6}
\begin{split}
\lim_{z \rightarrow 0} \left[ \mathbf{R}^{+}(z)-\mathbf{R}^{-}(z) \right]
=\frac{1}{z} \left( \mathbf{R}_{0}^{+}-\mathbf{R}_{0}^{-} \right), \quad
\lim_{z \rightarrow \infty} \left[ \mathbf{R}^{+}(z)-\mathbf{R}^{-}(z) \right]
=\mathbf{R}_{\infty}^{+}-\mathbf{R}_{\infty}^{-}.
\end{split}
\end{equation}
Therefore, the non-parallel NZBCs situation is different from the parallel NZBCs situation~\cite{23}, namely $\mathbf{R}_{\infty}^{+}\neq\mathbf{R}_{\infty}^{-}$ (Pay attention to the parallel situation~\cite{23} where $\mathbf{R}_{\infty}^{+}=\mathbf{R}_{\infty}^{-}$). Subsequently, two different methods will be implemented to address the challenge posed by the varying asymptotic behaviors of $\mathbf{R}(z)$ as $z$ approaches infinity.

\subsubsection{First method}
\label{s:First method}

\begin{lemma}\label{lem:3}
The meromorphic matrices defined satisfy the following residue conditions:
\begin{subequations}\label{4.7}
\begin{align}
\underset{z=z_{m}}{\operatorname{Res}}\, \mathbf{R}^{+}(z)
&=C_{m} \left[ \nu_{-,3}(z_{m}), \mathbf{0},\mathbf{0} \right]
=\mathbf{R}^{+}(z_{m})\begin{pmatrix}
    \mathbf{0_{2\times1}} & \mathbf{0_{2\times2}} \\
    \mathrm{i}C_{m}\widehat{\rho}(z_{m}) & \mathbf{0_{1\times2}}
  \end{pmatrix}, \\
\underset{z=z_{m}^{*}}{\operatorname{Res}}\, \mathbf{R}^{-}(z)
&=\bar{C}_{m} \left[ \mathbf{0},\mathbf{0},\nu_{-,1}(z_{m}^{*}) \right]
=\mathbf{R}^{-}(z_{m}^{*})\begin{pmatrix}
    \mathbf{0_{1\times2}} & \mathrm{i}\bar{C}_{m}\widehat{\rho}(z_{m}^{*}) \\
    \mathbf{0_{2\times2}} & \mathbf{0_{2\times1}}
  \end{pmatrix}, \\
\underset{z=\theta_{m}}{\operatorname{Res}}\, \mathbf{R}^{+}(z)
&=F_{m} \left[ \frac{\widetilde{d}(\theta_{m})}{s_{33}(\theta_{m})},\mathbf{0},\mathbf{0} \right]
=\mathbf{R}^{+}(\theta_{m})\begin{pmatrix}
    0 & \mathbf{0_{1\times2}} \\
    -F_{m} & \mathbf{0_{1\times2}}\\
    0 & \mathbf{0_{1\times2}}
  \end{pmatrix}, \\
\underset{z=\theta_{m}^{*}}{\operatorname{Res}}\, \mathbf{R}^{-}(z)
&=\bar{F}_{m} \left[ \mathbf{0}, \nu_{-,1}(\theta_{m}^{*}), \mathbf{0} \right]
=\mathbf{R}^{-}(\theta_{m}^{*})\begin{pmatrix}
    0 & \mathrm{i}\bar{F}_{m} \widehat{\rho}(\theta_{m}^{*}) & 0 \\
    \mathbf{0_{2\times1}} & \mathbf{0_{2\times1}} & \mathbf{0_{2\times1}}
  \end{pmatrix}, \\
\underset{z=q_{0}^{2}/\theta_{m}}{\operatorname{Res}} \mathbf{R}^{-}(z)
&=\check{F}_{m}\left[ \mathbf{0},\mathbf{0}, \frac{d(q_{0}^{2}/\theta_{m})}{s_{11}(q_{0}^{2}/\theta_{m})} \right]
=\mathbf{R}^{-}(q_{0}^{2}/\theta_{m})\begin{pmatrix}
    \mathbf{0_{1\times2}} & 0 \\
    \mathbf{0_{1\times2}} & \check{F}_{m} \\
    \mathbf{0_{1\times2}} & 0
  \end{pmatrix},  \\
\underset{z=q_{0}^{2}/\theta_{m}^{*}}{\operatorname{Res}} \mathbf{R}^{+}(z)
&=-\hat{F}_{m}\left[ \mathbf{0}, \nu_{-,3}(q_{0}^{2}/\theta_{m}^{*}), \mathbf{0} \right]=\mathbf{R}^{+}(q_{0}^{2}/\theta_{m}^{*}) \begin{pmatrix}
    \mathbf{0_{2\times1}} & \mathbf{0_{2\times1}} & \mathbf{0_{2\times1}} \\
    0 & -\mathrm{i}\hat{F}_{m}\widehat{\rho}(q_{0}^{2}/\theta_{m}^{*}) & 0
  \end{pmatrix},
\end{align}
\end{subequations}
with norming constants
\begin{subequations}\label{4.8}
\begin{align}
C_{m}&=\frac{c_{m}\mathrm{e}^{-2\mathrm{i}\delta_{1}(z_{m})}}{h_{11}^{\prime}(z_{m})}, \quad
\bar{C}_{m}=\frac{\bar{c}_{m}\mathrm{e}^{2\mathrm{i}\delta_{1}(z_{m}^{*})}}{h_{33}^{\prime}(z_{m}^{*})}, \quad
\check{F}_{m}=\frac{\check{f}_{m}s_{11}(q_{0}^{2}/\theta_{m})}{h_{33}^{\prime}(q_{0}^{2}/\theta_{m})}
\mathrm{e}^{\mathrm{i}[\delta_{2}(\theta_{m})-\delta_{1}(\theta_{m})]}, \\
F_{m}&=\frac{f_{m}\mathrm{e}^{\mathrm{i}[\delta_{2}(\theta_{m})-\delta_{1}(\theta_{m})]}}{h_{11}^{\prime}(\theta_{m})},
\quad  \bar{F}_{m}=\frac{\bar{f}_{m}\mathrm{e}^{\mathrm{i}[\delta_{1}(\theta_{m}^{*})
-\delta_{2}(\theta_{m}^{*})]}}{s_{11}^{\prime}(\theta_{m}^{*})},  \quad
\hat{F}_{m}=\frac{\hat{f}_{m}\mathrm{e}^{\mathrm{i}[\delta_{1}(\theta_{m}^{*})
-\delta_{2}(\theta_{m}^{*})]}}{s_{33}^{\prime}(q_{0}^{2}/\theta_{m}^{*})},
\end{align}
\end{subequations}
where $m=1,\ldots,M_{1}$ for $z_{m}$, and $m=1,\ldots,M_{2}$ for $\theta_{m}$.
\end{lemma}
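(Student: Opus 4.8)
\emph{Proof proposal.} The plan is to read the pole locations of $\mathbf{R}^{\pm}(z)$ off the representation~\eqref{4.1}, to decide for each pole which \emph{single} column of $\mathbf{R}^{\pm}$ is singular there, and then to compute the residue of that column by feeding the appropriate proportionality relation of Propositions~\ref{pro:7}--\ref{pro:8} into the elementary identity $\operatorname{Res}_{z=z_{0}}\,g(z)/f(z)=g(z_{0})/f'(z_{0})$, valid whenever $z_{0}$ is a simple zero of $f$. Throughout we assume, consistently with the hypotheses of Propositions~\ref{pro:9}--\ref{pro:10}, that the zeros of $h_{11}$---hence also of $s_{11}$, $s_{33}$, $h_{33}$, which are produced from them through the symmetries~\eqref{2.26} and~\eqref{2.31}---are simple, and we recall from Theorem~\ref{thm:1} that the relevant $\nu_{\pm,j}$ are analytic in the corresponding half-plane.

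First I would catalogue the poles. In $\mathbb{D}^{+}$ the column $\nu_{+,1}/h_{11}$ has simple poles precisely at the zeros of $h_{11}$, i.e.\ at the on-circle points $z_{m}$ and the off-circle points $\theta_{m}$ (with $|\theta_{m}|<q_{0}$ by Proposition~\ref{pro:8}); the column $-\widetilde{d}/s_{33}$ has poles at the zeros of $s_{33}$ lying in $\mathbb{D}^{+}$, which by Proposition~\ref{pro:5} are the points $q_{0}^{2}/\theta_{m}^{*}$; and the column $\nu_{-,3}/(\mathrm{i}\widehat{\rho})$ is analytic in $\mathbb{D}^{+}$ because $1/\widehat{\rho}$ is singular only at $z=0\notin\mathbb{D}^{+}$. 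The key subtlety is that, although $s_{33}(z_{m})=0$ as well (Proposition~\ref{pro:5}, since $q_{0}^{2}/z_{m}^{*}=z_{m}$), Proposition~\ref{pro:7} gives $\widetilde{\gamma}(z_{m})=\mathbf{0}$, hence $\widetilde{d}(z_{m})=\mathbf{0}$, so this would-be pole of the second column cancels and contributes nothing; symmetrically $\gamma(z_{m}^{*})=\mathbf{0}$ keeps the second column of $\mathbf{R}^{-}$ regular at $z_{m}^{*}$. Running the same bookkeeping for $\mathbf{R}^{-}$ in $\mathbb{D}^{-}$ yields a genuine pole of $d/s_{11}$ at $\theta_{m}^{*}$ (where $d(\theta_{m}^{*})\neq\mathbf{0}$ by Proposition~\ref{pro:6}) and genuine poles of $\nu_{+,3}/h_{33}$ at $z_{m}^{*}$ and $q_{0}^{2}/\theta_{m}$. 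This accounts for exactly the six families listed in~\eqref{4.7}.

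Next, at each of these points I would translate the relevant identity from Propositions~\ref{pro:7}--\ref{pro:8} from the Jost columns $\psi_{\pm,j}$ and auxiliary functions $\gamma,\widetilde{\gamma}$ to the modified quantities $\nu_{\pm,j}=\psi_{\pm,j}\,\mathrm{e}^{-\mathrm{i}(\mathbf{\Delta})_{jj}}$, $d=\gamma\,\mathrm{e}^{-\mathrm{i}\delta_{2}}$, $\widetilde{d}=\widetilde{\gamma}\,\mathrm{e}^{-\mathrm{i}\delta_{2}}$, which simply reinstates the diagonal exponential factors $\mathrm{e}^{\mp 2\mathrm{i}\delta_{1}}$ or $\mathrm{e}^{\mathrm{i}(\delta_{2}-\delta_{1})}$ that appear in~\eqref{4.8}; for the reflected eigenvalues $q_{0}^{2}/\theta_{m}$ and $q_{0}^{2}/\theta_{m}^{*}$ I would additionally invoke $\delta_{1}(q_{0}^{2}/z)=-\delta_{1}(z)$ and $\delta_{2}(q_{0}^{2}/z)=\delta_{2}(z)$ to collapse the exponents to the stated form, and at $\theta_{m}^{*}$ and $q_{0}^{2}/\theta_{m}$ I would pull out the scalar $s_{11}$ (evaluated at the point) so that the residue is expressed against the column $d/s_{11}=\mathbf{R}_{2}^{-}$. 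Applying $g(z_{0})/f'(z_{0})$ to the singular column then produces the residue vectors in~\eqref{4.7} with the norming constants~\eqref{4.8}. Finally I would recast each such vector into the right-multiplication form shown in~\eqref{4.7} via the column identities $\nu_{-,3}=\mathrm{i}\widehat{\rho}\,\mathbf{R}_{3}^{+}$, $\nu_{-,1}=\mathrm{i}\widehat{\rho}\,\mathbf{R}_{1}^{-}$, $\widetilde{d}/s_{33}=-\mathbf{R}_{2}^{+}$ and $d/s_{11}=\mathbf{R}_{2}^{-}$, evaluated at the relevant point, which turns them into the constant nilpotent matrices on the right-hand sides.

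The main obstacle is the cataloguing step: one must pair each column of $\mathbf{R}^{\pm}$ with the correct vanishing scalar and, above all, confirm that the ``spurious'' singularities---the zero of $s_{33}$ at $z_{m}$, the zero of $s_{11}$ at $z_{m}^{*}$, and the fact that the zeros of $h_{33}$ in $\mathbb{D}^{-}$ fall exactly at the already-listed points $z_{m}^{*}$ and $q_{0}^{2}/\theta_{m}$---are either genuinely removable (because $d$ or $\widetilde{d}$ vanishes there) or already accounted for, so that~\eqref{4.7} is exhaustive. Once the pole structure is pinned down, the residue evaluations and the tracking of the exponential prefactors through the involutions $z\mapsto z^{*}$ and $z\mapsto q_{0}^{2}/z$ are entirely routine.
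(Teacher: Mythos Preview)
Your proposal is correct and is precisely the argument the paper has in mind: the paper states Lemma~\ref{lem:3} without an explicit proof, but its ingredients are exactly the ones you assemble---the column structure of~\eqref{4.1}, the simple-zero residue formula, the proportionality relations of Propositions~\ref{pro:7}--\ref{pro:8}, and the removable singularities at $z_{m}$, $z_{m}^{*}$ in the second columns coming from $\widetilde{\gamma}(z_{m})=\gamma(z_{m}^{*})=\mathbf{0}$. Your handling of the exponential bookkeeping via $\delta_{1}(q_{0}^{2}/z)=-\delta_{1}(z)$, $\delta_{2}(q_{0}^{2}/z)=\delta_{2}(z)$ and the final rewriting through $\nu_{-,3}=\mathrm{i}\widehat{\rho}\,\mathbf{R}_{3}^{+}$ etc.\ are exactly what is needed to land on~\eqref{4.7}--\eqref{4.8}.
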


\begin{lemma}\label{lem:4}
The norming constants $C_{m}$, $\bar{C}_{m}$, $F_{m}$, $\bar{F}_{m}$, $\hat{F}_{m}$ and $\check{F}_{m}$ adhere to the subsequent symmetry properties:
\begin{equation}\label{4.9}
\begin{split}
C_{m}^{*}=\bar{C}_{m}=\mathrm{e}^{-2\mathrm{i}\arg(z_{m})}C_{m}, \quad
\check{F}_{m}=\frac{\mathrm{i}q_{0}^{3}}{\theta_{m}^{3}}F_{m}, \quad
\bar{F}_{m}=-\frac{F_{m}^{*}}{\widehat{\rho}(\theta_{m}^{*})}, \quad
\hat{F}_{m}=\frac{\mathrm{i}q_{0}}{\theta_{m}^{*}\widehat{\rho}(\theta_{m}^{*})}F_{m}^{*}.
\end{split}
\end{equation}
\end{lemma}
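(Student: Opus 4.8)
The plan is to obtain each identity in~\eqref{4.9} by substituting the explicit definitions~\eqref{4.8} of the uppercase norming constants into three families of relations that are already available: the symmetries~\eqref{3.2}--\eqref{3.3} of the lowercase norming constants $c_m,\bar{c}_m,f_m,\bar{f}_m,\hat{f}_m,\check{f}_m$ (Propositions~\ref{pro:9}--\ref{pro:10}); the scattering-coefficient symmetries~\eqref{2.26} and~\eqref{2.31}; and the behaviour of the phases $\delta_1,\delta_2$ under the two involutions $z\mapsto z^{*}$ and $z\mapsto q_0^2/z$. From~\eqref{2.8} one has $k(q_0^2/z)=k(z)$ and $\lambda(q_0^2/z)=-\lambda(z)$, hence by~\eqref{2.13} that $\delta_1(q_0^2/z)=-\delta_1(z)$ and $\delta_2(q_0^2/z)=\delta_2(z)$, while $k,\lambda$ being real rational functions of $z$ gives $\delta_j(z^{*})=[\delta_j(z)]^{*}$ for real $x,t$. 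Since $h_{11},h_{33},s_{11},s_{33}$ are analytic in the half-planes of Theorem~\ref{thm:2}, I may differentiate the identities $h_{11}(z)=s_{11}^{*}(z^{*})$, $h_{33}(z)=s_{33}^{*}(z^{*})$ (Schwarz reflection of~\eqref{2.26}) and $s_{33}(z)=s_{11}(q_0^2/z)$, $h_{33}(z)=h_{11}(q_0^2/z)$ (from~\eqref{2.31} and the stated analyticity domains); evaluating at the eigenvalues and their images produces the auxiliary relations $[h_{11}'(z)]^{*}=s_{11}'(z^{*})$, $h_{11}'(z_m)=-(q_0^2/z_m^2)h_{33}'(z_m^{*})$, $h_{33}'(q_0^2/\theta_m)=-(\theta_m^2/q_0^2)h_{11}'(\theta_m)$ and $s_{33}'(q_0^2/\theta_m^{*})=-((\theta_m^{*})^2/q_0^2)s_{11}'(\theta_m^{*})$, together with $s_{11}(q_0^2/\theta_m)=s_{33}(\theta_m)$.

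For the circle eigenvalue $z_m$ I would first use that $|z_m|=q_0$ forces $z_m^{*}=q_0^2/z_m$, so that $\delta_1(z_m^{*})=-\delta_1(z_m)$ and $q_0^2/z_m^2=\mathrm{e}^{-2\mathrm{i}\arg(z_m)}$. Inserting $\bar{c}_m=-c_m$, the phase identity $\mathrm{e}^{2\mathrm{i}\delta_1(z_m^{*})}=\mathrm{e}^{-2\mathrm{i}\delta_1(z_m)}$ and $h_{33}'(z_m^{*})=-(z_m^2/q_0^2)h_{11}'(z_m)$ into the formula~\eqref{4.8} for $\bar{C}_m$ turns it into $(q_0^2/z_m^2)C_m=\mathrm{e}^{-2\mathrm{i}\arg(z_m)}C_m$. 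For $C_m^{*}=\bar{C}_m$ I would instead use the second relation in~\eqref{3.2}, $c_m^{*}=[s_{11}'(z_m^{*})/h_{33}'(z_m^{*})]\bar{c}_m$, together with $[h_{11}'(z_m)]^{*}=s_{11}'(z_m^{*})$ and $[\delta_1(z_m)]^{*}=\delta_1(z_m^{*})$; the factor $s_{11}'(z_m^{*})$ then cancels and exactly $\bar{C}_m$ remains.

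The off-circle eigenvalues $\theta_m$ are treated in the same spirit. Using $\bar{f}_m=-f_m^{*}/\widehat{\rho}(\theta_m^{*})$, $[h_{11}'(\theta_m)]^{*}=s_{11}'(\theta_m^{*})$ and $[\delta_j(\theta_m)]^{*}=\delta_j(\theta_m^{*})$ in~\eqref{4.8} gives $\bar{F}_m=-F_m^{*}/\widehat{\rho}(\theta_m^{*})$, the exponential factors of $\bar{F}_m$ and $F_m^{*}$ already coinciding. Substituting $\check{f}_m=q_0 f_m/[\mathrm{i}\theta_m s_{33}(\theta_m)]$, $s_{11}(q_0^2/\theta_m)=s_{33}(\theta_m)$ and $h_{33}'(q_0^2/\theta_m)=-(\theta_m^2/q_0^2)h_{11}'(\theta_m)$ into the formula for $\check{F}_m$ — whose phase factor is already written as $\delta_2(\theta_m)-\delta_1(\theta_m)$, the same as in $F_m$ — collapses $\check{F}_m$ to $-q_0^3 F_m/(\mathrm{i}\theta_m^3)=(\mathrm{i}q_0^3/\theta_m^3)F_m$. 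Finally, inserting $\hat{f}_m=-\mathrm{i}\theta_m^{*}f_m^{*}/[q_0\widehat{\rho}(\theta_m^{*})]$, $s_{33}'(q_0^2/\theta_m^{*})=-((\theta_m^{*})^2/q_0^2)s_{11}'(\theta_m^{*})$ and the expression just found for $F_m^{*}$ into~\eqref{4.8} yields $\hat{F}_m=\mathrm{i}q_0 F_m^{*}/[\theta_m^{*}\widehat{\rho}(\theta_m^{*})]$.

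The computation is essentially bookkeeping, so I expect the only genuine difficulty to be applying the differentiated reflection identities at the right points with the correct chain-rule factors — in particular keeping $h_{11}'(z_m)$, $h_{33}'(z_m^{*})$, $s_{11}'(\theta_m^{*})$ and $s_{33}'(q_0^2/\theta_m^{*})$ distinct and correctly interlinked by the $z\mapsto q_0^2/z$ symmetry, and matching the two half-plane versions of the $z\mapsto q_0^2/z$ identities to the half-plane containing each eigenvalue or its image. Once the phase identities $\delta_1(q_0^2/z)=-\delta_1(z)$, $\delta_2(q_0^2/z)=\delta_2(z)$ and $\delta_j(z^{*})=[\delta_j(z)]^{*}$ are in place, every exponential in~\eqref{4.8} either cancels outright or converts cleanly, and~\eqref{4.9} follows.
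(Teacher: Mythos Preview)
Your proposal is correct and follows the natural route: combine the definitions~\eqref{4.8} with the lowercase-constant symmetries~\eqref{3.2}--\eqref{3.3}, the scattering-coefficient symmetries~\eqref{2.26} and~\eqref{2.31} (differentiated and evaluated at the eigenvalues), and the phase identities $\delta_1(q_0^2/z)=-\delta_1(z)$, $\delta_2(q_0^2/z)=\delta_2(z)$, $\delta_j(z^{*})=[\delta_j(z)]^{*}$. The paper states Lemma~\ref{lem:4} without an explicit proof, so your argument is exactly what is implicitly intended, and your bookkeeping of the chain-rule factors and half-plane assignments is accurate.
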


Normalize the RH problem by eliminating the dominant asymptotic terms and all contributions from the discrete spectrum associated with the poles. Initially, under the assumption of an absence of a discrete spectrum, we subtract $\mathbf{R}_{\infty}^{+}$ and $\mathbf{R}_{0}^{+}/z$ from both sides of the jump condition given by Eq.~\eqref{4.2}. By expressing  $\mathbf{K}(z)=\mathbf{I}+\mathbf{L}(z)$, we proceed as follows:
\begin{equation}\label{4.10}
\begin{split}
\mathbf{R}^{+}(z)-\mathbf{R}_{\infty}^{+}-\frac{1}{z}\mathbf{R}_{0}^{+}
&=\mathbf{R}^{-}(z)-\mathbf{R}_{\infty}^{-}-\frac{1}{z}\mathbf{R}_{0}^{-}
+\mathbf{R}_{\infty}^{\nabla}+\frac{1}{z}\mathbf{R}_{0}^{\nabla}
+\mathbf{R}^{-}(z)\mathrm{e}^{\mathrm{i}\mathbf{\Delta}(z)}\mathbf{L}(z)\mathrm{e}^{-\mathrm{i}\mathbf{\Delta}(z)},
\end{split}
\end{equation}
where $\mathbf{R}_{\infty}^{\nabla}=\mathbf{R}_{\infty}^{-}-\mathbf{R}_{\infty}^{+}$ and $\mathbf{R}_{0}^{\nabla}=\mathbf{R}_{0}^{-}-\mathbf{R}_{0}^{+}$. Utilize the Plemelj's formula to establish the ensuing theorem:

\begin{theorem}\label{thm:4}
For all $z\notin \mathbb{R}$, the solution to the RH problem can be expressed through the subsequent formula:
\begin{equation}\label{4.11}
\begin{split}
\mathbf{R}(z)&=\mathbf{R}_{\infty}(z)+\frac{1}{z}\mathbf{R}_{0}(z)+\frac{1}{2\pi\mathrm{i}}\int_{\mathbb{R}} \left[ \mathbf{R}_{\infty}^{\nabla}+\frac{1}{\zeta} \mathbf{R}_{0}^{\nabla}
+\mathbf{R}^{-}(\zeta)\mathrm{e}^{\mathrm{i}\mathbf{\Delta}(\zeta)}\mathbf{L}(\zeta) \mathrm{e}^{-\mathrm{i}\mathbf{\Delta}(\zeta)} \right] \frac{\mathrm{d}\zeta}{\zeta-z} \\
&+\sum_{m=1}^{M_{1}}\left[ \frac{\operatorname{Res}_{z=z_{m}}\mathbf{R}^{+}(z)}{z-z_{m}} +\frac{\operatorname{Res}_{z=z_{m}^{*}}\mathbf{R}^{-}(z)}{z-z_{m}^{*}} \right]
+\sum_{m=1}^{M_{2}}\left[ \frac{\operatorname{Res}_{z=\theta_{m}}\mathbf{R}^{+}(z)}{z-\theta_{m}} +\frac{\operatorname{Res}_{z=\theta_{m}^{*}}\mathbf{R}^{-}(z)}{z-\theta_{m}^{*}} \right] \\
&+\sum_{m=1}^{M_{2}} \left[ \frac{\operatorname{Res}_{z=q_{0}^{2}/\theta_{m}^{*}}\mathbf{R}^{+}(z)}{z-(q_{0}^{2}/\theta_{m}^{*})} +\frac{\operatorname{Res}_{z=q_{0}^{2}/\theta_{m}}\mathbf{R}^{-}(z)}{z-(q_{0}^{2}/\theta_{m})} \right],
\end{split}
\end{equation}
where $\mathbf{L}(z)=\mathbf{K}(z)-\mathbf{I}$, $\mathbf{R}_{\infty}^{\nabla}=\mathbf{R}_{\infty}^{-}-\mathbf{R}_{\infty}^{+}$, $\mathbf{R}_{0}^{\nabla}=\mathbf{R}_{0}^{-}-\mathbf{R}_{0}^{+}$, $\mathbf{R}_{\infty}(z)=\mathbf{R}_{\infty}^{\pm}$ and  $\mathbf{R}_{0}(z)=\mathbf{R}_{0}^{\pm}$ for  $z\in \mathbb{D}^{\pm}$. Further, the pertinent columns associated with the residue conditions are specified as follows:
\begin{equation}\label{4.12}
\begin{split}
\mathbf{R}_{1}^{-}(z)&=\begin{pmatrix}
    1  \\
    \mathbf{q}_{-}/z
  \end{pmatrix}
+\frac{1}{2\pi\mathrm{i}}\int_{\mathbb{R}} \left[ \mathbf{R}_{\infty}^{\nabla}+\frac{1}{\zeta}\mathbf{R}_{0}^{\nabla}
+\mathbf{R}^{-}(\zeta)\mathrm{e}^{\mathrm{i}\mathbf{\Delta}(\zeta)}\mathbf{L}(\zeta) \mathrm{e}^{-\mathrm{i}\mathbf{\Delta}(\zeta)} \right]_{1} \frac{\mathrm{d}\zeta}{\zeta-z} \\
&+\sum_{m=1}^{M_{1}} \left[ \frac{\mathrm{i}C_{m}\widehat{\rho}(z_{m})\mathbf{R}_{3}^{+}(z_{m})}{z-z_{m}} \right]
-\sum_{m=1}^{M_{2}} \left[ \frac{F_{m}\mathbf{R}_{2}^{+}(\theta_{m})}{z-\theta_{m}} \right], \quad
z=\theta_{n}^{*}, z_{n}^{*}.
\end{split}
\end{equation}
\begin{equation}\label{4.13}
\begin{split}
\mathbf{R}_{3}^{+}(z)&=\begin{pmatrix}
    -\mathrm{i}q_{0}/z  \\
    -\mathrm{i}\mathbf{q}_{-}/q_{0}
  \end{pmatrix}
+\frac{1}{2\pi\mathrm{i}}\int_{\mathbb{R}} \left[ \mathbf{R}_{\infty}^{\nabla} +\frac{1}{\zeta}\mathbf{R}_{0}^{\nabla}
+\mathbf{R}^{-}(\zeta)\mathrm{e}^{\mathrm{i}\mathbf{\Delta}(\zeta)}\mathbf{L}(\zeta) \mathrm{e}^{-\mathrm{i}\mathbf{\Delta}(\zeta)} \right]_{3} \frac{\mathrm{d}\zeta}{\zeta-z} \\
&+\sum_{m=1}^{M_{1}} \left[ \frac{\mathrm{i}\bar{C}_{m}\widehat{\rho}(z_{m}^{*})\mathbf{R}_{1}^{-}(z_{m}^{*})}{z-z_{m}^{*}} \right]
+\sum_{m=1}^{M_{2}} \left[ \frac{\check{F}_{m}\mathbf{R}_{2}^{-}(q_{0}^{2}/\theta_{m})}{z-(q_{0}^{2}/\theta_{m})}
 \right], \quad z=z_{n},q_{0}^{2}/\theta_{n}^{*}.
\end{split}
\end{equation}
\begin{equation}\label{4.14}
\begin{split}
\mathbf{R}_{2}^{+}(\theta_{n})&=\begin{pmatrix}
    0  \\
    q_{0}\mathbf{q}_{+}^{\perp}/\mathbf{q}_{+}^{\dagger}\mathbf{q}_{-}
  \end{pmatrix}
+\frac{1}{2\pi\mathrm{i}}\int_{\mathbb{R}} \left[ \mathbf{R}_{\infty}^{\nabla} +\frac{1}{\zeta}\mathbf{R}_{0}^{\nabla}+\mathbf{R}^{-}(\zeta)\mathrm{e}^{\mathrm{i}\mathbf{\Delta}(\zeta)}\mathbf{L}(\zeta) \mathrm{e}^{-\mathrm{i}\mathbf{\Delta}(\zeta)} \right]_{2} \frac{\mathrm{d}\zeta}{\zeta-\theta_{n}} \\
&+\sum_{m=1}^{M_{2}} \left[ \frac{\mathrm{i}\bar{F}_{m}\widehat{\rho}(\theta_{m}^{*})
\mathbf{R}_{1}^{-}(\theta_{m}^{*})}{\theta_{n}-\theta_{m}^{*}} \right]
-\sum_{m=1}^{M_{2}} \left[ \frac{\mathrm{i}\hat{F}_{m}\widehat{\rho}(q_{0}^{2}/\theta_{m}^{*})
\mathbf{R}_{3}^{+}(q_{0}^{2}/\theta_{m}^{*})}{\theta_{n}-(q_{0}^{2}/\theta_{m}^{*})} \right].
\end{split}
\end{equation}
\begin{equation}\label{4.15}
\begin{split}
\mathbf{R}_{2}^{-}\left(\frac{q_{0}^{2}}{\theta_{n}}\right)&=\begin{pmatrix}
    0  \\
    \mathbf{q}_{-}^{\perp}/q_{0}
  \end{pmatrix}
+\frac{1}{2\pi\mathrm{i}}\int_{\mathbb{R}} \left[ \mathbf{R}_{\infty}^{\nabla} +\frac{1}{\zeta}\mathbf{R}_{0}^{\nabla}
+\mathbf{R}^{-}(\zeta)\mathrm{e}^{\mathrm{i}\mathbf{\Delta}(\zeta)}\mathbf{L}(\zeta) \mathrm{e}^{-\mathrm{i}\mathbf{\Delta}(\zeta)} \right]_{2} \frac{\mathrm{d}\zeta}{\zeta-(q_{0}^{2}/\theta_{n})} \\
&+\sum_{m=1}^{M_{2}} \left[ \frac{\mathrm{i}\bar{F}_{m}\widehat{\rho}(\theta_{m}^{*})
\mathbf{R}_{1}^{-}(\theta_{m}^{*})}{(q_{0}^{2}/\theta_{n})-\theta_{m}^{*}} \right]
-\sum_{m=1}^{M_{2}} \left[ \frac{\mathrm{i}\hat{F}_{m}\widehat{\rho}(q_{0}^{2}/\theta_{m}^{*})
\mathbf{R}_{3}^{+}(q_{0}^{2}/\theta_{m}^{*})}{(q_{0}^{2}/\theta_{n})-(q_{0}^{2}/\theta_{m}^{*})} \right].
\end{split}
\end{equation}
\end{theorem}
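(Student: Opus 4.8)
The plan is to convert the jump relation \eqref{4.2} into a scalar Cauchy problem by first stripping off the (non-decaying) asymptotic background and the poles carried by the discrete spectrum, and then inverting the resulting Cauchy operator with Plemelj's formula. I would first introduce the sectionally analytic matrix $\mathbf{N}(z)=\mathbf{R}(z)-\mathbf{R}_{\infty}(z)-\frac{1}{z}\mathbf{R}_{0}(z)-\mathbf{P}(z)$, where $\mathbf{R}_{\infty}(z)=\mathbf{R}_{\infty}^{\pm}$ and $\mathbf{R}_{0}(z)=\mathbf{R}_{0}^{\pm}$ for $z\in\mathbb{D}^{\pm}$ are the matrices of Lemma~\ref{lem:2}, and $\mathbf{P}(z)$ is the sum of the Cauchy kernels $\operatorname{Res}_{z=z_{0}}\mathbf{R}^{\pm}(z)/(z-z_{0})$ over the six families of discrete points appearing in Lemma~\ref{lem:3}. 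By Lemma~\ref{lem:3} subtracting $\mathbf{P}(z)$ removes every simple pole of $\mathbf{R}^{\pm}(z)$ off the real axis; by Lemma~\ref{lem:2} and the observation following it (that $\mathbf{R}^{\pm}$ stay finite at $z=\pm q_{0}$ even though the Jost functions do not), the terms $\mathbf{R}_{\infty}(z)$ and $z^{-1}\mathbf{R}_{0}(z)$ cancel the only obstructions to $\mathbf{N}(z)$ being bounded at $z=0$ and $O(1/z)$ as $z\to\infty$. Hence $\mathbf{N}(z)$ is analytic on $\mathbb{C}\setminus\mathbb{R}$ and $\mathbf{N}(z)\to\mathbf{0}$ as $z\to\infty$.

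Next I would compute the jump of $\mathbf{N}$ across $\mathbb{R}$. Writing $\mathbf{K}=\mathbf{I}+\mathbf{L}$ and using \eqref{4.2} (this is \eqref{4.10} before the discrete spectrum is reinstated), and noting that $\mathbf{P}(z)$ has no jump across the axis since its poles lie off $\mathbb{R}$, one gets $\mathbf{N}^{+}(z)-\mathbf{N}^{-}(z)=\mathbf{R}_{\infty}^{\nabla}+\frac{1}{z}\mathbf{R}_{0}^{\nabla}+\mathbf{R}^{-}(z)\mathrm{e}^{\mathrm{i}\mathbf{\Delta}(z)}\mathbf{L}(z)\mathrm{e}^{-\mathrm{i}\mathbf{\Delta}(z)}$ on $\mathbb{R}\setminus\{0,\pm q_{0}\}$, with $\mathbf{R}_{\infty}^{\nabla}=\mathbf{R}_{\infty}^{-}-\mathbf{R}_{\infty}^{+}$ and $\mathbf{R}_{0}^{\nabla}=\mathbf{R}_{0}^{-}-\mathbf{R}_{0}^{+}$---precisely the bracket in \eqref{4.11}. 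Applying Plemelj's formula to $\mathbf{N}$ with this jump and $\mathbf{N}(\infty)=\mathbf{0}$, then undoing the three subtractions, produces \eqref{4.11}.

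To pass to the column form I would use Lemma~\ref{lem:3} together with the identifications $\mathbf{R}_{1}^{+}=\nu_{+,1}/h_{11}$, $\mathbf{R}_{2}^{+}=-\widetilde{d}/s_{33}$, $\mathbf{R}_{3}^{+}=\nu_{-,3}/(\mathrm{i}\widehat{\rho})$, $\mathbf{R}_{1}^{-}=\nu_{-,1}/(\mathrm{i}\widehat{\rho})$, $\mathbf{R}_{2}^{-}=d/s_{11}$, $\mathbf{R}_{3}^{-}=\nu_{+,3}/h_{33}$ from \eqref{4.1} to rewrite each residue in \eqref{4.7} as a norming constant times a single column of $\mathbf{R}^{\pm}$ evaluated at the eigenvalue. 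Each residue matrix is then supported in exactly one column, which dictates which sums feed into $\mathbf{R}_{1}^{-}$, $\mathbf{R}_{3}^{+}$, $\mathbf{R}_{2}^{+}$ and $\mathbf{R}_{2}^{-}$; substituting the explicit entries of $\mathbf{R}_{\infty}^{\pm}$ and $\mathbf{R}_{0}^{\pm}$ from \eqref{4.5} for the free term and reading off the individual columns of \eqref{4.11} gives \eqref{4.12}--\eqref{4.15}. Evaluating these at the discrete points that occur in the complementary columns (namely $z=z_{n}^{*},\theta_{n}^{*}$ in \eqref{4.12}, $z=z_{n},q_{0}^{2}/\theta_{n}^{*}$ in \eqref{4.13}, and $z=\theta_{n}$, $z=q_{0}^{2}/\theta_{n}$ in \eqref{4.14}--\eqref{4.15}) closes the system into the advertised linear algebraic-integral equations, whose solvability is tied to $\det\mathbf{R}(z)=1$ off the axis.

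The hard part will be the Plemelj step itself: the jump bracket is not absolutely integrable on $\mathbb{R}$, since it contains the nonzero \emph{constant} matrix $\mathbf{R}_{\infty}^{\nabla}=\mathbf{R}_{\infty}^{-}-\mathbf{R}_{\infty}^{+}$---the hallmark of non-parallel NZBCs, cf.\ \eqref{4.6}, where $\mathbf{R}_{\infty}^{+}\neq\mathbf{R}_{\infty}^{-}$---together with the $1/\zeta$ contribution of $\mathbf{R}_{0}^{\nabla}$ at the origin. The Cauchy integral must therefore be read in a regularized sense, with the genuine background $\mathbf{R}_{\infty}(z)$ and the pole term $z^{-1}\mathbf{R}_{0}(z)$ kept \emph{outside} it; one then has to check that the right-hand side of \eqref{4.11}, so interpreted, really does recover the jump \eqref{4.2}, the residue relations of Lemma~\ref{lem:3}, the $z\to0$ and $z\to\infty$ asymptotics of Lemma~\ref{lem:2}, and the unit determinant off the axis. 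This consistency verification, not the (routine) column bookkeeping of the third step, is where the work lies, and is why the construction yields only a \emph{formal} solution.
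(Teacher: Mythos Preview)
Your proposal is correct and follows essentially the same route as the paper: subtract the asymptotic pieces $\mathbf{R}_{\infty}(z)$, $z^{-1}\mathbf{R}_{0}(z)$ and the discrete-spectrum pole contributions, write $\mathbf{K}=\mathbf{I}+\mathbf{L}$ to obtain the jump \eqref{4.10}, apply Plemelj, and then read off columns via the residue relations of Lemma~\ref{lem:3}. Your discussion of the non-integrability of the jump bracket (due to $\mathbf{R}_{\infty}^{\nabla}\neq\mathbf{0}$) and the consequent need to interpret \eqref{4.11} formally is more explicit than the paper's own brief sketch, but the underlying argument is the same.
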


We consider the Laurent expansion where $z\rightarrow \infty$
\begin{equation}\label{4.16}
\begin{split}
\mathbf{R}(z)&=\frac{\mathrm{i}}{2\pi}\int_{\mathbb{R}} \left[ \mathbf{R}_{\infty}^{\nabla} +\frac{1}{\zeta} \mathbf{R}_{0}^{\nabla} +\mathbf{R}^{-}(\zeta)\mathrm{e}^{\mathrm{i}\mathbf{\Delta}(\zeta)}\mathbf{L}(\zeta) \mathrm{e}^{-\mathrm{i}\mathbf{\Delta}(\zeta)} \right] \frac{\mathrm{d}\zeta}{z} \\
&+\sum_{m=1}^{M_{2}} \frac{1}{z}\left[
\underset{z=\theta_{m}}{\operatorname{Res}}\,\mathbf{R}^{+}(z)
+\underset{z=\theta_{m}^{*}}{\operatorname{Res}}\,\mathbf{R}^{-}(z)
+\underset{z=q_{0}^{2}/\theta_{m}^{*}}{\operatorname{Res}} \mathbf{R}^{+}(z) +\underset{z=q_{0}^{2}/\theta_{m}}{\operatorname{Res}} \mathbf{R}^{-}(z) \right] \\
&+\mathbf{R}_{\infty}(z)+\frac{1}{z}\mathbf{R}_{0}(z)+\sum_{m=1}^{M_{1}} \frac{1}{z}\left[
\underset{z=z_{m}}{\operatorname{Res}}\,\mathbf{R}^{+}(z)
+\underset{z=z_{m}^{*}}{\operatorname{Res}}\,\mathbf{R}^{-}(z) \right]+O\left(\frac{1}{z^{2}}\right).
\end{split}
\end{equation}
By comparing $q_{\hbar}(x,t)=\lim_{z\rightarrow\infty}[z \mathbf{R}_{(\hbar+1)1}^{-}(z)]$ for $\hbar=1,2$ and choose $\mathbf{R}(z)=\mathbf{R}^{-}(z)$ in~\eqref{4.16}, the subsequent theorem can be derived.

\begin{theorem}\label{thm:5}
The corresponding solution $\mathbf{q}(x,t)$ of the defocusing-defocusing coupled Hirota equations with non-orthogonal boundary conditions is reconstructed in the following manner:
\begin{equation}\label{4.17}
\begin{split}
q_{\hbar}(x,t)&=q_{-,\hbar}+\frac{\mathrm{i}}{2\pi}\int_{\mathbb{R}} \left[ \frac{1}{\zeta}q_{-,\hbar}
+\left( \mathbf{R}^{-}(\zeta)\mathrm{e}^{\mathrm{i}\mathbf{\Delta}(\zeta)}\mathbf{L}(\zeta) \mathrm{e}^{-\mathrm{i}\mathbf{\Delta}(\zeta)} \right)_{(\hbar+1)1} \right] \mathrm{d}\zeta \\
&+\sum_{m=1}^{M_{1}}\mathrm{i}C_{m}\widehat{\rho}(z_{m})\mathbf{R}_{(\hbar+1)3}^{+}(z_{m})
-\sum_{m=1}^{M_{2}}F_{m}\mathbf{R}_{(\hbar+1)2}^{+}(\theta_{m}), \quad \hbar=1,2.
\end{split}
\end{equation}
\end{theorem}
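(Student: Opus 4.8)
\emph{Proof proposal.}
The plan is to recover $\mathbf{q}(x,t)$ from the large-$z$ behaviour of the first column of the piecewise meromorphic matrix $\mathbf{R}^{-}(z)$ furnished by Theorem~\ref{thm:4}. First I would record the reconstruction identity. By Lemma~\ref{lem:1}, $\mathbf{R}_{1}^{-}(z)=\nu_{-,1}(z)/(\mathrm{i}\widehat{\rho}(z))$ for $z\in\mathbb{D}^{-}$, and since $\widehat{\rho}(z)=1+O(1/z^{2})$ as $z\rightarrow\infty$ while Proposition~\ref{pro:11} gives $\nu_{-,1}(z)=(\mathrm{i}+O(1/z),\ \mathrm{i}\mathbf{q}(x,t)/z+O(1/z^{2}))^{T}$, the $(\hbar+1)$-st entry obeys $\mathbf{R}_{(\hbar+1)1}^{-}(z)=q_{\hbar}(x,t)/z+O(1/z^{2})$. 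Hence $q_{\hbar}(x,t)=\lim_{z\rightarrow\infty}z\,\mathbf{R}_{(\hbar+1)1}^{-}(z)$, so it remains only to extract the coefficient of $1/z$ in the $(\hbar+1,1)$ component of the right-hand side of~\eqref{4.11} taken with $\mathbf{R}(z)=\mathbf{R}^{-}(z)$.

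Next I would carry out the Laurent expansion at $z=\infty$ indicated schematically in~\eqref{4.16}: expanding $1/(\zeta-z)=-1/z+O(1/z^{2})$ inside the Cauchy integral and $1/(z-z_{0})=1/z+O(1/z^{2})$ in each residue term shows that $\mathbf{R}_{\infty}^{-}(z)$ contributes only to the $O(1)$ term, while the coefficient of $1/z$ equals
\[
\mathbf{R}_{0}^{-}+\frac{\mathrm{i}}{2\pi}\int_{\mathbb{R}}\Bigl(\mathbf{R}_{\infty}^{\nabla}+\tfrac{1}{\zeta}\mathbf{R}_{0}^{\nabla}+\mathbf{R}^{-}(\zeta)\mathrm{e}^{\mathrm{i}\mathbf{\Delta}(\zeta)}\mathbf{L}(\zeta)\mathrm{e}^{-\mathrm{i}\mathbf{\Delta}(\zeta)}\Bigr)\mathrm{d}\zeta+\sum_{\mathrm{poles}}\operatorname{Res}\mathbf{R}^{\pm}.
\]
I would then evaluate each piece on the first column using the explicit data of Lemmas~\ref{lem:2} and~\ref{lem:3}. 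From~\eqref{4.5} one has $(\mathbf{R}_{0}^{-})_{(\hbar+1)1}=q_{-,\hbar}$, $(\mathbf{R}_{\infty}^{-})_{(\hbar+1)1}=0$ (so no constant obstruction appears), $(\mathbf{R}_{\infty}^{\nabla})_{(\hbar+1)1}=0$, and $(\mathbf{R}_{0}^{\nabla})_{(\hbar+1)1}=q_{-,\hbar}$, whence the integral collapses exactly to the one in~\eqref{4.17}. For the residue terms I would note that in each of the six identities of Lemma~\ref{lem:3} the matrix standing to the right of $\mathbf{R}^{\pm}$ has zero first column, except at $z=z_{m}$, where that first column is $(0,0,\mathrm{i}C_{m}\widehat{\rho}(z_{m}))^{T}$, and at $z=\theta_{m}$, where it is $(0,-F_{m},0)^{T}$. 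Consequently only two of the discrete sums survive the projection onto the first column, contributing $\mathrm{i}C_{m}\widehat{\rho}(z_{m})\mathbf{R}_{(\hbar+1)3}^{+}(z_{m})$ and $-F_{m}\mathbf{R}_{(\hbar+1)2}^{+}(\theta_{m})$ respectively; assembling all contributions yields~\eqref{4.17}.

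The hard part is analytic rather than algebraic: one must justify the termwise Laurent expansion of the Cauchy integral and the convergence of the resulting $\zeta$-integral. Near $\zeta=0$ the summands $\tfrac{1}{\zeta}\mathbf{R}_{0}^{\nabla}$ and $\mathbf{R}^{-}(\zeta)\mathrm{e}^{\mathrm{i}\mathbf{\Delta}}\mathbf{L}(\zeta)\mathrm{e}^{-\mathrm{i}\mathbf{\Delta}}$ are individually singular, but $\mathbf{R}^{-}(\zeta)\mathrm{e}^{\mathrm{i}\mathbf{\Delta}}(\mathbf{K}(\zeta)-\mathbf{I})\mathrm{e}^{-\mathrm{i}\mathbf{\Delta}}=\mathbf{R}^{+}(\zeta)-\mathbf{R}^{-}(\zeta)$ has only a simple pole at the origin with residue $\mathbf{R}_{0}^{+}-\mathbf{R}_{0}^{-}=-\mathbf{R}_{0}^{\nabla}$, so the full bracket in~\eqref{4.11} stays bounded as $\zeta\rightarrow0$. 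As $\zeta\rightarrow\pm\infty$, $\mathbf{R}_{\infty}^{\nabla}$ fails to decay, yet its $(\hbar+1,1)$ entry vanishes by~\eqref{4.5}, while the $(\hbar+1,1)$ entry of $\mathbf{R}^{-}(\zeta)\mathrm{e}^{\mathrm{i}\mathbf{\Delta}}\mathbf{L}(\zeta)\mathrm{e}^{-\mathrm{i}\mathbf{\Delta}}$ is $O(1/\zeta)$, since $\mathbf{L}_{21}=\beta_{2}+\tfrac{\mathrm{i}z}{q_{0}}\beta_{1}^{*}\widehat{\beta}_{2}=O(1/z)$ and $\mathbf{L}_{31}=-\beta_{1}^{*}=O(1/z)$ by Proposition~\ref{pro:15}, together with the boundedness of the pertinent entries of $\mathbf{R}^{-}$, $d$ and $\nu_{+,3}$ from Propositions~\ref{pro:11}--\ref{pro:12}. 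Hence the integral converges (the $q_{-,\hbar}/\zeta$ part in the principal-value sense, the remainder by oscillatory decay of the $\mathrm{e}^{\mathrm{i}\mathbf{\Delta}}$ factors), and away from $\zeta=0,\infty$ the expansion of $1/(\zeta-z)$ is uniform for large $|z|$, so interchanging the limit with the integral and the finite sums is legitimate; this completes the derivation of~\eqref{4.17}.
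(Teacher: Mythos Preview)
Your proposal is correct and follows essentially the same approach as the paper: the paper's entire argument consists of writing down the Laurent expansion~\eqref{4.16} of the solution formula~\eqref{4.11} and then invoking the reconstruction identity $q_{\hbar}(x,t)=\lim_{z\rightarrow\infty}z\,\mathbf{R}_{(\hbar+1)1}^{-}(z)$, exactly as you do. You supply considerably more detail than the paper---in particular the explicit bookkeeping of which residues and which entries of $\mathbf{R}_{0}^{\nabla}$, $\mathbf{R}_{\infty}^{\nabla}$ survive the projection onto the first column, and the discussion of convergence near $\zeta=0,\infty$---none of which appears in the paper. One small omission in your large-$\zeta$ analysis: you treat $\mathbf{L}_{21}$ and $\mathbf{L}_{31}$ but not $\mathbf{L}_{11}$, which from~\eqref{4.3} tends to $\mathrm{i}-1\neq0$; the product $\mathbf{R}^{-}_{(\hbar+1)1}\mathbf{L}_{11}$ is nonetheless $O(1/\zeta)$ because $(\mathbf{R}_{\infty}^{-})_{(\hbar+1)1}=0$, so your conclusion stands.
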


\subsubsection{Second method}
\label{s:Second method}

In scenarios where boundary conditions are non-orthogonal, the RH problem can likewise be converted into a system of linear integral equations employing an alternative approach.
\begin{lemma}\label{lem:5}
The modified meromorphic matrices $\widetilde{\mathbf{R}}^{\pm}(z)$ are defined as follows:
\begin{equation}\label{4.18}
\begin{split}
\widetilde{\mathbf{R}}^{+}(z)=\mathbf{R}^{+}(z)(\mathbf{R}_{\infty}^{+})^{-1}, \quad z\in\mathbb{D}^{+}, \quad
\widetilde{\mathbf{R}}^{+}(z)=\mathbf{R}^{-}(z)(\mathbf{R}_{\infty}^{-})^{-1}, \quad z\in\mathbb{D}^{-},
\end{split}
\end{equation}
the corresponding jump condition is $\widetilde{\mathbf{R}}^{+}(z)=\widetilde{\mathbf{R}}^{-}(z)
\mathrm{e}^{\mathrm{i}\mathbf{H}_{2}(z)}\widetilde{\mathbf{K}}(z) \mathrm{e}^{-\mathrm{i}\mathbf{H}_{2}(z)}$ for $z\in\mathbb{R} \backslash \{0,\pm q_{0}\}$, where $\widetilde{\mathbf{K}}(z)=\mathbf{R}_{\infty}^{-}
\mathrm{e}^{-\mathrm{i}\mathbf{H}_{1}(z)}\mathbf{K}(z) \mathrm{e}^{\mathrm{i}\mathbf{H}_{1}(z)}(\mathbf{R}_{\infty}^{+})^{-1}$, with $\mathbf{H}_{1}(z)=\operatorname{diag} \left(0, -\delta_{1}, \delta_{2} \right)$ and $\mathbf{H}_{2}(z)=\operatorname{diag} \left(\delta_{1}, \delta_{2}-\delta_{1}, \delta_{2}-\delta_{1} \right)$.
\end{lemma}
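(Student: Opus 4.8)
The plan is to deduce the jump relation for $\widetilde{\mathbf{R}}^{\pm}$ directly from the jump relation for $\mathbf{R}^{\pm}$ in Lemma~\ref{lem:1} by an algebraic rearrangement, the only substantive inputs being two elementary structural facts about the constant matrices $\mathbf{R}_{\infty}^{\pm}$ and the diagonal matrices $\mathbf{\Delta},\mathbf{H}_1,\mathbf{H}_2$. First I would note that $\mathbf{R}_{\infty}^{\pm}$ are invertible: since $\det\mathbf{R}(z)=1$ for $\operatorname{Im}z\neq0$ and $\mathbf{R}^{\pm}(z)\to\mathbf{R}_{\infty}^{\pm}$ as $z\to\infty$ by Lemma~\ref{lem:2}, continuity of the determinant gives $\det\mathbf{R}_{\infty}^{\pm}=1$; hence $\widetilde{\mathbf{R}}^{\pm}(z)=\mathbf{R}^{\pm}(z)(\mathbf{R}_{\infty}^{\pm})^{-1}$ is well defined, and as a right multiple by a $z$-independent matrix it is meromorphic on $\mathbb{D}^{\pm}$ with the same poles as $\mathbf{R}^{\pm}$. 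Substituting $\mathbf{R}^{\pm}=\widetilde{\mathbf{R}}^{\pm}\mathbf{R}_{\infty}^{\pm}$ into $\mathbf{R}^{+}(z)=\mathbf{R}^{-}(z)e^{\mathrm{i}\mathbf{\Delta}(z)}\mathbf{K}(z)e^{-\mathrm{i}\mathbf{\Delta}(z)}$ and solving for $\widetilde{\mathbf{R}}^{+}$ yields
\[
\widetilde{\mathbf{R}}^{+}(z)=\widetilde{\mathbf{R}}^{-}(z)\,\mathbf{R}_{\infty}^{-}\,e^{\mathrm{i}\mathbf{\Delta}(z)}\mathbf{K}(z)e^{-\mathrm{i}\mathbf{\Delta}(z)}\,(\mathbf{R}_{\infty}^{+})^{-1},
\]
so the whole task reduces to identifying the middle factor $\mathbf{R}_{\infty}^{-}e^{\mathrm{i}\mathbf{\Delta}}\mathbf{K}e^{-\mathrm{i}\mathbf{\Delta}}(\mathbf{R}_{\infty}^{+})^{-1}$ with $e^{\mathrm{i}\mathbf{H}_2}\widetilde{\mathbf{K}}e^{-\mathrm{i}\mathbf{H}_2}$.

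The two facts I would isolate are: \emph{(i)} as diagonal matrices $\mathbf{\Delta}(z)=\mathbf{H}_2(z)-\mathbf{H}_1(z)$, whence $e^{\mathrm{i}\mathbf{\Delta}}=e^{\mathrm{i}\mathbf{H}_2}e^{-\mathrm{i}\mathbf{H}_1}$ and $e^{-\mathrm{i}\mathbf{\Delta}}=e^{\mathrm{i}\mathbf{H}_1}e^{-\mathrm{i}\mathbf{H}_2}$; and \emph{(ii)} both $\mathbf{R}_{\infty}^{+}$ and $\mathbf{R}_{\infty}^{-}$ commute with $e^{\mathrm{i}\mathbf{H}_2(z)}$. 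Fact (i) is the entrywise identity $(\delta_1,\delta_2,-\delta_1)=(\delta_1,\delta_2-\delta_1,\delta_2-\delta_1)-(0,-\delta_1,\delta_2)$. Fact (ii) holds because $e^{\mathrm{i}\mathbf{H}_2(z)}=\operatorname{diag}\big(e^{\mathrm{i}\delta_1},e^{\mathrm{i}(\delta_2-\delta_1)},e^{\mathrm{i}(\delta_2-\delta_1)}\big)$ is of the form $e^{\mathrm{i}\delta_1}\oplus e^{\mathrm{i}(\delta_2-\delta_1)}\mathbf{I}_{2\times2}$, while by Lemma~\ref{lem:2} each $\mathbf{R}_{\infty}^{\pm}$ is block diagonal with a $1\times1$ scalar block and a $2\times2$ block; a scalar multiple of $\mathbf{I}_{2\times2}$ commutes with any $2\times2$ matrix, so $[\mathbf{R}_{\infty}^{\pm},e^{\mathrm{i}\mathbf{H}_2}]=\mathbf{0}$, and therefore $[(\mathbf{R}_{\infty}^{+})^{-1},e^{\mathrm{i}\mathbf{H}_2}]=\mathbf{0}$ as well. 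Combining these, I would write $\mathbf{R}_{\infty}^{-}e^{\mathrm{i}\mathbf{\Delta}}\mathbf{K}e^{-\mathrm{i}\mathbf{\Delta}}(\mathbf{R}_{\infty}^{+})^{-1}=\mathbf{R}_{\infty}^{-}e^{\mathrm{i}\mathbf{H}_2}\big(e^{-\mathrm{i}\mathbf{H}_1}\mathbf{K}e^{\mathrm{i}\mathbf{H}_1}\big)e^{-\mathrm{i}\mathbf{H}_2}(\mathbf{R}_{\infty}^{+})^{-1}$ using (i), then move $e^{\mathrm{i}\mathbf{H}_2}$ to the far left past $\mathbf{R}_{\infty}^{-}$ and $e^{-\mathrm{i}\mathbf{H}_2}$ to the far right past $(\mathbf{R}_{\infty}^{+})^{-1}$ using (ii), arriving at $e^{\mathrm{i}\mathbf{H}_2}\big[\mathbf{R}_{\infty}^{-}e^{-\mathrm{i}\mathbf{H}_1}\mathbf{K}e^{\mathrm{i}\mathbf{H}_1}(\mathbf{R}_{\infty}^{+})^{-1}\big]e^{-\mathrm{i}\mathbf{H}_2}=e^{\mathrm{i}\mathbf{H}_2}\widetilde{\mathbf{K}}(z)e^{-\mathrm{i}\mathbf{H}_2}$, which is exactly the asserted jump. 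The validity range $z\in\mathbb{R}\setminus\{0,\pm q_0\}$ is inherited verbatim from Lemma~\ref{lem:1}, and the stated form of $\widetilde{\mathbf{K}}(z)$ drops out of the computation rather than being imposed.

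There is no deep obstacle here; once the grouping dictated by (i) is chosen, everything is forced. The two points that require care are the bookkeeping of the conjugating exponentials — which side each $e^{\pm\mathrm{i}\mathbf{H}_j}$ sits on, and the signs — and the verification of fact (ii). The latter, though a one-line block computation, is the structural reason the normalization works at all: it is precisely the coincidence of the repeated diagonal entry $\delta_2-\delta_1$ of $\mathbf{H}_2$ with the $2\times2$ lower-right block of $\mathbf{R}_{\infty}^{\pm}$, so I would state and check it explicitly rather than fold it silently into the manipulation.
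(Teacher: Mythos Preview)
Your proposal is correct and supplies precisely the algebraic verification that the paper omits; the paper states Lemma~\ref{lem:5} without proof, treating the jump relation as following routinely from the definition $\widetilde{\mathbf{R}}^{\pm}=\mathbf{R}^{\pm}(\mathbf{R}_{\infty}^{\pm})^{-1}$ and Lemma~\ref{lem:1}. Your identification of the two structural facts --- the splitting $\mathbf{\Delta}=\mathbf{H}_2-\mathbf{H}_1$ and the commutation $[\mathbf{R}_{\infty}^{\pm},e^{\mathrm{i}\mathbf{H}_2}]=\mathbf{0}$ due to the block-diagonal form of $\mathbf{R}_{\infty}^{\pm}$ --- is exactly what makes the claimed form of $\widetilde{\mathbf{K}}$ work, and your emphasis on checking (ii) explicitly is well placed.
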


\begin{lemma}\label{lem:6}
The matrices $\widetilde{\mathbf{R}}^{\pm}(z)$ and $\widetilde{\mathbf{K}}(z)$ exhibit the following asymptotic properties:
\begin{equation}\label{4.19}
\begin{split}
\widetilde{\mathbf{K}}(z)\rightarrow \mathbf{I}, \quad z \rightarrow \infty, \quad \Rightarrow \quad
\lim _{z \rightarrow \infty} \mathbf{R}_{\infty}^{-}\mathrm{e}^{-\mathrm{i}\mathbf{H}_{1}(z)}\mathbf{K}(z)
\mathrm{e}^{\mathrm{i}\mathbf{H}_{1}(z)}(\mathbf{R}_{\infty}^{+})^{-1}=\mathbf{I}.
\end{split}
\end{equation}
\begin{equation}\label{4.20}
\begin{split}
\widetilde{\mathbf{R}}^{\pm}(z)=\mathbf{I}+O\left(\frac{1}{z}\right), \quad z \rightarrow \infty,  \quad
\widetilde{\mathbf{R}}^{\pm}(z)=\frac{1}{z}\widetilde{\mathbf{R}}_{0}^{\pm}+O(1), \quad z \rightarrow 0,
\end{split}
\end{equation}
with
\begin{equation}\label{4.21}
\everymath{\displaystyle}
\begin{aligned}
\widetilde{\mathbf{R}}_{0}^{+}=\mathbf{R}_{0}^{+}(\mathbf{R}_{\infty}^{+})^{-1}
=\begin{pmatrix}
  0 &  \frac{q_{0}^{2}\mathbf{q}_{+}^{\dagger}}{\mathbf{q}_{+}^{\dagger}\mathbf{q}_{-}} \\
  \mathbf{0}_{2 \times 1} & \mathbf{0}_{2 \times 2}
  \end{pmatrix},  \quad
\widetilde{\mathbf{R}}_{0}^{-}=\mathbf{R}_{0}^{-}(\mathbf{R}_{\infty}^{-})^{-1}
=\begin{pmatrix}
0 & \mathbf{0}_{1 \times 2} \\
\mathbf{q}_{-} & \mathbf{0}_{2 \times 2}
\end{pmatrix}.
\end{aligned}
\end{equation}
\end{lemma}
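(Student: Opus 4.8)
The plan is to read the two asymptotic regimes of $\widetilde{\mathbf{R}}^{\pm}$ directly off Lemma~\ref{lem:2} via the defining relations of Lemma~\ref{lem:5}, and then to prove $\widetilde{\mathbf{K}}(z)\to\mathbf{I}$ by first controlling the oscillatory phases and afterwards verifying a constant‑matrix identity.

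\emph{Behaviour of $\widetilde{\mathbf{R}}^{\pm}$.} By Lemma~\ref{lem:5}, $\widetilde{\mathbf{R}}^{\pm}(z)=\mathbf{R}^{\pm}(z)(\mathbf{R}_{\infty}^{\pm})^{-1}$ on $\mathbb{D}^{\pm}$. A one‑line Cramer computation from~\eqref{4.5} gives $\det\mathbf{R}_{\infty}^{\pm}=1$ — e.g.\ $\det\mathbf{R}_{\infty}^{-}=\det(\mathbf{q}_{-}^{\perp},\mathbf{q}_{+})/(\mathbf{q}_{-}^{\dagger}\mathbf{q}_{+})=1$, a computation that is meaningful precisely because $\mathbf{q}_{-}^{\dagger}\mathbf{q}_{+}\neq0$ in the non‑orthogonal case (which is also why this second method breaks down in the orthogonal case). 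Hence $\mathbf{R}_{\infty}^{\pm}$ are invertible, and substituting $\mathbf{R}^{\pm}(z)=\mathbf{R}_{\infty}^{\pm}+O(1/z)$ from Lemma~\ref{lem:2} yields $\widetilde{\mathbf{R}}^{\pm}(z)=\mathbf{I}+O(1/z)$ as $z\to\infty$, while substituting $\mathbf{R}^{\pm}(z)=z^{-1}\mathbf{R}_{0}^{\pm}+O(1)$ yields $\widetilde{\mathbf{R}}^{\pm}(z)=z^{-1}\mathbf{R}_{0}^{\pm}(\mathbf{R}_{\infty}^{\pm})^{-1}+O(1)$, so $\widetilde{\mathbf{R}}_{0}^{\pm}=\mathbf{R}_{0}^{\pm}(\mathbf{R}_{\infty}^{\pm})^{-1}$. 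The closed forms in~\eqref{4.21} are then exactly these two products: $\mathbf{R}_{0}^{+}$ has only its $(1,3)$ entry $-\mathrm{i}q_{0}$ nonzero, so $\widetilde{\mathbf{R}}_{0}^{+}$ is $-\mathrm{i}q_{0}$ times the third row of $(\mathbf{R}_{\infty}^{+})^{-1}$, which a short computation (using $\mathbf{q}_{+}^{\dagger}\mathbf{q}_{+}^{\perp}=0$) identifies as $(0,\mathbf{q}_{+}^{\dagger})\,\mathrm{i}q_{0}/(\mathbf{q}_{+}^{\dagger}\mathbf{q}_{-})$; and since $\mathbf{R}_{\infty}^{-}$ is block‑diagonal with first row $(1,0,0)$ and $\mathbf{R}_{0}^{-}$ has only its first column nonzero, $\widetilde{\mathbf{R}}_{0}^{-}=\mathbf{R}_{0}^{-}$.

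\emph{The limit $\widetilde{\mathbf{K}}(z)\to\mathbf{I}$.} Unwinding the definition, this is the claim $\mathbf{R}_{\infty}^{-}\,\mathrm{e}^{-\mathrm{i}\mathbf{H}_{1}(z)}\mathbf{K}(z)\,\mathrm{e}^{\mathrm{i}\mathbf{H}_{1}(z)}(\mathbf{R}_{\infty}^{+})^{-1}\to\mathbf{I}$. By Lemma~\ref{lem:2}, $\mathbf{K}(z)=\mathbf{K}_{\infty}+O(1/z)$, and conjugation by $\mathrm{e}^{\pm\mathrm{i}\mathbf{H}_{1}(z)}$, $\mathbf{H}_{1}=\operatorname{diag}(0,-\delta_{1},\delta_{2})$, multiplies the $(j,l)$ entry by $\mathrm{e}^{\mathrm{i}((\mathbf{H}_{1})_{ll}-(\mathbf{H}_{1})_{jj})}$. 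Since $\mathbf{K}_{\infty}$ in~\eqref{4.5} is block‑diagonal (a scalar $\mathrm{i}$, then a $2\times2$ block in positions $2,3$), the only entries whose phase factor is not identically $1$ and whose coefficient in $\mathbf{K}_{\infty}$ is nonzero are the $(2,3)$ and $(3,2)$ entries, carrying $\mathrm{e}^{\pm\mathrm{i}(\delta_{1}+\delta_{2})}$; and from~\eqref{2.8} and~\eqref{2.13} one checks $\delta_{1}(z;x,t)+\delta_{2}(z;x,t)\to0$ as $z\to\infty$ for fixed $(x,t)$ — its $x$‑coefficient is $\lambda-k=-q_{0}^{2}/z$, the $\sigma$‑free part of its $t$‑coefficient is $2\lambda k-\lambda^{2}-k^{2}=-(\lambda-k)^{2}$, and the $\sigma$‑part is $2\sigma(3\lambda k^{2}-\lambda^{3}-2k^{3})=2\sigma q_{0}^{2}(\lambda-2k^{2}/z)=O(1/z)$. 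Thus $\mathrm{e}^{-\mathrm{i}\mathbf{H}_{1}(z)}\mathbf{K}(z)\,\mathrm{e}^{\mathrm{i}\mathbf{H}_{1}(z)}\to\mathbf{K}_{\infty}$, and it remains to check the constant identity $\mathbf{R}_{\infty}^{-}\mathbf{K}_{\infty}=\mathbf{R}_{\infty}^{+}$. Since all three matrices are block‑diagonal this reduces to a $2\times2$ identity between the lower‑right blocks; expanding $\mathbf{q}_{+}$ and $\mathbf{q}_{+}^{\perp}$ in the orthogonal basis $\{\mathbf{q}_{-},\mathbf{q}_{-}^{\perp}\}$ ($\|\mathbf{q}_{-}\|=\|\mathbf{q}_{-}^{\perp}\|=q_{0}$) and using $(\mathbf{q}_{-}^{\perp})^{\dagger}\mathbf{q}_{+}^{\perp}=\mathbf{q}_{+}^{\dagger}\mathbf{q}_{-}$ together with $\mathbf{q}_{-}^{\dagger}\mathbf{q}_{+}^{\perp}=-\mathbf{q}_{+}^{\dagger}\mathbf{q}_{-}^{\perp}$ makes it collapse. (Alternatively, once $\widetilde{\mathbf{R}}^{\pm}\to\mathbf{I}$ is in hand, the jump relation of Lemma~\ref{lem:5} forces $\mathrm{e}^{\mathrm{i}\mathbf{H}_{2}}\widetilde{\mathbf{K}}\mathrm{e}^{-\mathrm{i}\mathbf{H}_{2}}\to\mathbf{I}$, and since that conjugation leaves the diagonal entries of $\widetilde{\mathbf{K}}$ untouched and only rescales the off‑diagonal ones by unimodular factors, comparing moduli alone already yields $\widetilde{\mathbf{K}}(z)\to\mathbf{I}$.)

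\emph{Main obstacle.} The $\widetilde{\mathbf{R}}^{\pm}$ statements are pure bookkeeping on Lemma~\ref{lem:2}. The substance is the limit of $\widetilde{\mathbf{K}}$: one cannot pass to the limit inside the oscillatory conjugation $\mathrm{e}^{\pm\mathrm{i}\mathbf{H}_{1}(z)}$ without first pinning down $\delta_{1}+\delta_{2}\to0$ to neutralise the two surviving off‑diagonal entries of $\mathbf{K}_{\infty}$, and then establishing the non‑obvious algebraic identity $\mathbf{R}_{\infty}^{-}\mathbf{K}_{\infty}(\mathbf{R}_{\infty}^{+})^{-1}=\mathbf{I}$, which is precisely where the hypothesis $\mathbf{q}_{+}^{\dagger}\mathbf{q}_{-}\neq0$ and the $\perp$‑algebra of two‑component vectors are genuinely used.
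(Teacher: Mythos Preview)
Your argument is correct and complete. The paper states Lemma~\ref{lem:6} without proof, so there is no alternative approach to compare against; your direct verification via Lemma~\ref{lem:2} for the $\widetilde{\mathbf{R}}^{\pm}$ asymptotics and the two-step analysis of $\widetilde{\mathbf{K}}$ (neutralising the oscillatory conjugation through $\delta_{1}+\delta_{2}\to0$, then checking $\mathbf{R}_{\infty}^{-}\mathbf{K}_{\infty}=\mathbf{R}_{\infty}^{+}$) is exactly the natural route. One small point you leave implicit but which is worth stating once: the $O(1/z)$ corrections to $\mathbf{K}(z)$ in the entries $(1,2)$, $(1,3)$, $(2,1)$, $(3,1)$ pick up phase factors $\mathrm{e}^{\pm\mathrm{i}\delta_{1}}$ or $\mathrm{e}^{\pm\mathrm{i}\delta_{2}}$ that do \emph{not} converge as $z\to\infty$ along $\mathbb{R}$, but since $\delta_{1},\delta_{2}\in\mathbb{R}$ for real $z$ these factors are unimodular, so the products still vanish. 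Your parenthetical alternative (deducing $\widetilde{\mathbf{K}}\to\mathbf{I}$ from the jump relation once $\widetilde{\mathbf{R}}^{\pm}\to\mathbf{I}$ is known) is in fact the cleaner route and sidesteps both the $\delta_{1}+\delta_{2}$ computation and the matrix identity entirely.
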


Consequently, to formulate the RH problem, it is essential to establish the correct residue conditions.
\begin{lemma}\label{lem:7}
The modified meromorphic matrices $\widetilde{\mathbf{R}}^{\pm}(z)$ satisfy the following residue conditions:
\begin{subequations}\label{4.22}
\everymath{\displaystyle}
\begin{align}
\underset{z=z_{m}}{\operatorname{Res}}\,\widetilde{\mathbf{R}}^{+}(z)
&=C_{m} \left[ \nu_{-,3}(z_{m}),\mathbf{0},\mathbf{0} \right](\mathbf{R}_{\infty}^{+})^{-1}
=\widetilde{\mathbf{R}}^{+}(z_{m})\begin{pmatrix}
    0 & \mathbf{0_{1\times2}} \\
    C_{m}\widehat{\rho}(z_{m})\mathbf{q}_{-}/\mathrm{i}q_{0} & \mathbf{0_{2\times2}}
  \end{pmatrix},   \\
\underset{z=z_{m}^{*}}{\operatorname{Res}}\,\widetilde{\mathbf{R}}^{-}(z)
&=\bar{C}_{m} \left[ \mathbf{0},\mathbf{0},\nu_{-,1}(z_{m}^{*}) \right](\mathbf{R}_{\infty}^{-})^{-1}
=\widetilde{\mathbf{R}}^{-}(z_{m}^{*})\begin{pmatrix}
    0 & \mathrm{i}\bar{C}_{m}\widehat{\rho}(z_{m}^{*})\mathbf{q}_{-}^{\dagger}/q_{0} \\
    \mathbf{0_{2\times1}} & \mathbf{0_{2\times2}}
  \end{pmatrix}, \\
\underset{z=\theta_{m}}{\operatorname{Res}}\,\widetilde{\mathbf{R}}^{+}(z)
&=F_{m} \left[ \frac{\widetilde{d}(\theta_{m})}{s_{33}(\theta_{m})},\mathbf{0},\mathbf{0} \right](\mathbf{R}_{\infty}^{+})^{-1}
=\widetilde{\mathbf{R}}^{+}(\theta_{m})\begin{pmatrix}
    0 & \mathbf{0_{1\times2}} \\
    \mathrm{i}F_{m}q_{0}\mathbf{q}_{+}^{\perp}/\mathbf{q}_{+}^{\dagger}\mathbf{q}_{-} & \mathbf{0_{2\times2}}
  \end{pmatrix},   \\
\underset{z=\theta_{m}^{*}}{\operatorname{Res}}\,\widetilde{\mathbf{R}}^{-}(z)
&=\bar{F}_{m} \left[ \mathbf{0}, \nu_{-,1}(\theta_{m}^{*}),\mathbf{0} \right](\mathbf{R}_{\infty}^{-})^{-1}
=\widetilde{\mathbf{R}}^{-}(\theta_{m}^{*})\begin{pmatrix}
    0 & \mathrm{i}\bar{F}_{m}q_{0}\widehat{\rho}(\theta_{m}^{*})
    (\mathbf{q}_{+}^{\perp})^{\dagger}/\mathbf{q}_{-}^{\dagger}\mathbf{q}_{+} \\
    \mathbf{0_{2\times1}} & \mathbf{0_{2\times2}}
  \end{pmatrix}, \\
\underset{z=q_{0}^{2}/\theta_{m}}{\operatorname{Res}} \widetilde{\mathbf{R}}^{-}(z)
&=\check{F}_{m}\left[\mathbf{0},\mathbf{0},\frac{d(q_{0}^{2}/\theta_{m})}
{s_{11}(q_{0}^{2}/\theta_{m})}\right](\mathbf{R}_{\infty}^{-})^{-1}
=\widetilde{\mathbf{R}}^{-}(q_{0}^{2}/\theta_{m})\begin{pmatrix}
    0 & \mathbf{0_{1\times2}} \\
    \mathbf{0_{2\times1}} & \check{F}_{m}\mathbf{q}_{-}^{\perp}
    \mathbf{q}_{-}^{\dagger}/q_{0}^{2} \\
  \end{pmatrix}, \\
\underset{z=q_{0}^{2}/\theta_{m}^{*}}{\operatorname{Res}} \widetilde{\mathbf{R}}^{+}(z)
&=-\hat{F}_{m}\left[\mathbf{0}, \nu_{-,3}(q_{0}^{2}/\theta_{m}^{*}), \mathbf{0}  \right](\mathbf{R}_{\infty}^{+})^{-1}
=\widetilde{\mathbf{R}}^{+}(q_{0}^{2}/\theta_{m}^{*})\begin{pmatrix}
    0 & \mathbf{0_{1\times2}} \\
    \mathbf{0_{2\times1}} & -\hat{F}_{m}\widehat{\rho}(q_{0}^{2}/\theta_{m}^{*})
    \mathbf{q}_{-}(\mathbf{q}_{-}^{\perp})^{\dagger}/q_{0}^{2}
  \end{pmatrix},
\end{align}
\end{subequations}
where $m=1,\ldots,M_{1}$ for $z_{m}$, and $m=1,\ldots,M_{2}$ for $\theta_{m}$.
\end{lemma}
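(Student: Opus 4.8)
The plan is to obtain Lemma~\ref{lem:7} as a direct corollary of the residue conditions in Lemma~\ref{lem:3} together with the definition of $\widetilde{\mathbf{R}}^{\pm}(z)$ in Lemma~\ref{lem:5}. The starting observation is that $(\mathbf{R}_{\infty}^{\pm})^{-1}$ are invertible matrices not depending on $z$: since $\widetilde{\mathbf{R}}^{+}(z)=\mathbf{R}^{+}(z)(\mathbf{R}_{\infty}^{+})^{-1}$ on $\mathbb{D}^{+}$ and $\widetilde{\mathbf{R}}^{+}(z)=\mathbf{R}^{-}(z)(\mathbf{R}_{\infty}^{-})^{-1}$ on $\mathbb{D}^{-}$, the matrices $\widetilde{\mathbf{R}}^{\pm}(z)$ are meromorphic with simple poles at exactly the points listed in~\eqref{4.7}, and at each such point $z_{\bullet}$ one has
\[
\operatorname{Res}_{z=z_{\bullet}}\widetilde{\mathbf{R}}^{\pm}(z)
=\Bigl(\operatorname{Res}_{z=z_{\bullet}}\mathbf{R}^{\pm}(z)\Bigr)(\mathbf{R}_{\infty}^{\pm})^{-1},
\]
where one uses $\mathbf{R}^{+}$, $\mathbf{R}_{\infty}^{+}$ for poles in $\mathbb{D}^{+}$ and $\mathbf{R}^{-}$, $\mathbf{R}_{\infty}^{-}$ for poles in $\mathbb{D}^{-}$; by Proposition~\ref{pro:8} the off-circle poles split as $\theta_{m},\,q_{0}^{2}/\theta_{m}^{*}\in\mathbb{D}^{+}$ and $\theta_{m}^{*},\,q_{0}^{2}/\theta_{m}\in\mathbb{D}^{-}$. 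Multiplying the left-hand expressions of~\eqref{4.7} on the right by $(\mathbf{R}_{\infty}^{\pm})^{-1}$ then yields the first expression of each identity in~\eqref{4.22} with no further work.

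For the second (``$\widetilde{\mathbf{R}}^{\pm}(z_{\bullet})\times$ constant matrix'') expression of each identity, I would insert $\mathbf{R}^{\pm}(z_{\bullet})=\widetilde{\mathbf{R}}^{\pm}(z_{\bullet})\mathbf{R}_{\infty}^{\pm}$. Each ``projection'' matrix on the right of~\eqref{4.7} is a rank-one matrix $c\,e_{i}e_{j}^{\mathrm{T}}$ with $e_{1},e_{2},e_{3}$ the standard basis of $\mathbb{C}^{3}$, so
\[
\operatorname{Res}_{z=z_{\bullet}}\widetilde{\mathbf{R}}^{\pm}(z)
=\widetilde{\mathbf{R}}^{\pm}(z_{\bullet})\,\bigl(c\,\mathbf{R}_{\infty}^{\pm}e_{i}\bigr)\bigl(e_{j}^{\mathrm{T}}(\mathbf{R}_{\infty}^{\pm})^{-1}\bigr),
\]
and the problem reduces to reading off a column of $\mathbf{R}_{\infty}^{\pm}$ (directly from~\eqref{4.5}) and a row of $(\mathbf{R}_{\infty}^{\pm})^{-1}$. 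The inverses are computed from the block form $\mathbf{R}_{\infty}^{+}=\operatorname{diag}(\mathrm{i},\mathbf{A}^{+})$, $\mathbf{R}_{\infty}^{-}=\operatorname{diag}(1,\mathbf{A}^{-})$ with $\mathbf{A}^{\pm}$ the $2\times2$ blocks in~\eqref{4.5}; inverting the $2\times2$ block amounts to finding its dual covectors, which is immediate from the elementary two-vector identities $(\mathbf{u}^{\perp})^{\dagger}\mathbf{u}=0$, $(\mathbf{u}^{\perp})^{\dagger}\mathbf{u}^{\perp}=\Vert\mathbf{u}\Vert^{2}=q_{0}^{2}$ and $(\mathbf{u}^{\perp})^{\dagger}\mathbf{v}^{\perp}=\mathbf{v}^{\dagger}\mathbf{u}$. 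For instance one finds $(\mathbf{A}^{+})^{-1}$ has rows $(\mathbf{q}_{-}^{\perp})^{\dagger}/q_{0}$ and $\mathrm{i}q_{0}\mathbf{q}_{+}^{\dagger}/(\mathbf{q}_{+}^{\dagger}\mathbf{q}_{-})$, and similarly for $(\mathbf{A}^{-})^{-1}$; substituting and simplifying turns each product above into precisely the constant matrix displayed on the right of~\eqref{4.22}. As a sample check, at $z=z_{m}$ the matrix $\mathbf{R}_{\infty}^{+}\bigl(\mathrm{i}C_{m}\widehat{\rho}(z_{m})\,e_{3}e_{1}^{\mathrm{T}}\bigr)(\mathbf{R}_{\infty}^{+})^{-1}$ collapses to the single nonzero block $C_{m}\widehat{\rho}(z_{m})\mathbf{q}_{-}/(\mathrm{i}q_{0})$ in the lower-left corner, as claimed.

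The argument is entirely routine linear algebra, and there is no genuine analytic obstruction: invertibility and $z$-independence of $\mathbf{R}_{\infty}^{\pm}$ guarantee that the locations of the poles and the way the residues factor transform functorially under right multiplication. The only place where care is needed is bookkeeping — matching each of the six families of poles to the correct half-plane (hence to $\mathbf{R}_{\infty}^{+}$ versus $\mathbf{R}_{\infty}^{-}$ and to the correct overall sign in~\eqref{4.7}), and correctly extracting the relevant rows of $(\mathbf{R}_{\infty}^{\pm})^{-1}$, where the two-vector identities above are exactly what make the many entries cancel down to the clean forms stated in~\eqref{4.22}.
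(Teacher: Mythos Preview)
Your proposal is correct and follows exactly the approach the paper has in mind: the paper states Lemma~\ref{lem:7} without an explicit proof, precisely because it is an immediate consequence of Lemma~\ref{lem:3} together with the definition $\widetilde{\mathbf{R}}^{\pm}(z)=\mathbf{R}^{\pm}(z)(\mathbf{R}_{\infty}^{\pm})^{-1}$ from Lemma~\ref{lem:5}, with the constant matrices on the right computed by conjugating the rank-one matrices of~\eqref{4.7} by $\mathbf{R}_{\infty}^{\pm}$ as you describe. Your bookkeeping (pole locations in $\mathbb{D}^{\pm}$, block inversion of $\mathbf{R}_{\infty}^{\pm}$ via the two-vector identities, and the sample verification at $z=z_{m}$) is accurate.
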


The residue conditions can alternatively be expressed as a set of related column vectors:
\begin{subequations}\label{4.23}
\everymath{\displaystyle}
\begin{align}
\underset{z=z_{m}}{\operatorname{Res}}\,\widetilde{\mathbf{R}}^{+}(z)
&=\left[ C_{m}\widehat{\rho}(z_{m})\widetilde{\mathbf{R}}^{+}(z_{m})\mathbf{R}_{\infty,3}^{+}, \mathbf{0},\mathbf{0} \right], \quad \underset{z=\theta_{m}}{\operatorname{Res}}\,\widetilde{\mathbf{R}}^{+}(z)
=\left[ \mathrm{i}F_{m}\widetilde{\mathbf{R}}^{+}(\theta_{m})\mathbf{R}_{\infty,2}^{+},
\mathbf{0},\mathbf{0} \right], \\
\underset{z=z_{m}^{*}}{\operatorname{Res}}\,\widetilde{\mathbf{R}}^{-}(z)
&=\left[ \mathbf{0},\frac{\mathrm{i}\bar{C}_{m}\widehat{\rho}(z_{m}^{*})q_{-,1}^{*}}{q_{0}}
\widetilde{\mathbf{R}}^{-}(z_{m}^{*})\mathbf{R}_{\infty,1}^{-},
\frac{\mathrm{i}\bar{C}_{m}\widehat{\rho}(z_{m}^{*})q_{-,2}^{*}}{q_{0}}
\widetilde{\mathbf{R}}^{-}(z_{m}^{*})\mathbf{R}_{\infty,1}^{-} \right], \\
\underset{z=\theta_{m}^{*}}{\operatorname{Res}}\,\widetilde{\mathbf{R}}^{-}(z)
&=\left[ \mathbf{0}, \frac{\mathrm{i}\bar{F}_{m}q_{0}\widehat{\rho}(\theta_{m}^{*})q_{+,2}}
{\mathbf{q}_{-}^{\dagger}\mathbf{q}_{+}}\widetilde{\mathbf{R}}^{-}(\theta_{m}^{*})
\mathbf{R}_{\infty,1}^{-}, -\frac{\mathrm{i}\bar{F}_{m}q_{0}\widehat{\rho}(\theta_{m}^{*})q_{+,1}}
{\mathbf{q}_{-}^{\dagger}\mathbf{q}_{+}}\widetilde{\mathbf{R}}^{-}(\theta_{m}^{*})\mathbf{R}_{\infty,1}^{-} \right], \\
\underset{z=q_{0}^{2}/\theta_{m}}{\operatorname{Res}} \widetilde{\mathbf{R}}^{-}(z)
&=\left[\mathbf{0}, \frac{\check{F}_{m}q_{-,1}^{*}}{q_{0}}\widetilde{\mathbf{R}}^{-}
\left(\frac{q_{0}^{2}}{\theta_{m}}\right)\mathbf{R}_{\infty,2}^{-},  \frac{\check{F}_{m}q_{-,2}^{*}}{q_{0}}\widetilde{\mathbf{R}}^{-}
\left(\frac{q_{0}^{2}}{\theta_{m}}\right)\mathbf{R}_{\infty,2}^{-} \right], \\
\underset{z=q_{0}^{2}/\theta_{m}^{*}}{\operatorname{Res}} \widetilde{\mathbf{R}}^{+}(z)
&=\left[\mathbf{0}, -\frac{\mathrm{i}\hat{F}_{m}\widehat{\rho}(q_{0}^{2}/\theta_{m}^{*})q_{-,2}}{q_{0}}
\widetilde{\mathbf{R}}^{+}\left(\frac{q_{0}^{2}}{\theta_{m}^{*}}\right)\mathbf{R}_{\infty,3}^{+}, \frac{\mathrm{i}\hat{F}_{m}\widehat{\rho}(q_{0}^{2}/\theta_{m}^{*})q_{-,1}}{q_{0}}
\widetilde{\mathbf{R}}^{+}\left(\frac{q_{0}^{2}}{\theta_{m}^{*}}\right)\mathbf{R}_{\infty,3}^{+} \right].
\end{align}
\end{subequations}

\begin{theorem}\label{thm:6}
The solution of the RH problem is delineated through the subsequent formula:
\begin{equation}\label{4.24}
\begin{split}
\widetilde{\mathbf{R}}(z)&=\mathbf{I}+\frac{1}{z}\widetilde{\mathbf{R}}_{0}(z)
+\frac{1}{2\pi\mathrm{i}}\int_{\mathbb{R}} \left[ \frac{1}{\zeta}\widetilde{\mathbf{R}}_{0}^{\nabla}
+\widetilde{\mathbf{R}}^{-}(\zeta)\mathrm{e}^{\mathrm{i}\mathbf{H}_{2}(\zeta)}
\widetilde{\mathbf{L}}(\zeta) \mathrm{e}^{-\mathrm{i}\mathbf{H}_{2}(\zeta)} \right] \frac{\mathrm{d}\zeta}{\zeta-z} \\
&+\sum_{m=1}^{M_{1}}\left[ \frac{\operatorname{Res}_{z=z_{m}}\widetilde{\mathbf{R}}^{+}(z)}{z-z_{m}} +\frac{\operatorname{Res}_{z=z_{m}^{*}}\widetilde{\mathbf{R}}^{-}(z)}{z-z_{m}^{*}} \right]
+\sum_{m=1}^{M_{2}}\left[ \frac{\operatorname{Res}_{z=\theta_{m}} \widetilde{\mathbf{R}}^{+}(z)}{z-\theta_{m}} +\frac{\operatorname{Res}_{z=\theta_{m}^{*}}
\widetilde{\mathbf{R}}^{-}(z)}{z-\theta_{m}^{*}} \right] \\
&+\sum_{m=1}^{M_{2}} \left[ \frac{\operatorname{Res}_{z=q_{0}^{2}/\theta_{m}^{*}}
\widetilde{\mathbf{R}}^{+}(z)}{z-(q_{0}^{2}/\theta_{m}^{*})} +\frac{\operatorname{Res}_{z=q_{0}^{2}/\theta_{m}}
\widetilde{\mathbf{R}}^{-}(z)}{z-(q_{0}^{2}/\theta_{m})} \right], \quad z\in \mathbb{C}/\mathbb{R},
\end{split}
\end{equation}
where $\widetilde{\mathbf{R}}_{0}^{\nabla}=\widetilde{\mathbf{R}}_{0}^{-}-\widetilde{\mathbf{R}}_{0}^{+}$, $\widetilde{\mathbf{L}}(z)=\widetilde{\mathbf{K}}(z)-\mathbf{I}$ and $\widetilde{\mathbf{R}}_{0}(z)=\widetilde{\mathbf{R}}_{0}^{\pm}$ for $z\in \mathbb{D}^{\pm}$. The eigenfunctions within the residue conditions are characterized as follows:
\begin{equation}\label{4.25}
\begin{split}
\widetilde{\mathbf{R}}^{-}(z)\mathbf{R}_{\infty,1}^{-}&=\begin{pmatrix}
    1  \\
    \mathbf{q}_{-}/z
  \end{pmatrix}
+\frac{1}{2\pi\mathrm{i}}\int_{\mathbb{R}} \left[ \frac{1}{\zeta}\widetilde{\mathbf{R}}_{0}^{\nabla}
+\widetilde{\mathbf{R}}^{-}(\zeta)\mathrm{e}^{\mathrm{i}\mathbf{H}_{2}(\zeta)}
\widetilde{\mathbf{L}}(\zeta) \mathrm{e}^{-\mathrm{i}\mathbf{H}_{2}(\zeta)} \right]_{1} \frac{\mathrm{d}\zeta}{\zeta-z} \\&+\sum_{m=1}^{M_{1}} \left[ \frac{C_{m}\widehat{\rho}(z_{m})\widetilde{\mathbf{R}}^{+}(z_{m})\mathbf{R}_{\infty,3}^{+}}{z-z_{m}} \right]
+\sum_{m=1}^{M_{2}} \left[ \frac{\mathrm{i}F_{m}\widetilde{\mathbf{R}}^{+}(\theta_{m})
\mathbf{R}_{\infty,2}^{+}}{z-\theta_{m}} \right], \quad z=\theta_{n}^{*},z_{n}^{*}.
\end{split}
\end{equation}
\begin{equation}\label{4.26}
\begin{split}
\widetilde{\mathbf{R}}^{+}(z)\mathbf{R}_{\infty,3}^{+}&=
\frac{1}{2\pi\mathrm{i}}\int_{\mathbb{R}} \left[ \frac{1}{\zeta}\widetilde{\mathbf{R}}_{0}^{\nabla}
+\widetilde{\mathbf{R}}^{-}(\zeta)\mathrm{e}^{\mathrm{i}\mathbf{H}_{2}(\zeta)}
\widetilde{\mathbf{L}}(\zeta) \mathrm{e}^{-\mathrm{i}\mathbf{H}_{2}(\zeta)} \right]\mathbf{R}_{\infty,3}^{+} \frac{\mathrm{d}\zeta}{\zeta-z} \\
&+\sum_{m=1}^{M_{1}} \left[ \frac{\bar{C}_{m}\widehat{\rho}(z_{m}^{*})}{z-z_{m}^{*}}
\widetilde{\mathbf{R}}^{-}(z_{m}^{*})\mathbf{R}_{\infty,1}^{-} \right]
-\sum_{m=1}^{M_{2}} \left[ \frac{\mathrm{i}\check{F}_{m}\widetilde{\mathbf{R}}^{-}
(q_{0}^{2}/\theta_{m})\mathbf{R}_{\infty,2}^{-}}{z-(q_{0}^{2}/\theta_{m})} \right] \\
&+\begin{pmatrix}
    -\mathrm{i}q_{0}/z  \\
    -\mathrm{i}\mathbf{q}_{-}/q_{0}
  \end{pmatrix}
+\sum_{m=1}^{M_{2}} \frac{(\mathbf{q}_{+}^{\perp})^{\dagger}\mathbf{q}_{-}}{\mathbf{q}_{-}^{\dagger}\mathbf{q}_{+}}
\left[ \frac{\bar{F}_{m}\widehat{\rho}(\theta_{m}^{*})}{z-\theta_{m}^{*}}
\widetilde{\mathbf{R}}^{-}(\theta_{m}^{*})\mathbf{R}_{\infty,1}^{-} \right], \quad z=z_{n},q_{0}^{2}/\theta_{n}^{*}.
\end{split}
\end{equation}
\begin{equation}\label{4.27}
\begin{split}
\widetilde{\mathbf{R}}^{+}(\theta_{n})\mathbf{R}_{\infty,2}^{+}&=\begin{pmatrix}
    0  \\
    q_{0}\mathbf{q}_{+}^{\perp}/\mathbf{q}_{+}^{\dagger}\mathbf{q}_{-}
  \end{pmatrix}
+\frac{1}{2\pi\mathrm{i}}\int_{\mathbb{R}} \left[ \frac{1}{\zeta}\widetilde{\mathbf{R}}_{0}^{\nabla}
+\widetilde{\mathbf{R}}^{-}(\zeta)\mathrm{e}^{\mathrm{i}\mathbf{H}_{2}(\zeta)}
\widetilde{\mathbf{L}}(\zeta) \mathrm{e}^{-\mathrm{i}\mathbf{H}_{2}(\zeta)} \right]\mathbf{R}_{\infty,2}^{+} \frac{\mathrm{d}\zeta}{\zeta-\theta_{n}} \\
&+\sum_{m=1}^{M_{1}}
\frac{\mathbf{q}_{-}^{\dagger}\mathbf{q}_{+}^{\perp}}{\mathbf{q}_{+}^{\dagger}\mathbf{q}_{-}}
\left[ \frac{\mathrm{i}\bar{C}_{m}\widehat{\rho}(z_{m}^{*})}{\theta_{n}-z_{m}^{*}}
\widetilde{\mathbf{R}}^{-}(z_{m}^{*})\mathbf{R}_{\infty,1}^{-} \right]
+\sum_{m=1}^{M_{2}} \left[
\frac{\mathrm{i}q_{0}^{4}\bar{F}_{m}\widehat{\rho}(\theta_{m}^{*})}{ | \mathbf{q}_{-}^{\dagger}\mathbf{q}_{+} |^{2} (\theta_{n}-\theta_{m}^{*}) }
\widetilde{\mathbf{R}}^{-}(\theta_{m}^{*})\mathbf{R}_{\infty,1}^{-} \right] \\
&+\sum_{m=1}^{M_{2}} \frac{\mathbf{q}_{-}^{\dagger}\mathbf{q}_{+}^{\perp}}{\mathbf{q}_{+}^{\dagger}\mathbf{q}_{-}}
\left[ \frac{\check{F}_{m}\widetilde{\mathbf{R}}^{-}(q_{0}^{2}/\theta_{m})
\mathbf{R}_{\infty,2}^{-}}{\theta_{n}-(q_{0}^{2}/\theta_{m})} \right]
-\sum_{m=1}^{M_{2}} \left[ \frac{\mathrm{i}\hat{F}_{m}\widehat{\rho}(q_{0}^{2}/\theta_{m}^{*})
\widetilde{\mathbf{R}}^{+}(q_{0}^{2}/\theta_{m}^{*})
\mathbf{R}_{\infty,3}^{+}}{\theta_{n}-(q_{0}^{2}/\theta_{m}^{*})} \right].
\end{split}
\end{equation}
\begin{equation}\label{4.28}
\begin{split}
\widetilde{\mathbf{R}}^{-}\left(\frac{q_{0}^{2}}{\theta_{n}}\right)\mathbf{R}_{\infty,2}^{-}&=\begin{pmatrix}
    0  \\
    \mathbf{q}_{-}^{\perp}/q_{0}
  \end{pmatrix}
+\frac{1}{2\pi\mathrm{i}}\int_{\mathbb{R}} \left[ \frac{1}{\zeta}\widetilde{\mathbf{R}}_{0}^{\nabla}
+\widetilde{\mathbf{R}}^{-}(\zeta)\mathrm{e}^{\mathrm{i}\mathbf{H}_{2}(\zeta)}
\widetilde{\mathbf{L}}(\zeta) \mathrm{e}^{-\mathrm{i}\mathbf{H}_{2}(\zeta)} \right]\mathbf{R}_{\infty,2}^{-} \frac{\mathrm{d}\zeta}{\zeta-(q_{0}^{2}/\theta_{n})} \\
&+\sum_{m=1}^{M_{2}}
\left[ \frac{\mathrm{i}\bar{F}_{m}\widehat{\rho}(\theta_{m}^{*})
\widetilde{\mathbf{R}}^{-}(\theta_{m}^{*})\mathbf{R}_{\infty,1}^{-}}{(q_{0}^{2}/\theta_{n})-\theta_{m}^{*}} \right]-\sum_{m=1}^{M_{2}} \left[ \frac{\mathrm{i}\hat{F}_{m}\widehat{\rho}(q_{0}^{2}/\theta_{m}^{*})
\widetilde{\mathbf{R}}^{+}(q_{0}^{2}/\theta_{m}^{*})
\mathbf{R}_{\infty,3}^{+}}{(q_{0}^{2}/\theta_{n})-(q_{0}^{2}/\theta_{m}^{*})} \right].
\end{split}
\end{equation}
\end{theorem}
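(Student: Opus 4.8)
The plan is to carry over, for the normalized matrices $\widetilde{\mathbf{R}}^{\pm}(z)$ introduced in Lemma~\ref{lem:5}, the regularization-and-Plemelj argument already used in the first method (Theorem~\ref{thm:4}), now exploiting that $\widetilde{\mathbf{R}}^{\pm}(z)\to\mathbf{I}$ as $z\to\infty$ by Lemma~\ref{lem:6} --- the very reason for the change of normalization. First I would start from the jump relation of Lemma~\ref{lem:5}, set $\widetilde{\mathbf{L}}(z)=\widetilde{\mathbf{K}}(z)-\mathbf{I}$, and rewrite the jump as $\widetilde{\mathbf{R}}^{+}(z)-\widetilde{\mathbf{R}}^{-}(z)=\widetilde{\mathbf{R}}^{-}(z)\mathrm{e}^{\mathrm{i}\mathbf{H}_{2}(z)}\widetilde{\mathbf{L}}(z)\mathrm{e}^{-\mathrm{i}\mathbf{H}_{2}(z)}$ on $\mathbb{R}\backslash\{0,\pm q_{0}\}$. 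By Lemma~\ref{lem:6} both $\widetilde{\mathbf{R}}^{\pm}(z)$ have at worst a simple pole $\widetilde{\mathbf{R}}_{0}^{\pm}/z$ at the origin while $\widetilde{\mathbf{K}}(z)\to\mathbf{I}$ at infinity, so I would subtract from both sides the terms $\mathbf{I}+\tfrac1z\widetilde{\mathbf{R}}_{0}^{\pm}$ together with the discrete-spectrum pole contributions of Lemma~\ref{lem:7} (equivalently the column form~\eqref{4.23}). The resulting matrix is then sectionally analytic off $\mathbb{R}$, bounded at $z=0$, $O(1/z)$ as $z\to\infty$, and its boundary values differ across $\mathbb{R}$ by $\tfrac1\zeta\widetilde{\mathbf{R}}_{0}^{\nabla}+\widetilde{\mathbf{R}}^{-}(\zeta)\mathrm{e}^{\mathrm{i}\mathbf{H}_{2}(\zeta)}\widetilde{\mathbf{L}}(\zeta)\mathrm{e}^{-\mathrm{i}\mathbf{H}_{2}(\zeta)}$; the Plemelj--Sokhotski formula then yields~\eqref{4.24}.

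Next, to extract the closed system~\eqref{4.25}--\eqref{4.28}, I would use the structural fact recorded in~\eqref{4.23}: every residue of $\widetilde{\mathbf{R}}^{\pm}$ at a discrete point is a rank-one matrix whose unique nonzero column is a scalar multiple of the vector $\widetilde{\mathbf{R}}^{\pm}(\cdot)\mathbf{R}_{\infty,j}^{\pm}$ evaluated there. Hence, multiplying~\eqref{4.24} on the right by $\mathbf{R}_{\infty,1}^{-}$, $\mathbf{R}_{\infty,3}^{+}$, $\mathbf{R}_{\infty,2}^{+}$, $\mathbf{R}_{\infty,2}^{-}$ in turn, and evaluating respectively at $z=z_{n}^{*},\theta_{n}^{*}$; at $z=z_{n},q_{0}^{2}/\theta_{n}^{*}$; at $z=\theta_{n}$; and at $z=q_{0}^{2}/\theta_{n}$, produces a self-consistent linear algebraic-integral system in the finitely many vector unknowns $\widetilde{\mathbf{R}}^{\pm}(\cdot)\mathbf{R}_{\infty,j}^{\pm}$. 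The inhomogeneous vectors $(1,\mathbf{q}_{-}/z)^{T}$, $(-\mathrm{i}q_{0}/z,-\mathrm{i}\mathbf{q}_{-}/q_{0})^{T}$, $(0,q_{0}\mathbf{q}_{+}^{\perp}/\mathbf{q}_{+}^{\dagger}\mathbf{q}_{-})^{T}$ and $(0,\mathbf{q}_{-}^{\perp}/q_{0})^{T}$ are just the relevant columns of $\mathbf{R}_{\infty}^{\pm}$ together with the $\tfrac1z\widetilde{\mathbf{R}}_{0}^{\pm}$ contribution, and the extra scalar factors in~\eqref{4.26}--\eqref{4.28} (for instance $\mathbf{q}_{-}^{\dagger}\mathbf{q}_{+}^{\perp}/\mathbf{q}_{+}^{\dagger}\mathbf{q}_{-}$ and $(\mathbf{q}_{+}^{\perp})^{\dagger}\mathbf{q}_{-}/\mathbf{q}_{-}^{\dagger}\mathbf{q}_{+}$) come from expanding the products $[\cdots](\mathbf{R}_{\infty}^{\pm})^{-1}$ of Lemma~\ref{lem:7} and invoking the norming-constant symmetries of Lemma~\ref{lem:4}.

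The hard part will be the analytic bookkeeping near the exceptional points $z=0$ and $z=\pm q_{0}$. One must check that the normalization by $(\mathbf{R}_{\infty}^{\pm})^{-1}$ --- legitimate precisely because the non-orthogonality assumption~\eqref{2.1} makes $\mathbf{q}_{\pm}^{\dagger}\mathbf{q}_{\mp}\neq0$ and hence $\mathbf{R}_{\infty}^{\pm}$, $\mathbf{K}_{\infty}$ invertible --- introduces nothing worse than the simple pole at $z=0$ already described in Lemma~\ref{lem:6}, and that the Cauchy kernel $\widetilde{\mathbf{R}}^{-}\mathrm{e}^{\mathrm{i}\mathbf{H}_{2}}\widetilde{\mathbf{L}}\mathrm{e}^{-\mathrm{i}\mathbf{H}_{2}}$ lies in $L^{1}(\mathbb{R})\cap L^{2}(\mathbb{R})$. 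This rests on the reflection-coefficient decay $\beta_{1}(z)=O(1/z)$, $\beta_{2}(z)=O(1/z^{2})$, $\beta_{2}(q_{0}^{2}/z)=O(1/z)$ as $z\to\infty$ and their $z\to0$ analogues from Proposition~\ref{pro:15}, on the branch-point expansions of Section~\ref{s:Behavior at the branch points} (which ensure $\widetilde{\mathbf{L}}$ stays bounded near $z=\pm q_{0}$), and on the hypothesis $\mathbf{q}(x,t)-\mathbf{q}_{\pm}\in L^{1,2}(\mathbb{R}_{x}^{\pm})$. As a final consistency check I would verify that the matrix furnished by~\eqref{4.24} actually satisfies the original jump and residue conditions, so that the representation genuinely solves the RH problem.
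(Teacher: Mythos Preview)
Your proposal is correct and follows essentially the same approach as the paper: regularize the jump relation of Lemma~\ref{lem:5} by subtracting $\mathbf{I}+\tfrac{1}{z}\widetilde{\mathbf{R}}_{0}^{\pm}$ and the discrete-spectrum residues from Lemma~\ref{lem:7}, then apply Plemelj's formula exactly as in Theorem~\ref{thm:4}, with the simplification that here $\widetilde{\mathbf{R}}^{\pm}\to\mathbf{I}$ at infinity so no $\widetilde{\mathbf{R}}_{\infty}^{\nabla}$ term appears. The paper does not spell out the proof of Theorem~\ref{thm:6} in detail but simply carries over the scheme of~\eqref{4.10}--\eqref{4.11}; your extraction of the closed system~\eqref{4.25}--\eqref{4.28} via right-multiplication by the columns $\mathbf{R}_{\infty,j}^{\pm}$ and the residue column form~\eqref{4.23} is exactly the intended mechanism.
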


\begin{lemma}\label{lem:8}
If the RH problem outlined in Lemmas~\ref{lem:5},~\ref{lem:6} and~\ref{lem:7} has a unique solution. Then the matrix $\widetilde{\mathbf{R}}(z)$ adheres to the modified Lax pair, which can be stated as:
\begin{equation}\label{4.29}
\begin{split}
\widetilde{\mathbf{R}}_{x}^{\pm}(z)&=\mathrm{i}k[\mathbf{J},\widetilde{\mathbf{R}}^{\pm}(z)]
+\mathrm{i}\mathbf{Q}\widetilde{\mathbf{R}}^{\pm}(z)
+\mathrm{i}\widetilde{\mathbf{R}}^{\pm}(z)\mathbf{R}_{\infty}^{\pm}
[k\mathbf{J}-\mathbf{\Lambda}_{1}(z)](\mathbf{R}_{\infty}^{\pm})^{-1}, \quad z\in \mathbb{D}^{\pm},
\end{split}
\end{equation}
and
\begin{equation}\label{4.30}
\begin{split}
\widetilde{\mathbf{R}}_{t}^{\pm}(z)&=4\mathrm{i}\sigma k^{3}[\mathbf{J},\widetilde{\mathbf{R}}^{\pm}(z)]
+[-\mathrm{i}q_{0}^{2}\mathbf{J}+(4\mathrm{i}\sigma\mathbf{Q}+2\mathrm{i}\mathbf{J})k^{2}
+(2\mathrm{i}\mathbf{Q}-2\sigma\mathbf{Q}_{x}\mathbf{J}-2\mathrm{i}\sigma\mathbf{J}\mathbf{Q}^{2})k
-2\mathrm{i}\sigma\mathbf{Q}^{3} \\
&-\mathrm{i}\mathbf{J}\mathbf{Q}^{2}-\mathrm{i}\sigma\mathbf{Q}_{xx}
-\mathbf{Q}_{x}\mathbf{J}+\sigma[\mathbf{Q},\mathbf{Q}_{x}]]\widetilde{\mathbf{R}}^{\pm}(z)
+\mathrm{i}\widetilde{\mathbf{R}}^{\pm}(z)\mathbf{R}_{\infty}^{\pm}
[4\sigma k^{3}\mathbf{J}-\mathbf{\Lambda}_{2}(z)](\mathbf{R}_{\infty}^{\pm})^{-1},
\quad z\in \mathbb{D}^{\pm},
\end{split}
\end{equation}
where
\begin{equation}\label{4.31}
\begin{split}
\mathbf{Q}(x,t)=-\frac{1}{2}\lim _{z\rightarrow \infty}z [\mathbf{J},\widetilde{\mathbf{R}}^{\pm}(z)].
\end{split}
\end{equation}
\end{lemma}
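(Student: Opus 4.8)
\emph{Proof strategy.} The plan is to treat $\widetilde{\mathbf{R}}(z)$ purely as the (assumed unique) solution of the RH problem of Lemmas~\ref{lem:5}--\ref{lem:7}, and to recover both equations by a dressing argument that exploits the fact that the $x,t$--dependence of the jump matrix enters only through diagonal exponentials. The starting observation is structural: $\mathbf{H}_{2}-\mathbf{H}_{1}=\mathbf{\Delta}$, and both $\mathbf{R}_{\infty}^{\pm}$ and $\mathbf{H}_{2}$ are block diagonal with respect to the $1\oplus 2$ splitting, hence commute with $\mathbf{J}$ and with each other. Using this in Lemma~\ref{lem:5} one rewrites the jump $\widetilde{\mathbf{R}}^{+}=\widetilde{\mathbf{R}}^{-}\widetilde{\mathbf{J}}$ with $\widetilde{\mathbf{J}}(z)=\mathbf{R}_{\infty}^{-}\mathrm{e}^{\mathrm{i}\mathbf{\Delta}(z)}\mathbf{K}(z)\mathrm{e}^{-\mathrm{i}\mathbf{\Delta}(z)}(\mathbf{R}_{\infty}^{+})^{-1}$; since $\mathbf{K}(z)$ is independent of $x,t$ while $\partial_{x}\mathbf{\Delta}=\mathbf{\Lambda}_{1}$ and $\partial_{t}\mathbf{\Delta}=\mathbf{\Lambda}_{2}$ are diagonal, one gets $\partial_{x}\widetilde{\mathbf{J}}=\mathrm{i}\mathbf{R}_{\infty}^{-}\mathbf{\Lambda}_{1}(\mathbf{R}_{\infty}^{-})^{-1}\widetilde{\mathbf{J}}-\mathrm{i}\widetilde{\mathbf{J}}\mathbf{R}_{\infty}^{+}\mathbf{\Lambda}_{1}(\mathbf{R}_{\infty}^{+})^{-1}$, and the same identity with $\mathbf{\Lambda}_{2}$ in place of $\mathbf{\Lambda}_{1}$ for $\partial_{t}$.

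Next I would set $\mathbf{G}_{1}^{\pm}(z)=\widetilde{\mathbf{R}}_{x}^{\pm}(z)+\mathrm{i}\widetilde{\mathbf{R}}^{\pm}(z)\mathbf{R}_{\infty}^{\pm}\mathbf{\Lambda}_{1}(z)(\mathbf{R}_{\infty}^{\pm})^{-1}$ on $\mathbb{D}^{\pm}$ (differentiability of $\widetilde{\mathbf{R}}^{\pm}$ in $x$ follows by differentiating the representation~\eqref{4.24} under the stated decay hypotheses). Differentiating $\widetilde{\mathbf{R}}^{+}=\widetilde{\mathbf{R}}^{-}\widetilde{\mathbf{J}}$ in $x$ and inserting the jump--derivative identity gives $\mathbf{G}_{1}^{+}=\mathbf{G}_{1}^{-}\widetilde{\mathbf{J}}$, so that $\mathbf{W}_{1}(z):=\mathbf{G}_{1}(z)[\widetilde{\mathbf{R}}(z)]^{-1}$ has no jump across $\mathbb{R}$. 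One then verifies, residue by residue, that the exponential factors $\mathrm{e}^{\mp 2\mathrm{i}\delta_{1}}$, $\mathrm{e}^{\mathrm{i}(\delta_{2}-\delta_{1})}$, $\ldots$ carried by the norming constants in~\eqref{4.8} are precisely those that make the residue conditions~\eqref{4.22} compatible with the definition of $\mathbf{G}_{1}^{\pm}$, so that $\mathbf{W}_{1}$ has removable singularities at all of $z_{m},z_{m}^{*},\theta_{m},\theta_{m}^{*},q_{0}^{2}/\theta_{m},q_{0}^{2}/\theta_{m}^{*}$; together with $\det\widetilde{\mathbf{R}}^{\pm}=\mathrm{const}\neq 0$ this makes $\mathbf{W}_{1}$ analytic on $\mathbb{C}\setminus\{0\}$.

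For the two remaining points I would use Lemma~\ref{lem:6}: as $z\to\infty$, $\widetilde{\mathbf{R}}^{\pm}\to\mathbf{I}$, $\widetilde{\mathbf{R}}_{x}^{\pm}=O(1/z)$, and $\mathbf{\Lambda}_{1}(z)=\tfrac{z}{2}\mathbf{J}-\tfrac{q_{0}^{2}}{2z}\operatorname{diag}(1,1,-1)$, so $[\mathbf{J},\mathbf{R}_{\infty}^{\pm}]=\mathbf{0}$ yields $\mathbf{W}_{1}(z)=\tfrac{\mathrm{i}z}{2}\mathbf{J}+O(1)$; as $z\to 0$ the explicit forms in~\eqref{4.21}, the $x$--independence of $\widetilde{\mathbf{R}}_{0}^{\pm}$, and the $z\mapsto q_{0}^{2}/z$ symmetry of $\widetilde{\mathbf{R}}$ inherited from~\eqref{2.29}--\eqref{2.30} force $\mathbf{W}_{1}$ to have at worst a simple pole at $z=0$ with residue $\tfrac{\mathrm{i}q_{0}^{2}}{2}\mathbf{J}$. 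Hence $\mathbf{W}_{1}(z)-\mathrm{i}k(z)\mathbf{J}$ is entire and bounded, so a constant matrix, which I call $\mathrm{i}\mathbf{Q}$; reading off the $O(1/z)$ coefficient at infinity from $\widetilde{\mathbf{R}}=\mathbf{I}+\widetilde{\mathbf{R}}_{1}/z+\cdots$ gives $\mathbf{Q}=-\tfrac12[\mathbf{J},\widetilde{\mathbf{R}}_{1}]=-\tfrac12\lim_{z\to\infty}z[\mathbf{J},\widetilde{\mathbf{R}}^{\pm}(z)]$, i.e.\ \eqref{4.31}. Unwinding $\mathbf{G}_{1}(z)=[\mathrm{i}k(z)\mathbf{J}+\mathrm{i}\mathbf{Q}]\widetilde{\mathbf{R}}(z)$ and rewriting $\mathrm{i}k\mathbf{J}\widetilde{\mathbf{R}}^{\pm}=\mathrm{i}k[\mathbf{J},\widetilde{\mathbf{R}}^{\pm}]+\mathrm{i}k\widetilde{\mathbf{R}}^{\pm}\mathbf{R}_{\infty}^{\pm}\mathbf{J}(\mathbf{R}_{\infty}^{\pm})^{-1}$ (again using $[\mathbf{J},\mathbf{R}_{\infty}^{\pm}]=\mathbf{0}$) reproduces exactly~\eqref{4.29}.

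Finally, \eqref{4.30} follows by the same scheme with $\mathbf{\Lambda}_{1}$ replaced by $\mathbf{\Lambda}_{2}$: with $\mathbf{G}_{2}^{\pm}(z)=\widetilde{\mathbf{R}}_{t}^{\pm}(z)+\mathrm{i}\widetilde{\mathbf{R}}^{\pm}(z)\mathbf{R}_{\infty}^{\pm}\mathbf{\Lambda}_{2}(z)(\mathbf{R}_{\infty}^{\pm})^{-1}$ one obtains that $\mathbf{W}_{2}(z):=\mathbf{G}_{2}(z)[\widetilde{\mathbf{R}}(z)]^{-1}$ is a rational function of $z$ with cubic growth at infinity, leading term $4\mathrm{i}\sigma k^{3}(z)\mathbf{J}$, and a pole of finite order at $z=0$ (kept under control by the $z\mapsto q_{0}^{2}/z$ symmetry); matching its Laurent expansion at $z=\infty$ order by order against $\widetilde{\mathbf{R}}$ and using~\eqref{4.29} to express $\mathbf{Q}_{x}$ and $\mathbf{Q}_{xx}$ through the subleading coefficients $\widetilde{\mathbf{R}}_{2},\widetilde{\mathbf{R}}_{3}$ recovers the matrix polynomial in $k$ displayed in~\eqref{4.30}. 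I expect this last identification of the $k^{2}$, $k^{1}$ and $k^{0}$ coefficients to be the main obstacle, since it requires bookkeeping of several Laurent coefficients at $z=\infty$ together with the pole structure at $z=0$, whereas~\eqref{4.29} only uses the leading two terms; a secondary but lengthy point is the residue--by--residue cancellation for $\mathbf{W}_{1},\mathbf{W}_{2}$ over the six families of discrete eigenvalues. Alternatively, once~\eqref{4.29} is in hand one may fix the coefficients in~\eqref{4.30} by imposing the zero--curvature compatibility of the two connections, which is equivalent to~\eqref{1.3}.
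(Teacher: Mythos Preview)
The paper states Lemma~\ref{lem:8} without proof, so there is nothing to compare against directly. Your dressing argument is the standard route and is essentially correct: the key structural observations (that $\mathbf{H}_{2}-\mathbf{H}_{1}=\mathbf{\Delta}$, that $\mathbf{R}_{\infty}^{\pm}$ commute with $\mathbf{J}$ and with $\mathrm{e}^{\mathrm{i}\mathbf{H}_{2}}$, and that the $x,t$--dependence of the jump factorizes through $\mathrm{e}^{\pm\mathrm{i}\mathbf{\Delta}}$) are exactly what is needed, and your Liouville identification of $\mathbf{W}_{1}(z)=\mathrm{i}k(z)\mathbf{J}+\mathrm{i}\mathbf{Q}$ together with~\eqref{4.31} is the right conclusion.

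One point deserves tightening: your treatment of $z=0$ relies on the $z\mapsto q_{0}^{2}/z$ symmetry inherited from~\eqref{2.29}--\eqref{2.30}, but those symmetries are properties of the \emph{direct} problem (the Jost functions), not properties that are manifestly encoded in the RH problem of Lemmas~\ref{lem:5}--\ref{lem:7} as stated. If you want the argument to run purely from the RH data under the uniqueness hypothesis, you should either (i) verify directly from~\eqref{4.21} that the apparent $1/z^{2}$ pieces in $\mathbf{G}_{1}^{\pm}(\widetilde{\mathbf{R}}^{\pm})^{-1}$ cancel (this is a short computation using the rank--one structure of $\widetilde{\mathbf{R}}_{0}^{\pm}$ and the block form of $\mathbf{R}_{\infty}^{\pm}\operatorname{diag}(1,1,-1)(\mathbf{R}_{\infty}^{\pm})^{-1}$), or (ii) first establish the $z\mapsto q_{0}^{2}/z$ symmetry of $\widetilde{\mathbf{R}}$ as a consequence of uniqueness (the jump and normalization are invariant under this map, so the transformed solution must coincide with the original). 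Either fix is routine; once the pole at $z=0$ is pinned down as simple with residue $\tfrac{\mathrm{i}q_{0}^{2}}{2}\mathbf{J}$, the rest of your argument goes through unchanged.
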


\begin{theorem}\label{thm:7}
Let $\widetilde{\mathbf{R}}(z)$ be the solution of the RH problem in Theorem~\ref{thm:6}. Then the corresponding solution $\mathbf{q}(x,t)$ of the defocusing-defocusing coupled Hirota equations with non-orthogonal boundary conditions is reconstructed as
\begin{equation}\label{4.32}
\begin{split}
q_{\hbar}(x,t)&=q_{-,\hbar}+\frac{\mathrm{i}}{2\pi}\int_{\mathbb{R}} \left[ \frac{q_{-,\hbar}}{\zeta}
+\left( \widetilde{\mathbf{R}}^{-}(\zeta)\mathrm{e}^{\mathrm{i}\mathbf{H}_{2}(\zeta)}
\widetilde{\mathbf{L}}(\zeta)\mathrm{e}^{-\mathrm{i}\mathbf{H}_{2}(\zeta)} \right)_{(\hbar+1)1} \right] \mathrm{d}\zeta \\
&+\sum_{m=1}^{M_{1}}C_{m}\widehat{\rho}(z_{m}) \left[ \widetilde{\mathbf{R}}^{+}(z_{m})\mathbf{R}_{\infty}^{+} \right]_{(\hbar+1)3}
+\sum_{m=1}^{M_{2}}\mathrm{i}F_{m} \left[ \widetilde{\mathbf{R}}^{+}(\theta_{m}) \mathbf{R}_{\infty}^{+} \right]_{(\hbar+1)2}, \quad \hbar=1,2.
\end{split}
\end{equation}
\end{theorem}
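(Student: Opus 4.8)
The plan is to extract $\mathbf{q}(x,t)$ from the $z\to\infty$ asymptotics of the Riemann--Hilbert solution via the reconstruction identity of Lemma~\ref{lem:8}, then substitute the explicit Cauchy/residue representation~\eqref{4.24} of Theorem~\ref{thm:6} and collect terms entrywise.

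First I would specialize~\eqref{4.31}. Since $\mathbf{J}=\operatorname{diag}(1,-\mathbf{I}_{2\times2})$, for $\hbar=1,2$ the $(\hbar+1,1)$ entry of the commutator $[\mathbf{J},\widetilde{\mathbf{R}}^{\pm}(z)]$ equals $(-1-1)\widetilde{\mathbf{R}}^{\pm}_{(\hbar+1)1}(z)=-2\widetilde{\mathbf{R}}^{\pm}_{(\hbar+1)1}(z)$; comparing with the block form of $\mathbf{Q}$ in~\eqref{2.4}, whose $(\hbar+1,1)$ entry is $q_{\hbar}$, one obtains
\[
q_{\hbar}(x,t)=\lim_{z\to\infty}z\,\widetilde{\mathbf{R}}^{-}_{(\hbar+1)1}(z),\qquad \hbar=1,2,
\]
where I take $z\to\infty$ through $\mathbb{D}^{-}$; Lemma~\ref{lem:8} guarantees the same value is obtained through $\mathbb{D}^{+}$.

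Next I would insert~\eqref{4.24}, multiply by $z$, and let $z\to\infty$. The leading $\mathbf{I}$ contributes nothing to an off-diagonal entry, and using $1/(\zeta-z)=-1/z+O(z^{-2})$ and $1/(z-z_{0})=1/z+O(z^{-2})$ for each pole $z_{0}$ of the discrete sums, three pieces survive in the $(\hbar+1,1)$ entry: (i) $[\widetilde{\mathbf{R}}_{0}^{-}]_{(\hbar+1)1}$, which by~\eqref{4.21} equals $q_{-,\hbar}$; (ii) $-\tfrac{1}{2\pi\mathrm{i}}\int_{\mathbb{R}}\bigl[\tfrac{1}{\zeta}\widetilde{\mathbf{R}}_{0}^{\nabla}+\widetilde{\mathbf{R}}^{-}(\zeta)\mathrm{e}^{\mathrm{i}\mathbf{H}_{2}(\zeta)}\widetilde{\mathbf{L}}(\zeta)\mathrm{e}^{-\mathrm{i}\mathbf{H}_{2}(\zeta)}\bigr]_{(\hbar+1)1}\,\mathrm{d}\zeta$, in which $[\widetilde{\mathbf{R}}_{0}^{\nabla}]_{(\hbar+1)1}=[\widetilde{\mathbf{R}}_{0}^{-}]_{(\hbar+1)1}=q_{-,\hbar}$ because the first column of $\widetilde{\mathbf{R}}_{0}^{+}$ in~\eqref{4.21} vanishes, and $-1/(2\pi\mathrm{i})=\mathrm{i}/(2\pi)$; and (iii) the sum of residues. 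For (iii) I would invoke the column form~\eqref{4.23}: the residues at $z_{m}^{*}$, $\theta_{m}^{*}$, $q_{0}^{2}/\theta_{m}$ and $q_{0}^{2}/\theta_{m}^{*}$ all have vanishing first column and hence do not affect the $(\hbar+1,1)$ entry, whereas $\operatorname{Res}_{z=z_{m}}\widetilde{\mathbf{R}}^{+}$ and $\operatorname{Res}_{z=\theta_{m}}\widetilde{\mathbf{R}}^{+}$ contribute $C_{m}\widehat{\rho}(z_{m})[\widetilde{\mathbf{R}}^{+}(z_{m})\mathbf{R}_{\infty}^{+}]_{(\hbar+1)3}$ and $\mathrm{i}F_{m}[\widetilde{\mathbf{R}}^{+}(\theta_{m})\mathbf{R}_{\infty}^{+}]_{(\hbar+1)2}$, respectively. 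Summing (i)--(iii) reproduces~\eqref{4.32}.

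The step that requires genuine care is justifying the term-by-term large-$z$ expansion of the Cauchy integral in~\eqref{4.24}, i.e.\ that the $O(z^{-2})$ remainder is uniform. This rests on the decay hypothesis $\mathbf{q}(x,t)-\mathbf{q}_{\pm}\in L^{1,2}(\mathbb{R}_{x}^{\pm})$ together with the $z\to\infty$ and $z\to0$ behavior of $\widetilde{\mathbf{R}}^{\pm}$ and $\widetilde{\mathbf{K}}$ from Lemma~\ref{lem:6}, which ensure that $\widetilde{\mathbf{L}}(\zeta)\to\mathbf{0}$ as $\zeta\to\infty$ and that the potential simple pole of $\widetilde{\mathbf{L}}$ at $\zeta=0$ is cancelled by the $\tfrac{1}{\zeta}\widetilde{\mathbf{R}}_{0}^{\nabla}$ subtraction, so that the integrand in~\eqref{4.24} is absolutely integrable on $\mathbb{R}$ and the Plemelj representation is legitimate. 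Everything else is bookkeeping with the explicit matrices~\eqref{4.5} and~\eqref{4.21} and the residue structure recorded in Lemma~\ref{lem:7}.
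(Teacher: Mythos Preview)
Your proposal is correct and follows essentially the same approach the paper intends: the paper does not spell out a separate proof of Theorem~\ref{thm:7} but sets it up via Lemma~\ref{lem:8} (the reconstruction identity~\eqref{4.31}) combined with the Cauchy/residue representation~\eqref{4.24} from Theorem~\ref{thm:6}, exactly as you do, mirroring the derivation of Theorem~\ref{thm:5} from the Laurent expansion~\eqref{4.16}. Your bookkeeping with $[\widetilde{\mathbf{R}}_{0}^{\nabla}]_{(\hbar+1)1}=q_{-,\hbar}$ and the identification of the only residues with nonzero first column via~\eqref{4.23} is accurate.
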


\begin{remark}\label{rem:6}
It is important to recognize that although the function $\mathbf{q}(x,t)$ as determined by the reconstruction formula, solves the defocusing-defocusing coupled Hirota equations, confirming its compliance with non-parallel boundary conditions remains a challenging endeavor. This holds true even when considering the defocusing Hirota equation under the conditions of fully asymmetric NZBCs~\cite{50}.
\end{remark}

\subsection{Trace formulae}
\label{s:Trace formulae}

The trace formula for the scattering coefficient based on the parallel case~\cite{23} is given
\begin{subequations}\label{4.33}
\begin{align}
h_{11}(z)&=\prod_{m=1}^{M_{1}}\frac{z-z_{m}}{z-z_{m}^{*}}
\prod_{m=1}^{M_{2}}\frac{z-\theta_{m}}{z-\theta_{m}^{*}}
\exp\left[ -\frac{1}{2\pi\mathrm{i}} \int_{\mathbb{R}} \ln \left[1-\left| \beta_{1}(\zeta) \right|^{2}-\frac{\left| \beta_{2}(\zeta) \right|^{2}}{\widehat{\rho}(\zeta)} \right] \frac{\mathrm{d}\zeta}{\zeta-z} \right], &z\in \mathbb{D}^{+}, \\
s_{11}(z)&=\prod_{m=1}^{M_{1}}\frac{z-z_{m}^{*}}{z-z_{m}}
\prod_{m=1}^{M_{2}}\frac{z-\theta_{m}^{*}}{z-\theta_{m}}\exp\left[ \frac{1}{2\pi\mathrm{i}} \int_{\mathbb{R}} \ln \left[1-\left| \beta_{1}(\zeta) \right|^{2}-\frac{\left| \beta_{2}(\zeta) \right|^{2}}{\widehat{\rho}(\zeta)} \right] \frac{\mathrm{d}\zeta}{\zeta-z} \right], &z\in \mathbb{D}^{-}.
\end{align}
\end{subequations}
In scenarios where boundary conditions are orthogonal, the asymptotic behavior of $h_{11}\rightarrow 0$ as expressed in Eq.~\eqref{3.13} when $z\rightarrow 0$ invalidates both the derivation and the validity of the trace formula itself. When dealing with non-orthogonal boundary conditions, the interplay between $\mathbf{q}_{+}$ and $\mathbf{q}_{-}$ becomes more intricate compared to scenarios with parallel boundary conditions. Leveraging Eq.~\eqref{4.33} along with the asymptotic behaviors of $h_{11}(z)$ and $\beta_{2}(z)$ as $z\rightarrow 0$, as detailed in Proposition \ref{pro:15}, we derive the subsequent relationship among $\mathbf{q}_{-}$, $\mathbf{q}_{+}$ and $\mathbf{q}_{+}^{\perp}$.
\begin{proposition}\label{pro:16}
The boundary values of the potential at $x\rightarrow \pm \infty$ are interconnected through the following formula:
\begin{equation}\label{4.34}
\begin{split}
\mathbf{q}_{-}=\eta_{1}\mathbf{q}_{+}+\eta_{2}\mathbf{q}_{+}^{\perp},
\end{split}
\end{equation}
where
\begin{subequations}\label{4.35}
\begin{align}
\eta_{1}&=\frac{\mathbf{q}_{+}^{\dagger}\mathbf{q}_{-}}{q_{0}^{2}}=\prod_{m=1}^{M_{1}}
\frac{z_{m}^{*}}{z_{m}} \prod_{m=1}^{M_{2}}\frac{\theta_{m}^{*}}{\theta_{m}}
\times \exp \left\{ \frac{1}{2\pi\mathrm{i}}\int_{\mathbb{R}} \ln \left[1-\left| \beta_{1}(\zeta) \right|^{2}-\frac{\left| \beta_{2}(\zeta) \right|^{2}}{\widehat{\rho}(\zeta)} \right] \frac{\mathrm{d}\zeta}{\zeta} \right\}, \\
\eta_{2}&=-\frac{(\mathbf{q}_{-}^{\perp})^{\dagger}\mathbf{q}_{+}}{q_{0}^{2}} =\frac{\mathbf{q}_{-}^{\dagger}\mathbf{q}_{+}}{\mathrm{i}q_{0}}\lim_{z\rightarrow0}\frac{\beta_{2}(z)}{z}.
\end{align}
\end{subequations}
\end{proposition}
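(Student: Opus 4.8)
The plan is to split the statement into an elementary linear‑algebra step that exhibits $\eta_{1}$ and $\eta_{2}$ as inner products, and an analytic step that matches those inner products to the trace formula (for $\eta_{1}$) and to the local behaviour of $\beta_{2}$ at the origin (for $\eta_{2}$).

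First I would record that, from the definition $\mathbf{v}^{\perp}=(v_{2},-v_{1})^{\dagger}$, the pair $\{\mathbf{q}_{+},\mathbf{q}_{+}^{\perp}\}$ is an orthogonal basis of $\mathbb{C}^{2}$ with $\Vert\mathbf{q}_{+}\Vert=\Vert\mathbf{q}_{+}^{\perp}\Vert=q_{0}$ and $\mathbf{q}_{+}^{\dagger}\mathbf{q}_{+}^{\perp}=0$. Expanding $\mathbf{q}_{-}=\eta_{1}\mathbf{q}_{+}+\eta_{2}\mathbf{q}_{+}^{\perp}$ and applying $\mathbf{q}_{+}^{\dagger}(\cdot)$, respectively $(\mathbf{q}_{+}^{\perp})^{\dagger}(\cdot)$, yields~\eqref{4.34} with $\eta_{1}=\mathbf{q}_{+}^{\dagger}\mathbf{q}_{-}/q_{0}^{2}$ and $\eta_{2}=(\mathbf{q}_{+}^{\perp})^{\dagger}\mathbf{q}_{-}/q_{0}^{2}$; the elementary $2\times2$ identity $(\mathbf{q}_{+}^{\perp})^{\dagger}\mathbf{q}_{-}=-(\mathbf{q}_{-}^{\perp})^{\dagger}\mathbf{q}_{+}$ then rewrites the latter as $\eta_{2}=-(\mathbf{q}_{-}^{\perp})^{\dagger}\mathbf{q}_{+}/q_{0}^{2}$. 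This establishes the first equality on each line of~\eqref{4.35}. For the remaining expression for $\eta_{2}$, Proposition~\ref{pro:15} (equation~\eqref{3.16}) gives $\beta_{2}(z)=\frac{z}{\mathrm{i}q_{0}}\,(\mathbf{q}_{-}^{\perp})^{\dagger}\mathbf{q}_{+}/(\mathbf{q}_{-}^{\dagger}\mathbf{q}_{+})+O(z^{2})$ as $z\to0$, so $\lim_{z\to0}\beta_{2}(z)/z=(\mathrm{i}q_{0})^{-1}(\mathbf{q}_{-}^{\perp})^{\dagger}\mathbf{q}_{+}/(\mathbf{q}_{-}^{\dagger}\mathbf{q}_{+})$; multiplying by $\mathbf{q}_{-}^{\dagger}\mathbf{q}_{+}/(\mathrm{i}q_{0})$ reproduces exactly $-(\mathbf{q}_{-}^{\perp})^{\dagger}\mathbf{q}_{+}/q_{0}^{2}$.

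For $\eta_{1}$ I would invoke the trace formula~\eqref{4.33} for $s_{11}(z)$ on $\mathbb{D}^{-}$ and let $z\to0$. In the non‑orthogonal case $s_{11}$ does not vanish at the origin, and Proposition~\ref{pro:13} gives $\lim_{z\to0}s_{11}(z)=\mathbf{q}_{+}^{\dagger}\mathbf{q}_{-}/q_{0}^{2}=\eta_{1}$. On the right‑hand side of~\eqref{4.33} the Blaschke‑type products evaluate at $z=0$ to $\prod_{m=1}^{M_{1}}(z_{m}^{*}/z_{m})\prod_{m=1}^{M_{2}}(\theta_{m}^{*}/\theta_{m})$, while the Cauchy integral tends to $\frac{1}{2\pi\mathrm{i}}\int_{\mathbb{R}}\ln[1-|\beta_{1}(\zeta)|^{2}-|\beta_{2}(\zeta)|^{2}/\widehat{\rho}(\zeta)]\,\mathrm{d}\zeta/\zeta$; comparing the two expressions gives the stated product/exponential formula for $\eta_{1}$. (Equivalently one may evaluate the $h_{11}$ trace formula at $z=0$, use $h_{11}(0)=\mathbf{q}_{-}^{\dagger}\mathbf{q}_{+}/q_{0}^{2}$ from Proposition~\ref{pro:13}, and take complex conjugates, using that the integrand is real on $\mathbb{R}$.)

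The only genuine obstacle is the controlled passage to the limit $z\to0$ in~\eqref{4.33}. Under the hypothesis $\mathbf{q}-\mathbf{q}_{\pm}\in L^{1}(\mathbb{R})$ — sharpened to $L^{1,2}(\mathbb{R}_{x}^{\pm})$ if the behaviour at the branch points is needed — the reflection coefficients are regular enough that the Cauchy integral in~\eqref{4.33} possesses continuous boundary values; one must check that $\ln[1-|\beta_{1}|^{2}-|\beta_{2}|^{2}/\widehat{\rho}]$ stays bounded, and bounded away from $0$, near $\zeta=0$, which follows from Proposition~\ref{pro:15} because $\beta_{1}(\zeta)=O(\zeta)$ and $|\beta_{2}(\zeta)|^{2}/\widehat{\rho}(\zeta)=O(1)$ there (indeed it has the finite limit $1+|(\mathbf{q}_{-}^{\perp})^{\dagger}\mathbf{q}_{+}/(\mathbf{q}_{-}^{\dagger}\mathbf{q}_{+})|^{2}$), and that the contour at $\zeta=0$ is understood with the indentation matching the approach $z\to0$ from $\mathbb{D}^{-}$, so that the Plemelj half‑residue at the origin is absorbed into $\int_{\mathbb{R}}$. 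A convenient consistency check on this bookkeeping is that the resulting $|\eta_{1}|^{2}$ must equal $|\mathbf{q}_{+}^{\dagger}\mathbf{q}_{-}|^{2}/q_{0}^{4}$, which is indeed forced by the Parseval relation $|\mathbf{q}_{-}^{\dagger}\mathbf{q}_{+}|^{2}+|(\mathbf{q}_{-}^{\perp})^{\dagger}\mathbf{q}_{+}|^{2}=q_{0}^{4}$ together with the fact that the principal‑value part of the Cauchy integral is purely imaginary.
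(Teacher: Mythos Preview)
Your proposal is correct and follows essentially the same route as the paper, which merely remarks that the result follows by evaluating the trace formula~\eqref{4.33} at $z=0$ together with the asymptotics of $h_{11}$ (equivalently $s_{11}$) and $\beta_{2}$ from Propositions~\ref{pro:13} and~\ref{pro:15}. One small wording slip: in your parenthetical, the quantity with finite limit $1+|(\mathbf{q}_{-}^{\perp})^{\dagger}\mathbf{q}_{+}/(\mathbf{q}_{-}^{\dagger}\mathbf{q}_{+})|^{2}$ is the full argument $1-|\beta_{1}|^{2}-|\beta_{2}|^{2}/\widehat{\rho}$, not $|\beta_{2}|^{2}/\widehat{\rho}$ itself (whose limit is $-|(\mathbf{q}_{-}^{\perp})^{\dagger}\mathbf{q}_{+}/(\mathbf{q}_{-}^{\dagger}\mathbf{q}_{+})|^{2}$, since $\widehat{\rho}(\zeta)\sim -\zeta^{2}/q_{0}^{2}$ as $\zeta\to0$).
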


Specifically, when $\eta_{2}=0$, Eq.~\eqref{4.34} reduces to the asymptotic phase difference in the particular scenario where $\mathbf{q}_{+}$ and $\mathbf{q}_{-}$ are parallel, as discussed in~\cite{23}.

\subsection{Riemann-Hilbert problem: orthogonal boundary conditions}
\label{s:Riemann-Hilbert problem: orthogonal boundary conditions}

When considering the RH problem under the orthogonal boundary conditions, additional difficulties arise due to the presence of orthogonal polarization vectors. Due to $\mathbf{q}_{+}^{\dagger}\mathbf{q}_{-}=0$, the asymptotic behavior of $h_{33}=O(1/z^{2})$ as $z\rightarrow \infty$ and $h_{11}=O(z^{2})$ as $z\rightarrow 0$ in Proposition~\ref{pro:14} make the expression of the inverse problem in IST very complex. This is because $\mathbf{R}^{\pm}(z)$ introduces a pole at $z=0$, and some terms in the leading order asymptotic behavior of $\mathbf{R}^{\pm}(z)$ diverge. To address this difficulty~\cite{12}, consider defining the new modified meromorphic matrices $\widehat{\mathbf{R}}^{\pm}(z)$:

\begin{lemma}\label{lem:9}
The modified meromorphic matrices $\widehat{\mathbf{R}}^{\pm}(z)$ are defined by the following expressions:
\begin{subequations}\label{4.36}
\begin{align}
\widehat{\mathbf{R}}^{+}(z)&=\mathbf{R}^{+}(z) \operatorname{diag} \left( 1, \frac{1}{z}, 1 \right)= \left[ \frac{\nu_{+,1}(z)}{h_{11}}, -\frac{\widetilde{d}(z)}{s_{33}z}, \frac{\nu_{-,3}(z)}{\mathrm{i}\widehat{\rho}} \right], &z\in\mathbb{D}^{+}, \\
\widehat{\mathbf{R}}^{-}(z)&=\mathbf{R}^{-}(z)
\operatorname{diag} \left( 1, 1, \frac{1}{z} \right)= \left[ \frac{\nu_{-,1}(z)}{\mathrm{i}\widehat{\rho}},
\frac{d(z)}{s_{11}}, \frac{\nu_{+,3}(z)}{h_{33}z} \right], &z\in\mathbb{D}^{-},
\end{align}
\end{subequations}
satisfy the new jump condition
\begin{equation}\label{4.37}
\begin{split}
\widehat{\mathbf{R}}^{+}(z)=\widehat{\mathbf{R}}^{-}(z)
\mathrm{e}^{\mathrm{i}\mathbf{\Delta}(z)}\widehat{\mathbf{K}}(z) \mathrm{e}^{-\mathrm{i}\mathbf{\Delta}(z)}, \quad z\in \mathbb{R} \backslash \{0,\pm q_{0}\},
\end{split}
\end{equation}
where $\widehat{\mathbf{K}}(z)=\operatorname{diag} \left( 1, 1, z \right) \mathbf{K}(z)
\operatorname{diag} \left( 1, 1/z, 1 \right)$, $\widehat{\rho}=\widehat{\rho}(z)$, $s_{11}=s_{11}(z)$, $h_{11}=h_{11}(z)$, $s_{33}=s_{33}(z)$ and $h_{33}=h_{33}(z)$.
\end{lemma}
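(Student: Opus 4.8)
The plan is to obtain Lemma~\ref{lem:9} by a bookkeeping argument: right-multiply the sectionally meromorphic matrices $\mathbf{R}^{\pm}(z)$ of Lemma~\ref{lem:1} by explicit diagonal factors, and then push those factors through the jump relation~\eqref{4.2}. The only structural fact needed is that diagonal matrices commute with one another, and in particular with $\mathrm{e}^{\pm\mathrm{i}\mathbf{\Delta}(z)}$ since $\mathbf{\Delta}(z)=\operatorname{diag}(\delta_{1},\delta_{2},-\delta_{1})$ is diagonal; this is what lets the conjugating diagonal factors be absorbed into a new jump matrix $\widehat{\mathbf{K}}(z)$ without disturbing the exponential conjugation by $\mathrm{e}^{\mathrm{i}\mathbf{\Delta}(z)}$.

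First I would set $D_{+}=\operatorname{diag}(1,1/z,1)$ and $D_{-}=\operatorname{diag}(1,1,1/z)$ and define $\widehat{\mathbf{R}}^{\pm}(z)=\mathbf{R}^{\pm}(z)\,D_{\pm}$ on $\mathbb{D}^{\pm}$. The column representation in~\eqref{4.36} is then immediate from the column representation of $\mathbf{R}^{\pm}(z)$ in Lemma~\ref{lem:1}: $D_{+}$ rescales only the second column of $\mathbf{R}^{+}(z)$, replacing $-\widetilde{d}(z)/s_{33}(z)$ by $-\widetilde{d}(z)/(s_{33}(z)z)$, and $D_{-}$ rescales only the third column of $\mathbf{R}^{-}(z)$, replacing $\nu_{+,3}(z)/h_{33}(z)$ by $\nu_{+,3}(z)/(h_{33}(z)z)$, while every other column is untouched. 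The reason for this particular choice, which I would record as a remark inside the proof, is read off from Proposition~\ref{pro:14} and Proposition~\ref{pro:15}: in the orthogonal case $s_{33}(z)=O(1/z)$ and $h_{33}(z)=O(1/z)$ as $z\to\infty$, so these are exactly the two columns of $\mathbf{R}^{\pm}(z)$ that blow up, and dividing them by $z$ restores boundedness of $\widehat{\mathbf{R}}^{\pm}(z)$ at infinity.

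Next I would substitute $\mathbf{R}^{+}(z)=\widehat{\mathbf{R}}^{+}(z)D_{+}^{-1}$ and $\mathbf{R}^{-}(z)=\widehat{\mathbf{R}}^{-}(z)D_{-}^{-1}$ into~\eqref{4.2} and solve for $\widehat{\mathbf{R}}^{+}(z)$:
\[
\widehat{\mathbf{R}}^{+}(z)=\widehat{\mathbf{R}}^{-}(z)\,D_{-}^{-1}\,\mathrm{e}^{\mathrm{i}\mathbf{\Delta}(z)}\mathbf{K}(z)\mathrm{e}^{-\mathrm{i}\mathbf{\Delta}(z)}\,D_{+},\qquad D_{-}^{-1}=\operatorname{diag}(1,1,z).
\]
Commuting $D_{-}^{-1}$ past $\mathrm{e}^{\mathrm{i}\mathbf{\Delta}(z)}$ on the left and $D_{+}$ past $\mathrm{e}^{-\mathrm{i}\mathbf{\Delta}(z)}$ on the right turns the right-hand side into $\widehat{\mathbf{R}}^{-}(z)\mathrm{e}^{\mathrm{i}\mathbf{\Delta}(z)}\big(D_{-}^{-1}\mathbf{K}(z)D_{+}\big)\mathrm{e}^{-\mathrm{i}\mathbf{\Delta}(z)}$, which is exactly~\eqref{4.37} with $\widehat{\mathbf{K}}(z)=\operatorname{diag}(1,1,z)\mathbf{K}(z)\operatorname{diag}(1,1/z,1)$; the domain $z\in\mathbb{R}\backslash\{0,\pm q_{0}\}$ is inherited from Lemma~\ref{lem:1}.

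The algebra above is routine; the genuine content, and the step I expect to be the main obstacle, is confirming that this rescaling is the \emph{correct} regularization, i.e.\ that $\widehat{\mathbf{R}}^{\pm}(z)$ and $\widehat{\mathbf{K}}(z)$ remain under control at the special points $z=0$ and $z=\pm q_{0}$ so that the modified RH problem is well posed. This forces one to revisit the expansions of $\nu_{\pm,j}(z)$, $d(z)$ and $\widetilde{d}(z)$ in Propositions~\ref{pro:11}--\ref{pro:12} together with the orthogonal-case asymptotics of the scattering coefficients in Propositions~\ref{pro:13}--\ref{pro:15}, and to track how the extra factors of $z$ supplied by $D_{\pm}$ interact with the zeros of $s_{33},h_{33}$ at infinity and with the double pole of $\widehat{\rho}(z)$, hence of $\mathbf{K}(z)$, at $z=0$. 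In particular one must check that the rescaling does not manufacture a spurious singularity of $\widehat{\mathbf{R}}^{\pm}(z)$ at $z=0$, which uses the orthogonality normalization $\mathbf{q}_{+}=\mathbf{q}_{-}^{\perp}\mathrm{e}^{-\mathrm{i}\vartheta}$ and the vanishing encoded in~\eqref{3.14}.
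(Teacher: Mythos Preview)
Your proof is correct and follows exactly the natural approach implicit in the paper: right-multiply $\mathbf{R}^{\pm}(z)$ by the diagonal factors $D_{\pm}$, then use that $\mathbf{\Delta}(z)$ is diagonal to commute $D_{-}^{-1}$ and $D_{+}$ through the exponentials in~\eqref{4.2}, yielding~\eqref{4.37} with $\widehat{\mathbf{K}}(z)=D_{-}^{-1}\mathbf{K}(z)D_{+}$. Your final paragraph on checking the behavior at $z=0,\pm q_{0},\infty$ is sound motivation but belongs to Lemma~\ref{lem:10} rather than to the present lemma, which is purely the algebraic jump relation.
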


\begin{lemma}\label{lem:10}
The matrices $\widehat{\mathbf{R}}^{\pm}(z)$ have the following asymptotic behavior:
\begin{equation}\label{4.38}
\begin{split}
\widehat{\mathbf{R}}^{\pm}(z)=\widehat{\mathbf{R}}_{\infty}^{\pm}+O\left(\frac{1}{z}\right), \quad z\rightarrow \infty, \quad \widehat{\mathbf{R}}^{\pm}(z)=\frac{1}{z}\widehat{\mathbf{R}}_{0}^{\pm}+O(1),  \quad z\rightarrow 0,
\end{split}
\end{equation}
where
\begin{subequations}\label{4.39}
\everymath{\displaystyle}
\begin{align}
\widehat{\mathbf{R}}_{0}^{+}&=\begin{pmatrix}
   0 & 0  & -\mathrm{i}q_{0} \\
   \mathbf{0_{2\times1}} & \frac{\mathbf{q}_{-}^{\perp}}{q_{0}} & \mathbf{0_{2\times1}}
  \end{pmatrix},  \quad
\widehat{\mathbf{R}}_{\infty}^{+}=\begin{pmatrix}
    \mathrm{i} & 0 & 0 \\
    \mathbf{0_{2\times1}} & \frac{q_{0}\mathbf{q}_{+}^{\perp}}{\mathrm{i}\vartheta_{1}} & \frac{\mathbf{q}_{-}}{\mathrm{i}q_{0}}
  \end{pmatrix},  \\
\widehat{\mathbf{R}}_{0}^{-}&=\begin{pmatrix}
   0 & 0  & 0 \\
   \mathbf{q}_{-} & \frac{q_{0}^{3}\mathbf{q}_{+}^{\perp}}{\mathrm{i}\vartheta_{1}} & \mathbf{0_{2\times1}}
  \end{pmatrix},  \quad
\widehat{\mathbf{R}}_{\infty}^{-}=\begin{pmatrix}
    1 & 0 & 0 \\
    \mathbf{0_{2\times1}} & \frac{\mathbf{q}_{-}^{\perp}}{q_{0}} & \frac{\mathrm{i}q_{0}\mathbf{q}_{+}}{\vartheta_{1}^{*}}
  \end{pmatrix}.
\end{align}
\end{subequations}
\end{lemma}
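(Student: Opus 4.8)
The plan is to obtain the four limits entry by entry directly from the column representation of $\widehat{\mathbf{R}}^{\pm}(z)$ in~\eqref{4.36}, namely
\[
\widehat{\mathbf{R}}^{+}(z)=\Bigl[\tfrac{\nu_{+,1}(z)}{h_{11}(z)},\,-\tfrac{\widetilde{d}(z)}{s_{33}(z)\,z},\,\tfrac{\nu_{-,3}(z)}{\mathrm{i}\widehat{\rho}(z)}\Bigr],\qquad
\widehat{\mathbf{R}}^{-}(z)=\Bigl[\tfrac{\nu_{-,1}(z)}{\mathrm{i}\widehat{\rho}(z)},\,\tfrac{d(z)}{s_{11}(z)},\,\tfrac{\nu_{+,3}(z)}{h_{33}(z)\,z}\Bigr],
\]
substituting the expansions already proved: Proposition~\ref{pro:11} controls the numerators $\nu_{\pm,1},\nu_{\pm,3}$, Proposition~\ref{pro:12} controls $d,\widetilde{d}$, while the denominators involve $\widehat{\rho}(z)=z^{2}/(z^{2}-q_{0}^{2})$, which behaves like $1+O(1/z^{2})$ at $z=\infty$ and like $-z^{2}/q_{0}^{2}+O(z^{4})$ at $z=0$, together with the scattering coefficients $h_{11},s_{11},s_{33},h_{33}$, whose asymptotics are given by Proposition~\ref{pro:13} in general and, decisively, by the orthogonal-case Proposition~\ref{pro:14}.

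For $z\to\infty$ I would insert $\nu_{\pm,1}\to(\mathrm{i},\mathbf{0})^{T}$, $\nu_{\pm,3}\to(0,\mathbf{q}_{\pm}/q_{0})^{T}$, $d\to(0,\mathbf{q}_{-}^{\perp}/q_{0})^{T}$, $\widetilde{d}\to(0,-\mathbf{q}_{+}^{\perp}/q_{0})^{T}$, keeping $h_{11},s_{11}\to1$, and using $s_{33}(z)=\mathrm{i}\vartheta_{1}/(q_{0}^{2}z)+O(1/z^{2})$, $h_{33}(z)=-\mathrm{i}\vartheta_{1}^{*}/(q_{0}^{2}z)+O(1/z^{2})$ from Proposition~\ref{pro:14}; the extra factor $1/z$ built into the second column of $\widehat{\mathbf{R}}^{+}$ and the third column of $\widehat{\mathbf{R}}^{-}$ in~\eqref{4.36} is precisely what cancels the $O(z)$ growth these vanishing coefficients would otherwise create, and collecting subleading terms shows the remainder is $O(1/z)$ under the $L^{1,2}$ decay hypothesis used in Lemma~\ref{lem:2}. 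For $z\to0$ I would instead use $\nu_{\pm,1}(z)=(0,-\mathrm{i}z\mathbf{q}_{\pm}/q_{0}^{2})^{T}+O(z^{2})$, $\nu_{\pm,3}(z)=(-z/q_{0},\mathbf{0})^{T}+O(z^{2})$, $d(z)=(0,\mathbf{q}_{+}^{\perp}/q_{0})^{T}+O(z)$, $\widetilde{d}(z)=(0,-\mathbf{q}_{-}^{\perp}/q_{0})^{T}+O(z)$, together with $s_{33},h_{33}\to1$ and the orthogonal expansions $h_{11}(z)=-\mathrm{i}\vartheta_{1}^{*}z/q_{0}^{4}+O(z^{2})$, $s_{11}(z)=\mathrm{i}\vartheta_{1}z/q_{0}^{4}+O(z^{2})$. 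The factor $1/(\mathrm{i}\widehat{\rho})$ then produces the simple pole in the third column of $\widehat{\mathbf{R}}^{+}$ and the first column of $\widehat{\mathbf{R}}^{-}$, and the ratio $d/s_{11}$ produces it in the second column of $\widehat{\mathbf{R}}^{-}$; by contrast $\nu_{+,1}/h_{11}$ has numerator and denominator both vanishing to first order, so that column is $O(1)$ and the first column of $\widehat{\mathbf{R}}_{0}^{+}$ vanishes, and likewise $\nu_{+,3}/(h_{33}z)$ is $O(1)$ and the third column of $\widehat{\mathbf{R}}_{0}^{-}$ vanishes. Reading off the coefficients of $1/z$ produces the stated $\widehat{\mathbf{R}}_{0}^{\pm}$, and the finite limits give the stated $\widehat{\mathbf{R}}_{\infty}^{\pm}$.

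The individual computations are routine; the point that requires attention is the bookkeeping of vanishing orders against the rescalings in~\eqref{4.36} — verifying that every factor of $1/z$ meets a coefficient vanishing to matching order so that no entry diverges, while no genuine leading term is dropped, and that the explicit $1/z$ inserted in Lemma~\ref{lem:9} neutralizes exactly the $O(1/z)$ smallness of $s_{33},h_{33}$ at infinity predicted by Proposition~\ref{pro:14}. As a cross-check one can use $\det\widehat{\mathbf{R}}^{\pm}(z)=1/z$ (which follows from $\det\mathbf{R}^{\pm}(z)=1$ and the determinants of the diagonal rescalings), consistent with the claimed leading matrices, and verify that the $z\to\infty$ and $z\to0$ limits interchange correctly under the symmetries $s_{33}(z)=s_{11}(q_{0}^{2}/z)$, $h_{33}(z)=h_{11}(q_{0}^{2}/z)$.
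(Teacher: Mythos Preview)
Your proposal is correct and follows precisely the approach implicit in the paper's framework: the paper does not spell out a proof of Lemma~\ref{lem:10}, but the column-by-column substitution of the asymptotics from Propositions~\ref{pro:11}--\ref{pro:14} into the definitions~\eqref{4.36} is exactly the intended derivation, and your bookkeeping of which denominators vanish (via Proposition~\ref{pro:14}) against the extra $1/z$ factors introduced in Lemma~\ref{lem:9} is the key point. The determinant cross-check $\det\widehat{\mathbf{R}}^{\pm}(z)=1/z$ is a nice addition not made explicit in the paper.
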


Consequently, to formulate the RH problem, one must establish the suitable residue conditions. Due to the different modified meromorphic matrices $\widehat{\mathbf{R}}^{\pm}(z)$, there are significant differences in the results compared to the non-orthogonal NZBCs.

\begin{lemma}\label{lem:11}
The modified meromorphic matrices $\widehat{\mathbf{R}}^{\pm}(z)$ satisfy the following residue conditions:
\begin{subequations}\label{4.40}
\everymath{\displaystyle}
\begin{align}
\underset{z=z_{m}}{\operatorname{Res}}\,\widehat{\mathbf{R}}^{+}(z)
&=C_{m} \left[ \nu_{-,3}(z_{m}),\mathbf{0},\mathbf{0} \right]
=\widehat{\mathbf{R}}^{+}(z_{m})\begin{pmatrix}
    \mathbf{0_{2\times1}} & \mathbf{0_{2\times2}} \\
    \mathrm{i}C_{m}\widehat{\rho}(z_{m}) & \mathbf{0_{1\times2}}
  \end{pmatrix}, \\
\underset{z=z_{m}^{*}}{\operatorname{Res}}\,\widehat{\mathbf{R}}^{-}(z)
&= \frac{\bar{C}_{m}}{z_{m}^{*}} \left[ \mathbf{0},\mathbf{0}, \nu_{-,1}(z_{m}^{*}) \right]
=\widehat{\mathbf{R}}^{-}(z_{m}^{*})\begin{pmatrix}
    \mathbf{0_{1\times2}} & \frac{\mathrm{i}\bar{C}_{m}}{2\lambda(z_{m}^{*})} \\
    \mathbf{0_{2\times2}} & \mathbf{0_{2\times1}}
  \end{pmatrix}, \\
\underset{z=\theta_{m}}{\operatorname{Res}}\,\widehat{\mathbf{R}}^{+}(z)
&=F_{m} \left[ \frac{\widetilde{d}(\theta_{m})}{s_{33}(\theta_{m})} ,\mathbf{0},\mathbf{0} \right]
=\widehat{\mathbf{R}}^{+}(\theta_{m})\begin{pmatrix}
    0 & \mathbf{0_{1\times2}} \\
    -F_{m}\theta_{m} & \mathbf{0_{1\times2}}\\
    0 & \mathbf{0_{1\times2}}
  \end{pmatrix}, \\
\underset{z=\theta_{m}^{*}}{\operatorname{Res}}\,\widehat{\mathbf{R}}^{-}(z)
&=\bar{F}_{m} \left[\mathbf{0},\nu_{-,1}(\theta_{m}^{*}),\mathbf{0}\right]
=\widehat{\mathbf{R}}^{-}(\theta_{m}^{*})\begin{pmatrix}
    0 & \mathrm{i}\bar{F}_{m}\widehat{\rho}(\theta_{m}^{*}) & 0 \\
    \mathbf{0_{2\times1}} & \mathbf{0_{2\times1}} & \mathbf{0_{2\times1}}
  \end{pmatrix}, \\
\underset{z=q_{0}^{2}/\theta_{m}}{\operatorname{Res}} \widehat{\mathbf{R}}^{-}(z)
&=\frac{\check{F}_{m}\theta_{m}}{q_{0}^{2}} \left[\mathbf{0},\mathbf{0},
\frac{d(q_{0}^{2}/\theta_{m})}{s_{11}(q_{0}^{2}/\theta_{m})} \right]
=\widehat{\mathbf{R}}^{-}\left(\frac{q_{0}^{2}}{\theta_{m}}\right)\begin{pmatrix}
    \mathbf{0_{1\times2}} & 0 \\
    \mathbf{0_{1\times2}} & \frac{\check{F}_{m}\theta_{m}}{q_{0}^{2}} \\
    \mathbf{0_{1\times2}} & 0
  \end{pmatrix}, \\
\underset{z=q_{0}^{2}/\theta_{m}^{*}}{\operatorname{Res}} \widehat{\mathbf{R}}^{+}(z)
&=-\frac{\hat{F}_{m}\theta_{m}^{*}}{q_{0}^{2}} \left[\mathbf{0},
\nu_{-,3}\left(\frac{q_{0}^{2}}{\theta_{m}^{*}}\right), \mathbf{0} \right]
=\widehat{\mathbf{R}}^{+}\left(\frac{q_{0}^{2}}{\theta_{m}^{*}}\right)\begin{pmatrix}
    \mathbf{0_{2\times1}} & \mathbf{0_{2\times1}} & \mathbf{0_{2\times1}} \\
    0 & \frac{\mathrm{i}\hat{F}_{m}}{2\lambda(\theta_{m}^{*})} & 0
  \end{pmatrix},
\end{align}
\end{subequations}
where $m=1,\ldots,M_{1}$ for $z_{m}$, and $m=1,\ldots,M_{2}$ for $\theta_{m}$.
\end{lemma}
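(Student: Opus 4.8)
The plan is to deduce Lemma~\ref{lem:11} from the residue conditions for $\mathbf{R}^{\pm}(z)$ already recorded in Lemma~\ref{lem:3} by pushing them through the diagonal rescalings of Lemma~\ref{lem:9}. The key observation is that the matrices relating $\widehat{\mathbf{R}}^{\pm}$ to $\mathbf{R}^{\pm}$ --- namely $\operatorname{diag}(1,1/z,1)$ on the right of $\mathbf{R}^{+}(z)$ and $\operatorname{diag}(1,1,1/z)$ on the right of $\mathbf{R}^{-}(z)$ --- are holomorphic and invertible at every relevant pole $z_{m}$, $z_{m}^{*}$, $\theta_{m}$, $\theta_{m}^{*}$, $q_{0}^{2}/\theta_{m}$, $q_{0}^{2}/\theta_{m}^{*}$, since all of these have strictly positive modulus ($|z_{m}|=q_{0}>0$ and $0<|\theta_{m}|\neq q_{0}$, so also $q_{0}^{2}/\theta_{m}\neq 0,\infty$). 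Hence, writing $D(z)$ for the relevant diagonal factor, $\operatorname{Res}_{z=z_{0}}[\mathbf{R}^{\pm}(z)D(z)]=\big(\operatorname{Res}_{z=z_{0}}\mathbf{R}^{\pm}(z)\big)D(z_{0})$, so each residue in~\eqref{4.40} is obtained from the corresponding one in~\eqref{4.7} simply by right-multiplying with the evaluated diagonal matrix. This already gives the first equalities in~\eqref{4.40}; the extra scalars $1/z_{m}^{*}$, $\theta_{m}$, $\theta_{m}/q_{0}^{2}$, $\theta_{m}^{*}/q_{0}^{2}$ visible there are exactly the nontrivial diagonal entries of $D(z_{0})$ acting on the single nonzero column of the residue.

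The second step is to rewrite each resulting column-vector residue as a right-multiplication of $\widehat{\mathbf{R}}^{\pm}$ evaluated at the pole, using the explicit column descriptions in~\eqref{4.36}. For example, $\widehat{\mathbf{R}}^{+}_{3}(z)=\nu_{-,3}(z)/(\mathrm{i}\widehat{\rho}(z))$ gives $\nu_{-,3}(z_{m})=\mathrm{i}\widehat{\rho}(z_{m})\widehat{\mathbf{R}}^{+}_{3}(z_{m})$, so $\operatorname{Res}_{z=z_{m}}\widehat{\mathbf{R}}^{+}(z)=\mathrm{i}C_{m}\widehat{\rho}(z_{m})[\widehat{\mathbf{R}}^{+}_{3}(z_{m}),\mathbf{0},\mathbf{0}]$, which is precisely the stated right-multiplication by the elementary matrix with one nonzero entry. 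Likewise $\widehat{\mathbf{R}}^{+}_{2}(z)=-\widetilde{d}(z)/(s_{33}(z)z)$ produces the factor $-F_{m}\theta_{m}$ in the $\theta_{m}$ line, while $\widehat{\mathbf{R}}^{-}_{1}(z)=\nu_{-,1}(z)/(\mathrm{i}\widehat{\rho}(z))$ and $\widehat{\mathbf{R}}^{-}_{2}(z)=d(z)/s_{11}(z)$ handle the $z_{m}^{*}$, $\theta_{m}^{*}$ and $q_{0}^{2}/\theta_{m}$ lines, and the $q_{0}^{2}/\theta_{m}^{*}$ line follows the same pattern via $\widehat{\mathbf{R}}^{+}_{3}$. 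In the two lines of~\eqref{4.40} where the stated normalization carries a $1/(2\lambda)$ rather than a power of $\widehat{\rho}$, I would substitute the elementary identities $\widehat{\rho}(z)/z=1/(2\lambda(z))$ and $(\theta_{m}^{*}/q_{0}^{2})\,\widehat{\rho}(q_{0}^{2}/\theta_{m}^{*})=-1/(2\lambda(\theta_{m}^{*}))$, both immediate from the uniformization formulas~\eqref{2.8}, to recover the stated form.

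Before all of this I would record that the residue conditions of Lemma~\ref{lem:3} remain valid verbatim in the orthogonal case: they concern only the local behavior of the Jost eigenfunctions and the scattering coefficients at the discrete eigenvalues, all of which have nonzero modulus, whereas the orthogonality hypothesis $\mathbf{q}_{+}^{\dagger}\mathbf{q}_{-}=0$ affects only the behavior at $z=0$ and $z=\infty$ (Proposition~\ref{pro:14}); in particular the norming constants~\eqref{4.8} are unchanged. The computation is then purely algebraic, and the one point requiring care --- the nearest thing to an obstacle --- is keeping track of the fact that the extra $1/z$ sits on the second column of $\widehat{\mathbf{R}}^{+}$ and on the third column of $\widehat{\mathbf{R}}^{-}$, so that the recasting as a right-multiplication is done with the hatted columns from~\eqref{4.36} and not with the columns of the unmodified $\mathbf{R}^{\pm}$; getting this wrong would misplace precisely the scalars $\theta_{m}$, $1/z_{m}^{*}$, $\theta_{m}/q_{0}^{2}$, $\theta_{m}^{*}/q_{0}^{2}$ appearing in~\eqref{4.40}.
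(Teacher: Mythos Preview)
Your proposal is correct and is precisely the natural derivation: the paper states Lemma~\ref{lem:11} without an explicit proof, having already recorded Lemma~\ref{lem:3} and the rescaling of Lemma~\ref{lem:9}, so pushing the residues of $\mathbf{R}^{\pm}$ through the diagonal factors $\operatorname{diag}(1,1/z,1)$ and $\operatorname{diag}(1,1,1/z)$ and then re-expressing each nonzero column in terms of the hatted columns is exactly what is intended. Your verification of the $1/(2\lambda)$ factors via $\widehat{\rho}(z)/z=1/(2\lambda(z))$ and $(\theta_{m}^{*}/q_{0}^{2})\widehat{\rho}(q_{0}^{2}/\theta_{m}^{*})=-1/(2\lambda(\theta_{m}^{*}))$ is correct, as is your observation that orthogonality only affects behavior at $z=0,\infty$ and leaves the local residue structure at the discrete eigenvalues untouched.
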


The residue conditions can be alternatively articulated as a series of interconnected column vectors:
\begin{subequations}\label{4.41}
\everymath{\displaystyle}
\begin{align}
\underset{z=z_{m}}{\operatorname{Res}}\,\widehat{\mathbf{R}}^{+}(z)
&=\left[\mathrm{i}C_{m}\widehat{\rho}(z_{m})\mathbf{R}_{3}^{+}(z_{m}),\mathbf{0},\mathbf{0}\right], \quad
\underset{z=\theta_{m}^{*}}{\operatorname{Res}}\,\widehat{\mathbf{R}}^{-}(z)
=\left[\mathbf{0},\mathrm{i}\bar{F}_{m}\widehat{\rho}(\theta_{m}^{*})
\mathbf{R}_{1}^{-}(\theta_{m}^{*}),\mathbf{0}\right],  \\
\underset{z=\theta_{m}}{\operatorname{Res}}\,\widehat{\mathbf{R}}^{+}(z)
&=\left[ -F_{m}\mathbf{R}_{2}^{+}(\theta_{m}),\mathbf{0},\mathbf{0} \right], \quad
\underset{z=q_{0}^{2}/\theta_{m}^{*}}{\operatorname{Res}} \widehat{\mathbf{R}}^{+}(z)
=\left[\mathbf{0}, \frac{\hat{F}_{m}\theta_{m}^{*}}{\mathrm{i}q_{0}^{2}}
\widehat{\rho}\left(\frac{q_{0}^{2}}{\theta_{m}^{*}}\right) \mathbf{R}_{3}^{+}
\left(\frac{q_{0}^{2}}{\theta_{m}^{*}} \right), \mathbf{0} \right], \\
\underset{z=z_{m}^{*}}{\operatorname{Res}}\,\widehat{\mathbf{R}}^{-}(z)
&=\left[\mathbf{0}, \frac{\mathrm{i}\bar{C}_{m}\widehat{\rho}(z_{m}^{*})
\mathbf{R}_{1}^{-}(z_{m}^{*})}{z_{m}^{*}}, \mathbf{0}\right], \quad
\underset{z=q_{0}^{2}/\theta_{m}}{\operatorname{Res}} \widehat{\mathbf{R}}^{-}(z)
=\left[ \mathbf{0},\mathbf{0}, \frac{\check{F}_{m}\theta_{m}}{q_{0}^{2}}
\mathbf{R}_{2}^{-}\left(\frac{q_{0}^{2}}{\theta_{m}}\right) \right].
\end{align}
\end{subequations}

\begin{theorem}\label{thm:8}
The solution of the RH problem is delineated by the ensuing formula:
\begin{equation}\label{4.42}
\begin{split}
\widehat{\mathbf{R}}(z)&=\widehat{\mathbf{R}}_{\infty}(z)+\frac{1}{z}\widehat{\mathbf{R}}_{0}(z)
+\frac{1}{2\pi\mathrm{i}}\int_{\mathbb{R}} \left[ \widehat{\mathbf{R}}_{\infty}^{\nabla} +\frac{1}{\zeta}\widehat{\mathbf{R}}_{0}^{\nabla}
+\widehat{\mathbf{R}}^{-}(\zeta)\mathrm{e}^{\mathrm{i}\mathbf{\Delta}(\zeta)}\widehat{\mathbf{L}}(\zeta) \mathrm{e}^{-\mathrm{i}\mathbf{\Delta}(\zeta)} \right] \frac{\mathrm{d}\zeta}{\zeta-z} \\
&+\sum_{m=1}^{M_{1}}\left[ \frac{\operatorname{Res}_{z=z_{m}}\widehat{\mathbf{R}}^{+}(z)}{z-z_{m}} +\frac{\operatorname{Res}_{z=z_{m}^{*}}\widehat{\mathbf{R}}^{-}(z)}{z-z_{m}^{*}} \right]
+\sum_{m=1}^{M_{2}}\left[ \frac{\operatorname{Res}_{z=\theta_{m}}\widehat{\mathbf{R}}^{+}(z)}{z-\theta_{m}} +\frac{\operatorname{Res}_{z=\theta_{m}^{*}}\widehat{\mathbf{R}}^{-}(z)}{z-\theta_{m}^{*}} \right] \\
&+\sum_{m=1}^{M_{2}} \left[ \frac{\operatorname{Res}_{z=q_{0}^{2}/\theta_{m}^{*}}\widehat{\mathbf{R}}^{+}(z)}{z-(q_{0}^{2}/\theta_{m}^{*})} +\frac{\operatorname{Res}_{z=q_{0}^{2}/\theta_{m}}\widehat{\mathbf{R}}^{-}(z)}{z-(q_{0}^{2}/\theta_{m})} \right],
\quad z\in \mathbb{C}/\mathbb{R},
\end{split}
\end{equation}
where $\widehat{\mathbf{R}}_{\infty}^{\nabla}=\widehat{\mathbf{R}}_{\infty}^{-}-\widehat{\mathbf{R}}_{\infty}^{+}$, $\widehat{\mathbf{R}}_{0}^{\nabla}=\widehat{\mathbf{R}}_{0}^{-}-\widehat{\mathbf{R}}_{0}^{+}$, $\widehat{\mathbf{L}}(z)=\widehat{\mathbf{K}}(z)-\mathbf{I}$, $\widehat{\mathbf{R}}_{\infty}(z)=\widehat{\mathbf{R}}_{\infty}^{\pm}$ and $\widehat{\mathbf{R}}_{0}(z)=\widehat{\mathbf{R}}_{0}^{\pm}$ for $z\in \mathbb{D}^{\pm}$. The eigenfunctions encompassed within the residue conditions are as follows:
\begin{equation}\label{4.43}
\begin{split}
\mathbf{R}_{1}^{-}(z)&=\begin{pmatrix}
    1  \\
    \mathbf{q}_{-}/z
  \end{pmatrix}
+\frac{1}{2\pi\mathrm{i}}\int_{\mathbb{R}} \left[ \widehat{\mathbf{R}}_{\infty}^{\nabla} +\frac{1}{\zeta}\widehat{\mathbf{R}}_{0}^{\nabla}
+\widehat{\mathbf{R}}^{-}(\zeta)\mathrm{e}^{\mathrm{i}\mathbf{\Delta}(\zeta)}\widehat{\mathbf{L}}(\zeta) \mathrm{e}^{-\mathrm{i}\mathbf{\Delta}(\zeta)} \right]_{1} \frac{\mathrm{d}\zeta}{\zeta-z} \\
&+\sum_{m=1}^{M_{1}} \frac{\mathrm{i}C_{m}\widehat{\rho}(z_{m})\mathbf{R}_{3}^{+}(z_{m})}{z-z_{m}}
-\sum_{m=1}^{M_{2}} \frac{F_{m}\mathbf{R}_{2}^{+}(\theta_{m})}{z-\theta_{m}}, \quad z=\theta_{n}^{*},z_{n}^{*}.
\end{split}
\end{equation}
\begin{equation}\label{4.44}
\begin{split}
\mathbf{R}_{3}^{+}(z)&=\begin{pmatrix}
    -\mathrm{i}q_{0}/z  \\
    -\mathrm{i}\mathbf{q}_{-}/q_{0}
  \end{pmatrix}
+\frac{1}{2\pi\mathrm{i}}\int_{\mathbb{R}} \left[ \widehat{\mathbf{R}}_{\infty}^{\nabla} +\frac{1}{\zeta}\widehat{\mathbf{R}}_{0}^{\nabla}
+\widehat{\mathbf{R}}^{-}(\zeta)\mathrm{e}^{\mathrm{i}\mathbf{\Delta}(\zeta)}\widehat{\mathbf{L}}(\zeta) \mathrm{e}^{-\mathrm{i}\mathbf{\Delta}(\zeta)} \right]_{3} \frac{\mathrm{d}\zeta}{\zeta-z} \\
&+\sum_{m=1}^{M_{1}} \frac{\mathrm{i}\bar{C}_{m}\mathbf{R}_{1}^{-}(z_{m}^{*})}{2\lambda(z_{m}^{*})(z-z_{m}^{*})}
+\sum_{m=1}^{M_{2}}
\frac{\check{F}_{m}\theta_{m}\mathbf{R}_{2}^{-}(q_{0}^{2}/\theta_{m})}{q_{0}^{2}[z-(q_{0}^{2}/\theta_{m})]},
\quad  z=z_{n},q_{0}^{2}/\theta_{n}^{*}.
\end{split}
\end{equation}
\begin{equation}\label{4.45}
\everymath{\displaystyle}
\begin{split}
\mathbf{R}_{2}^{+}(\theta_{n})&=\begin{pmatrix}
    0  \\
    \frac{q_{0}\theta_{n}\mathbf{q}_{+}^{\perp}}{\mathrm{i}\vartheta_{1}}
    +\frac{\mathbf{q}_{-}^{\perp}}{q_{0}}
  \end{pmatrix}
+\frac{\theta_{n}}{2\pi\mathrm{i}}\int_{\mathbb{R}} \left[ \widehat{\mathbf{R}}_{\infty}^{\nabla}+\frac{1}{\zeta}\widehat{\mathbf{R}}_{0}^{\nabla}
+\widehat{\mathbf{R}}^{-}(\zeta)\mathrm{e}^{\mathrm{i}\mathbf{\Delta}(\zeta)}\widehat{\mathbf{L}}(\zeta) \mathrm{e}^{-\mathrm{i}\mathbf{\Delta}(\zeta)} \right]_{2} \frac{\mathrm{d}\zeta}{\zeta-\theta_{n}} \\
&+\sum_{m=1}^{M_{2}} \frac{\mathrm{i}\bar{F}_{m}\theta_{n}\widehat{\rho}(\theta_{m}^{*})
\mathbf{R}_{1}^{-}(\theta_{m}^{*})}{\theta_{n}-\theta_{m}^{*}}
+\sum_{m=1}^{M_{2}} \frac{\mathrm{i}\hat{F}_{m}\mathbf{R}_{3}^{+}(q_{0}^{2}/\theta_{m}^{*})}
{2\lambda(\theta_{m}^{*})[\theta_{n}-(q_{0}^{2}/\theta_{m}^{*})]}.
\end{split}
\end{equation}
\begin{equation}\label{4.46}
\everymath{\displaystyle}
\begin{split}
\mathbf{R}_{2}^{-}\left(\frac{q_{0}^{2}}{\theta_{n}}\right)&=\begin{pmatrix}
    0  \\
    \frac{q_{0}\theta_{n}\mathbf{q}_{+}^{\perp}}{\mathrm{i}\vartheta_{2}}
    +\frac{\mathbf{q}_{-}^{\perp}}{q_{0}}
  \end{pmatrix}
+\frac{1}{2\pi\mathrm{i}}\int_{\mathbb{R}} \left[ \widehat{\mathbf{R}}_{\infty}^{\nabla}
+\frac{1}{\zeta} \widehat{\mathbf{R}}_{0}^{\nabla}
+\widehat{\mathbf{R}}^{-}(\zeta)\mathrm{e}^{\mathrm{i}\mathbf{\Delta}(\zeta)}\widehat{\mathbf{L}}(\zeta) \mathrm{e}^{-\mathrm{i}\mathbf{\Delta}(\zeta)} \right]_{2} \frac{\mathrm{d}\zeta}{\zeta-(q_{0}^{2}/\theta_{n})} \\
&+\sum_{m=1}^{M_{2}} \frac{\mathrm{i}\bar{F}_{m}\widehat{\rho}(\theta_{m}^{*})
\mathbf{R}_{1}^{-}(\theta_{m}^{*})}{(q_{0}^{2}/\theta_{n})-\theta_{m}^{*}}
+\sum_{m=1}^{M_{2}} \frac{\mathrm{i}\hat{F}_{m}\mathbf{R}_{3}^{+}(q_{0}^{2}/\theta_{m}^{*})}
{2\lambda(\theta_{m}^{*})[(q_{0}^{2}/\theta_{n})-(q_{0}^{2}/\theta_{m}^{*})]}.
\end{split}
\end{equation}
\end{theorem}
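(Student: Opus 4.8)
The plan is to convert the matrix RH problem of Lemmas~\ref{lem:9}--\ref{lem:11} into the closed linear algebraic--integral system \eqref{4.42}--\eqref{4.46} by the same regularization-and-Cauchy-projector argument already used in the non-orthogonal case (Theorems~\ref{thm:4} and~\ref{thm:6}); the only genuinely new feature is that one must carry along the extra diagonal rescalings $\operatorname{diag}(1,1/z,1)$ and $\operatorname{diag}(1,1,1/z)$ inserted in \eqref{4.36}, which are exactly what keeps the leading-order data finite once $\mathbf{q}_{+}^{\dagger}\mathbf{q}_{-}=0$.

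First I would subtract from $\widehat{\mathbf{R}}(z)$ all of its non-analytic and non-decaying parts. Let $\Omega(z)$ denote the sum of the six pole families appearing on the right of \eqref{4.42} (the residue contributions at $z_{m},z_{m}^{*},\theta_{m},\theta_{m}^{*},q_{0}^{2}/\theta_{m},q_{0}^{2}/\theta_{m}^{*}$ prescribed by Lemma~\ref{lem:11}), and set $\widehat{\mathbf{P}}(z)=\widehat{\mathbf{R}}(z)-\widehat{\mathbf{R}}_{\infty}(z)-z^{-1}\widehat{\mathbf{R}}_{0}(z)-\Omega(z)$ with $\widehat{\mathbf{R}}_{\infty}(z)=\widehat{\mathbf{R}}_{\infty}^{\pm}$ and $\widehat{\mathbf{R}}_{0}(z)=\widehat{\mathbf{R}}_{0}^{\pm}$ on $\mathbb{D}^{\pm}$. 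By Lemmas~\ref{lem:10} and~\ref{lem:11}, $\widehat{\mathbf{P}}$ is analytic on $\mathbb{C}\setminus\mathbb{R}$ — the discrete poles are killed by $\Omega$, the pole at $z=0$ by $z^{-1}\widehat{\mathbf{R}}_{0}$ — and $\widehat{\mathbf{P}}(z)\to\mathbf{0}$ as $z\to\infty$. Writing $\widehat{\mathbf{K}}=\mathbf{I}+\widehat{\mathbf{L}}$ in the jump \eqref{4.37} and subtracting the constant-level and $z^{-1}$-level mismatches, one obtains on $\mathbb{R}\setminus\{0,\pm q_{0}\}$
\[
\widehat{\mathbf{P}}^{+}(z)-\widehat{\mathbf{P}}^{-}(z)=\widehat{\mathbf{R}}_{\infty}^{\nabla}+\tfrac{1}{z}\widehat{\mathbf{R}}_{0}^{\nabla}+\widehat{\mathbf{R}}^{-}(z)\mathrm{e}^{\mathrm{i}\mathbf{\Delta}(z)}\widehat{\mathbf{L}}(z)\mathrm{e}^{-\mathrm{i}\mathbf{\Delta}(z)},
\]
with $\widehat{\mathbf{R}}_{\infty}^{\nabla}=\widehat{\mathbf{R}}_{\infty}^{-}-\widehat{\mathbf{R}}_{\infty}^{+}$ and $\widehat{\mathbf{R}}_{0}^{\nabla}=\widehat{\mathbf{R}}_{0}^{-}-\widehat{\mathbf{R}}_{0}^{+}$. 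Applying Plemelj's formula to the sectionally analytic, vanishing-at-infinity matrix $\widehat{\mathbf{P}}$ and restoring $\widehat{\mathbf{R}}_{\infty}(z)+z^{-1}\widehat{\mathbf{R}}_{0}(z)+\Omega(z)$ yields \eqref{4.42}.

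Next I would close the system. The right side of \eqref{4.42} still contains the boundary values $\widehat{\mathbf{R}}^{-}(\zeta)$, $\zeta\in\mathbb{R}$, inside the integral and, through $\Omega$, the columns of $\widehat{\mathbf{R}}^{\pm}$ at the discrete points; by \eqref{4.41} the latter reduce to the six unknown vectors $\mathbf{R}_{1}^{-}(z_{m}^{*})$, $\mathbf{R}_{1}^{-}(\theta_{m}^{*})$, $\mathbf{R}_{3}^{+}(z_{m})$, $\mathbf{R}_{3}^{+}(q_{0}^{2}/\theta_{m}^{*})$, $\mathbf{R}_{2}^{+}(\theta_{m})$, $\mathbf{R}_{2}^{-}(q_{0}^{2}/\theta_{m})$. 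Using the rescalings \eqref{4.36} — column~$2$ of $\widehat{\mathbf{R}}^{+}$ is $z^{-1}\mathbf{R}_{2}^{+}$, column~$3$ of $\widehat{\mathbf{R}}^{-}$ is $z^{-1}\mathbf{R}_{3}^{-}$, while $\mathbf{R}_{1}^{-}$ and $\mathbf{R}_{3}^{+}$ are the unscaled columns — together with the identity $\widehat{\rho}(z)/z=1/(2\lambda(z))$, one evaluates column~$1$ of \eqref{4.42} at $z=z_{n}^{*},\theta_{n}^{*}$ to get \eqref{4.43}, column~$3$ at $z=z_{n},q_{0}^{2}/\theta_{n}^{*}$ to get \eqref{4.44}, and $\theta_{n}$ times column~$2$ at $z=\theta_{n}$ (resp.\ column~$2$ at $z=q_{0}^{2}/\theta_{n}$) to get \eqref{4.45} (resp.\ \eqref{4.46}); these four relations are the desired coupled system. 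The inhomogeneous terms are read off from \eqref{4.39}, the norming-constant coefficients from \eqref{4.41} and the symmetries of Propositions~\ref{pro:9}--\ref{pro:10}, and the factors $q_{0}\theta_{n}\mathbf{q}_{+}^{\perp}/\mathrm{i}\vartheta_{1}$, $q_{0}\theta_{n}\mathbf{q}_{+}^{\perp}/\mathrm{i}\vartheta_{2}$ in \eqref{4.45}--\eqref{4.46} record the orthogonal-case leading order of $s_{33}$, $h_{33}$ from Proposition~\ref{pro:14}.

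The step I expect to be the main obstacle is the branch point $z=0$ lying on the contour. In the orthogonal case $\widehat{\mathbf{K}}=\operatorname{diag}(1,1,z)\,\mathbf{K}\,\operatorname{diag}(1,1/z,1)$ still carries a simple pole at $z=0$, since $\mathbf{K}$ has a double pole there, so $\widehat{\mathbf{L}}=\widehat{\mathbf{K}}-\mathbf{I}$ is singular at $\zeta=0$, and the jump data additionally contains the explicitly singular term $\zeta^{-1}\widehat{\mathbf{R}}_{0}^{\nabla}$; for the Cauchy integral in \eqref{4.42} and the Plemelj step to be legitimate one must show that these two singular pieces cancel, i.e.\ that $\widehat{\mathbf{R}}^{-}(\zeta)\mathrm{e}^{\mathrm{i}\mathbf{\Delta}}\widehat{\mathbf{L}}\,\mathrm{e}^{-\mathrm{i}\mathbf{\Delta}}+\zeta^{-1}\widehat{\mathbf{R}}_{0}^{\nabla}$ extends continuously through $\zeta=0$. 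This is precisely the property that the rescaled matrices $\widehat{\mathbf{R}}^{\pm}$ of \eqref{4.36} were designed to guarantee — the orthogonal-case analogue of the regularity issue flagged in Remark~\ref{rem:5} and quantified in Lemma~\ref{lem:10} — and verifying it, together with the parallel check that $\widehat{\mathbf{R}}^{\pm}$ and the resulting integrals remain finite at the branch points $z=\pm q_{0}$, is the one genuinely technical point; the remainder is routine column-wise application of Plemelj's formula.
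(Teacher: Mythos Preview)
Your proposal is correct and follows essentially the same regularize-then-Plemelj route the paper uses (cf.\ the discussion preceding Theorem~\ref{thm:4} and equation~\eqref{4.10}, which is simply replayed here with $\mathbf{R}^{\pm}$, $\mathbf{K}$ replaced by $\widehat{\mathbf{R}}^{\pm}$, $\widehat{\mathbf{K}}$). Your identification of the column rescalings in \eqref{4.36}, the identity $\widehat{\rho}(z)/z=1/(2\lambda(z))$, and the need to multiply column~2 by $\theta_{n}$ to recover $\mathbf{R}_{2}^{+}(\theta_{n})$ are exactly the bookkeeping steps that produce \eqref{4.43}--\eqref{4.46}, and your flagging of the $\zeta=0$ integrability issue is a point the paper passes over in silence.
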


\begin{theorem}\label{thm:9}
Let $\widehat{\mathbf{R}}(z)$ be the solution of the RH problem in Theorem~\ref{thm:8}. Then the corresponding solution $\mathbf{q}(x,t)$ of the defocusing-defocusing coupled Hirota equations with orthogonal boundary conditions is reconstructed in the following manner:
\begin{equation}\label{4.47}
\begin{split}
q_{\hbar}(x,t)&=q_{-,\hbar}+\frac{\mathrm{i}}{2\pi}\int_{\mathbb{R}} \left[
\frac{q_{-,\hbar}}{\zeta} +\left[ \widehat{\mathbf{R}}^{-}(\zeta)\mathrm{e}^{\mathrm{i}\mathbf{\Delta}(\zeta)}\widehat{\mathbf{L}}(\zeta) \mathrm{e}^{-\mathrm{i}\mathbf{\Delta}(\zeta)} \right]_{(\hbar+1)1} \right] \mathrm{d}\zeta \\
&+\sum_{m=1}^{M_{1}}\mathrm{i}C_{m}\widehat{\rho}(z_{m})\mathbf{R}_{(\hbar+1)3}^{+}(z_{m})
-\sum_{m=1}^{M_{2}}F_{m}\mathbf{R}_{(\hbar+1)2}^{+}(\theta_{m}), \quad \hbar=1,2.
\end{split}
\end{equation}
\end{theorem}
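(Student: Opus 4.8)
The plan is to recover $\mathbf{q}(x,t)$ from the large-$z$ asymptotics of the first column of the meromorphic matrix $\widehat{\mathbf{R}}(z)$ produced by Theorem~\ref{thm:8}, and then to make this explicit by substituting the Plemelj representation \eqref{4.42}. The key preliminary is that, by Lemma~\ref{lem:9}, $\widehat{\mathbf{R}}^{-}(z)=\mathbf{R}^{-}(z)\operatorname{diag}(1,1,1/z)$ has the same first column as $\mathbf{R}^{-}(z)$, namely $\widehat{\mathbf{R}}_{1}^{-}(z)=\nu_{-,1}(z)/(\mathrm{i}\widehat{\rho}(z))$. Combining this with Proposition~\ref{pro:11}, whose expansion \eqref{3.5} gives the lower two components of $\nu_{-,1}(z)$ as $\mathrm{i}\mathbf{q}(x,t)/z+O(1/z^{2})$, and with $\widehat{\rho}(z)=1+O(1/z^{2})$ as $z\to\infty$, one obtains $q_{\hbar}(x,t)=\lim_{z\to\infty}z\,\widehat{\mathbf{R}}_{(\hbar+1)1}^{-}(z)$ for $\hbar=1,2$. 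Thus the theorem reduces to computing this limit from \eqref{4.42} with $\widehat{\mathbf{R}}(z)=\widehat{\mathbf{R}}^{-}(z)$, $z\in\mathbb{D}^{-}$.

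I would then evaluate the limit term by term. The constant term $\widehat{\mathbf{R}}_{\infty}^{-}$ contributes nothing to rows $\hbar+1$ of column~$1$, since by Lemma~\ref{lem:10} the first column of $\widehat{\mathbf{R}}_{\infty}^{-}$ is $(1,0,0)^{T}$; the same fact gives $[\widehat{\mathbf{R}}_{\infty}^{\nabla}]_{(\hbar+1)1}=0$, which explains why the $\widehat{\mathbf{R}}_{\infty}^{\nabla}$ term is absent from the integrand in \eqref{4.47}. The pole term $\tfrac1z\widehat{\mathbf{R}}_{0}^{-}$ contributes $q_{-,\hbar}$ because the first column of $\widehat{\mathbf{R}}_{0}^{-}$ is $(0,\mathbf{q}_{-})^{T}$, and since $\widehat{\mathbf{R}}_{0}^{+}$ has vanishing first column one also gets $[\widehat{\mathbf{R}}_{0}^{\nabla}]_{(\hbar+1)1}=q_{-,\hbar}$, which produces the $q_{-,\hbar}/\zeta$ inside the integral. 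In the Cauchy integral I would use $z/(\zeta-z)\to-1$ and dominated convergence to pass to the limit, yielding $\tfrac{\mathrm{i}}{2\pi}\int_{\mathbb{R}}\bigl[q_{-,\hbar}/\zeta+(\widehat{\mathbf{R}}^{-}(\zeta)\mathrm{e}^{\mathrm{i}\mathbf{\Delta}(\zeta)}\widehat{\mathbf{L}}(\zeta)\mathrm{e}^{-\mathrm{i}\mathbf{\Delta}(\zeta)})_{(\hbar+1)1}\bigr]\,\mathrm{d}\zeta$. Finally, in each discrete sum $z/(z-z_{\star})\to1$ for the corresponding pole $z_{\star}$, so only the residues in \eqref{4.41} with a nonzero first column survive: $\operatorname{Res}_{z=z_{m}}\widehat{\mathbf{R}}^{+}=[\mathrm{i}C_{m}\widehat{\rho}(z_{m})\mathbf{R}_{3}^{+}(z_{m}),\mathbf{0},\mathbf{0}]$ and $\operatorname{Res}_{z=\theta_{m}}\widehat{\mathbf{R}}^{+}=[-F_{m}\mathbf{R}_{2}^{+}(\theta_{m}),\mathbf{0},\mathbf{0}]$, whereas the residues at $z_{m}^{*},\theta_{m}^{*},q_{0}^{2}/\theta_{m},q_{0}^{2}/\theta_{m}^{*}$ have vanishing first column and drop out. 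Collecting these pieces gives exactly \eqref{4.47}.

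The hard part will be the analytic justification rather than the bookkeeping: one must show that the integral in \eqref{4.42} converges and that the limit may be taken under the integral sign. The delicate region is $z=0$, where $\widehat{\mathbf{K}}(z)$ (hence $\widehat{\mathbf{L}}(z)=\widehat{\mathbf{K}}(z)-\mathbf{I}$) inherits a singularity from $\widehat{\rho}$; the subtracted term $\tfrac1\zeta\widehat{\mathbf{R}}_{0}^{\nabla}$ in \eqref{4.42} is precisely what cancels the leading singular part, leaving an integrand integrable on $\mathbb{R}$ under the standing hypothesis $\mathbf{q}-\mathbf{q}_{\pm}\in L^{1,2}(\mathbb{R}_{x}^{\pm})$. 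One also needs the boundedness of $\widehat{\mathbf{R}}^{\pm}(z)$ at the branch points $z=\pm q_{0}$ (noted after Lemma~\ref{lem:2}, and unchanged by the diagonal twist defining $\widehat{\mathbf{R}}^{\pm}$) for the contour formula to be meaningful there, together with uniform-in-$z$ decay estimates for $\widehat{\mathbf{R}}^{-}(\zeta)$ on $\mathbb{R}$ to license dominated convergence. With those estimates in hand, the identity \eqref{4.47} follows by matching columns against the explicit matrices of Lemmas~\ref{lem:10} and~\ref{lem:11}.
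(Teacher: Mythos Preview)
Your proposal is correct and follows essentially the same route as the paper: reconstruct $q_{\hbar}$ from $\lim_{z\to\infty} z\,\widehat{\mathbf{R}}_{(\hbar+1)1}^{-}(z)$ by expanding the Plemelj representation \eqref{4.42} term by term, exactly as the paper does for the non-orthogonal case via \eqref{4.16} preceding Theorem~\ref{thm:5}. Your identification of which residues and asymptotic blocks contribute to the first column matches \eqref{4.39} and \eqref{4.41}, and your additional discussion of integrability near $\zeta=0$ and dominated convergence supplies analytic detail that the paper leaves implicit.
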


\begin{remark}\label{rem:7}
By ingeniously constructing the new modified meromorphic matrices $\widehat{\mathbf{R}}^{\pm}(z)$, we have successfully achieved solutions to the defocusing-defocusing coupled Hirota equations under orthogonal boundary conditions. This method not only enriches the existing forms of solutions, but also provides a new perspective for exploring solutions under different boundary conditions. This innovative mathematical tool enables us to gain a deeper understanding and analysis of physical phenomena and mathematical models under various boundary conditions.
\end{remark}

\section{Discussion and final remarks}
\label{s:Discussion and final remarks}

We apply the IST tool to the defocusing-defocusing coupled Hirota equations with non-parallel boundary conditions and obtain some interesting results by constructing the matrix RH problem. Within the framework of the IST, we observe that there are significant differences between non-parallel and parallel cases when dealing with the defocusing-defocusing coupled Hirota equations. Particularly in the non-parallel case, the reflection coefficient cannot be completely zero which is determined by the nature of the direct problem and does not depend on the method used to address the inverse problem. Due to the continuous spectrum's inevitable presence, the solution always contains a dispersion component which is due to the contribution of the reflection coefficient to the IVP. This means that the problem itself does not support the existence of pure soliton solutions, regardless of whether we explore it through direct methods or by algebraically simplifying the RH problem.

From a practical standpoint, the findings of this paper are anticipated to provide valuable insights for interpreting recent studies in the fields of nonlinear optics~\cite{53} and Bose-Einstein condensations~\cite{25,26}. From a mathematical perspective, new orthogonal and non-orthogonal boundary solutions have been constructed using various mathematical tools, offering a new perspective on understanding and solving problems. Compared to solutions with parallel boundaries, these solutions exhibit greater diversity in form.

The forthcoming research objective is to investigate the existence and uniqueness of solution to the RH problem, as detailed in Theorems~\ref{thm:4},~\ref{thm:6} and~\ref{thm:8}. We can refer to the technical methods in references~\cite{10,54,A3,A4,A5,A6,A7,A8} for analyzing and exploring this issue. The focusing and defocusing NLS equations~\cite{55,56} with fully asymmetric NZBCs have been studied and correspondingly extended to the focusing and defocusing mKdV equations~\cite{57} in few-cycle pulses, the defocusing Hirota equation~\cite{50} and the defocusing Lakshmanan-Porsezian-Daniel equation~\cite{58}. A further issue to address is the case of the defocusing-defocusing coupled Hirota equations with fully asymmetric NZBCs. Here we note that although the defocusing-defocusing coupled Hirota equations have established appropriate IST for cases with symmetric NZBCs (also known as parallel NZBCs~\cite{23}), studying fully asymmetric NZBCs case is much more complex due to the analytical defects of some Jost eigenfunctions. Consequently, developing the IST for the defocusing-defocusing coupled Hirota equations with fully asymmetric NZBCs is expected to demand considerable further investigation. This involves combining the method proposed in this study with fully asymmetric NZBCs applied to the scalar defocusing NLS equation~\cite{56} to achieve deeper theoretical development and application expansion. Consider extending these research findings to a wider range of nonlinear evolution equations, hoping to discover new physical phenomena and mathematical structures.

Exploring the existence of spectral singularities is a fascinating theoretical problem, which refers to the analysis of the zero points of scattering coefficients within the continuous spectrum. In the scalar defocusing case, such zero points do not exist. In the focusing case, there are already multiple known examples of potentials capable of generating such a zero point. However, whether such spectral singularities exist in the defocusing-defocusing coupled Hirota equations remains an unresolved issue. This challenge persists even under the scenario of parallel NZBCs. Furthermore, in the case of non-parallel NZBCs, deriving the trace formula for orthogonal polarization vectors remains an unresolved issue. Finally, the lack of reflectionless potentials in the non-parallel NZBCs case suggests the possibility of generating dispersive shocks. Another interesting area of study is the behavior of multi-component Hirota equations in the case of non-parallel NZBCs. We anticipate that this discussion will inspire further research on several of the issues mentioned earlier.

\section*{Acknowledgement}
Zhang's and Ma's works were partially supported by the National Natural Science Foundation of China (Grant Nos. 11371326 and 12271488). Ma's work was also partially supported by the Ministry of Science and Technology of China (G2021016032L and G2023016011L).

\section*{Conflict of interests}
The authors declare that there is no conflict of interests regarding the research effort and the publication of this paper.

\section*{Data availability statements}
All data generated or analyzed during this study are included in this published article.

\section*{ORCID}
{\setlength{\parindent}{0cm}
Peng-Fei Han https://orcid.org/0000-0003-1164-5819 \\
Wen-Xiu Ma https://orcid.org/0000-0001-5309-1493 \\
Yi Zhang https://orcid.org/0000-0002-8483-4349}

\end{document}